\documentclass[12pt]{article}

\usepackage{amsfonts}
\usepackage{amssymb}
\usepackage{graphicx}
\usepackage{amsmath}
\usepackage{amsthm}
\usepackage[bottom]{footmisc}
\usepackage{xcolor}
\usepackage{natbib}
\usepackage{booktabs}
\usepackage{threeparttable}
\usepackage{siunitx}
\usepackage{multirow}
\usepackage{rotating}
\usepackage{url}
\usepackage{afterpage}
\usepackage{setspace}
\usepackage{cancel}
\usepackage{mathtools}

\usepackage[margin=1in]{geometry}


\bibliographystyle{econometrica}

\pagestyle{myheadings}
\markright{This Version: \today }
\parindent0.25in
\parskip1ex plus1.5ex minus0.2ex
\setcounter{secnumdepth}{5} \setcounter{tocdepth}{2} \voffset0cm
\topmargin-1cm \oddsidemargin0.in \evensidemargin1.in
\textheight9in \textwidth6.5in

\newcommand{\bc}{\begin{center}}
	\newcommand{\ec}{\end{center}}
\newcommand{\be}{\begin{equation}}
\newcommand{\ee}{\end{equation}}
\newcommand{\bea}{\begin{eqnarray}}
\newcommand{\eea}{\end{eqnarray}}
\newcommand{\bean}{\begin{eqnarray*}}
	\newcommand{\eean}{\end{eqnarray*}}
\newcommand{\bt}{\begin{tabular}}
	\newcommand{\et}{\end{tabular}}

\everymath{\displaystyle}

\newcounter{pkt}

\newtheorem{theorem}{Theorem}

\newtheorem{assumption}{Assumption}

\newtheorem{corollary}{Corollary}

\newtheorem{definition}{Definition}

\newtheorem{lemma}{Lemma}

\newtheorem{remark}{Remark}

\numberwithin{theorem}{section}
\numberwithin{lemma}{section}
\numberwithin{assumption}{section}
\numberwithin{corollary}{section}
\numberwithin{equation}{section}

\newcommand{\argmin}{\operatorname*{argmin}}

\newcommand{\tr}{\operatorname*{tr}} 
\newcommand*{\bs}[1]{ \boldsymbol{#1} }

\everymath{\displaystyle}

\newcounter{saveeqn}

\usepackage[hidelinks,colorlinks=true,citecolor=blue, linkcolor=blue,urlcolor=blue]{hyperref}

\sloppy

\begin{document}
	
\title{\vspace*{-2cm} \bf Optimal Estimation of Two-Way Effects under Limited Mobility\thanks{We thank participants at various conferences and seminars for helpful comments. Cheng acknowledges financial support from the Jacobs Levy Center. Schorfheide acknowledges financial support from the National Science Foundation under Grant SES 1851634.}}
	
\author{Xu Cheng\thanks{Department of Economics, University of Pennsylvania. E-mail:   \href{mailto:xucheng@upenn.edu}{\nolinkurl{xucheng@upenn.edu}}.} 
	\and Sheng Chao Ho\thanks{School of Economics, Singapore Management University. Email: \href{scho@smu.edu.sg} {\nolinkurl{scho@smu.edu.sg}}.}  
\and Frank Schorfheide\thanks{Department of Economics, University of Pennsylvania. E-mail: \href{mailto:schorf@upenn.edu}{\nolinkurl{schorf@upenn.edu}}.}
}

\date{This Version: \today}
\maketitle

\begin{abstract}
	We propose an empirical Bayes estimator for two-way effects in linked data sets based on a novel prior that leverages patterns of assortative matching observed in the data. To capture limited mobility we model the bipartite graph associated with the matched data in an asymptotic framework where its Laplacian matrix has small eigenvalues that converge to zero. The prior hyperparameters that control the shrinkage are determined by minimizing an unbiased risk estimate. We show the proposed empirical Bayes estimator is asymptotically optimal in compound loss, despite the weak connectivity of the bipartite graph and the potential misspecification of the prior. We estimate teacher values-added from a linked North Carolina Education Research Data Center student-teacher data set.
\end{abstract}

\noindent JEL CLASSIFICATION: C11, C13, C23, C55, I21
	
\noindent KEY\ WORDS: Bipartite Graphs, Compound Loss, Empirical Bayes Methods, Limited Mobility, Matched Data, Shrinkage Estimation, Teacher Value-Added, Two-Way Fixed Effects, Unbiased Risk Estimation, Weak Identification.
	
\thispagestyle{empty}
\setcounter{page}{0}
\newpage

\section{ Introduction}
\label{sec:intro}

With the availability of large-scale linked datasets,  researchers increasingly estimate high-dimensional two-way effects in interaction-based models following \cite{Abowd1999}, see e.g.,  \cite{Card2013} for worker and firm effects, \cite{Kramarz2008} for student and school effects, and \cite{Finkelstein2016} for patient and area effects, among many others. 
Identification of unit-specific parameters in a linear two-way model crucially depends on forming a connected set through movements \citep{AbowdEtal2002}, e.g., workers who move from one firm to another. Without any workers that connect two firms directly or indirectly, the difference in their firm effects cannot be identified. Similarly, teachers who move from one school to another establish the links to common students that are needed to identify their teacher value-added. In empirical data sets mobility, however, is often limited, which causes econometric challenges.  Estimation becomes fragile if removal of a small number of movers could lead to disconnected sets and therefore the loss of identification. Based on graph theory, \cite{Jochmans2019} show how weak connectivity of the bipartite graph induced by the matched data adversely affects the commonly used least squares (LS) estimate. They also document weak connectivity among teachers across schools, suggesting the difficulty with teacher value-added estimation in the two-way model.

This paper takes on the challenge of limited mobility and proposes an optimal estimation of high-dimensional effects in a decision-theoretic framework. Specifically, we develop an empirical Bayes (EB) estimator based on a Gaussian prior distribution for the two sets of unit-specific parameters, which are indexed by hyperparameters that control mean, variance, and covariance. We treat the resulting posterior mean vector as a class of shrinkage estimators. Shrinkage estimators introduce bias in exchange for a reduction in variance, which is particularly appealing when the variance is large due to limited mobility. The shrinkage estimators are evaluated under a compound loss function that averages squared estimation errors across the large number of units. An unbiased estimate of the expected compound loss (risk) is minimized with respect to the hyperparameters to obtain an empirical Bayes estimator of the unit-specific parameters.

Our first contribution is to provide a novel prior on the joint distribution of the two sets of heterogeneous parameters by incorporating the observed matching patterns in the data. Assortative matching is an important feature in this interaction-based model, i.e., productive firms tend to employ high-skilled workers; see \cite{KLINE2024} for a review.  Empirical Bayes methods leverage  the common prior to improve estimation of unit-specific parameters. As such, it is important to have a prior that allows for stronger correlation of two unit-specific parameters associated with pairs that match more often in the dataset.\footnote{See \cite{Bonhomme2020} for a discussion of the conditionally independent random-effect model and the difficulty to specify conditional random-effects distributions consistent with link formation. \cite{Bonhomme2024} survey the random effect approach to network regression and its applications. } Our prior achieves this in a parsimonious fashion by introducing a hyperparameter that controls the scale of correlation and coupling it with  the normalized matched frequency in the data. The flexibility to choose this correlation hyperparameter significantly enlarges the class of shrinkage estimator we consider. It yields remarkably improved estimates by pulling information across two sides of the interaction, even when the researcher is only interested in one of the two unit-specific parameters, e.g., firm effects or teacher-value added. 

Our second important contribution is to establish frequentist asymptotic optimality in an asymptotic framework that allows for weak connectivity. The benchmark is an  oracle estimator that determines the hyperparameters based on knowledge of the true parameters.
We show that, conditional on the graph induced by the matched data and the true parameters, the compound loss differential between the proposed empirical Bayes estimator based on the unbiased risk estimate (URE) and the oracle estimator goes to zero asymptotically. 
Although the optimality is defined only for the unit specific parameters and their linear functions, the estimator provides a stepping stone for their nonlinear functions and subsequent regression analysis where these unit specific parameters are used as dependent variables.
The frequentist optimality is robust to misspecification of the prior described above, i.e., when the prior does not align with the distribution of parameters across units conditional on the observed matches. The prior simply determines the class of shrinkage estimators considered. The optimality also does not depend on the Gaussian distribution of the regression errors. 

Crucially, the asymptotic analysis is conducted in a framework that allows for weak connectivity, 
a scenario where the connectivity measure is small in finite-sample and is modeled to converge to zero asymptotically akin to weak instruments in instrumental variable regressions.
We consider a sequence of graphs where the number of matches is bounded for each unit as the sample sizes of both sides go to infinity proportionally. As in \cite{Jochmans2019}, we consider non-random graphs and use the small non-zero eigenvalues of the Laplacian matrix to measure its (global) connectivity. 
For asymptotic optimality of the URE based EB estimator, we derive conditions under which the URE criterion uniformly converges to the loss function over a large parameter space of the hyperparameters. In a setup where we only allow a finite-number of near-zero eigenvalues of the Laplacian matrix, this uniform convergence holds as long as each eigenvalue converges to zero more slowly than the inverse of the square root of the sample size.\footnote{This rate is associated with the parameter space of the hyperparameter and therefore the scope of the shrinkage estimator class where optimality is defined. If we confine the variance components of the hyperparameter to some bounded set, we can improve this rate from square root of the sample size to the sample size. }
The asymptotic optimality is robust when all eigenvalues are bounded away from zero for well connected graphs.

Third, we show that our class of shrinkage estimators includes many commonly used estimators as special cases, and document in a Monte Carlo study that the proposed EB estimator dominates these competitors. By setting hyperparameters to specific values, we obtain not only the LS estimate of two-way effects, but also some one-way EB estimators widely used in the study of value-added in labor economics, see e.g., \cite{Kane2008}, \cite{Chetty2014}, and the review in \cite{WALTERS2024}. 
With assortative matching and unobserved heterogeneity on both sides, the proposed estimator shows substantial improvement over the one-way shrinkage estimators.
The class also includes the conventional EB estimator
that determines the hyperparameters based on marginal likelihood of the data.
Compared to the conventional EB estimator, the proposed URE-based EB estimator is less sensitive to the specification of the prior and demonstrate sizable improvement when the prior is misspecified to a large degree.

Fourth, we apply the proposed estimator in an empirical application based on a matched student-teacher data set from the North Carolina Education Research Data Center (NCERDC). Our estimators suggest that there is significant student-teacher assortative matching, and further that the rankings of teachers are sensitive to the estimator employed. 
Besides this specification, our method also apply to studies with assortative matching between students and schools and between teachers and schools, e.g., \cite{Kramarz2008} and \cite{Mansfield2015}.

Shrinkage estimation has a long tradition in the statistics literature, dating back to \cite{Stein1956} and \cite{James1961} in the context of a vector of unknown means. The empirical Bayes approach can be traced back to \cite{Robbins1956}. The recent statistics literature, e.g., \cite{Brown2009} and \cite{Jiang2009}, and the recent panel data literature, e.g., \cite{Gu2017a,Gu2017b}, \cite{Liu2020} have emphasized a nonparametric treatment of the prior distribution. Setting up a nonparametric prior for the joint distribution of two types of unit-specific parameters in the presence of matching pattern is an open question. The parametric prior we propose is a step in this direction. 

Hyperparameter choice  based on URE dates back to \cite{Stein1981} and has been used more recently, for instance, in \cite{Xie2012} and \cite{Kwon2021} for hyperparameter determination in models with fixed and time-varying one-dimensional heterogeneity, respectively, and in \cite{Brown2018} for a model with two-dimensional heterogeneity. The setup in \cite{Brown2018} is similar to ours but they are interested in optimal estimation of the cell means, which is the sum of two unit-specific parameters associated with the two sides, rather than the unit-specific parameters themselves. Our results do not follow from theirs directly.  Furthermore, although they allow for missing cells, the amount of missing cells allowed is not suitable for the sparsely matched data we consider.\footnote{To measure the imbalance of the design matrix due to missing cells, \cite{Brown2018} use a diverging eigenvalue comparable to the inverse of the smallest non-zero eigenvalues of our Laplacian matrix. They require this divergence is slower than  $n^{1/8}/(log(n))^2$, where $n$ denotes the sample size. This rate is suitable for occasional missing data rather than sparsely matched data.}

Our paper builds on some recent developments on the estimation of two-way effects with sparsely matched data. 
We adopt the bipartite graph representation and the measure of graph connectivity proposed by \cite{Jochmans2019}. Our results focus on the compound decision that involves high-dimensional parameters rather than
the estimation of an individual parameter, and we extend the study from the LS estimator to a large class of shrinkage estimator.  
\cite{Verdier2020} views the two-way effects as nuisance parameters and provides inference for the homogeneous slope coefficients  in the context of sparsely matched data. We suggest using his estimator to concentrate out observed covariates before our shrinkage estimator is constructed. 

In a contemporaneous paper, \cite{HeRobin2025} independently propose to use a ridge estimator for high-dimension two-way fixed effects estimation. Ridge estimators belong to the class of shrinkage estimator we consider.\footnote{We can obtain the ridge estimator by setting the correlation parameter and the location parameter in the hyperparameters to zero.} Although both papers find shrinkage estimation is highly desirable, our analysis focuses on different objects and studies them in a distinct asymptotic setups. 
\cite{HeRobin2025} focus on a regularized estimator of the 
Laplacian matrix, its inverse, and the bias and variance of the fixed effects estimator. In contrast, we focus on estimation of the fixed effects themselves (or its subvector). Their asymptotic analysis is based on a degree-corrected stochastic block model and closes the gap between the  aforementioned estimators and their counterparts in a deterministic equivalent graph. 
Our asymptotic analysis conditions on the graph, hence it is deterministic, and closes the gap between a feasible decision rule and the optimal but infeasible decision rule to choose the regularization parameter.
Through these complementary analyses, these two papers offer different perspectives to justify the adoption of  shrinkage methods in two-way effect models.

Our empirical investigation complements the estimation of teacher value-added in the education economics literature see, for instance, \cite{Kane2008}, \cite{Chetty2014}, and \cite{Kwon2021} with various parametric EB methods and
\cite{Gilraine2020} with a nonparametric EB methods. 
Our two-way EB estimator is robust to unobserved student heterogeneity not fully controlled by past test scores and other observable and assortative matching between students and teachers.\footnote{See \cite{Graham2008} for discussions of matching and sorting in the study of student achievement in different classrooms.}
After obtaining the EB estimator, we also use it to compute various nonlinear functions of the high-dimensional effects. Some alternative methods are available in the literature, e.g., see \cite{Kline2020} for unbiased estimation of the variance and covariance, \cite{Armstrong2022} for EB confidence intervals, and \cite{Gu2023} for ranking and selection with EB methods

The remainder of the paper is organized as follows. The model and the LS estimator are presented in Section~\ref{sec:model}. Section~\ref{sec:empbayes} introduces our novel prior distribution, derives the posterior mean estimator of the unit-level effects, and discusses the URE based hyperparameter determination. The optimality theory for the proposed EB estimator is developed in Section~\ref{sec:theory}. Results from our Monte Carlo experiments are summarized in Section~\ref{sec:simul} and the empirical analysis is presented in Section~\ref{sec:appl}. Finally, Section~\ref{sec:concl} concludes. Proofs of the main results appear in the Appendix to this paper. Additional derivations, computational details, simulation results, and further information about the empirical analysis are relegated to an Supplementary Online Appendix.

\section{Two-Way Effects Regression Model}
\label{sec:model}

We consider a two-way fixed effects model for matched data
\begin{align}
	y_{it}^* & = \alpha_i + \beta_{j(i,t)} + \bs{x}_{it}^\prime\bs{\gamma} + u_{it},
	\label{eq:underlyingmodel}
\end{align}
where $i=1,\dots r$, $t = 1,\dots,T$, $j(i,t) \in \left\{1,\dots,c\right\}$ is the unit matched to $i$ in period $t$, $\alpha_i$ and $\beta_{j(i,t)} $ are fixed effects, $\bs{x}_{it}\in\mathbb{R}^k$ is a set of observed regressors that could vary with $i$, $t$, and $u_{it}$ is an exogenous error with mean zero and known variance $\sigma^{2}$, and is i.i.d. across $i$ and $t$. When $\sigma^{2}$ is unknown, we can plug in an unbiased and consistent estimator. We discuss an extension to the heteroskedastic case when introducing the unbiased risk estimate. For the asymptotic optimality of the URE-based estimator, we focus on the  homoskedastic case only.

In an application with student-teacher matched data, $y_{it}$ could be a student test score, which is a function of student ability $\alpha_i$ and teacher value-added $\beta_{j(i,t)}$, as well as other student and teacher characteristics. Here the function $j(i,t)$ characterizes the teacher $j$ assigned to student $i$ in period $t$, with the understanding that each student has only one teacher in any given time period. We will explain more carefully below that our econometric analysis conditions on the observed $j(i,t)$. Alternatively, in an application with employer-employee matched data $y_{it}$ could be the wage of employee $i$, $\alpha_i$ is the worker's skill or effort, $\beta_j$ is the firm-specific productivity, and $j(i,t)$ is the firm for which employee $i$ works in period $t$. Importantly, $\alpha_i$ and $\beta_{j(i,t)}$ are allowed to be correlated to account for assortative matching. We consider settings where the sample size of the two sides $r$ and $c$, symbolizing rows and columns of a matrix, both grow asymptotically and the time period $T$ is finite. 

To facilitate the subsequent analysis, we stack the time series for each student in the order from $i=1$ to $r$ and write the model in (\ref{eq:underlyingmodel}) in matrix notation as
\begin{equation}
	\label{eq:underlyingmodelcompact}
	\begin{aligned}
		\bs{Y}^* &= B_1 \bs{\alpha} + B_2 \bs{\beta} + X\bs{\gamma} + \bs{U}  \\
		&= B\bs{\theta} + X\bs{\gamma} 	+ \bs{U},
	\end{aligned}
\end{equation}
where 
$\bs{Y}^*\in \mathbb{R}^{rT}$, $X\in \mathbb{R}^{rT\times k}$, $\bs{U}\in \mathbb{R}^{rT}$,
$\bs{\theta} := (\bs{\alpha}^\prime,\bs{\beta}^\prime)^\prime$ with $\bs{\alpha}:=(\alpha_1,\dots,\alpha_r)^\prime$ and $\bs{\beta}:=(\beta_1,\dots,\beta_c)^\prime$, and $B=[B_1,B_2]$ selects the unit $\alpha_i$ and $\beta_j$ out of the corresponding vectors for the matched outcome.\footnote{Specifically, in the student-teacher example, the entries of the matrix $B_1$ are indicators denoting the student associated with the corresponding outcome $y_{it}^*$ in $\bs{Y}^*$, and entries of the matrix $B_2$ are indicators denoting the teacher matched to student $i$ at time $t$.} The econometrician observes $\bs{Y}^*$, $B$, and $X$. Because $\alpha$ and $\beta$ appear additively in the model, a normalization is required for their separate identification. To this end, we impose the additional restriction $\bs{1}_c^\prime\bs{\beta}=0$, which is conducive to the subsequent analysis on the subvector $\bs{\beta}$. 
 
Our goal is the optimal estimation of the fixed effects $\bs{\theta}$ or its sub-vector instead of the regression coefficients $\bs{\gamma}$.\footnote{General results in the Appendix cover linear functions of $\bs{\theta}$.} To focus on the optimal shrinkage estimator and how it is related to the structure in $B$, we first study the case where $\bs{\gamma}$ is known and consider the simplified model:
\begin{equation}
	\bs{Y} = B\bs{\theta} + \bs{U}, \text{ where } \bs{Y}:=\bs{Y}^* - X\bs{\gamma}.
	\label{eq:underlyingmodelcompact.nogamma}
\end{equation}
We subsequently show in Corollary~\ref{cor:reg} that the optimality results can be extended to the case in which $\bs{\gamma}$ is consistently estimated.

The identification of $\bs{\theta}$ is determined by the link information captured in the matrix $B$, or equivalently the function $j(\cdot,\cdot)$. 
To study this relation in graph theoretic terms, the links are viewed as edges in a bipartite graph between two sets of nodes, $\mathcal{S}:=\left\{1,\dots,r\right\}$ (students or workers) and $\mathcal{T}:=\left\{1,\dots,c\right\}$ (teachers or firms). 
The set of edges between $i$ and $j$ is the set $\mathcal{E}_{ij}:=\left\{t\leq T: \ j(i,t)=j\right\}$. The graph implied by the matrix $B$ is then fully defined as $\mathcal{G}:=\left(\mathcal{S},\mathcal{T},\left\{\mathcal{E}_{ij}\right\}\right)$.
We assume that the graph $\mathcal{G}$ is connected to ensure the identification of $\theta$.\footnote{A pair of nodes $i$ and $j$ is said to be connected if there exists a path between them -- that is, a sequence of unique edges that begins with one of $(i,j)$ and ends with the other. A graph is said to be connected if there exists a path for every pair of distinct vertices.}.   Subsequently, we will be concerned with
weak identification of $\bs{\theta}$ in finite samples when the connectivity is weak.

We consider the compound loss of a high-dimensional parameter estimator $\hat{\bs{\theta}}$. Specifically, the estimates of the unit-specific parameters are evaluated under the conventional quadratic loss function 
\begin{equation}
	l_w(\hat{\bs{\theta}},\bs{\theta})
	:=
	\left[\hat{\bs{\theta}}-\bs{\theta}\right]' W 
	\left[\hat{\bs{\theta}}-\bs{\theta}\right],
	\label{eq:compound.loss}
\end{equation}
where $W$ is a positive semi-definite matrix that serve both the role of selection and normalization. 
 In particular, we consider two leading cases for $W$:
\begin{equation}
	W_{a+b} := \frac{1}{r+c} I_{r+c}
	\quad \text{and} \quad
	W_b := \frac{1}{c}\begin{bmatrix}
		0_{r\times r} & 0_{r\times c} \\
		0_{c \times r} & I_c
	\end{bmatrix}.
	\label{eq:weights}
\end{equation}
In the first case where $W=W_{a+b}$, we are interested in the full vector $\bs{\theta}$. In the second case where $W=W_{b}$, we are interested in the subvector $\bs{\beta}$ only. This weighting matrix appears subsequently in the loss-based evaluation of estimators.

The conventional two-way fixed effects estimator estimates $\bs{\theta}$ by least squares (LS) which requires an inverse of the Laplacian matrix
\begin{equation}
 L:=B'B. 
\end{equation}
The matrix $L$ is singular and the number of zero eigenvalues corresponds to the number of connected components in the graph. Therefore, a generalized inverse of $L$ is employed to obtain the LS estimator. We show in Lemma \ref{lm: LS defn} in the Supplementary Online Appendix that, under the normalization $\bs{1}_c^\prime\bs{\beta}=0$ and the assumption of a single connected component, the unique (restricted) LS estimator is given by
\begin{equation}
		\hat{\bs{\theta}}^{\text{ls}} =  L^- B^\prime \bs{Y},
		\label{eq:ls.estimator}
\end{equation}
where $L^{-}$ is a generalized inverse of $L$ that takes the form
\begin{equation}
		 L^- :=
		\mathcal{R} L^\dagger \mathcal{R}^\prime, \quad
		\mathcal{R} := 
		\begin{bmatrix}
			I_r & \frac{1}{c}1_{r\times c} \\
			0_{c\times r} & I_c-\frac{1}{c}1_{c\times c},
		\end{bmatrix} 
	= \begin{bmatrix} {\cal R}_a \\ {\cal R}_b
	  \end{bmatrix},
	  \label{eq:Lminus.calR}
\end{equation}
and $ L^\dagger$ is the Moore-Penrose inverse of $L=B'B$. Using the partitions of ${\cal R}$, we can express the LS estimators of $\bs{\alpha}$ and $\bs{\beta}$ as
$
\hat{\bs{\alpha}}^{\text{ls}}=\mathcal{R}_a \hat{\bs{\theta}}^{\text{\text{ls}}}$ and  $\hat{\bs{\beta}}^{\text{ls}}=\mathcal{R}_b \hat{\bs{\theta}}^{\text{\text{ls}}}.
$
The rotation matrix $\mathcal{R}$ ensures that the LS estimator satisfies the normalization condition: by construction,
$\bs{1}_c^\prime\hat{\bs{\beta}}^{\text{ls}}=0$.

While $\hat{\bs{\theta}}^{\text{ls}}$ is an unbiased estimator of $\bs{\theta}$, it is generally not optimal under the quadratic compound loss function in (\ref{eq:compound.loss}). In particular, it is associated with a large compound loss when the graph $\mathcal{G}$ is weakly connected. 
Because $\hat{\bs{\theta}}^{\text{ls}}$ relies on the presence of movers, such as teachers/workers moving between different sets of students/firms, for estimation, it is imprecise when there are only a few movers between some isolated clusters,  implying that the removal of a small number of edges is sufficient to separate $\mathcal{G}$ into disconnected components. 
In this case, some non-zero eigenvalues of the Laplacian matrix $L$ are close to zero, which implies that the generalized inverse $L^{\dagger}$ that appears in $\hat{\bs{\theta}}$ has large eigenvalues. Below we study an asymptotically optimal  estimator that minimize the compound loss in an asymptotic framework where a subset of the eigenvalues of $L$ converge to $0$ asymptotically, modeling estimation with a weakly connected graph.

\section{Empirical Bayes with Unbiased Risk Estimate}
\label{sec:empbayes}

To obtain an optimal estimator that minimizes the quadratic compound loss function, we pursue an empirical Bayes strategy. Our starting point is a hierarchical Bayes model with a family of prior distributions indexed by a vector of hyperparameters. Based on the class of priors we derive a posterior mean estimator and then subsequently determine the hyperparameters in a data-driven manner, which leads to the empirical Bayes estimator. 

A generative economic model would start from the vector of unit-specific coefficients $\bs{\theta}$ characterizing agents ``technologies and preferences.'' A matching mechanism that groups students into classes and assigns teachers to classes, or a search model that connects workers and firms, determines the links encoded in the matrix $B$ and generates a conditional distribution $p(B|\bs{\theta})$. Finally, outcomes are determined based on $p(\bs{Y}|B,\bs{\theta})$ which is characterized by (\ref{eq:underlyingmodelcompact.nogamma}). In a Bayesian setting, one specifies a prior distribution for $\bs{\theta}$, which we denote by $p(\bs{\theta})$. This leads to a joint distribution of data and parameters:
\be
p(\bs{Y},B,\bs{\theta}) = p(\bs{Y}|B,\bs{\theta}) p(B|\bs{\theta}) p(\bs{\theta}).
\label{eq:joint.data.para}
\ee
According to Bayes Theorem the posterior distribution of $\bs{\theta}$ given the observables $(\bs{Y},B)$ is
\begin{align}
   p(\bs{\theta}|\bs{Y},B) &\propto p(\bs{Y}|B,\bs{\theta}) p(B|\bs{\theta}) p(\bs{\theta}) \\
   &\propto p(\bs{Y}|B,\bs{\theta}) p(\bs{\theta}|B), \nonumber 
\end{align}
where $\propto$ denotes proportionality. To obtain the second line, we replaced $p(B|\bs{\theta}) p(\bs{\theta})$ by $p(\bs{\theta}|B) p(B)$ and absorbed $p(B)$ into the constant of proportionality because it does not depend on $\theta$. Rather than deriving $p(\bs{\theta}|B)$ from a structural model of link formation $p(B|\bs{\theta})$, we directly specify a prior $p(\bs{\theta}|B,\bs{\lambda})$, where $\bs{\lambda}:=(\mu,\lambda_a,\lambda_b,\phi)$ is a vector of hyperparameters that incorporates the assortative matching patterns reasonably expected in practice. We emphasize that the prior only serves to define the estimator class, and is not assumed to be correctly specified for our subsequent optimality results to hold. 

The prior distribution $p(\bs{\theta}|B,\bs{\lambda})$ is described in Section~\ref{subsec:empbayes.prior} and the mean of the posterior distribution $p(\bs{\theta}|Y,B,\bs{\lambda})$ is derived in Section~\ref{subsec:empbayes.posterior}. Hyperparameter determination is discussed in Section~\ref{subsec:empbayes.hyperparameters} and Section~\ref{subsec:empbayes.altpriormean} shows how to covariates can be used to center the prior for $\bs{\theta}$, which may be useful in applications.

\subsection{A Novel Prior with Assortative Matching}
\label{subsec:empbayes.prior}

We proceed by introducing our proposed prior $p(\bs{\theta}|B,\bs{\lambda})$. To build the observed link information in the prior for $\bs{\theta}$, we first consider the Laplacian matrix $L=B'B$ and its interpretation. Because $B_1$ and $B_2$ are selection matrices by construction, we know that $B_1'B_1$ is a $r \times r$ diagonal matrix whose diagonal elements counts the number of times student $i$ appears in the data, $B_2'B_2$ is a $c \times c$ diagonal matrix whose diagonal elements count the number of times teacher $j$ appears in the data, $B_1'B_2$ is a $r \times c$ matrix whose $(i,j)$ element equals the number of times student $i$ and teacher $j$ are matched over the $T$ time periods. Define
\begin{equation}
D:=\text{diag}(L), \quad A:=L-D, \quad \mathcal{A} :=D^{-1/2}AD^{-1/2},
\end{equation}
which are the degree,  adjacency, and normalized adjacency matrices of $\mathcal{G}$, respectively. 
Because $B_1'B_1$ and $B_2'B_2$ are both diagonal, we have
\begin{eqnarray}
\mathcal{A}:=\begin{bmatrix}
	0_{r \times r} & (B_1'B_1)^{-1/2}B_1'B_2(B_2'B_2)^{-1/2} \\
	(B_2'B_2)^{-1/2}B_2'B_1(B_1'B_1)^{-1/2} & 0_{c \times c}
\end{bmatrix},
\end{eqnarray}
where the $(i,j)$ element of the upper right submatrix $\mathcal{A}_{12}=(B_1'B_1)^{-1/2}B_1'B_2(B_2'B_2)^{-1/2}$ measures the normalized matched frequency between $i$ and $j$. In the proposed prior, we model assortative matching based on the observed matched frequency in $\mathcal{A}$.

Conditional on $B$, we introduce a normal prior for $\bs{\theta}$ that depends on a vector of 
hyperparameters $\bs{\lambda}:=(\mu,\lambda_a,\lambda_b,\phi)$, where $\mu$ models the mean value of $\alpha_i$, $\lambda_a$ and $\lambda_b$ model the precision (inverse of variance) of $\alpha_i$ and $\beta_j$, respectively, and $\phi$ models the correlation between $\alpha_i$ and $\beta_j$ given $\mathcal{A}$. The mean of $\beta_j$ is $0$ in the prior to be consistent with the normalization. Specifically, the 
the proposed prior takes the form
\begin{align}
	\bs{\theta}\mid B,\bs{\lambda}
	&\sim\mathcal{N}
	\left(\bs{v},\left[\Lambda^*\right]^{-1}\sigma^2\right),  \text{\ \ where }
	\bs{v} := \begin{bmatrix} 
		\mu\bs{1}_r \\ \bs{0}_c
	\end{bmatrix}, \quad \Lambda := \begin{bmatrix}
	\lambda_aI_r  & 0_{r\times c} \\
	0_{c\times r}  & \lambda_bI_c
\end{bmatrix},
\label{eq:sorting.prior}
\\
	\Lambda^*&:=\Lambda^{1/2}[-\phi\mathcal{A}+I]\Lambda^{1/2} =\begin{bmatrix}
		\lambda_aI_r  &  -\phi \mathcal{A}_{12} \cdot \lambda_a^{1/2}\lambda_b^{1/2}\\
		-\phi \mathcal{A}_{21}\cdot \lambda_a^{1/2}\lambda_b^{1/2} & \lambda_bI_c
	\end{bmatrix}. 	\nonumber 
\end{align}

In the special case where $\phi=0$, the prior in \eqref{eq:sorting.prior} reduces to
\begin{equation}
	\label{eq:indep.prior}
	\alpha_i|\bs{\lambda} \sim_{iid} \mathcal{N} \left(\mu,\frac{\sigma^2}{\lambda_a} \right), \quad \beta_j | \bs{\lambda} \sim_{iid} \mathcal{N} \left(0,\frac{\sigma^2}{\lambda_b}\right).
\end{equation}
Furthermore, the prior under $\phi=0$ makes the simplifying assumption that $\alpha_i$ and $\beta_j$ are all independent and the prior does not depend on $B$:
\be
p(\bs{\theta}|B,\bs{\lambda}) = p(\bs{\theta}|\bs{\lambda}) = \left(\prod_{i=1}^r p(\alpha_i|\bs{\lambda}) \right) \left(\prod_{j=1}^c p(\beta_j|\bs{\lambda}) \right).
\label{eq:p.theta}
\ee
On the other hand, when $\phi\neq0$, Lemma~\ref{lm:proposedII} shows that the prior incorporates the assortative matching patterns present in the bipartite graph in a parsimonious way  through $\phi$.

\begin{lemma}[Conditional distribution of prior]
	\label{lm:proposedII}
	Given the prior in \eqref{eq:sorting.prior}, the conditional prior distributions are
	\begin{align*}
		\bs{\alpha} \mid B,\bs{\beta}
		&\sim
		\mathcal{N}
		\left(
		\bs{1}_r\mu + \sqrt{\frac{\lambda_b}{\lambda_a}} 
		\cdot \phi\mathcal{A}_{12} \cdot \bs{\beta} 
		\ , \
		\lambda_a^{-1}\sigma^2I_r
		\right),\\
		\bs{\beta} \mid B,\bs{\alpha}
		&\sim
		\mathcal{N}
		\left(
		\bs{0}_c + \sqrt{\frac{\lambda_a}{\lambda_b}}
		\cdot \phi \mathcal{A}_{21} \cdot \left(\bs{\alpha} - \bs{1}_r\mu\right)
		\ , \
		\lambda_b^{-1}\sigma^2I_c
		\right).
	\end{align*}
	Furthermore,
	\begin{equation*}
		\text{corr}(\alpha_i,\beta_j\mid B,\bs{\alpha}_{-i},\bs{\beta}_{-j}) 
		=
		\phi \cdot
		\mathcal{A}_{12,ij},
		\label{eq:corr}
	\end{equation*}
where $\mathcal{A}_{12,ij}$ denotes the $i^{th}$ row and $j^{th}$ column of the matrix $\mathcal{A}_{12}$.
\end{lemma}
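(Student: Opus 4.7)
The plan is to exploit the fact that the joint precision matrix of $\bs{\theta}=(\bs{\alpha}',\bs{\beta}')'$ is, by definition of the prior \eqref{eq:sorting.prior}, given by $K=\sigma^{-2}\Lambda^{*}$, and then read off both conditional distributions and the partial correlation directly from the blocks of $K$. Nothing beyond standard Gaussian identities is needed; the main care is in tracking the factors $\sigma^2$ and the sign of $\phi$.

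First, I would recall the following well-known identity: if $(\bs{x}_1,\bs{x}_2)\sim\mathcal N(\bs v,K^{-1})$ with the precision matrix partitioned conformably as $K=\begin{bmatrix}K_{11}&K_{12}\\K_{21}&K_{22}\end{bmatrix}$, then
\[
\bs{x}_1\mid\bs{x}_2\sim\mathcal N\!\left(\bs v_1-K_{11}^{-1}K_{12}(\bs{x}_2-\bs v_2),\,K_{11}^{-1}\right).
\]
Applying this to our prior with $K=\sigma^{-2}\Lambda^*$, the blocks are $K_{\alpha\alpha}=\sigma^{-2}\lambda_a I_r$, $K_{\beta\beta}=\sigma^{-2}\lambda_b I_c$, and $K_{\alpha\beta}=-\sigma^{-2}\phi\sqrt{\lambda_a\lambda_b}\,\mathcal{A}_{12}$. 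Since $K_{\alpha\alpha}$ and $K_{\beta\beta}$ are scalar multiples of the identity, their inverses are immediate, and a one-line substitution yields the conditional means $\bs 1_r\mu+\sqrt{\lambda_b/\lambda_a}\,\phi\mathcal{A}_{12}\bs\beta$ and $\sqrt{\lambda_a/\lambda_b}\,\phi\mathcal{A}_{21}(\bs\alpha-\bs 1_r\mu)$, with conditional covariances $\lambda_a^{-1}\sigma^2 I_r$ and $\lambda_b^{-1}\sigma^2 I_c$, as claimed.

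For the partial correlation statement, I would invoke the standard fact that, for a multivariate Gaussian with precision matrix $K$,
\[
\operatorname{corr}(x_i,x_j\mid \bs{x}_{-\{i,j\}})=-\frac{K_{ij}}{\sqrt{K_{ii}K_{jj}}}.
\]
Reading off the appropriate entries of $K=\sigma^{-2}\Lambda^*$ gives $K_{\alpha_i,\alpha_i}=\sigma^{-2}\lambda_a$, $K_{\beta_j,\beta_j}=\sigma^{-2}\lambda_b$, and $K_{\alpha_i,\beta_j}=-\sigma^{-2}\phi\sqrt{\lambda_a\lambda_b}\,\mathcal{A}_{12,ij}$, and substitution produces $\phi\mathcal{A}_{12,ij}$ after the $\sigma^2$ and $\sqrt{\lambda_a\lambda_b}$ factors cancel. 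Conditioning jointly on $(\bs\alpha_{-i},\bs\beta_{-j})$ is the relevant conditioning set since the partial correlation formula conditions on all other coordinates of the joint vector, with $B$ treated throughout as a conditioning variable inherited from the prior specification.

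There is no genuine obstacle here: the lemma is essentially a direct consequence of the Gaussian conditioning formulas once one recognizes that $\Lambda^{*}$ is the conditional (inverse-covariance) parameterization, not the covariance. The only minor bookkeeping is verifying that the off-diagonal sign is $-\phi$ in $\Lambda^*$ and hence $+\phi$ in the conditional mean and partial correlation, and that the $\sqrt{\lambda_a/\lambda_b}$ versus $\sqrt{\lambda_b/\lambda_a}$ ratios emerge correctly from the ratio of the diagonal precisions. No alternative route (e.g., joint-density completion of the square, or computing $\operatorname{Cov}(\bs\theta)$ explicitly by block inversion) is needed, and both would lead to the same answer after more algebra.
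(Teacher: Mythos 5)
Your proposal is correct and follows essentially the same route as the paper: the paper's own proof also reads the conditional distributions off the precision matrix $\sigma^{-2}\Lambda^*$ (via its Lemma on block inversion identities, which is exactly the precision-form conditioning formula you state) and obtains the partial correlation from the standard identity $\operatorname{corr}(x_i,x_j\mid \bs{x}_{-(i,j)})=-M_{ij}/(M_{ii}M_{jj})^{1/2}$ applied to the precision matrix. Your sign and scale bookkeeping ($-\phi$ in $\Lambda^*$ becoming $+\phi$ in the conditional mean, and the $\sqrt{\lambda_b/\lambda_a}$ ratio from $K_{11}^{-1}K_{12}$) checks out.
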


Lemma \ref{lm:proposedII} shows that the correlation between $\alpha_i$ and $\beta_j$ is entirely determined by the parameter $\phi$ and their observed matched frequency in the normalized adjacency matrix $\mathcal{A}$. A positive value of $\phi$ implies positive assortative matching. For example, a higher value of $\beta_j$, i.e., a teacher with higher value-added or a firm with higher productivity, increases the conditional mean of $\alpha_i$ connected to it, implying on average better students taught by teacher $j$ or more productive workers in firm $j$.

For $\phi \not=0$ our prior \eqref{eq:sorting.prior} avoids the joint independence assumption of \eqref{eq:p.theta} and adopts only a conditional independence assumption:
\begin{equation}
	p(\bs{\alpha}|B,\bs{\lambda},\bs{\beta}) = \prod_{i=1}^r p(\alpha_i|B,\bs{\lambda},\bs{\beta}) 
	\label{eq:prior.extend}
\end{equation}
with a correspondence for $p(\bs{\beta}|B,\bs{\lambda},\bs{\alpha})$. This bears similarity to the one-sided correlated random effects specification employed by \cite{Bonhomme2019}.

\subsection{Posterior Mean of Bayes Estimator}
\label{subsec:empbayes.posterior}

Under the assumption that $u_{it} | (B,\bs{\theta}) \sim_{iid} \mathcal{N}(0,\sigma^2)$, the posterior mean under the prior in (\ref{eq:sorting.prior}) is given by 
\begin{equation}
		\hat{\bs{\theta}}(\bs{\lambda}) :=
		S_1\hat{\bs{\theta}}^{\text{ls}} + 
		S \bs{v},
		\label{eq:Bayes.estimator}
\end{equation}
where 
\begin{equation}
S := \mathcal{R} [L + \Lambda^*]^{-1} \Lambda^* 
\text{\ \ and\ \ } S_1:=\mathcal{R}-S. 
\end{equation}
Due to the normalization $\bs{1}_c^\prime\bs{\beta}=0$, this posterior mean formula involves the rotation matrix $\mathcal{R}$, previously defined in (\ref{eq:Lminus.calR}).
A derivation is provided in Lemma~\ref{lm:pstm} in the  Supplementary Online Appendix. 

The hyperparameter $\bs{\lambda}$ is chosen over the set 
	$
	\bs{\lambda} =(\mu, \lambda_a, \lambda_b, \phi)\in
	\bar{\mathcal{J}} := 
	[-\bar{\mu},\bar{\mu}] \times 
	[0,\infty] \times [0,\infty] \times
	[\-\bar{\phi},\bar{\phi}]
	$
	for some finite $\bar{\mu}$ and $0<\bar{\phi}<1$.\footnote{In many applications bounds for $\mu$ can be obtained from bounds on the outcome variable, e.g., test scores or wages. Shrinking to a location outside these bounds does not further reduce the loss. In contrast, we do not impose upper and lower bounds for $\lambda_a$ and $\lambda_b$ in order to consider optimal estimation in a wider class. As we show below, setting $\lambda_a$ or $\lambda_b$ to $0$ or $\infty$ leads to some commonly used one-way fixed effect estimators.}
The hyperparameter $\bs{\lambda}$ affects the posterior mean through $S$ and $\bs{v}$, which determine the shrinkage weight and the shrinkage location, respectively. 
Specifically, $S$ depends on $(\lambda_a,\lambda_b,\phi)$ and $\bs{v}$ depends on $\mu$. When $\lambda_a=0$ and $\lambda_b=0$, the posterior mean is the LS estimator. Moreover, when $\lambda_a \rightarrow \infty$ and $\lambda_b \rightarrow \infty$, the posterior mean becomes the prior mean. We will regard 
$\{\hat{\bs{\theta}}(\bs{\lambda}) \, | \, \lambda \in {\bar{\mathcal{J}} } \}$ 
as a class of shrinkage estimators and study the problem of optimally choosing an estimator within this class.

Given a data-driven choice of the hyperparameter $\hat{\bs{\lambda}}$, the resulting estimator $\hat{\bs{\theta}}(\hat{\bs{\lambda}})$ is an empirical Bayes (EB) estimator; 
see \cite{Robbins1956} or, for instance, \cite{Robert1994} for a textbook treatment. 
Although this class of shrinkage estimator is motivated by the Gaussian prior and the Gaussian regression error, the the EB estimator proposed below is asymptotically optimal within this large class of estimators even when these Gaussian distributions are misspecified. 


\subsection{Hyperparameter Determination}
\label{subsec:empbayes.hyperparameters}

\noindent {\bf Unbiased Risk Estimation.} We consider an EB based estimator that chooses the hyperparameter $\bs{\lambda}$ by minimizing an unbiased estimate of the risk. To define the estimation risk, we consider a frequentist approach that conditions on the unit-specific effects $\bs{\theta}$. Because we deliberately abstracted from a theory of how matches are formed, we also condition on the observed matrix $B$. In applications in which the matching mechanism is deterministic and $p(B|\bs{\theta})$ is a point mass, there would be no other choice than to condition on $B$. For a given hyperparameter $\bs{\lambda}$,  we define the risk as
\be
	R_w(\hat{\bs{\theta}}(\bs{\lambda}),\bs{\theta}):=\mathbb{E}_{\bs{\theta},B}[l_w(\hat{\bs{\theta}}(\bs{\lambda}),\bs{\theta})].
\ee 
The expectation is taken with respect to $\bs{Y}$ and the subscript indicates that we condition on $(\bs{\theta},B)$.

We now construct an unbiased estimate of $R_w(\hat{\bs{\theta}}(\bs{\lambda}),\bs{\theta})$, denoted by $\text{URE}(\bs{\lambda})$, that only depends on the observations $(\bs{Y},B)$ and not on the unknown parameter vector $\bs{\theta}$. Using the formula for $\hat{\bs{\theta}}(\bs{\lambda})$ in \eqref{eq:Bayes.estimator}, the risk function can be written as
\begin{equation}
	R_w(\hat{\bs{\theta}}(\bs{\lambda}),\bs{\theta})
	= \tr \big[ S'WS (\bs{\theta} - \bs{v})(\bs{\theta} - \bs{v})' \big] + \sigma^2\tr[ S_1'WS_1 L^- ].\label{eq:ure.1}
\end{equation}
Under the quadratic loss, the second term on the right-hand side of \eqref{eq:ure.1} comes from the variance of $S_1\hat{\bs{\theta}}^{\text{ls}}$, which is known, and the first term is a quadratic function of the bias induced by the prior. A natural unbiased estimator of the first term is 
\begin{equation}\label{eq:quadmean_unbiased}
\tr \big[ S'WS (\hat{\bs{\theta}}^{\text{ls}} - \bs{v})(\hat{\bs{\theta}}^{\text{ls}} - \bs{v})' \big] 
		-
		\sigma^2\tr[ S'WS L^- ],
\end{equation}
where $\sigma^2\tr[ S'WS L^- ]$ is used for bias correction because 
	\begin{equation}
		\mathbb{E}_{\bs{\theta},B}\left[
		\tr \big[ S'WS (\hat{\bs{\theta}}^{\text{ls}} - \bs{v})(\hat{\bs{\theta}}^{\text{ls}} - \bs{v})' \big] \right]
		= \tr \big[ S'WS (\bs{\theta} - \bs{v})(\bs{\theta} - \bs{v})' \big] +
		\sigma^2\tr[ S'WS L^- ] .	\label{eq:ure.2}
	\end{equation}
Combining \eqref{eq:ure.1} and \eqref{eq:ure.2} , we obtain 
\begin{eqnarray}
	\text{URE}(\bs{\lambda}) := \tr \big[ S'WS (\hat{\bs{\theta}}^{\text{ls}} - \bs{v})(\hat{\bs{\theta}}^{\text{ls}} - \bs{v})' \big]- \sigma^2\tr[S^\prime W S  L^- ] + \sigma^2\tr[S_1^\prime W S_1  L^- ]   .\label{eq:ure} 
\end{eqnarray}
The unbiasedness of $\text{URE}(\bs{\lambda})$ is summarized in the following Lemma. 


\begin{lemma}[Unbiased Risk Estimate]
	\label{lem:ure}
	The risk estimate in (\ref{eq:ure}) satisfies 
	\begin{equation*}
		\mathbb{E}_{\bs{\theta},B}[\text{URE}(\bs{\lambda})]
		=
		R_w(\hat{\bs{\theta}}(\bs{\lambda}),\bs{\theta}).
	\end{equation*}
\end{lemma}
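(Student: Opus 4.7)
The plan is to verify the two displayed identities \eqref{eq:ure.1} and \eqref{eq:ure.2} and combine them. All of the argument is a mean/variance decomposition of Gaussian-free quadratic forms, using only two facts about $\hat{\bs{\theta}}^{\text{ls}}$: (a) it is unbiased, $\mathbb{E}_{\bs{\theta},B}[\hat{\bs{\theta}}^{\text{ls}}]=\bs{\theta}$, under the normalization $\bs{1}_c^\prime\bs{\beta}=0$; and (b) its conditional covariance equals $\sigma^2 L^-$. Both are consequences of the $\bs{Y}=B\bs{\theta}+\bs{U}$ model together with the restricted-LS identity $\mathcal{R} L^\dagger \mathcal{R}^\prime L\bs{\theta}=\bs{\theta}$ that is asserted in the Supplementary Appendix lemma on the LS estimator, and they do not require Gaussian errors; only $\mathbb{E}[\bs{U}]=0$ and $\mathrm{Var}(\bs{U})=\sigma^2 I$ are used.

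First I would rewrite the estimation error in the form needed for the risk. Using $\mathcal{R}=S_1+S$ and noting that $\mathcal{R}\bs{\theta}=\bs{\theta}$ under the normalization (which is immediate from the block form of $\mathcal{R}$ in \eqref{eq:Lminus.calR} together with $\bs{1}_c^\prime\bs{\beta}=0$), I decompose
\begin{equation*}
\hat{\bs{\theta}}(\bs{\lambda})-\bs{\theta}
= S_1\bigl(\hat{\bs{\theta}}^{\text{ls}}-\bs{\theta}\bigr) - S\bigl(\bs{\theta}-\bs{v}\bigr).
\end{equation*}
Taking $\mathbb{E}_{\bs{\theta},B}$ of the quadratic form $l_w$, the cross term vanishes by (a), and using (b) together with cyclicity of the trace gives
\begin{equation*}
R_w(\hat{\bs{\theta}}(\bs{\lambda}),\bs{\theta})
= \sigma^2\tr[S_1^\prime W S_1 L^-] + \tr\bigl[S^\prime W S (\bs{\theta}-\bs{v})(\bs{\theta}-\bs{v})^\prime\bigr],
\end{equation*}
which is exactly \eqref{eq:ure.1}.

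Next I would establish \eqref{eq:ure.2} by the same decomposition applied to $\hat{\bs{\theta}}^{\text{ls}}-\bs{v}=(\hat{\bs{\theta}}^{\text{ls}}-\bs{\theta})+(\bs{\theta}-\bs{v})$. Expanding the outer product, the two cross terms are centered by (a), and taking expectations yields
\begin{equation*}
\mathbb{E}_{\bs{\theta},B}\bigl[(\hat{\bs{\theta}}^{\text{ls}}-\bs{v})(\hat{\bs{\theta}}^{\text{ls}}-\bs{v})^\prime\bigr]
= \sigma^2 L^- + (\bs{\theta}-\bs{v})(\bs{\theta}-\bs{v})^\prime.
\end{equation*}
Left-multiplying by $S^\prime W S$ and taking the trace (with cyclicity) gives \eqref{eq:ure.2}, so the first two summands in $\text{URE}(\bs{\lambda})$ are an unbiased estimator of $\tr[S^\prime W S(\bs{\theta}-\bs{v})(\bs{\theta}-\bs{v})^\prime]$. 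Adding the third, deterministic, summand $\sigma^2\tr[S_1^\prime W S_1 L^-]$ reproduces \eqref{eq:ure.1} in expectation, proving the lemma.

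The only nontrivial point — and the one I would write out carefully — is the unbiasedness of $\hat{\bs{\theta}}^{\text{ls}}$ under the rank-deficient normalization: substituting $\bs{Y}=B\bs{\theta}+\bs{U}$ into $\hat{\bs{\theta}}^{\text{ls}}=L^- B^\prime \bs{Y}=\mathcal{R}L^\dagger\mathcal{R}^\prime L\bs{\theta}+L^- B^\prime\bs{U}$, one must check that $\mathcal{R}L^\dagger\mathcal{R}^\prime L\bs{\theta}=\bs{\theta}$ whenever $\bs{1}_c^\prime\bs{\beta}=0$ and the graph is connected. This is precisely the content of the auxiliary LS lemma cited from the Supplementary Online Appendix, so I would invoke it rather than re-derive it; every other step is mean--variance bookkeeping that does not use normality of $\bs{U}$.
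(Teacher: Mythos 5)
Your proposal is correct and follows essentially the same route as the paper: the paper's argument is precisely the bias--variance decomposition yielding \eqref{eq:ure.1}, the unbiasedness identity \eqref{eq:ure.2} for the quadratic term in $\hat{\bs{\theta}}^{\text{ls}}-\bs{v}$, and their combination, all resting on $\mathbb{E}_{\bs{\theta},B}[\hat{\bs{\theta}}^{\text{ls}}]=\bs{\theta}$ and $\mathbb{V}_{\bs{\theta},B}[\hat{\bs{\theta}}^{\text{ls}}]=\sigma^2 L^-$ without any appeal to normality. Your explicit decomposition $\hat{\bs{\theta}}(\bs{\lambda})-\bs{\theta}=S_1(\hat{\bs{\theta}}^{\text{ls}}-\bs{\theta})-S(\bs{\theta}-\bs{v})$ and the invocation of $\mathcal{R}L^{-}L=\mathcal{R}$ for unbiasedness simply make explicit the steps the paper leaves implicit in the text surrounding \eqref{eq:ure.1}--\eqref{eq:ure.2}.
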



We denote the URE hyperparameter estimate by
\begin{equation}
	\bs{\lambda}^{\text{ure}} 
	:=
	\argmin_{\bs{\lambda}\in \mathcal{J} } \;
	\text{URE}(\bs{\lambda}),
\end{equation}
where $
\bs{\lambda} =(\mu, \lambda_a, \lambda_b, \phi)\in
\mathcal{J}:= 
[-\bar{\mu},\bar{\mu}] \times 
(0,\infty) \times (0,\infty) \times
[\-\bar{\phi},\bar{\phi}]
$.\footnote{Here we rule out $0$ and $\infty$ from the parameter space of $\lambda_a$ and $\lambda_b$ to facilitate showing uniform convergence of some functions over $\mathcal{J}$, which is needed for the asymptotic analysis below. Importantly, we do allow $\lambda_a$ and $\lambda_b$ to be arbitrarily close to the boundaries.}
The proposed estimator is $\hat{\bs{\theta}}(\bs{\lambda}^{\text{ure}})$.
Justification for $\hat{\bs{\theta}}(\bs{\lambda}^{\text{ure}})$ goes beyond the unbiased property in Lemma \ref{lem:ure}. In the next section, we show that 
$\hat{\bs{\theta}}(\bs{\lambda}^{\text{ure}})$ is the asymptotic optimal estimator in a decision theoretic framework that allows for weakly connected graphs.

\noindent {\bf Treatment of $\sigma^2$.} To implement the hyperparameter determination one can replace  $\sigma^2$ with its LS squares estimate 
\begin{equation}\label{eq:sigma_est}
	\hat{\sigma}^2=\frac{1}{rT-(r+c-1)}\bs{Y}'[I_{rT}-B(B'B)^{-}B']\bs{Y}.
\end{equation}
Lemma~\ref{lm:LSvar} in the Supplementary Online Appendix shows that this estimator is unbiased, and it is also consistent under weak conditions. A practitioner might be concerned about heteroskedasticity. In this case $\sigma^2\tr[ S'WS L^- ]$  and  $\sigma^2\tr[ S_1'WS_1 L^- ]$ in  \eqref{eq:ure} can be, respectively, replaced with the leave-out estimates of 
$\tr\big[ S'WS\mathbb{V}_{\bs{\theta}, B}[\hat{\bs{\theta}}^{ \text{ls}}] \big]$ and $\tr \big[ S_1'WS_1 \mathbb{V}_{\bs{\theta}, B}[\hat{\bs{\theta}}^{\text{ls}}] \big]$ provided by \cite{Kline2020}.\footnote{We thank Patrick Kline for this insightful suggestion.} This leads to a heteroskedastic-robust $\text{URE}(\bs{\lambda})$. In addition, one could consider shrinkage estimation for $\sigma^2_i$ as well, as it is commonly done in full Bayesian implementations of one-way effect models, e.g., \cite{Liu2023} and \cite{Liu2023a}. In the remainder of this paper we will establish asymptotic optimality within the homoskedastic framework. Extensions to the heteroskedastic case are left for future research.

\noindent {\bf Marginal Likelihood.} Alternatively, one could follow the tradition in the EB literature and estimate $\bs{\lambda}$ by maximizing the marginal likelihood function
\be 
p(\bs{Y}|B,\bs{\lambda}) = \int p(\bs{Y}|\bs{\theta},B) p(\bs{\theta}|B,\bs{\lambda}) d \bs{\theta}, 
\ee
which leads to 
\be
\bs{\lambda}^{\text{mle}} 
:=
\argmin_{\bs{\lambda}\in \mathcal{J} } \;
p(Y|B,\bs{\lambda})
\ee
and $\hat{\bs{\theta}}(\bs{\lambda}^{\text{mle}})$.
We expect $\hat{\bs{\theta}}(\bs{\lambda}^{\text{mle}})$ to perform well whenever the prior \eqref{eq:prior.extend} is correctly specified. If $\bs{\theta}$ is generated from a different distribution than \eqref{eq:prior.extend}, then the marginal likelihood will be misspecified and $\hat{\bs{\theta}}(\bs{\lambda}^{\text{mle}})$ is likely to be sub-optimal. 
Comparisons in the Monte Carlo studies in Section~\ref{sec:simul} confirm that $\hat{\bs{\theta}}(\bs{\lambda}^{\text{mle}})$ is more sensitive to the prior distribution than $\hat{\bs{\theta}}(\bs{\lambda}^{\text{ure}})$.

\subsection{Covariate-Assisted Shrinkage}
\label{subsec:empbayes.altpriormean}

In practice it might be useful to shrink the fixed effects to an index constructed from observable covariates that are informative about the latent heterogeneity. 
	Define $Z_a\in\mathbb{R}^{r\times k_a}$, $Z_b\in\mathbb{R}^{c\times k_b}$, $\bs{\delta}_a\in\mathbb{R}^{k_a}$ and $\bs{\delta}_b\in\mathbb{R}^{k_b}$, where $k_a$ and $k_b$ are the dimensions of the regressors used to construct covariate-based prior means for the $\alpha_i$s and $\beta_i$s. 
	For student-teacher data sets, examples of potential $Z_a$ and $Z_b$ include student demographic information, such as gender, ethnicity, and teacher information such as years of teaching experience, respectively.
	Define
	\begin{equation}
		Z:=\begin{bmatrix}
			Z_a & 0_{r\times k_b} \\
			0_{c\times k_a}& Z_b
		\end{bmatrix}, \ \ 
		\bs{\delta}:=\begin{bmatrix}
			\bs{\delta}_a   \\
			\bs{\delta}_b 
		\end{bmatrix},  \ \ 
		Z\bs{\delta}=\begin{bmatrix}
			Z_a\bs{\delta}_a   \\
			Z_b\bs{\delta}_b 
		\end{bmatrix}. 
	\end{equation}
	 The location of shrinkage for $\bs{\theta}$ is now $Z\bs{\delta}$, where $\bs{\delta}$ is determined jointly with the other hyperparameters in minimizing the URE. This is operationalized by replacing $\bs{v}$ of \eqref{eq:sorting.prior}, \eqref{eq:Bayes.estimator} and \eqref{eq:ure} by $Z\bs{\delta}$ and then performing the minimization as before -- now minimizing the URE over $\bs{\delta}$ as well. Section~\ref{subsec:compute.ure} of the Supplementary Online Appendix shows how to concentrate out $\bs{\delta}$ when performing the URE-minimization.

\section{Asymptotic Optimality}
\label{sec:theory}

We next provide some asymptotic justifications for the empirical Bayes estimator $\hat{\bs{\theta}}(\bs{\lambda}^{\text{ure}})$. We study a sequence of growing graphs $\left\{\mathcal{G}_{r,c}\right\}$ as $r,c \rightarrow \infty $, where we use $r$ and $c$ in the subscript to indicate that the sizes of the nodes as well as other characteristics, such as the connectivity measures, could change with the sample sizes. The number of periods $T$ is fixed in our asymptotic analysis. For notational simplicity,  we omit the dependence on $r, c$ in other graph characteristics such as $B$ and the Laplacian matrix $L$. Section~\ref{subsec:theory.covergence} discusses our assumption and the convergence of the URE objective function. The optimality result is stated in Section~\ref{subsec:theory.optimality}. A comparison of the proposed two-way shrinkage estimator to other popular estimators is presented in Section~\ref{subsec:theory.alternatives} and the extension to the case of additional estimated regression coefficients is provided in Section~\ref{subsec:theory.estimated.gamma}.

\subsection{Convergence of URE Objective Function}
\label{subsec:theory.covergence}

We first state some regularity assumptions on the regression model. Throughout all the assumptions and proofs, we use $\epsilon$ and $M$ to represent generic constants for lower and upper bounds. They do not have to take the same value when they appear in different contexts.

\begin{assumption}[Regression Model]\label{ass:reg.cond}
For some finite constant $M$, as $r,c \rightarrow \infty$,
\newline
\emph{(i).}  $\tfrac{1}{r+c} \bs{\theta}^\prime\bs{\theta} \le M$.
\newline
\emph{(ii).} $\mathbb{E}[u_{it}^4] \le M $. 
\end{assumption}

Condition (i) restricts the heterogeneity of fixed effects, and is a condition widely used in the empirical Bayes literature, see e.g., \cite{Xie2012} and \cite{Brown2018}.
Condition (ii) imposes only a finite fourth moment on the regression error rather than requiring its normal distribution as often seen in these papers. The finite fourth moment is used to bound the variance of some quadratic forms of the regression errors. Next, we consider some conditions on the sequence of graphs $\{\mathcal{G}_{r,c}\}$.

\begin{assumption}[Size and Degree of Graph] \label{ass.degree}
	\hspace{1em}
	\newline 
	For some finite constant $\epsilon$ and $M$, as $r,c \rightarrow \infty$,
	\newline
	\emph{(i).} $\epsilon \le r/c \le M$,
	\newline
	\emph{(ii).} $L_{ii} \le M$ for all $i=1,\ldots,r+c$, where $L_{ii}$ denote the $i^{th}$ diagonal element of $L$.
\end{assumption}

Condition (i) requires the sizes of the two sets of nodes $\mathcal{S}$ and $\mathcal{T}$ to be proportional and grow asymptotically at the same order. Condition (ii) restricts each node to having a finite degree, characterizing its finite number of matches. In a student-teacher application both conditions could be satisfied by assuming that each student only takes one class per time period, the number of time periods $T$ is fixed, and the class-size stays asymptotically bounded from below and above. In a worker-firm application the conditions allow for a distribution of firm sizes (in terms of number of workers), but rule out asymptotics along which, for instance, one firm employs 50\% of the workers. Under Assumption~\ref{ass.degree} it is not possible to estimate $\bs{\theta}$ consistently.


To model the limited mobility issue in finite sample, we consider a sequence of graphs $\left\{\mathcal{G}_{r,c}\right\}$ whose connectivity gets weaker asymptotically. 
This connectivity measures are the eigenvalues of the Laplacian matrix $L=B'B$ when we are interested in the full vector $\bs{\theta}$. 
When we are interested in the subvector $\bs{\beta}$, the connectivity measures become the eigenvalues of the Laplacian matrix of the one-mode projected graph\footnote{In the one-mode projected graph, the vertices are $\mathcal{T}$ and the nodes from $\mathcal{S}$ moving between them act as the edges. Two nodes from $\mathcal{T}$ are thus connected whenever they are connected to a common node from $\mathcal{S}$ in the original graph $\mathcal{G}$, with the strength of their connection determined by the number of common matches, see \cite{Jochmans2019} for detailed discussions of this one-mode projection graph.} 
\begin{equation}
	L_{2,\perp}:=B_{2,\perp}'B_{2,\perp}, \text{ \ \ where }
	B_{2,\perp} := [I-B_1(B_1^\prime B_1)^{-1}B_1^\prime]B_2.
	\label{eq:def.L2perp}
\end{equation}
For clarity of the graph-theoretic interpretation, we focus on these two leading case in the main paper. Proofs of all theoretical results allow for a general subvector selection and weighting under the high-level conditions in Assumptions~\ref{ass:graph.connect.1} and \ref{ass:graph.connect.2} in the Appendix. The Assumptions below are their sufficient conditions in these two leading cases.

Let $\rho_{\ell}(A)$ denote the $\ell^{th}$  smallest eigenvalue of a symmetric matrix $A$. For a connected graph, it is known that $\rho_1(L)=0$. \cite{Jochmans2019} show $\rho_2(L)$ is crucial in the analysis of the LS estimate. Here, we allow $\rho_2(L)$ as well as some other small eigenvalues to converge to $0$ as $r,c \rightarrow \infty$, in order to study optimal estimation in an asymptotic framework for weakly connected graphs. When the parameter of interest is the subvector $\bs{\beta}$, we allow $\rho_2(L_{2,\perp})$ and some other small eigenvalues of $L_{2,\perp}$
to converge to $0$ asymptotically. 

Below we present some assumptions on these connectivity measures in order to obtain desirable properties of $\text{URE}(\bs{\lambda})$ as a selection criterion for $\bs{\lambda}$ and the proposed EB estimator $\hat{\bs{\theta}}(\bs{\lambda}^{\text{ure}})$. Beyond the unbiasedness in Lemma \ref{lem:ure}, we  show the distance between the  compound loss function $l_w(\hat{\bs{\theta}}(\bs{\lambda}),\bs{\theta})$ and the selection criterion  $\text{URE}(\bs{\lambda})$ is small under some stronger measures, which eventually leads to asymptotic optimality of $\hat{\bs{\theta}}(\bs{\lambda}^{\text{ure}})$. To this end, the conditions typically involve how weak the connectivity can be in order for $\text{URE}(\bs{\lambda})$ to be informative. As such, we impose bounds on how fast $\rho_2(L)$  or $\rho_2(L_{2,\perp})$ can go to zero and how many of these eigenvalues can go to zero. It is important to note that although the empirically relevant asymptotics are obtained when $\rho_2(L)$  or $\rho_2(L_{2,\perp})$ converge to zero, none of our assumptions require that the second smallest eigenvalue vanishes. These desirable properties of $\hat{\bs{\theta}}(\bs{\lambda}^{\text{ure}})$ also hold for well connected graphs where all non-zero eigenvalues are bounded away from zero as the sample size increases.

\begin{assumption}[Graph Connectivity: Full Graph I]
	\label{ass:graph.connect.1.full}
	As $r,c \rightarrow \infty$, $$(r+c)^{1/2}\cdot\rho_2(L) \rightarrow \infty.$$
\end{assumption}

\begin{assumption}[Graph Connectivity: Projected Graph I]
	\label{ass:graph.connect.1.b}
	As $r,c \rightarrow \infty$,
	$$c^{1/2}\cdot\rho_2(L_{2,\perp}) \rightarrow \infty.$$
\end{assumption}

Assumptions~\ref{ass:graph.connect.1.full} and~\ref{ass:graph.connect.1.b} present two versions of this connectivity assumption, one for $W=W_{a+b}$ for the full vector $\bs{\theta}$ and one for $W=W_{b}$ for the subvector $\bs{\beta}$. This set of Assumptions on graph connectivity ensures that not only the compound loss function $l_w(\hat{\bs{\theta}}(\bs{\lambda}),\bs{\theta})$ and the selection criterion  $\text{URE}(\bs{\lambda})$ share the same expectation by construction, but also the variance of their difference converge to zero asymptotically. 

Define 
\begin{equation}
	\delta_{r,c}:= 
\begin{cases} 
	(r+c)^{1/2}\cdot\rho_2(L) &  \text{if  } W=W_{a+b}\\
	c^{1/2}\cdot\rho_2(L_{2,\perp})  &\text{if  } W=W_{b}
\end{cases}
\end{equation}
and note that  Assumptions~\ref{ass:graph.connect.1.full} and~\ref{ass:graph.connect.1.b} imply that $\delta_{r,c} \rightarrow \infty$.

\begin{theorem}
	\label{thm:pw}
	Suppose that Assumptions~\ref{ass:reg.cond} and \ref{ass.degree} hold. Moreover, Assumption~\ref{ass:graph.connect.1.full} holds if $W=W_{a+b}$, and Assumption~\ref{ass:graph.connect.1.b} holds if $W=W_{b}$. Then,
	\begin{equation*}
		\sup_{\bs{\lambda}\in
		\mathcal{J}}
		\left\{
		\delta_{r,c}^2 \cdot
		\mathbb{E}_{\bs{\theta},B} [ l_w(\hat{\bs{\theta}}(\bs{\lambda}),\bs{\theta})- \text{URE}(\bs{\lambda})  ]^2 \right\}
		= O \left(1
		\right).
	\end{equation*}
\end{theorem}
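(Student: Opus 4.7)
The plan is to reduce the claim to a variance bound, decompose the difference along the least-squares sampling error, and then control each component with the aid of PSD orderings that are robust to extremes of $\bs{\lambda}$. By Lemma~\ref{lem:ure}, $\mathbb{E}_{\bs{\theta},B}[l_w - \text{URE}] = 0$, so the target quantity equals $\mathrm{Var}_{\bs{\theta},B}(l_w - \text{URE})$. The normalization $\bs{1}_c^\prime\bs{\beta}=0$ implies $\mathcal{R}\bs{\theta}=\bs{\theta}$, so setting $\bs{\xi}:=\hat{\bs{\theta}}^{\text{ls}}-\bs{\theta}=L^- B^\prime \bs{U}$ gives a mean-zero vector with covariance $\sigma^2 L^-$. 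Using $\hat{\bs{\theta}}(\bs{\lambda})-\bs{\theta}=S_1\bs{\xi}+S(\bs{v}-\bs{\theta})$ and $S_1+S=\mathcal{R}$, direct expansion yields
\begin{equation*}
l_w-\text{URE}=T_1-T_2-2T_3,
\end{equation*}
where
\begin{align*}
T_1 &:= \bs{\xi}^\prime S_1^\prime W S_1 \bs{\xi} - \sigma^2 \tr(S_1^\prime W S_1 L^-),\\
T_2 &:= \bs{\xi}^\prime S^\prime W S \bs{\xi} - \sigma^2 \tr(S^\prime W S L^-),\\
T_3 &:= (\bs{\theta}-\bs{v})^\prime S^\prime W \mathcal{R}\bs{\xi}.
\end{align*}
Each $T_k$ has mean zero, so it suffices to show $\mathrm{Var}(T_k)=O(\delta_{r,c}^{-2})$ uniformly on $\mathcal{J}$.

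For the two quadratic pieces, I would exploit that $L\bs{\xi} = L L^- B^\prime \bs{U} = B^\prime \bs{U}$ (since $BL^- B^\prime$ is the column-space projector of $B$) to rewrite $S_1\bs{\xi} = \mathcal{R}(L+\Lambda^\ast)^{-1}B^\prime \bs{U}$, so that under Assumption~\ref{ass:reg.cond}(ii) the standard bound $\mathrm{Var}(\bs{U}^\prime M \bs{U}) \le C\|M\|_F^2$ reduces the task to bounding $\|M\|_F^2$ with $M = B(L+\Lambda^\ast)^{-1}\mathcal{R}^\prime W \mathcal{R}(L+\Lambda^\ast)^{-1}B^\prime$, and analogously for $T_2$. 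For the linear piece, $\mathrm{Var}(T_3) = 4\sigma^2(\bs{\theta}-\bs{v})^\prime S^\prime W \mathcal{R} L^- \mathcal{R}^\prime W S(\bs{\theta}-\bs{v}) \le 4\sigma^2\|\bs{\theta}-\bs{v}\|^2 \,\|WS\|_{\mathrm{op}}^2 \,\|\mathcal{R}L^-\mathcal{R}^\prime\|_{\mathrm{op}}$. Assumption~\ref{ass:reg.cond}(i) with $|\mu|\le\bar\mu$ gives $\|\bs{\theta}-\bs{v}\|^2=O(r+c)$; for $W=W_{a+b}$, $\|W\|_{\mathrm{op}}=(r+c)^{-1}$ and $\|W\|_F=O((r+c)^{-1/2})$, with analogous rates in $c$ for $W=W_b$.

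The main obstacle is controlling the remaining $\bs{\lambda}$-dependent factors \emph{uniformly} on $\mathcal{J}$, because $\lambda_a,\lambda_b$ may approach $0$ or $\infty$ and $\|S\|_{\mathrm{op}},\|S_1\|_{\mathrm{op}}$ are not individually uniformly bounded. The critical observation is that the null space of $\mathcal{R}^\prime\mathcal{R}$ coincides with that of $L$ (both spanned by $(\bs{1}_r^\prime,-\bs{1}_c^\prime)^\prime$), so $\mathcal{R}^\prime v$ lies in the range of $L$ for every $v$. Combined with operator monotonicity of the matrix inverse --- which for PSD $L,\Lambda^\ast$ gives $(L+\Lambda^\ast)^{-1}\le L^\dagger$ on the range of $L$, uniformly in $\Lambda^\ast$ --- this yields the $\Lambda^\ast$-free PSD bound $\mathcal{R}(L+\Lambda^\ast)^{-1}\mathcal{R}^\prime\le L^-$. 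Together with the companion PSD inequalities $(L+\Lambda^\ast)^{-1}L(L+\Lambda^\ast)^{-1}\le(L+\Lambda^\ast)^{-1}$ and $\Lambda^\ast(L+\Lambda^\ast)^{-1}\Lambda^\ast\le\Lambda^\ast$, Assumption~\ref{ass.degree} (which gives $\|L\|_{\mathrm{op}}=O(1)$ via Gershgorin and $\|\mathcal{R}\|_{\mathrm{op}}=O(1)$ from bounded $r/c$), and Assumption~\ref{ass:graph.connect.1.full} (which yields $\|L^-\|_F^2=O(1/\rho_2(L)^2 + (r+c))$ when only finitely many eigenvalues of $L$ are near zero), one obtains $\mathrm{Var}(T_k)=O(\delta_{r,c}^{-2})$ in the full case. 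The subvector case under Assumption~\ref{ass:graph.connect.1.b} proceeds identically, with $W_b$ selecting only the $(\bs{\beta},\bs{\beta})$-block of $L^-$ whose operator norm is $O(1/\rho_2(L_{2,\perp}))$ by a Schur-complement decomposition, delivering the analogous bound.
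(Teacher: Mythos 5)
Your decomposition $l_w-\text{URE}=T_1-T_2-2T_3$ is algebraically correct (it is a regrouping of the paper's four-term decomposition in Lemma~\ref{lm:decomp}), each $T_k$ is indeed mean-zero, and your route for the quadratic pieces is sound: the identity $S_1\bs{\xi}=\mathcal{R}(L+\Lambda^*)^{-1}B'\bs{U}$ is valid, and your key PSD inequality $\mathcal{R}(L+\Lambda^*)^{-1}\mathcal{R}'\preceq L^-$ is correct because $\mathrm{range}(\mathcal{R}')=\mathrm{range}(L)=\{(\bs{1}_r',-\bs{1}_c')'\}^\perp$ and $L(L+\Lambda^*)^{-1}L\preceq L$ (from $D^2\preceq D$ for $D=(L+\Lambda^*)^{-1/2}L(L+\Lambda^*)^{-1/2}\preceq I$). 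This is a genuinely different device from the paper's, which instead writes $S_1=\mathcal{R}B^\dagger GB$ with $G=B\Lambda^{*-1}B'[B\Lambda^{*-1}B'+I]^{-1}$ and $\|G\|\le 1$ (Lemma~\ref{lm:pw.equalities}(c)) to obtain uniformity in $\bs{\lambda}$; your operator-monotonicity argument buys the same uniformity without introducing $G$, and I find it cleaner for the terms where it applies.

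The genuine gap is in how you close the bounds in the subvector case $W=W_b$. Your variance bound for $T_3$ factors as $\|\bs{\theta}-\bs{v}\|^2\,\|WS\|_{\mathrm{op}}^2\,\|\mathcal{R}L^-\mathcal{R}'\|_{\mathrm{op}}$, which decouples $W$ from $L^-$ and therefore pays $\|L^-\|_{\mathrm{op}}=O(1/\rho_2(L))$ for the \emph{full} Laplacian. Under the hypotheses of the theorem for $W=W_b$, only $\rho_2(L_{2,\perp})$ is controlled (Assumption~\ref{ass:graph.connect.1.b}), and one can show $\rho_2(L)\le\rho_2(L_{2,\perp})$ by extending an eigenvector of $L_{2,\perp}$ to $(-D_1^{-1}B_1'B_2v_2,\,v_2)$, so your bound is of order $c^{-1}\rho_2(L)^{-2}$, which can be strictly larger than the required $\delta_{r,c}^{-2}=c^{-1}\rho_2(L_{2,\perp})^{-2}$. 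The fix is to never separate $W^{1/2}$ from $L^-$: bound $\mathrm{Var}(T_3)\le\sigma^2\|\bs{\theta}-\bs{v}\|^2\,\bar{\rho}(W^{1/2}L^-W^{1/2})\,\|W^{1/2}S\|_{\mathrm{op}}^2$ and control $\|W^{1/2}S_1\|_{\mathrm{op}}^2\le\|B\|^2\bar{\rho}(W^{1/2}L^-W^{1/2})$; since $W_b^{1/2}L^-W_b^{1/2}$ contains only the block $c^{-1}M_cL_{2,\perp}^\dagger M_c$ (equation~\eqref{eq:partition}), this yields the correct rate, which is exactly how the paper's equation~\eqref{A.31} proceeds. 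The same caution applies to your Frobenius-norm step for $T_1,T_2$: you should reduce to $\tr[(W^{1/2}L^-W^{1/2})^2]\le(r+c)\bar{\rho}^2(W^{1/2}L^-W^{1/2})$ (which is trivial and needs no assumption on the number of small eigenvalues) rather than to $\|L^-\|_F^2$; your appeal to ``only finitely many eigenvalues of $L$ near zero'' invokes Assumption~\ref{ass:graph.connect.2.a}, which is not among the hypotheses of this theorem and is not needed for the pointwise result.
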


Theorem \ref{thm:pw} shows pointwise convergence between $l_w(\hat{\bs{\theta}}(\bs{\lambda}),\bs{\theta})$ and $\text{URE}(\bs{\lambda}) $ in the $L_2$ norm. It also shows that the convergence depends both on the sample size and the smallest non-zero eigenvalue of the associated graph. 
As long as the smallest non-zero eigenvalue converges to $0$ slower than the square root of the sample size, we have the convergence in the $L_2$ norm. 


Below we impose stronger conditions on the connectivity measure to ensure uniform convergence between $l_w(\hat{\bs{\theta}}(\bs{\lambda}),\bs{\theta})$ and $\text{URE}(\bs{\lambda}) $ over $\bs{\lambda} \in \mathcal{J}$.

\begin{assumption}[Graph Connectivity: Full Graph II]
	\label{ass:graph.connect.2.a}
	There exists a finite integer $k$ and a small constant $\epsilon >0$, such that 
	\begin{eqnarray*}
	\rho_{\ell}(L) \ge \epsilon \quad\text{for}\quad 
 \ell = k+1, \ldots, \leq r+c.
	\end{eqnarray*}
\end{assumption}	

This assumption allows for at most $k$ eigenvalues converge to zero, requiring the full graph to have at most $k$ weakly connected components. This condition is particularly useful to obtain uniform convergence, without imposing stronger conditions on the rate in Assumptions~\ref{ass:graph.connect.1.full} and~\ref{ass:graph.connect.1.b}. In this large-scale estimation problem, $l_w(\hat{\bs{\theta}}(\bs{\lambda}),\bs{\theta})$ and $\text{URE}(\bs{\lambda}) $ both involve trace of matrices whose dimensions grow with the dimension of $\bs{\theta}$. In particular, they involve the inverse of the non-zero eigenvalues of $L$, which are explosive for weakly connected components. Having only a finite number of weakly connected components ensures that we only work with a finite number of explosive ones. Our proof in the Appendix can accommodate a slowly growing number of weakly connect components, if we restrict the weakness of the connectivity. We show this trade off through a general trace condition in  Assumption~\ref{ass:graph.connect.2} in the Appendix. In the empirical application with student-teacher linked data, this finite component assumption requires that students and teachers are generally well connected within each school, and that there are only a finite number of schools across which movements of students and teachers is limited.

Assumption \ref{ass:graph.connect.2.a} is intended for the estimation of the full vector $\bs{\theta}$, i.e., $W=W_{a+b}$. When the interest is on the subvector $\bs{\beta}$, i.e., $W=W_{b}$, we have a similar finite component assumption on the one-mode projected graph. 

\begin{assumption}[Graph Connectivity: Projected Graph II]
	\label{ass:graph.connect.2.b}	
	There exists a finite integer $k$ and a small constant $\epsilon >0$, such that 
	\begin{eqnarray*}
		\rho_{\ell}(L_{2,\perp)} \ge \epsilon \quad\text{for}\quad 
		\ell = k+1, \ldots, \leq r+c.
	\end{eqnarray*}
\end{assumption}

In addition to this requirement, 
uniform convergence between $l_w(\hat{\bs{\theta}}(\bs{\lambda}),\bs{\theta})$ and $\text{URE}(\bs{\lambda}) $ for $W=W_{b}$ also require some minimum assumption on the full graph because we conduct all the estimation simultaneously, despite our interest is on the subvector only.

\begin{assumption}[Graph Connectivity: Full Graph III]
	\label{ass:graph.connect.2.c}	
	There exists $\epsilon > 0$ such that
	\begin{equation*}
		(r+c)\cdot\rho_2^{\epsilon}(L) \rightarrow \infty.
	\end{equation*}
\end{assumption}
This is a very weak condition on the full graph by considering 
$\epsilon$ close to $0$. It is only relevant when we are interested in the subvector. When the full vector is of interested, this condition is already implied by Assumption~\ref{ass:graph.connect.1.full}.

\begin{lemma}[Uniform Convergence] \label{lm:unif.convg} 	
Suppose that Assumptions~\ref{ass:reg.cond} and 	\ref{ass.degree} hold. Furthermore, Assumptions~
	 \ref{ass:graph.connect.1.full} and \ref{ass:graph.connect.2.a} hold if $W=W_{a+b}$, and Assumptions~  \ref{ass:graph.connect.1.b},
	 \ref{ass:graph.connect.2.a}, \ref{ass:graph.connect.2.b}, and
	 \ref{ass:graph.connect.2.c} hold if $W=W_{b}$. 
	Then, as $r,c \rightarrow \infty$,
	\begin{equation*}
		\sup_{\bs{\lambda}\in\mathcal{J}}
		\left|
		l_w(\hat{\bs{\theta}}(\bs{\lambda}),\bs{\theta})
		-
		URE(\bs{\lambda})
		\right| 
		\rightarrow_p
		0.
	\end{equation*}
\end{lemma}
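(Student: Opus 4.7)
The plan is to upgrade the pointwise $L_2$ bound of Theorem~\ref{thm:pw} to uniform convergence in probability via three ingredients: compactification of the hyperparameter space, pointwise convergence on a finite $\epsilon$-net, and a stochastic equicontinuity bound on the oscillation between grid points.

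First, I would compactify. The space $\mathcal{J}$ is non-compact only in the coordinates $\lambda_a,\lambda_b \in (0,\infty)$, so I would reparameterize via $\tilde\lambda_a := \lambda_a/(1+\lambda_a) \in (0,1)$ (and similarly for $\lambda_b$) to embed $\mathcal{J}$ into the interior of the compact box $\bar{\mathcal{K}} := [-\bar\mu,\bar\mu] \times [0,1]^2 \times [-\bar\phi,\bar\phi]$. Writing $S = \mathcal{R}[L+\Lambda^*]^{-1}\Lambda^*$ and $S_1 = \mathcal{R}-S$ in the transformed coordinates, I would verify that both $l_w(\hat{\bs\theta}(\bs\lambda),\bs\theta)$ and $\text{URE}(\bs\lambda)$ extend continuously to $\bar{\mathcal{K}}$: at $\tilde\lambda_a=0$ the $\bs\alpha$-block of $S$ vanishes and $\hat{\bs\theta}(\bs\lambda)$ returns the LS estimator on those coordinates, while at $\tilde\lambda_a=1$ it returns the prior mean $\mu\bs{1}_r$; the URE formula~\eqref{eq:ure} extends in parallel. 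Hence $D(\bs\lambda):= l_w(\hat{\bs\theta}(\bs\lambda),\bs\theta) - \text{URE}(\bs\lambda)$ is a continuous random field on the compact set $\bar{\mathcal{K}}$.

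Second, cover $\bar{\mathcal{K}}$ by finitely many balls $B(\bs\lambda^{(j)},\epsilon)$, $j=1,\ldots,N(\epsilon)$. Theorem~\ref{thm:pw} gives $\mathbb{E}_{\bs\theta,B}[D(\bs\lambda^{(j)})^2] = O(\delta_{r,c}^{-2})$ uniformly in $j$, so Markov's inequality and a union bound yield $\max_{j\le N(\epsilon)} |D(\bs\lambda^{(j)})| \to_p 0$ for any fixed $\epsilon>0$. Third, and most delicate, bound the oscillation $\sup_{\bs\lambda\in B(\bs\lambda^{(j)},\epsilon)} |D(\bs\lambda)-D(\bs\lambda^{(j)})|$ uniformly in $j$. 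The process $D$ is a polynomial of degree at most two in $\bs Y$ whose coefficients depend smoothly on $\bs\lambda$ through $[L+\Lambda^*]^{-1}$, $S$, $S_1$, and $\bs v$; by the resolvent identity $\partial_{\lambda_a}[L+\Lambda^*]^{-1} = -[L+\Lambda^*]^{-1}(\partial_{\lambda_a}\Lambda^*)[L+\Lambda^*]^{-1}$ and analogous formulas for $\lambda_b$, $\phi$, and $\mu$, the gradient of each coefficient matrix can be bounded in operator norm by products of inverse eigenvalues of $L+\Lambda^*$ and of $L$. Under Assumption~\ref{ass:graph.connect.2.a} (resp.~\ref{ass:graph.connect.2.b}) at most $k$ eigenvalues of $L$ (resp.\ $L_{2,\perp}$) are small, so $\|L^-\|$ grows only like $\rho_2(L)^{-1}$ (resp.\ $\rho_2(L_{2,\perp})^{-1}$); combined with Assumption~\ref{ass:graph.connect.1.full} (resp.\ Assumptions~\ref{ass:graph.connect.1.b} and~\ref{ass:graph.connect.2.c}) and Assumptions~\ref{ass:reg.cond}--\ref{ass.degree}, a variance calculation gives $\mathbb{E}[\sup_{\bs\lambda\in B(\bs\lambda^{(j)},\epsilon)}(D(\bs\lambda)-D(\bs\lambda^{(j)}))^2] \le \epsilon^2 \kappa(r,c)$ with $\kappa(r,c)$ growing only sub-polynomially. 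Taking $\epsilon = \epsilon_{r,c}\to 0$ sufficiently slowly and combining with the grid convergence delivers the claimed uniform bound.

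The main obstacle is the third step: one must track uniformly across $\bar{\mathcal{K}}$ how the derivatives of $S$ and $S_1$ interact with the explosive eigenvalues of $L^-$ (or $L_{2,\perp}^-$), especially near the corner $\tilde\lambda_a,\tilde\lambda_b\in\{0,1\}$ where the shrinkage degenerates to LS or to the prior mean. The subvector case $W=W_b$ is the subtlest: the $\bs\beta$-estimator still passes through the full-graph resolvent $[L+\Lambda^*]^{-1}$, so Assumption~\ref{ass:graph.connect.2.c} is invoked to propagate a mild control on $\rho_2(L)$ into the equicontinuity bound, on top of the sharper $\rho_2(L_{2,\perp})$ rate already used for the pointwise statement in Theorem~\ref{thm:pw}.
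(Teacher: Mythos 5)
Your architecture --- compactify the hyperparameter space, get pointwise convergence on a net from Theorem~\ref{thm:pw}, and control the oscillation between net points via derivative bounds on the resolvent --- is the same as the paper's, which invokes the generic uniform convergence theorem of \cite{Andrews1992} (pointwise convergence + total boundedness + stochastic equicontinuity) applied to the decomposition $l_w-\text{URE}=-\mathcal{D}_1+2\mathcal{D}_2-2\mathcal{D}_3+2\mathcal{D}_4$. However, there is a genuine gap in your third step, in two respects.

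First, the strength of the oscillation bound you claim is not sufficient. You allow $\mathbb{E}[\sup_{B(\bs{\lambda}^{(j)},\epsilon)}(D-D(\bs{\lambda}^{(j)}))^2]\le\epsilon^2\kappa(r,c)$ with $\kappa$ growing sub-polynomially, and propose to take $\epsilon_{r,c}\to0$ to compensate. But then the net size $N(\epsilon_{r,c})$ grows, and the union bound over the net requires $N(\epsilon_{r,c})\cdot\delta_{r,c}^{-2}\to0$. Under Assumption~\ref{ass:graph.connect.1.full} (resp.~\ref{ass:graph.connect.1.b}) the rate $\delta_{r,c}\to\infty$ can be \emph{arbitrarily slow}, so no growing $\kappa$ can be tolerated: you need the modulus of continuity to be $O_p(1)\cdot\epsilon$ with a constant not depending on $(r,c)$, so that the net can be held fixed as $r,c\to\infty$. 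This is exactly what the paper's Lemma~\ref{lm:stoequi_bypiece} and Lemma~\ref{lm:bd.deriv} establish: $\sup_{\tilde{\bs{\lambda}}}\|\partial V_\lambda/\partial x\|=O(1)$, genuinely bounded.

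Second, your chosen compactification $\tilde\lambda_a=\lambda_a/(1+\lambda_a)$ does not deliver such an $O(1)$ bound near the corner $\lambda_a,\lambda_b\to0$. The resolvent identity produces, schematically, $L^{\epsilon}[L+\Lambda^*]^{-1}(\partial_{\lambda_a^{1/2}}\Lambda^*)[L+\Lambda^*]^{-1}L\cdot(\partial\lambda_a^{1/2}/\partial\tilde\lambda_a)$; the right-hand resolvent is tamed by the adjacent $L$, but the left-hand one must be paired with positive powers of $\lambda_a$ coming from the middle block ($\sim\lambda_a^{1/2}$) and from the Jacobian. Your map gives $\partial\lambda_a^{1/2}/\partial\tilde\lambda_a\sim\tfrac12\lambda_a^{-1/2}$ as $\lambda_a\to0$, so the two cancel and leave $[L+\Lambda^*]^{-(1-\epsilon)}$ unpaired, which blows up like $\rho_2(L)^{-(1-\epsilon)}$. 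The paper's reparameterization $\lambda_a^{-1}=[\tilde\lambda_a^{-n}-1]^k$ is engineered precisely so that the Jacobian behaves like $\lambda_a^{1/2-\epsilon}$ near zero (Lemma~\ref{lm:take.deriv}), supplying $\lambda_a^{1-\epsilon}$ in total to cancel $[L+\Lambda^*]^{-(1-\epsilon)}$, while the residual $L^{\epsilon}[L+\Lambda^*]^{-\epsilon}$ is bounded (Lemma~\ref{lm:deriv.bound.lower}) and the stray $[L^\dagger]^{2\epsilon}$ factors are absorbed by the trace condition $\tfrac{1}{r+c}\tr([L^\dagger]^{2\epsilon})=O(1)$ --- which is where Assumptions~\ref{ass:graph.connect.2.a}--\ref{ass:graph.connect.2.c} actually enter, rather than through a bound on $\|L^-\|$ as you suggest. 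You correctly identify the corner as the delicate point, but the proposal as written does not resolve it.
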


\subsection{Optimality of URE Hyperparameter Determination}
\label{subsec:theory.optimality}

Building on the uniform convergence between $l_w(\hat{\bs{\theta}}(\bs{\lambda}),\bs{\theta})$ and $\text{URE}(\bs{\lambda})$, now we show that $\hat{\bs{\theta}}(\bs{\lambda}^{\text{ure}})$ is asymptotically optimal in the sense that its compound loss is comparable to that of an oracle estimator that choose $\bs{\lambda}$ based on $l_w(\hat{\bs{\theta}}(\bs{\lambda}),\bs{\theta})$ directly. To this end, define
\begin{equation}
	\bs{\lambda}^{\text{ol}}
	:=
	\argmin_{\bs{\lambda} \in J}
	l_w(\hat{\bs{\theta}}(\bs{\lambda}),\bs{\theta}),
\end{equation}
and the oracle estimator oracle $\hat{\bs{\theta}}(\bs{\lambda}^{\text{ol}})$. 
The superscript ``ol'' stands for oracle loss because it is with knowledge of the parameters $\bs{\theta}$. 
Note that this oracle estimator targets the actual compound loss rather than the risk and it achieves the lowest possible compound loss within the class of shrinkage estimator considered. Under parameter heterogeneity, this oracle loss is strictly positive. We say that an estimator is asymptotically optimal if it achieves the oracle loss.

\begin{definition}[Asymptotic Optimality] \label{def:optimality} We say an estimator $\hat{\bs{\theta}}$ is asymptotically optimal if for any $\epsilon > 0$
		\begin{equation*}
		\lim_{r,c\rightarrow \infty}
		\mathbb{E}_{\bs{\theta},B}
		\left [
		l_w(\hat{\bs{\theta}}) 
		\geq
		l_w(\hat{\bs{\theta}}(\bs{\lambda}^{\text{ol}}),\bs{\theta}) 
		+
		\epsilon
		\right ]
		= 0.
	\end{equation*}
\end{definition}

The main theoretical result of the paper is that $\hat{\bs{\theta}}(\bs{\lambda}^{\text{ure}})$ is asymptotically optimal.

\begin{theorem}[URE Optimality] \label{thm:main}
	Suppose that Assumptions~\ref{ass:reg.cond} and 	\ref{ass.degree} hold. 
	Furthermore, Assumptions~
	\ref{ass:graph.connect.1.full} and \ref{ass:graph.connect.2.a} hold if $W=W_{a+b}$, and Assumptions~  \ref{ass:graph.connect.1.b},
	\ref{ass:graph.connect.2.a}, \ref{ass:graph.connect.2.b}, and
	\ref{ass:graph.connect.2.c} hold if $W=W_{b}$. Then, $\hat{\bs{\theta}}(\bs{\lambda}^{\text{ure}})$ is asymptotically optimal in the sense of Definition~\ref{def:optimality}. 
\end{theorem}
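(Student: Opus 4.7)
The plan is to deduce Theorem~\ref{thm:main} from the uniform convergence in Lemma~\ref{lm:unif.convg} via a standard M-estimator sandwich argument, so that the optimality of $\hat{\bs{\theta}}(\bs{\lambda}^{\text{ure}})$ is inherited from the uniform closeness between the data-driven criterion $\text{URE}(\bs{\lambda})$ and the unobservable compound loss $l_w(\hat{\bs{\theta}}(\bs{\lambda}),\bs{\theta})$. The key observation is that $\bs{\lambda}^{\text{ure}}$ minimizes $\text{URE}(\bs{\lambda})$ over $\mathcal{J}$ by construction, so in particular $\text{URE}(\bs{\lambda}^{\text{ure}}) \le \text{URE}(\bs{\lambda}^{\text{ol}})$.

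The main step is to write the loss gap as a telescoping sum and bound the difference by twice the uniform discrepancy. Specifically,
\begin{align*}
    l_w(\hat{\bs{\theta}}(\bs{\lambda}^{\text{ure}}),\bs{\theta}) - l_w(\hat{\bs{\theta}}(\bs{\lambda}^{\text{ol}}),\bs{\theta})
    &= \bigl[l_w(\hat{\bs{\theta}}(\bs{\lambda}^{\text{ure}}),\bs{\theta}) - \text{URE}(\bs{\lambda}^{\text{ure}})\bigr] \\
    &\quad+ \bigl[\text{URE}(\bs{\lambda}^{\text{ure}}) - \text{URE}(\bs{\lambda}^{\text{ol}})\bigr] \\
    &\quad+ \bigl[\text{URE}(\bs{\lambda}^{\text{ol}}) - l_w(\hat{\bs{\theta}}(\bs{\lambda}^{\text{ol}}),\bs{\theta})\bigr] \\
    &\le 2 \sup_{\bs{\lambda}\in\mathcal{J}} \bigl|l_w(\hat{\bs{\theta}}(\bs{\lambda}),\bs{\theta}) - \text{URE}(\bs{\lambda})\bigr|,
\end{align*}
where the middle bracket is non-positive by definition of $\bs{\lambda}^{\text{ure}}$, and the two outer brackets are each bounded by the supremum. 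Combined with the reverse inequality $l_w(\hat{\bs{\theta}}(\bs{\lambda}^{\text{ure}}),\bs{\theta}) \ge l_w(\hat{\bs{\theta}}(\bs{\lambda}^{\text{ol}}),\bs{\theta})$ that holds by definition of $\bs{\lambda}^{\text{ol}}$, this gives for any $\epsilon > 0$:
\begin{equation*}
    \mathbb{P}_{\bs{\theta},B}\bigl(l_w(\hat{\bs{\theta}}(\bs{\lambda}^{\text{ure}}),\bs{\theta}) \ge l_w(\hat{\bs{\theta}}(\bs{\lambda}^{\text{ol}}),\bs{\theta}) + \epsilon\bigr)
    \le \mathbb{P}_{\bs{\theta},B}\Bigl(\sup_{\bs{\lambda}\in\mathcal{J}} \bigl|l_w(\hat{\bs{\theta}}(\bs{\lambda}),\bs{\theta}) - \text{URE}(\bs{\lambda})\bigr| \ge \epsilon/2\Bigr),
\end{equation*}
and Lemma~\ref{lm:unif.convg} sends the right-hand side to zero under either set of graph-connectivity assumptions, delivering the conclusion of Definition~\ref{def:optimality}.

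The substantive work in the proof is entirely in Lemma~\ref{lm:unif.convg}; the remaining sandwich manipulation is standard. The only subtle point is that $\mathcal{J}$ is not compact in $\lambda_a,\lambda_b$, so the arg-minima may fail to be attained. I would handle this by working with $\epsilon$-minimizers $\bs{\lambda}^{\text{ure}}_\epsilon$ and $\bs{\lambda}^{\text{ol}}_\epsilon$ satisfying $\text{URE}(\bs{\lambda}^{\text{ure}}_\epsilon) \le \inf_{\mathcal{J}} \text{URE}(\bs{\lambda}) + \epsilon/3$ and likewise for $\bs{\lambda}^{\text{ol}}_\epsilon$, rerunning the same telescoping bound and absorbing the additional $\epsilon/3$ slack; since $\epsilon$ is arbitrary, the conclusion is unchanged. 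Alternatively, since the uniform convergence in Lemma~\ref{lm:unif.convg} is taken over the entire parameter set $\mathcal{J}$, no compactification is actually required for the inequality above, and the argument goes through verbatim once one interprets the arg-min notation as any selection achieving the infimum (which is well-defined in probability limits).
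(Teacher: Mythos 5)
Your proof is correct and follows essentially the same route as the paper: both arguments sandwich the loss gap using $\text{URE}(\bs{\lambda}^{\text{ure}})\leq \text{URE}(\bs{\lambda}^{\text{ol}})$ and then invoke the uniform convergence of Lemma~\ref{lm:unif.convg}, with all substantive work residing in that lemma. Your additional remark on handling non-attainment of the infima over the non-compact set $\mathcal{J}$ via $\epsilon$-minimizers addresses a point the paper leaves implicit, but it does not change the argument.
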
 

Theorem \ref{thm:main}  follows immediately from the uniform convergence in Lemma~\ref{lm:unif.convg} and $\text{URE}(\bs{\lambda})$ is minimized by $\bs{\lambda}^{ure}$. It shows that the proposed estimator  $\hat{\bs{\theta}}(\bs{\lambda}^{\text{ure}})$ is asymptotically as good as the oracle estimator $\hat{\bs{\theta}}(\bs{\lambda}^{\text{ol}})$ in the sense that they achieve the same compound loss in probability. Crucially, this holds even under weak connectivity permitted by Assumptions~\ref{ass:graph.connect.1.full} to~\ref{ass:graph.connect.2.c}.  The asymptotic optimality established in Theorem \ref{thm:main} is a statement that conditions on the fixed effects $\bs{\theta}$. It therefore holds even if the priors of \eqref{eq:sorting.prior} are misspecified, i.e., when $\bs{\theta}$ is generated from a distribution other than \eqref{eq:sorting.prior}. This makes the estimator $\hat{\bs{\theta}}(\bs{\lambda}^{\text{ure}})$ robust to misspecification, which is not guaranteed for other empirical Bayes estimators that rely on distributional assumptions in selecting $\bs{\lambda}$, such as $\hat{\bs{\theta}}(\bs{\lambda}^{\text{mle}})$.

\subsection{Comparison to Alternative Estimators}
\label{subsec:theory.alternatives}

The class of two-way estimators defined by \eqref{eq:Bayes.estimator} trivially covers the LS estimator by setting $\lambda_a=\lambda_b=0$. In addition, it covers two widely-used estimators that only shrink in the $\beta$ dimension. We denote the first estimator by $\hat{\bs{\beta}}^{\text{1-way}}(\tilde \lambda_b)$. It is constructed by assuming that the $\alpha_i$s are homogeneous. We initially impose the normalization on $\alpha_i=\alpha=0$ and derive the posterior mean based on the prior $\beta_j \sim_{iid} {\cal N}(\underline{\beta},\sigma^2/\tilde \lambda_b)$. {\em Ex post} we demean the estimator to make it comparable to the one that we used previously. This leads to 
\be
  \hat{\bs{\beta}}^{\text{1-way}}(\tilde \lambda_b) := M_c \cdot  [B_2^\prime B_2 + \tilde \lambda_b I_c]^{-1}
  \cdot  B_2^\prime \bs{Y}, \quad M_c:=I_c - \frac{1	}{c}1_{c\times c}.
  \label{eq:betahat1way}
\ee 
Notice that the {\em ex post} demeaning eliminates the effect of the prior mean $\underline{\beta}$. The estimator  $\hat{\bs{\beta}}^{\text{1-way}}(\tilde \lambda_b)$ can be obtained as the limit of our two-way shrinkage estimator as the precision of the prior for $\alpha_i$ goes to infinity.


\begin{lemma}[Shrinkage in One-Way Model]
	\label{lm:1way}
	Suppose $\lambda_b = \tilde{\lambda}_b$, $\phi=0$, $\mu=0$, and $\lambda_a\rightarrow\infty$. Then $\hat{\bs{\beta}}(\bs{\lambda})
    \rightarrow \hat{\bs{\beta}}^{\text{1-way}}(\tilde{\lambda}_b)$.
\end{lemma}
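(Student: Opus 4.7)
The strategy is to rewrite the posterior mean in a ridge-like form whose limit as $\lambda_a \to \infty$ can be read off from a single block-inverse computation, and then match the resulting expression to $\hat{\bs{\beta}}^{\text{1-way}}(\tilde\lambda_b)$.

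First I would simplify the hyperparameters. Setting $\phi=0$ collapses $\Lambda^*$ to the block-diagonal $\Lambda = \diag(\lambda_a I_r,\tilde\lambda_b I_c)$, while $\mu=0$ kills $\bs{v}$; hence \eqref{eq:Bayes.estimator} reduces to $\hat{\bs{\theta}}(\bs{\lambda}) = S_1\hat{\bs{\theta}}^{\text{ls}}$ with $S_1 = \mathcal{R} - \mathcal{R}(L+\Lambda)^{-1}\Lambda$. Writing $\Lambda = (L+\Lambda) - L$ and using the easily verified identities $\mathcal{R}^2 = \mathcal{R}$ and $\mathcal{R}\hat{\bs{\theta}}^{\text{ls}} = \hat{\bs{\theta}}^{\text{ls}}$ (the latter follows from the normalization $\bs{1}_c'\hat{\bs{\beta}}^{\text{ls}}=0$), this reduces further to $\hat{\bs{\theta}}(\bs{\lambda}) = \mathcal{R}(L+\Lambda)^{-1}L\hat{\bs{\theta}}^{\text{ls}}$. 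The LS normal equations $L\hat{\bs{\theta}}^{\text{ls}} = B'\bs{Y}$ then yield the compact representation $\hat{\bs{\theta}}(\bs{\lambda}) = \mathcal{R}(L+\Lambda)^{-1}B'\bs{Y}$, which is essentially a (rotated) ridge estimator.

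Next I would compute the limit of $(L+\Lambda)^{-1}B'\bs{Y}$ as $\lambda_a \to \infty$. Partition $L$ and $\Lambda$ conformably with $(\bs{\alpha},\bs{\beta})$ and apply the block-inverse formula with $A = L_{11}+\lambda_a I_r$ and $D = L_{22}+\tilde\lambda_b I_c$; since $L_{11} = B_1'B_1$ is diagonal (each student appears a bounded number of times), $A^{-1} = O(\lambda_a^{-1})$, so that the Schur complement $D - L_{21}A^{-1}L_{12}$ converges to $D$. Every term in the $\bs{\alpha}$-block of $(L+\Lambda)^{-1}B'\bs{Y}$ carries a factor of $A^{-1}$ and therefore vanishes, while the $\bs{\beta}$-block converges to $(L_{22}+\tilde\lambda_b I_c)^{-1}B_2'\bs{Y} = (B_2'B_2+\tilde\lambda_b I_c)^{-1}B_2'\bs{Y}$. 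Finally, since the $\bs{\beta}$-rows of $\mathcal{R}$ equal $[0_{c\times r},\,M_c]$, applying $\mathcal{R}$ to the limit yields $\hat{\bs{\beta}}(\bs{\lambda}) \to M_c(B_2'B_2+\tilde\lambda_b I_c)^{-1}B_2'\bs{Y}$, which is exactly $\hat{\bs{\beta}}^{\text{1-way}}(\tilde\lambda_b)$ of \eqref{eq:betahat1way}.

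The main obstacle is the bookkeeping in the block-inverse expansion: a naive expansion contains several entries of the form $\lambda_a A^{-1}$ that are individually $O(1)$ rather than vanishing, so one must verify that any such factor is paired with an additional $A^{-1}$ before concluding that a block drops out. The ridge-like reformulation in the first step sidesteps this subtlety entirely, because the product $(L+\Lambda)^{-1}B'\bs{Y}$ carries no compensating factor of $\lambda_a$, and everything reduces to the single fact that $A^{-1}\to 0$ at rate $\lambda_a^{-1}$.
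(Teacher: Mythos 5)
Your proof is correct. It rests on the same two ingredients as the paper's own argument --- the block inversion of $L+\Lambda$ (with $\Lambda^*=\Lambda$ when $\phi=0$) and the normal-equation identity $L\hat{\bs{\theta}}^{\text{ls}}=LL^-B'\bs{Y}=B'\bs{Y}$ --- but you deploy them in the opposite order, and that reorganization genuinely streamlines the limit. The paper first computes $S_0(\bs{\lambda})=[L+\Lambda]^{-1}\Lambda$ explicitly via the block-inverse formula, passes to the limit of $S_0$ as $\lambda_a\to\infty$ (whose $\bs{\alpha}$-blocks converge to nonzero matrices such as $I_r$ and $-[\tilde\lambda_b+B_2'B_2]^{-1}B_2'B_1$, precisely because each carries a compensating factor of $\lambda_a$ or $\Lambda^{1/2}$), and only at the end recombines $B_1\hat{\bs{\alpha}}^{\text{ls}}+B_2\hat{\bs{\beta}}^{\text{ls}}$ into $\bs{Y}$ through $B'B\hat{\bs{\theta}}^{\text{ls}}=B'\bs{Y}$. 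You instead apply that identity up front to obtain the rotated ridge form $\hat{\bs{\theta}}(\bs{\lambda})=\mathcal{R}[L+\Lambda]^{-1}B'\bs{Y}$, after which every $\bs{\alpha}$-block of $[L+\Lambda]^{-1}$ carries an uncompensated factor of $(B_1'B_1+\lambda_a I_r)^{-1}=O(\lambda_a^{-1})$ and vanishes, the Schur complement converges to $B_2'B_2+\tilde\lambda_b I_c$, and the {\em ex post} demeaning by $M_c$ is read off from the $\bs{\beta}$-rows of $\mathcal{R}$. What your route buys is exactly what you identify: no bookkeeping of $O(1)$ terms of the form $\lambda_a A^{-1}$ and no final cancellation between $\hat{\bs{\theta}}^{\text{ls}}$ and $S_0\hat{\bs{\theta}}^{\text{ls}}$. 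What the paper's route buys is that the explicit limiting form of $S_0$ is reused essentially verbatim in the companion result for $\lambda_a=0$ (Lemma~\ref{lm:1way.2}), where the ridge rewriting does not collapse the $\bs{\alpha}$-block. One small cosmetic point: the identities $\mathcal{R}^2=\mathcal{R}$ and $\mathcal{R}\hat{\bs{\theta}}^{\text{ls}}=\hat{\bs{\theta}}^{\text{ls}}$ you cite are true but not needed for the step $S_1\hat{\bs{\theta}}^{\text{ls}}=\mathcal{R}[L+\Lambda]^{-1}L\hat{\bs{\theta}}^{\text{ls}}$, which follows directly from $\mathcal{R}-\mathcal{R}[L+\Lambda]^{-1}\Lambda=\mathcal{R}[L+\Lambda]^{-1}\bigl[(L+\Lambda)-\Lambda\bigr]$.
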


If we let $\hat{\bs{\beta}}^{\text{1-way}}(\lambda_b^{\text{mom}})$, where  $\lambda_b^{\text{mom}} := \tfrac{\sigma^2}{\text{var}(\beta_j)}$  and $\text{var}(\beta_j)$ is an empirical estimate of the variance within $\bs{\beta}$, we obtain the
one-way EB estimator of \cite{Kane2008}. This estimator is widely used for the estimation of teacher-value added and is generalized to time-varying 
$\bs{\beta}$ in \cite{Chetty2014} and \cite{Kwon2021}. 
Compared to $\hat{\bs{\beta}}^{\text{1-way}}(\cdot)$, our two-way EB estimator uses a data-dependent choice of $\lambda_a$ rather than setting it to infinity. When $\alpha_i$ is indeed heterogeneous, due to unobserved heterogeneity not fully controlled by observed covariates, the omitted variable bias could be large in the one-way estimator under assortative matching. 

The second estimator actually maintains the heterogeneity of the $\alpha_{i}$s but conducts shrinkage in the $\bs{\beta}$ dimension only. Specifically, this estimator first projects out $\bs{\alpha}$ using LS and subsequently estimates $\bs{\beta}$ with a one-way EB estimator that respects the normalization restriction $\bs{1}_c'\bs{\beta}=0$. We use $\hat{\bs{\beta}}^{\text{2-way}}(\tilde \lambda_b)$ to denote the posterior of $\bs{\beta}$ under a Gaussian prior, after $\bs{\alpha}$ is projected out. Its formula is given by
\be 
  \hat{\bs{\beta}}^{\text{2-way}}(\tilde \lambda_b) = 
  M_c \cdot L_{2,\perp} [L_{2,\perp} + \tilde \lambda_b I_c]^{-1}
  \cdot \hat{\bs{\beta}}^{\text{ls}}.
\ee  
Here $\hat{\bs{\beta}}^{\text{ls}}$ is the $\beta$ component of $\hat{\bs \theta}^{\text{ls}}$ defined in (\ref{eq:ls.estimator}), $L_{2,\perp}$ is the Laplacian matrix of the one-mode projected graph in \eqref{eq:def.L2perp}, and $M_c$ replaces ${\cal R}$ in the definitions of $S_1$ and $S$ in the posterior mean equation \eqref{eq:Bayes.estimator}. When $\tilde{\lambda}_b= 0$, define  $\hat{\bs{\beta}}^{\text{2-way}}= M_c\cdot \hat{\bs{\beta}}^{\text{ls}}$. This estimator  $\hat{\bs{\beta}}^{\text{2-way}}(\tilde \lambda_b)$ is our two-way shrinkage estimator with the precision of the prior for $\alpha_i$ set to zero.


\begin{lemma}[One-Way Shrinkage in Two-Way Model]
	\label{lm:1way.2}
	Suppose $\lambda_b = \tilde{\lambda}_b$, $\phi=0$, $\mu=0$, and $\lambda_a=0$. Then 
$\hat{\bs{\beta}}(\bs{\lambda})
		=
		\hat{\bs{\beta}}^{\text{2-way}}(\tilde \lambda_b).$
\end{lemma}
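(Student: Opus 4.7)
My plan is to start from the posterior mean formula \eqref{eq:Bayes.estimator} and specialize to $\mu=0$, $\phi=0$, $\lambda_a=0$, $\lambda_b=\tilde{\lambda}_b$. Since $\bs{v}=0$, only the $S_1\hat{\bs{\theta}}^{\text{ls}}$ term survives. With $P := L + \Lambda^*$ and $\Lambda^* = \text{diag}(0_{r\times r}, \tilde{\lambda}_b I_c)$, a direct manipulation gives $S_1 = \mathcal{R}(I - P^{-1}\Lambda^*) = \mathcal{R} P^{-1}(P - \Lambda^*) = \mathcal{R} P^{-1} L$. I first need to verify that $P$ is invertible despite $\lambda_a=0$: $P$ is positive semidefinite and $\bs{x}'P\bs{x} = \bs{x}'L\bs{x} + \tilde{\lambda}_b\bs{x}_\beta'\bs{x}_\beta = 0$ forces $\bs{x}_\beta = 0$ and then $B_1\bs{x}_\alpha = 0$; since $B_1'B_1$ is diagonal with positive entries (every student has at least one observation), this yields $\bs{x}_\alpha=0$. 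Using the normal equations $L\hat{\bs{\theta}}^{\text{ls}} = B'\bs{Y}$ (which follow from $B\hat{\bs{\theta}}^{\text{ls}} = BL^- B'\bs{Y} = P_B\bs{Y}$ where $P_B$ is the orthogonal projection onto $\text{col}(B)$), we obtain $\hat{\bs{\theta}}(\bs{\lambda}) = \mathcal{R} P^{-1} B'\bs{Y}$.

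Next I would extract the $\bs{\beta}$ block using standard block inversion. Partitioning
\begin{equation*}
P = \begin{bmatrix} B_1'B_1 & B_1'B_2 \\ B_2'B_1 & B_2'B_2 + \tilde{\lambda}_b I_c \end{bmatrix},
\end{equation*}
the Schur complement of the $(1,1)$ block is $B_2'B_2 + \tilde{\lambda}_b I_c - B_2'B_1(B_1'B_1)^{-1}B_1'B_2 = L_{2,\perp} + \tilde{\lambda}_b I_c$ by \eqref{eq:def.L2perp}. Writing $M_1 := I - B_1(B_1'B_1)^{-1}B_1'$ and using $\mathcal{R}_b = [0_{c\times r}, M_c]$, the lower block of $P^{-1}B'\bs{Y}$ equals $(L_{2,\perp}+\tilde{\lambda}_b I_c)^{-1}\bigl(B_2'\bs{Y} - B_2'B_1(B_1'B_1)^{-1}B_1'\bs{Y}\bigr) = (L_{2,\perp}+\tilde{\lambda}_b I_c)^{-1} B_{2,\perp}'\bs{Y}$. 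Hence
\begin{equation*}
\hat{\bs{\beta}}(\bs{\lambda}) = M_c\,(L_{2,\perp}+\tilde{\lambda}_b I_c)^{-1} B_{2,\perp}'\bs{Y}.
\end{equation*}

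The final step is to match this to $\hat{\bs{\beta}}^{\text{2-way}}(\tilde{\lambda}_b) = M_c L_{2,\perp}(L_{2,\perp}+\tilde{\lambda}_b I_c)^{-1}\hat{\bs{\beta}}^{\text{ls}}$. By the FWL representation, $\hat{\bs{\beta}}^{\text{ls}} = L_{2,\perp}^{\dagger} B_{2,\perp}'\bs{Y}$ (the minimum-norm solution satisfying $\bs{1}_c'\bs{\beta}=0$, since $\ker L_{2,\perp} = \text{span}(\bs{1}_c)$). Observe that $\bs{1}_c' B_{2,\perp}' = \bs{1}_{rT}'M_1 = \bs{0}'$, because $\bs{1}_{rT} = B_1\bs{1}_r\in\text{col}(B_1)$; thus $B_{2,\perp}'\bs{Y}\in\text{range}(L_{2,\perp})$ and $L_{2,\perp}L_{2,\perp}^{\dagger}B_{2,\perp}'\bs{Y} = B_{2,\perp}'\bs{Y}$. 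Since $L_{2,\perp}$ and $(L_{2,\perp}+\tilde{\lambda}_b I_c)^{-1}$ commute, the right-hand side becomes $M_c (L_{2,\perp}+\tilde{\lambda}_b I_c)^{-1} L_{2,\perp} L_{2,\perp}^{\dagger}B_{2,\perp}'\bs{Y} = M_c(L_{2,\perp}+\tilde{\lambda}_b I_c)^{-1} B_{2,\perp}'\bs{Y} = \hat{\bs{\beta}}(\bs{\lambda})$.

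The delicate point is the last manipulation. As matrices, $L_{2,\perp}(L_{2,\perp}+\tilde{\lambda}_b I_c)^{-1} L_{2,\perp}^{\dagger}$ is \emph{not} equal to $(L_{2,\perp}+\tilde{\lambda}_b I_c)^{-1}$: in the spectral decomposition the two differ by the rank-one term $\tilde{\lambda}_b^{-1}(\bs{1}_c\bs{1}_c'/c)$ stemming from the zero eigenvalue of $L_{2,\perp}$. The equality used above holds only after applying it to vectors in $(\ker L_{2,\perp})^\perp$, or equivalently after left-multiplying by $M_c = I_c - \bs{1}_c\bs{1}_c'/c$, which annihilates that offending null direction. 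Carefully tracking this normalization --- which is encoded in $\mathcal{R}$ in the two-way derivation and in $M_c$ in the profiled derivation --- is the main content of the proof.
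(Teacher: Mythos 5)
Your proof is correct, and at its core it is the same computation as the paper's: block inversion of $L+\Lambda^*$ with $\lambda_a=0$, identification of the Schur complement $L_{2,\perp}+\tilde\lambda_b I_c$, and extraction of the $\bs\beta$ block. The difference is one of routing. The paper writes $\hat{\bs\theta}(\bs\lambda)=\mathcal{R}\bigl[\hat{\bs\theta}^{\text{ls}}-S_0\hat{\bs\theta}^{\text{ls}}\bigr]$ with $S_0:=[L+\Lambda]^{-1}\Lambda$, so after the block inversion the lower block is immediately $[\tilde\lambda_b+L_{2,\perp}]^{-1}L_{2,\perp}\hat{\bs\beta}^{\text{ls}}$ — i.e., the answer already appears in terms of $\hat{\bs\beta}^{\text{ls}}$ and the proof ends there. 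You instead substitute the normal equations $L\hat{\bs\theta}^{\text{ls}}=B'\bs{Y}$ to get $\mathcal{R}P^{-1}B'\bs{Y}$ and land on $M_c(L_{2,\perp}+\tilde\lambda_b I_c)^{-1}B_{2,\perp}'\bs{Y}$, which then forces you to re-derive the FWL representation $\hat{\bs\beta}^{\text{ls}}=L_{2,\perp}^{\dagger}B_{2,\perp}'\bs{Y}$ and handle the pseudoinverse/kernel bookkeeping ($\ker L_{2,\perp}=\operatorname{span}(\bs{1}_c)$, $B_{2,\perp}'\bs{Y}\perp\bs{1}_c$) to close the loop. All of those extra steps are valid — and your remark that $L_{2,\perp}(L_{2,\perp}+\tilde\lambda_b I_c)^{-1}L_{2,\perp}^{\dagger}$ and $(L_{2,\perp}+\tilde\lambda_b I_c)^{-1}$ differ by $\tilde\lambda_b^{-1}\bs{1}_c\bs{1}_c'/c$ is right, with the discrepancy killed either by $M_c$ or by the fact that $B_{2,\perp}'\bs{Y}$ is already mean-zero — but the detour through $B'\bs{Y}$ is avoidable: writing the estimator as $(I-S_0)\hat{\bs\theta}^{\text{ls}}$ rather than $S_1 P^{-1}B'\bs{Y}$ keeps $\hat{\bs\beta}^{\text{ls}}$ in view throughout and makes the final matching a one-line commutation.
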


\cite{Chetty2018} use a one-way shrinkage estimator akin to $\hat{\bs{\beta}}^{\text{2-way}}(\lambda_b)$ to estimate the causal effects of growing up in different neighborhoods on the future outcomes of resident children. In this application, the mover-based identification\footnote{More precisely, the identification scheme uses a more specific variation, namely differences in movers' timings from one neighborhood to another to identify the relative per-year causal effects of the two neighborhoods.} and LS estimation eliminates unobserved assortative matching between individuals and neighborhoods. Then, a one-way shrinkage is conducted to estimate the causal effect for different neighborhoods.\footnote{While the shrinkage estimators in these applications typically use only the variances of $\hat{\bs{\beta}}^{\text{ls}}$ instead of $L_{2,\perp}$ in the shrinkage weight, we expect the latter to be more efficient as it takes into account the entire correlational structure.} 
Compared to $\hat{\bs{\beta}}^{\text{2-way}}(\lambda_b)$, the two-way shrinkage estimator benefit from pooling information in the $\bs{\alpha}$ dimension, and importantly utilizing information in the observed assortative matching pattern with a data-dependent choice of the correlation parameter $\phi$.

In Theorem~\ref{thm:main} we established that choosing $\bs{\lambda}=\bs{\lambda}^{ure}$ is optimal. It is tempting to deduce that as soon as a researcher chooses $\bs{\lambda}\not=\bs{\lambda}^{ure}$, as she most likely would when using $\hat{\bs{\beta}}^{\text{ls}}$, $\hat{\bs{\beta}}^{\text{1-way}}(\tilde{\lambda}_b)$, or $\hat{\bs{\beta}}^{\text{2-way}}(\tilde \lambda_b)$, the resulting estimator is no longer optimal in the sense of Definition~\ref{def:optimality}. Unfortunately, this form of suboptimality does not follow directly from Theorem~\ref{thm:main}. We proceed with a formal definition of dominance that is akin to inadmissibility.

\begin{definition} \label{def:dominance} An estimator $\hat{\bs{\theta}}$ is dominated by $\hat{\bs{\theta}}(\bs{\lambda}^{ure})$ if there is a set of parameters $(B,\bs{\theta})$ for which $\hat{\bs{\theta}}$ does not satisfy Definition~\ref{def:optimality}.
\end{definition}

\noindent {\bf Example:} To find parameter values	$(B,\bs{\theta})$ under which $\hat{\bs{\theta}}^{\text{ls}}$ is dominated by $\hat{\bs{\theta}}(\bs{\lambda}^{\text{ure}})$, it is sufficient to find $(B,\bs{\theta})$  under which $\hat{\bs{\theta}}^{\text{ls}}$ has strictly larger risk than a trivial estimator $\bs{0}_{r+c}$, which can be generated as a limit of $\hat{\bs{\theta}}(\bs{\lambda}^{\text{ure}})$ by setting $\mu=0$, $\phi=0$, and letting $\lambda_a,\lambda_b \rightarrow \infty$. The risk of the estimator $\bs{0}_{r+c}$ is $\tfrac{1}{r+c}\bs{\theta}'\bs{\theta}$, and the risk of $\hat{\bs{\theta}}^{\text{ls}}$
is\footnote{The inequality in \eqref{eq:LSdom} follows from von Neumann's trace inequality, the fact that ${\cal R'R}$ has a single eigenvalue of $r/c$ and eigenvalues of 1 with multiplicity of $r+c-1$, and an assumption that $r>c$ that corresponds to the setting of our empirical application.}
\begin{equation}\label{eq:LSdom}
	\frac{\sigma^2}{r+c}\tr[L^-]= \frac{\sigma^2}{r+c}\tr[{\cal R'R}L^\dagger] 
	\geq
	\frac{\sigma^2}{r+c}\tr[L^\dagger]
	=
	\frac{\sigma^2}{r+c}\sum_{\ell = 2}^{r+c} \frac{1}{\rho_\ell(B'B)}.
\end{equation} 
Thus, for DGPs satisfying 
\begin{equation}
	(B,\bs{\theta}):
	\quad
	\frac{\sigma^2}{r+c} \sum_{\ell =2}^{r+c} \frac{1}{\rho_\ell(B'B)} 
	> 
	\frac{1}{r+c} \left( \sum_{i=1}^r \alpha_i^2 + \sum_{j=1}^c \beta_j^2 \right),
	\label{eq:ex.btheta.condition}
\end{equation}
$\hat{\bs{\theta}}^{\text{ls}}$ has larger risk than $\bs{0}_{r+c}$. Under these DGPs, $\hat{\bs{\theta}}^{\text{ls}}$ also has a larger risk than  $\hat{\bs{\theta}}(\bs{\lambda}^{\text{ure}})$ and is dominated in the sense of Definition~\ref{def:dominance}. Condition (\ref{eq:ex.btheta.condition}) is satisfied in settings in which the graph is weakly connected (some eigenvalues of $L=B'B$ are close to zero) and the deviations of the two-way effects from zero are small. $\nobreak\hfill$ $\qed$

Rather than conducting a formal analysis for the comparison of $\hat{\bs{\theta}}(\bs{\lambda}^{\text{ure}})$ with other estimators, we provide numerical illustrations in the Monte Carlo experiments in Section~\ref{sec:simul}.


\subsection{Estimated Regression Coefficients}
\label{subsec:theory.estimated.gamma}

Next, we study conditions for the asymptotic optimality in Theorem \ref{thm:main} to hold when the regressor coefficients $\bs{\gamma}$ are estimated. Let $\tilde{\bs{\gamma}}$ be an estimator of $\bs{\gamma}$. Define all the estimators exactly the same as before with $\bs{Y}:=\bs{Y}^*-X\bs{\gamma}$ replaced by $\bs{Y}:= \bs{Y}^*-X\tilde{\bs{\gamma}}$. 
In particular, let $\tilde{\bs{\theta}}(\bs{\lambda})$, $\widetilde{\text{URE}}(\bs{\lambda})$, and $\bs{\lambda}^{\widetilde{\text{ure}}}$ be the counterparts of $\hat{\bs{\theta}}(\bs{\lambda})$,  $\text{URE}(\bs{\lambda})$, and $\bs{\lambda}^{\text{ure}}$, respectively. We provide conditions under which the impact of $\tilde{\bs{\gamma}}$ is negligible uniformly on both the compound loss and the URE, e.g., $
\sup_{\bs{\lambda}\in\mathcal{J}} 
|
l_w(\hat{\bs{\theta}}(\bs{\lambda}),\bs{\theta}) - 
l_w(\tilde{\bs{\theta}}(\bs{\lambda}),\bs{\theta})
|
\rightarrow_p 0,$ and
$
\sup_{\bs{\lambda}\in\mathcal{J}} 
|
\text{URE}(\bs{\lambda}) - 
\widetilde{\text{URE}}(\bs{\lambda})
|
\rightarrow_p 0
$
in order to extend the asymptotic optimality to the EB estimator 
$\tilde{\bs{\theta}}(\bs{\lambda}^{\widetilde{\text{ure}}})$.
To this end, we impose the following assumption on the regressors and the estimator $\tilde{\bs{\gamma}}$. In this case, $\text{URE}(\bs{\lambda})$ can be interpreted as an asymptotic unbiased risk estimate.



\begin{assumption}[Estimate $\bs{\gamma}$]
	\label{ass:reg.condreg}
	As $r,c \rightarrow\infty$, 
	$\mathbb{E}[||\bs{x}_{it}||^2]\le M$ for some finite constant $M$ and
	$\tilde{\bs{\gamma}}-\bs{\gamma} = O_p (r^{-1/2})$.
\end{assumption}

We suggest using the $r^{1/2}$-consistent estimators of $\gamma$  studied by \cite{Verdier2020} in sparsely connected two-way model. 
With strictly exogenous regressors, this holds for the standard OLS estimator. 
When regressors are only sequentially exogenous, e.g., the lagged test score, \cite{Verdier2020} propose a $r^{1/2}$-consistent estimator of $\bs{\gamma}$ by extending the recursive orthogonal transformation from the one-way fixed effect model to the two-way fixed effects model. The following Corollary is the generalization of Theorem~\ref{thm:main} to estimated regressors.



\begin{corollary}\label{cor:reg}
	Under all the Assumptions in Theorem~\ref{thm:main} and Assumption~\ref{ass:reg.condreg}, the main result in Theorem ~\ref{thm:main} hold when $\bs{Y}:=\bs{Y}^*-X\bs{\gamma}$ is replaced by $\bs{Y}:= \bs{Y}^*-X\tilde{\bs{\gamma}}$ in all definitions.
\end{corollary}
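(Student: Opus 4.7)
The plan is to show that replacing $\bs{\gamma}$ by $\tilde{\bs{\gamma}}$ perturbs both $l_w(\hat{\bs{\theta}}(\bs{\lambda}),\bs{\theta})$ and $\text{URE}(\bs{\lambda})$ by amounts that vanish uniformly over $\bs{\lambda}\in\mathcal{J}$, after which an argmin-continuity argument transfers the URE optimality of Theorem~\ref{thm:main} to $\bs{\lambda}^{\widetilde{\text{ure}}}$. The starting identity is
$$
\tilde{\bs{\theta}}(\bs{\lambda}) - \hat{\bs{\theta}}(\bs{\lambda})
= S_1\bigl(\tilde{\bs{\theta}}^{\text{ls}} - \hat{\bs{\theta}}^{\text{ls}}\bigr)
= -S_1 L^{-} B' X (\tilde{\bs{\gamma}} - \bs{\gamma}) =: \Delta(\bs{\lambda}).
$$
Using $S_1 = \mathcal{R}(L+\Lambda^*)^{-1}L$ together with $LL^- B' = B'$ (an immediate consequence of the LS normal equations $B'B L^- B' = B'$), this simplifies to $\Delta(\bs{\lambda}) = -\mathcal{R}(L+\Lambda^*)^{-1} B' X(\tilde{\bs{\gamma}}-\bs{\gamma})$, which removes the generalized inverse and makes the dependence on $\bs{\lambda}$ transparent.

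Next, I would expand
$$
l_w(\tilde{\bs{\theta}}(\bs{\lambda}),\bs{\theta}) - l_w(\hat{\bs{\theta}}(\bs{\lambda}),\bs{\theta})
= 2\bigl(\hat{\bs{\theta}}(\bs{\lambda})-\bs{\theta}\bigr)^\prime W\Delta(\bs{\lambda}) + \Delta(\bs{\lambda})^\prime W\Delta(\bs{\lambda}),
$$
bound the cross-term by Cauchy-Schwarz in the $W$-inner product, and combine Lemma~\ref{lm:unif.convg} with the fact that $\text{URE}(\bs{\lambda})$ is uniformly bounded on $\mathcal{J}$ (by Assumption~\ref{ass:reg.cond}(i), Assumption~\ref{ass.degree}, and the eigenvalue conditions of Assumptions~\ref{ass:graph.connect.1.full}-\ref{ass:graph.connect.2.c}) to deduce $\sup_{\bs{\lambda}\in\mathcal{J}}\|\hat{\bs{\theta}}(\bs{\lambda})-\bs{\theta}\|_W = O_p(1)$. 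For $\Delta(\bs{\lambda})$, the operator bound $\|(L+\Lambda^*)^{-1}\| \le 1/\rho_2(L)$ holds uniformly on $\mathcal{J}$, and together with $\|B'X\| = O_p((r+c)^{1/2})$ from bounded degrees and $\mathbb{E}\|\bs{x}_{it}\|^2\le M$, the $W$-normalization, and $\|\tilde{\bs{\gamma}}-\bs{\gamma}\| = O_p(r^{-1/2})$ from Assumption~\ref{ass:reg.condreg}, I would obtain $\sup_{\bs{\lambda}\in\mathcal{J}}\|\Delta(\bs{\lambda})\|_W = O_p\bigl((\rho_2(L)\, r^{1/2})^{-1}\bigr) = o_p(1)$ by Assumption~\ref{ass:graph.connect.1.full}. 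Hence $\sup_{\bs{\lambda}\in\mathcal{J}}|l_w(\tilde{\bs{\theta}}(\bs{\lambda}),\bs{\theta}) - l_w(\hat{\bs{\theta}}(\bs{\lambda}),\bs{\theta})| = o_p(1)$.

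For the criterion, note that the bias-correction traces in \eqref{eq:ure} do not depend on $\bs{\gamma}$; only the quadratic term $\tr[S'WS(\hat{\bs{\theta}}^{\text{ls}}-\bs{v})(\hat{\bs{\theta}}^{\text{ls}}-\bs{v})^\prime]$ is affected, and the same cross-plus-quadratic expansion applied to $\tilde{\bs{\theta}}^{\text{ls}}-\hat{\bs{\theta}}^{\text{ls}} = -L^- B' X(\tilde{\bs{\gamma}}-\bs{\gamma})$ delivers $\sup_{\bs{\lambda}\in\mathcal{J}}|\widetilde{\text{URE}}(\bs{\lambda}) - \text{URE}(\bs{\lambda})| = o_p(1)$. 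Combining the two uniform perturbation bounds with Lemma~\ref{lm:unif.convg} yields $\sup_{\bs{\lambda}\in\mathcal{J}}|\widetilde{\text{URE}}(\bs{\lambda}) - l_w(\tilde{\bs{\theta}}(\bs{\lambda}),\bs{\theta})| = o_p(1)$. Definition~\ref{def:optimality} then follows from the chain $l_w(\tilde{\bs{\theta}}(\bs{\lambda}^{\widetilde{\text{ure}}}),\bs{\theta}) \le \widetilde{\text{URE}}(\bs{\lambda}^{\widetilde{\text{ure}}}) + o_p(1) \le \widetilde{\text{URE}}(\bs{\lambda}^{\text{ol}}) + o_p(1) \le l_w(\tilde{\bs{\theta}}(\bs{\lambda}^{\text{ol}}),\bs{\theta}) + o_p(1) = l_w(\hat{\bs{\theta}}(\bs{\lambda}^{\text{ol}}),\bs{\theta}) + o_p(1)$, where the first and third inequalities use the URE-loss uniform gap and the second uses $\bs{\lambda}^{\widetilde{\text{ure}}} = \argmin\widetilde{\text{URE}}$.

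The main obstacle will be controlling $\sup_{\bs{\lambda}\in\mathcal{J}}\|\Delta(\bs{\lambda})\|_W$ without losing the graph-connectivity gains. The delicate point is that $\mathcal{J}$ allows $\lambda_a,\lambda_b$ to approach zero, in which regime $(L+\Lambda^*)^{-1}$ degenerates to the generalized inverse of $L$ and the relevant eigenvalue inverses are as large as $1/\rho_2(L)$; the reduction $S_1 L^- B' = \mathcal{R}(L+\Lambda^*)^{-1}B'$ is essential because it prevents the bound from picking up a squared small eigenvalue that would arise from a naive operator-norm argument on $L^-$. In the subvector case $W=W_b$ the same reasoning goes through with $L_{2,\perp}$ in place of $L$, invoking Assumption~\ref{ass:graph.connect.1.b} for the corresponding rate.
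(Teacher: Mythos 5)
Your overall architecture is the same as the paper's: first establish that replacing $\bs{\gamma}$ by $\tilde{\bs{\gamma}}$ perturbs both $l_w(\hat{\bs{\theta}}(\bs{\lambda}),\bs{\theta})$ and $\text{URE}(\bs{\lambda})$ by $o_p(1)$ uniformly over $\mathcal{J}$ (this is the paper's Lemma~\ref{lm:UC.reg}), and then run the telescoping chain of inequalities through $\widetilde{\text{URE}}(\bs{\lambda}^{\widetilde{\text{ure}}})\le\widetilde{\text{URE}}(\bs{\lambda}^{\text{ure}})$, Lemma~\ref{lm:unif.convg}, and Theorem~\ref{thm:main}. Your closing chain is essentially the paper's terms (I)--(VIII), and your algebraic starting point $\tilde{\bs{\theta}}(\bs{\lambda})-\hat{\bs{\theta}}(\bs{\lambda})=S_1(\tilde{\bs{\theta}}^{\text{ls}}-\hat{\bs{\theta}}^{\text{ls}})=-S_1L^-B'X(\tilde{\bs{\gamma}}-\bs{\gamma})$ matches the paper's.

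There is, however, a genuine gap in the uniform bound on $\Delta(\bs{\lambda})$. The claim that $\lVert (L+\Lambda^*)^{-1}\rVert \le 1/\rho_2(L)$ uniformly on $\mathcal{J}$ is false: $L$ is singular with null vector $\bs{d}=(\bs{1}_r',-\bs{1}_c')'$, and $\mathcal{J}$ allows $\lambda_a,\lambda_b\downarrow 0$, so $\bs{d}'(L+\Lambda^*)\bs{d}/\lVert\bs{d}\rVert^2 \le (1+\bar{\phi})\max(\lambda_a,\lambda_b)\rightarrow 0$ and $\lVert(L+\Lambda^*)^{-1}\rVert\rightarrow\infty$, far exceeding $1/\rho_2(L)$. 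Passing to $\mathcal{R}(L+\Lambda^*)^{-1}B'$ does not by itself repair this: although the range of $B'$ is orthogonal to $\bs{d}$, the inverse $(L+\Lambda^*)^{-1}$ does not preserve that orthogonal complement, so one must still show that the output's leakage into the $\bs{d}$-direction is controlled uniformly in $(\lambda_a,\lambda_b,\phi)$ — including mixed regimes where $\lambda_a$ and $\lambda_b$ vanish or diverge at different rates. This is exactly what the paper's two-sided sandwich accomplishes: it writes $W_{\mathcal{R}}^{1/2}S_1B^-X(\tilde{\bs{\gamma}}-\bs{\gamma}) = W_{\mathcal{R}}^{1/2}[L^\dagger]^{1/2+\epsilon}\cdot\tilde{V}_\lambda\cdot L^{1/2}B^\dagger\cdot\mathcal{R}X(\tilde{\bs{\gamma}}-\bs{\gamma})$ with $\tilde{V}_\lambda:=L^{1/2+\epsilon}[L+\Lambda^*]^{-1}L^{1/2}$, and the uniform boundedness of $\tilde{V}_\lambda$ (and of its derivatives, needed for stochastic equicontinuity) rests on $L\preceq L+\Lambda^*$ and the regime-splitting analysis behind Lemma~\ref{lm:bd.deriv} and Lemma~\ref{lm:deriv.bound.lower}; the factor $W_{\mathcal{R}}^{1/2}[L^\dagger]^{1/2+\epsilon}$ is then what delivers the $o(1)$ rate via Assumption~\ref{ass:graph.connect.1}. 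Without this (or an equivalent argument), your bound $\sup_{\bs{\lambda}\in\mathcal{J}}\lVert\Delta(\bs{\lambda})\rVert_W=o_p(1)$ is not established, and this is the step on which the whole corollary turns.
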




\section{Monte Carlo Simulation}
\label{sec:simul}

The estimators are introduced in Section~\ref{subsec:simul.estimators}, the
simulation designs are described in Section~\ref{subsec:simul.designs} and the results are summarized in Section~\ref{subsec:simul.results}.

\subsection{Estimators}
\label{subsec:simul.estimators}

We study the properties of the proposed estimator $\hat{\bs{\theta}}(\bs{\lambda}^{\text{ure}})$, henceforth EB-URE, vis-\`{a}-vis the infeasible oracle estimator $\hat{\bs{\theta}}(\bs{\lambda}^{\text{ol}})$, henceforth OL, the least squares estimator $\hat{\bs{\theta}}^{\text{ls}}$, henceforth LS, and $\hat{\bs{\theta}}(\bs{\lambda}^{\text{mle}})$, henceforth EB-MLE, that selects $\bs{\lambda}$ to maximize the marginal likelihood of $\bs{Y}$, i.e., the likelihood that integrates out $\bs{\theta}$ according to \eqref{eq:sorting.prior}. The computational details for these estimators are detailed in Section~\ref{sec:compute} of the Supplementary Online Appendix. For simulations that compare the estimation of only $\bs{\beta}$, we also include the one-way fixed effects estimator $\hat{\bs{\beta}}^{\text{1-way}}(\lambda_b^{\text{mom}})$ defined in (\ref{eq:betahat1way}), henceforth  EB-1way.

\subsection{Simulation Designs}
\label{subsec:simul.designs}

The Monte Carlo designs are based on a prototypical application of estimating teacher value-added based on a linked student-teacher data set. We assume that there are $r=40,000$ students, $c=4,000$ teachers, and students and teachers are allocated to $s=200$ schools, numbers that are of the same order as those in the empirical application. The time dimension of the panel is $T=2$. The designs begin with the generation of the unit-specific parameters $\bs{\theta}$, then a bipartite graph ${\cal G}$ is generated conditional on $\bs{\theta}$, which in turn determines the matrix $B$. 

\begin{enumerate}
	\item {\bf Period $t=0$:} Generate unit-specific parameters by $iid$ sampling: $\alpha_i \sim_{iid} \mathcal{N}(0,\sigma_\alpha^2)$ and $ \beta_j \sim_{iid} \mathcal{N}(0,\sigma_\beta^2)$ for $i=1,\dots,r$ and $j=1,\dots,c$.
	\item {\bf Period $t=1$:}
	\begin{enumerate}
		\item Rank teachers by $\beta_j$ in ascending order and allocate them to the $s$ schools. School~1 (School~s) receives the $c/s$ teachers with the lowest (highest) $\beta_j$ values. Randomly re-assign a fraction of $1-\pi_{\text{match}}$ teachers from each school to one of the $s$ schools. 
		\item Rank students by $\alpha_i$ in ascending order and allocate them to the $s$ schools. School~1 (School~s) receives the $r/s$ students with the lowest (highest) $\alpha_i$ values. Randomly re-assign a fraction of $1-\pi_{\text{match}}$ students from each school to one of the $s$ schools. 
		\item Within each school, students and teachers are randomly matched into classes. Thus, any student-teacher assortative matching takes place between and not within schools. 

	\end{enumerate}
	\item {\bf Period $t=2$:}
	\begin{enumerate}
		\item From each school, we randomly select $\pi_{\text{mob}}$ proportion of incumbent teachers to switch school. If teacher $j$ is designated to leave her school, she is assigned to one of the other $s-1$ schools with equal probability.
		\item Matching students to teachers: same as in Step~(c) of period $t=1$. 
	\end{enumerate}
\end{enumerate}


We start from perfect assortative matching and then make random re-assignments controlled by the parameter $\pi_{\text{match}}$ to break the matching: $\pi_{\text{match}}=1$ ($\pi_{\text{match}}=0$)  corresponds to perfect positive matching (no matching). If we skip the period $t=2$ step of teachers switching schools, then the resulting graph ${\cal G}$ will be disconnected. While there would be a potentially connected subgraph ${\cal G}_l$ for each school, these subgraphs are disconnected from each other. Thus, we can use  $\pi_{\text{mob}}$ to control the connectivity of ${\cal G}$. The outcomes for periods $t=1,2$ are determined according to
\begin{equation}
	y_{it}=\alpha_i+\beta_{j(i,t)}+ \sigma \cdot u_{it}, \text{ where } u_{it}\sim_{iid}\mathcal{N}(0,1).
\end{equation}

The simulation designs are summarized in Table~\ref{tab:mc.designs}. Design 1 is the reference design, where we calibrate the DGP parameters such that the empirical moments of $\hat{\bs{\theta}}^{\text{ls}}$ and connectivity of projected teacher graph match those estimated from the empirical application in Section~\ref{sec:appl}.
The subsequent designs then each perturb Design 1 along a single dimension: Design 2 assumes a relatively high level of student-teacher assortative matching and Design 3 assumes a relatively high level of teacher mobility across schools. Finally, Design 4 inverts the heterogeneity in student abilities and teacher value-added, so that students are relatively homogeneous. Under all designs, the estimators are evaluated according to their RMSEs defined as 
\begin{equation}
	\mbox{RMSE}(\hat{\bs{\theta}},W_b) = \sqrt{ l_{w_b}(\hat{\bs{\theta}},\bs{\theta}) },
	\label{eq:simul.rmse}
\end{equation}
where $w_b$ represents the usage of $W_b$ in the loss as defined in \eqref{eq:weights}.\footnote{Section~\ref{sec:details.MonteCarlo} of the Supplementary Online  Appendix repeats the simulations for Design 1 using the RMSE $ \sqrt{ l_{w_{a+b}}(\hat{\bs{\theta}},\bs{\theta}) }$ and also considers alternative specifications of the distributions governing the unit-specific parameters $\bs{\alpha}$ and $\bs{\beta}$ beyond Gaussianity, such as with skewness or fat tails.}

\begin{table}
	\caption{Simulation Designs}
	\label{tab:mc.designs}
	\begin{center}
		\begin{tabular}{lccccc} 
			\toprule
			&  Emp. Appl.  & \multicolumn{4}{c}{ Simulation Designs } \\
			\cmidrule(lr){3-6}
			& - & 1 & 2 & 3 & 4  \\ 
			\midrule
			DGP Parameters \\
			$\pi_{\text{match}}$ & - & 0.4 & 0.7 & 0.4 & 0.4 \\
			$\pi_{\text{mob}}$ & - & 0.05 & 0.05 & 0.12 & 0.05 \\
			$\sigma_\alpha^2$ & - & 0.6 & 0.6 & 0.6 & 0.06 \\
			$\sigma_\beta^2$ & - & 0.06 & 0.06 & 0.06 & 0.6 \\
			\cmidrule(lr){1-6}
			Empirical Moments \\
			$\text{var}(\alpha_i)$  & -  & 0.6 & 0.6  & 0.6  & 0.06  \\
			$\text{var}(\beta_{j(i,t)})$ & - & 0.06 & 0.06  & 0.06  & 0.6 \\
			$\text{cor}(\alpha_i,\beta_{j(i,t)})$  & - & 0.15 & 0.46  & 0.14 & 0.15 \\
			\cmidrule(lr){1-6}
			Empirical Moments \\
			$\text{var}(\hat{\alpha}_i^{\text{ls}})$  & 0.75 & 0.77 & 0.80 & 0.68  & 0.23  \\
			$\text{var}(\hat{\beta}_{j(i,t)}^{\text{ls}})$ & 0.15 & 0.18 & 0.21  & .088  &  0.72 \\
			$\text{cor}(\hat{\alpha}_i^{\text{ls}},\hat{\beta}_{j(i,t)}^{\text{ls}})$  & -0.28  & -0.23 & -0.13  & .023  &  -0.20 \\
			\cmidrule(lr){1-6}
			Connectivity \\ 
			$1E5 \cdot \rho_{\text{min}}$ & 17 & 52 & 35 &  1980 & 53 \\
			\bottomrule
		\end{tabular}
	\end{center}
	{\footnotesize {\em Notes:} 
		The moments and connectivity values displayed for the designs are median values across the simulation rounds. 
		The sampling variance $\sigma^2$ is set as $0.12$ for all four designs, consistent with the value estimated from the empirical application. The variable $\rho_{\text{min}}$ denotes the smallest non-zero eigenvalue of the \emph{normalized} Laplacian projected teacher graph; we choose to match these instead of the unnormalized ones because the former always lie within $[0,2]$ and as a result provide a more natural scale to match.  
	}\setlength{\baselineskip}{4mm}
\end{table}

\subsection{Results}
\label{subsec:simul.results}

\begin{figure}[t!]
	\caption{RMSE of Estimators}
	\label{fig:mc.rmse}
	\begin{center}
		\begin{tabular}{cc}
			Design 1 (Calibrated DGP) & Design 2 (Strong Sorting) \\ 
			\includegraphics[width=.35\textwidth]{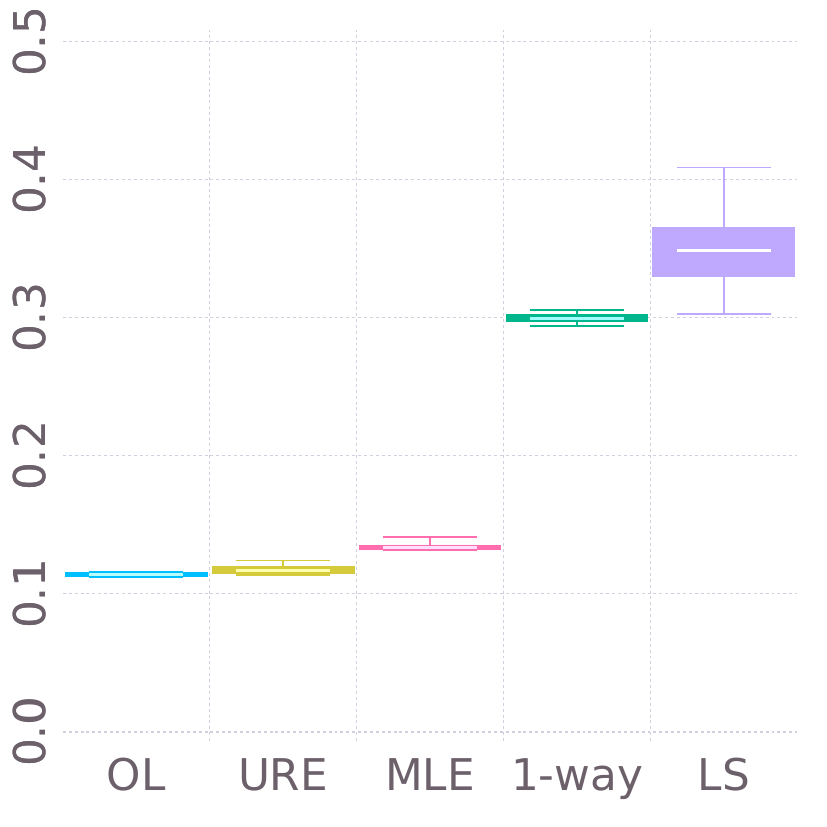} &
			\includegraphics[width=.35\textwidth]{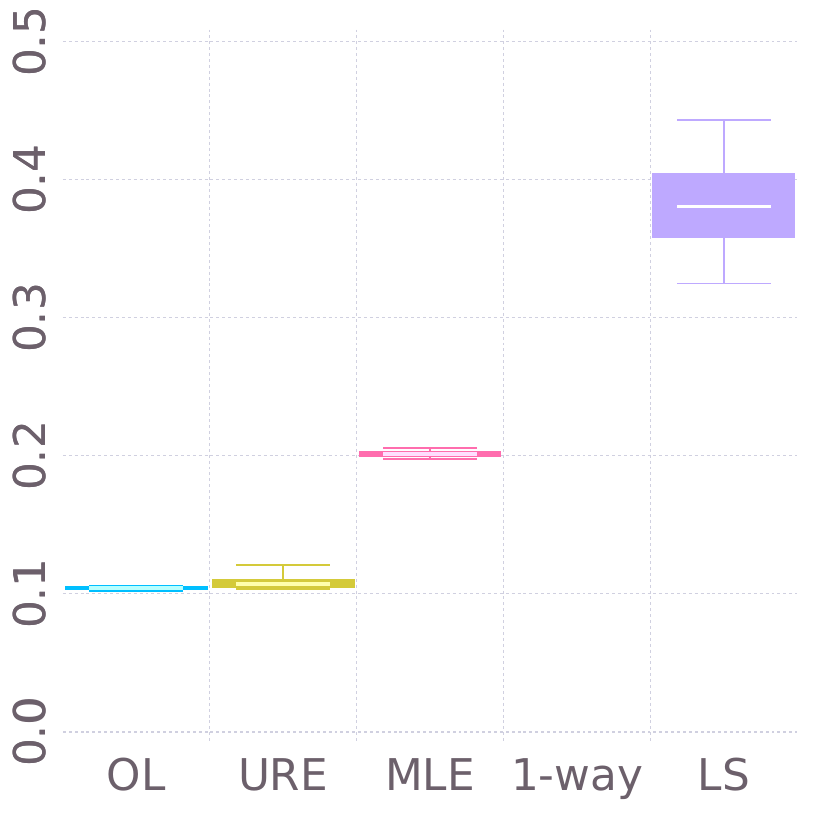} \\
			Design 3 (Strong Mobility) & Design 4 (Homogeneous Students)\\ 
			\includegraphics[width=.35\textwidth]{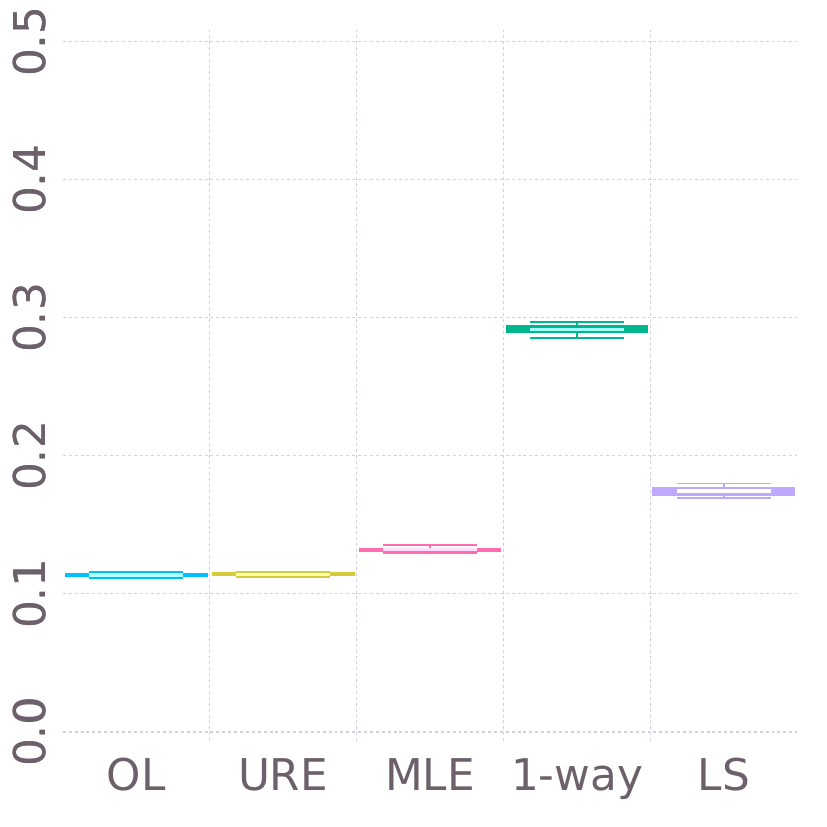} &
			\includegraphics[width=.35\textwidth]{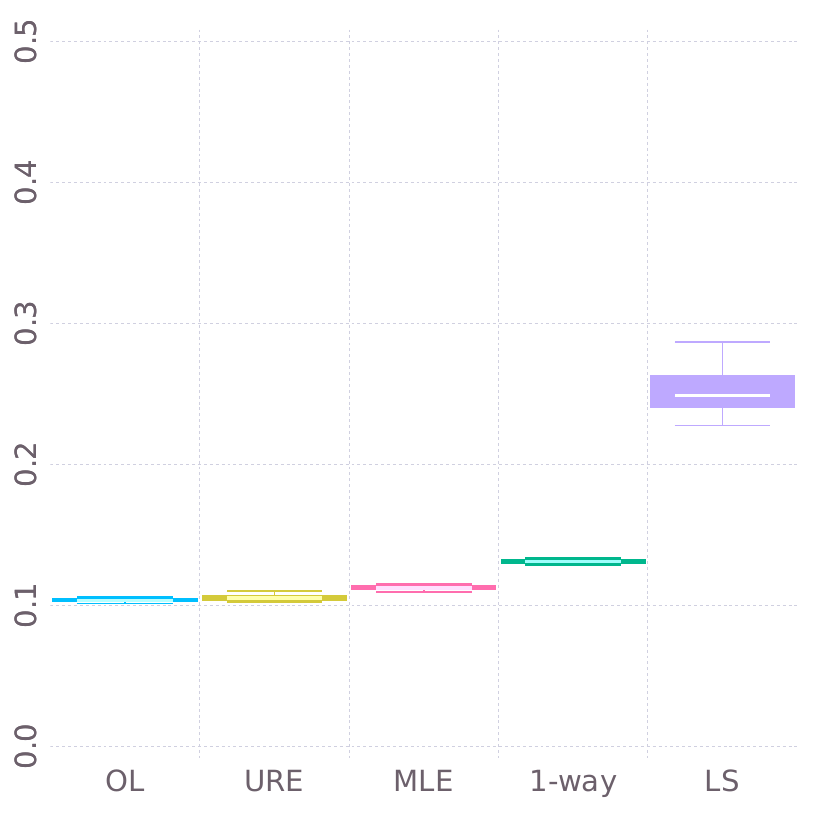} \\
		\end{tabular}   
	\end{center}
	{\footnotesize {\em Notes:} The box plots characterize the distribution of RMSEs across Monte Carlo repetitions.}\setlength{\baselineskip}{4mm}
\end{figure}

\begin{table}
	\caption{Hyperparameter Selection}
	\label{tab:mc.hyperpara}
	\begin{center}
		\begin{tabular}{lcccccccccc} 
			\toprule
			& \multicolumn{3}{c}{ OL} & 
			\multicolumn{3}{c}{ EB-URE} &
			\multicolumn{3}{c}{ EB-MLE} & 
			\multicolumn{1}{c}{ EB-1Way} \\
			\cmidrule(lr){2-4} \cmidrule(lr){5-7}
			\cmidrule(lr){8-10} \cmidrule(lr){11-11} 
			& $\lambda_a$ & $\lambda_b$ & $\phi$ 
			& $\lambda_a$ & $\lambda_b$ & $\phi$ 
			& $\lambda_a$ & $\lambda_b$ & $\phi$ 
			& $\lambda_b$ \\ 
			\midrule
			Design\\
			\;1&  0.059 & 2.27 & 0.47 & 0.065 & 2.17 & 0.42 & 0.21 & 3.3 & 0.89 & 2.8  \\ 
			\;2&  0.056 & 3.3 & 0.79 & 0.069 & 3.2 & 0.53 & 0.40 & 3.3 & 0.9 & 0.79 \\ 
			\;3&  0.056 & 2.23 & 0.50 & 0.064 & 2.3 & 0.45 & 0.2 & 3.2 & 0.88 & 2.9 \\ 
			\;4&  0.63 & 0.21 & 0.9 & 0.73 & 0.23 & 0.89 & 2.5 & 0.21 & 0.39 &  0.24 \\ 
			\bottomrule
		\end{tabular}
	\end{center}
	{\footnotesize {\em Notes:} The values displayed are the medians across simulation rounds. The value of $\mu$ selected by the various methodologies is consistently $\approx0$ and hence excluded from this table.}\setlength{\baselineskip}{4mm}
\end{table}

Figure~\ref{fig:mc.rmse} shows the distribution of RMSEs of estimators across Monte Carlo repetitions under different designs. EB-URE tracks the RMSE of the infeasible benchmark OL very closely across all designs, regardless of connectivity strength, matching intensity, or heterogeneity magnitudes. In particular, this is true for Design 1 which is calibrated towards the empirical application and provides justification for implementing the EB-URE in the application.  Table~\ref{tab:mc.hyperpara} lists the selected hyperparameters and shows that EB-URE and OL are based on similar hyperparameter values. This is a reflection of the URE objective function providing a good approximation to the infeasible loss function. The EB-MLE, on the other hand, which relies on a correct specification of the prior \eqref{eq:sorting.prior}, is suboptimal in most designs. The suboptimality is particularly pronounced under strong matching (Design 2). This suggests a higher degree of assortative matching renders \eqref{eq:sorting.prior}, which motivates our class of estimators $\bs{\beta}(\bs{\lambda})$, less representative of the true distribution of $\bs{\theta}| B$. The proposed EB-URE instead targets a purely frequentist criterion, and thus continues to perform well relative to the benchmark OL in the presence of a misspecified prior.  

We also see in Design 1 that the LS estimator performs poorly relative to OL. This is unsurprising, because the weak connectivity of the projected teacher graph results in a large variance of the LS estimators. It is therefore desirable to induce some level of shrinkage to optimize the bias-variance trade-off. However, Figure~\ref{fig:mc.rmse} also suggests that the method used in optimizing this trade-off is crucial. The feasible EB-URE attains at least 50\% RMSE reduction relative to LS across most designs. On the other hand, the performance of EB-1way is highly dependent on the underlying DGP. For instance, with even a weak positive student-teacher matching as in Design 1, the EB-1way estimator only performs marginally better than the LS. If we have a strong level of matching as in Design 2, then its performance may be even worse than the LS. In fact, its RMSE there is uniformly greater than $0.5$. By assuming $\lambda_a=\infty$, EB-1way is forced by the student-teacher matching to wrongly attribute variability in $\bs{\alpha}$ as variability in $\bs{\beta}$. It thus selects a smaller $\lambda_b$ than is optimal; see Table~\ref{tab:mc.hyperpara}. Only if the student heterogeneity is small relative to the error variance $\sigma^2$ as in Design 4, the EB-1way performs nearly optimally.


We now examine the effects of weak connectivity on the three types of  estimates at a more granular level. Figure~\ref{fig:mc.sorting} shows scatter plots of the teacher value-added $\beta_j$ ($x$-axis) and the average matched student ability for each teacher ($y$-axis), defined as 
\begin{equation}
	\mu_j:= \frac{1}{d_{b,j}}\sum_{(i,t):j(i,t)=j} \alpha_i,
	\label{eq:def.muj} 
\end{equation}
where $d_{b,j}$ is the total number of matches for teacher $j$. We provide scatter plots of the true pairs $(\beta_j,\mu_j)$, $j=1,\ldots,c$ and their estimates (LS, EB-URE, and EB-MLE) for Designs~1 and~2. The true values depicted in the top-left panels for both designs have positive slopes, reflecting the positive assortative matching patterns underlying the DGPs. The top-right panels show the same scatter plot for the LS estimates. The weak connectivity results in a bias large enough that the implied correlation between $(\beta_j,\mu_j)$ is actually highly negative for Design~1.
The bottom-left panels show the EB-URE estimates of $(\beta_j,\mu_j)$. Even though the hyperparameter selection does not target the estimation loss of $\bs{\alpha}$, the estimates indicate a strong correlation between $\beta_j$ and $\mu_j$, meaning that the EB-URE estimator captures a significant portion of matching. 

\begin{figure}[t!]
	\caption{Matching Patterns}
	\label{fig:mc.sorting}	
	\begin{center}
		\begin{tabular}{ccccc}
			& \multicolumn{2}{c}{ Design 1 (Calibrated DGP) } & \multicolumn{2}{c}{ Design 2 (Strong Matching) } \\
			\\
			& True Values $(\beta_j,\mu_j,)$ & 
			LS $(\hat{\beta}_j,\hat{\mu}_j)$ &
			True Values $(\beta_j,\mu_j)$ & 
			LS $(\hat{\beta}_j,\hat{\mu}_j)$ \\
			\rotatebox{90}{\footnotesize \hspace*{1cm} Student} &
			\includegraphics[width=.2\textwidth]{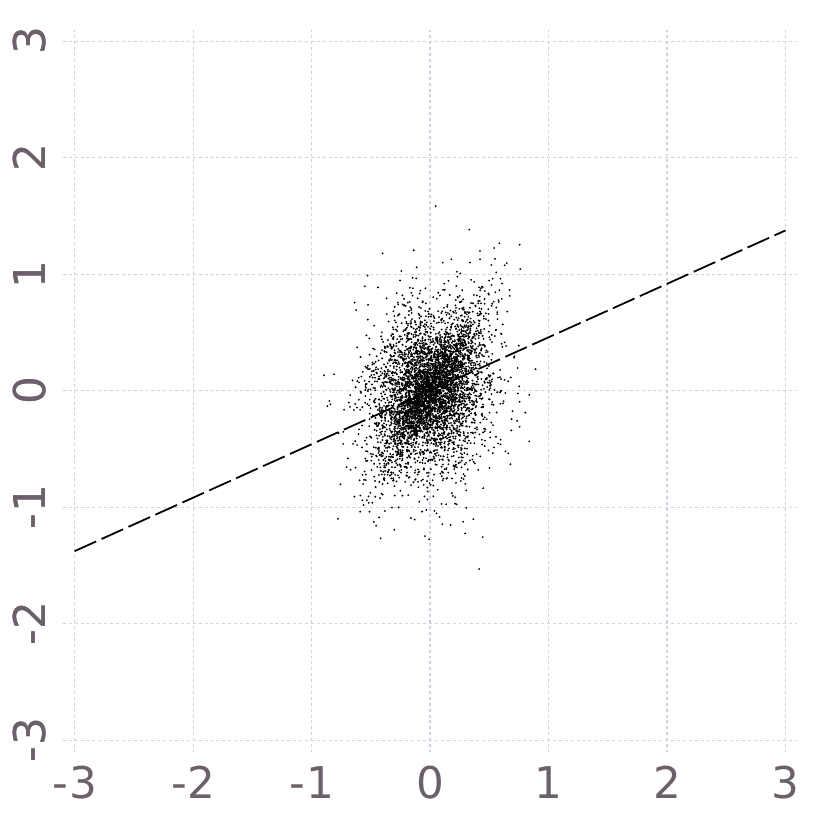} &
			\includegraphics[width=.2\textwidth]{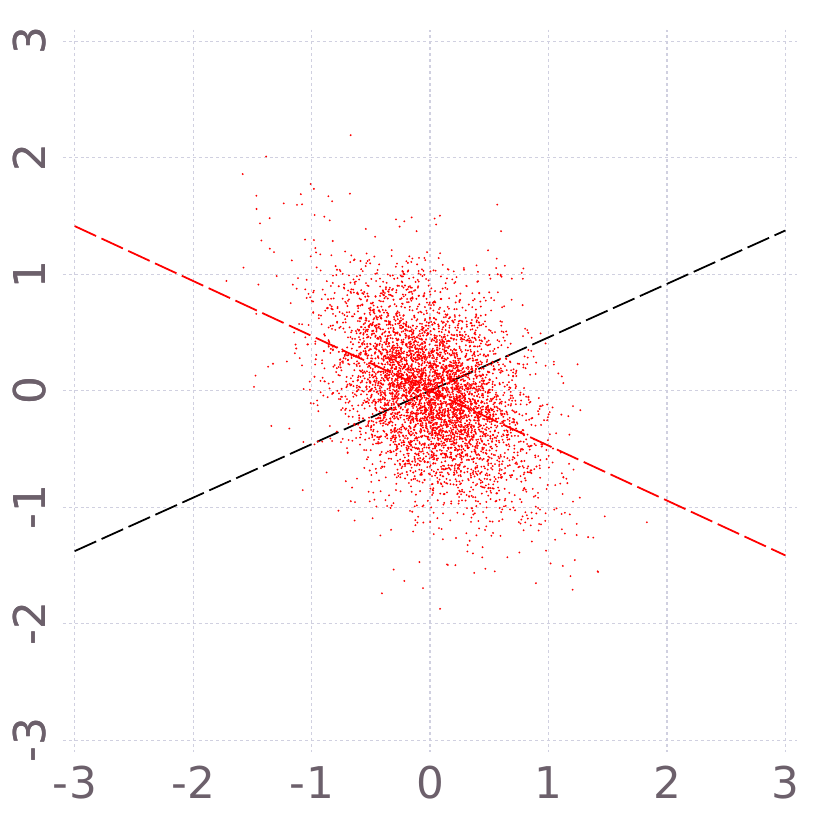} &
			\includegraphics[width=.2\textwidth]{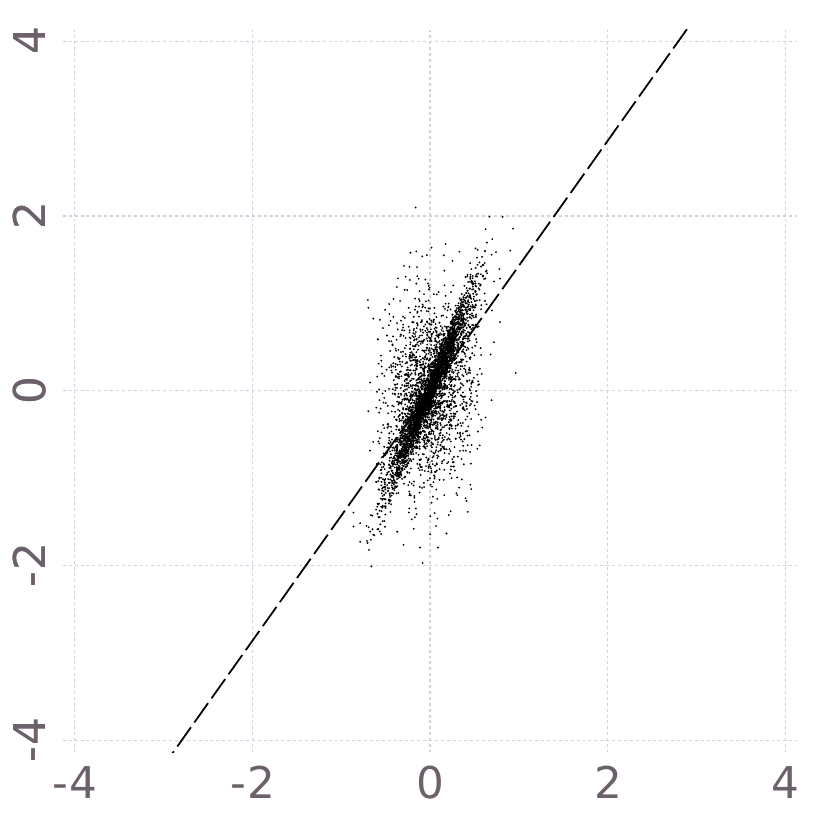} &
			\includegraphics[width=.2\textwidth]{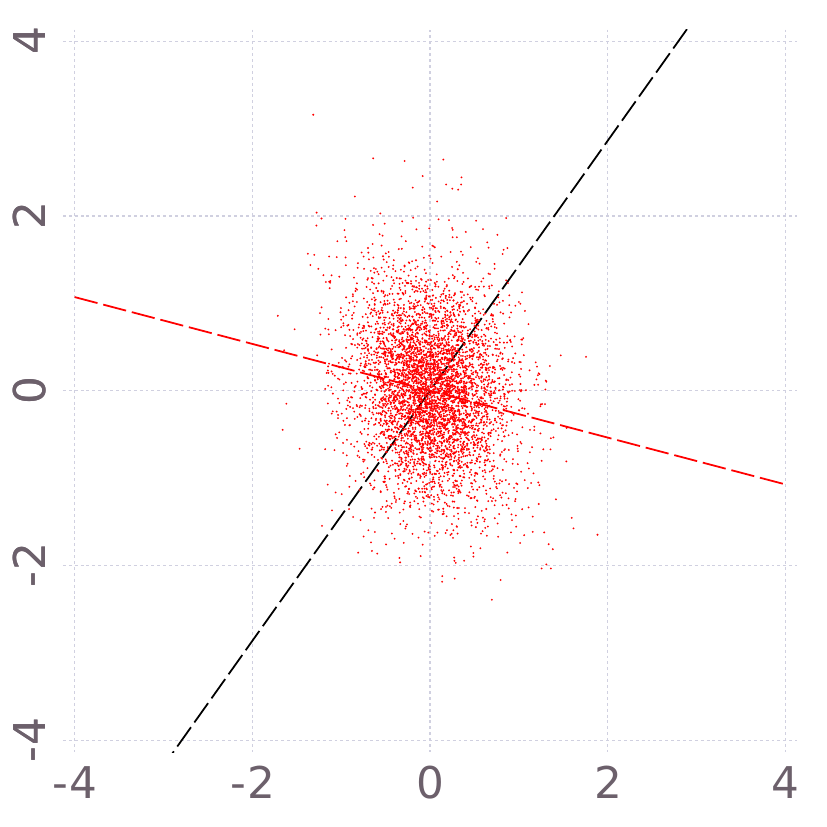}\\
			& EB-URE $(\hat{\beta}_j,\hat{\mu}_j)$ & 
			EB-MLE $(\hat{\beta}_j,\hat{\mu}_j)$ &
			EB-URE $(\hat{\beta}_j,\hat{\mu}_j)$ & 
			EB-MLE $(\hat{\beta}_j,\hat{\mu}_j)$ \\ 
			\rotatebox{90}{\footnotesize \hspace*{1cm} Student} &
			\includegraphics[width=.2\textwidth]{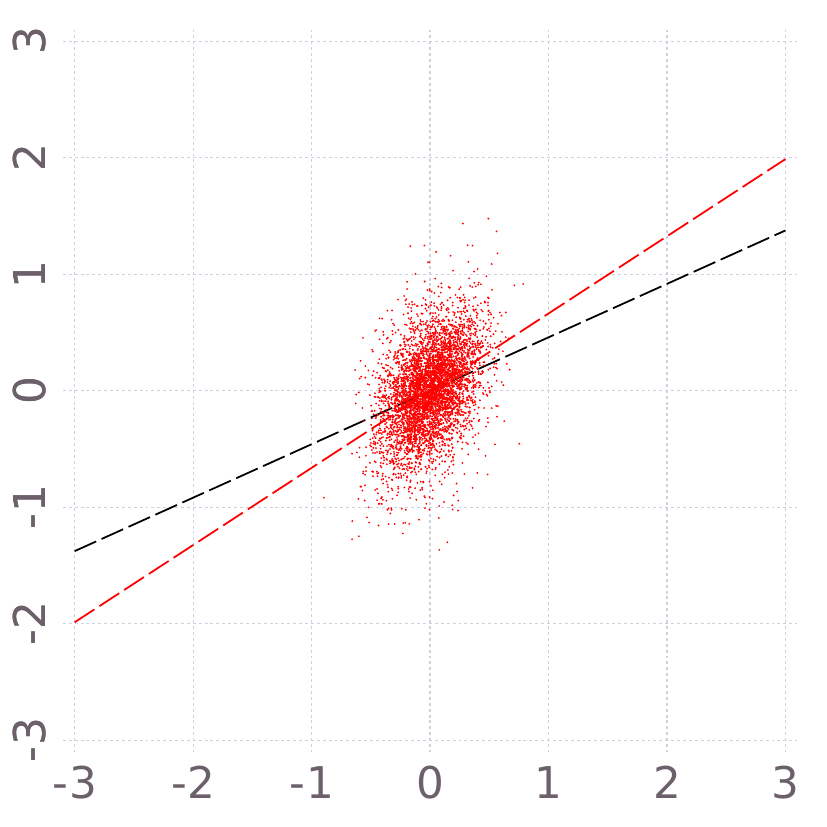} &
			\includegraphics[width=.2\textwidth]{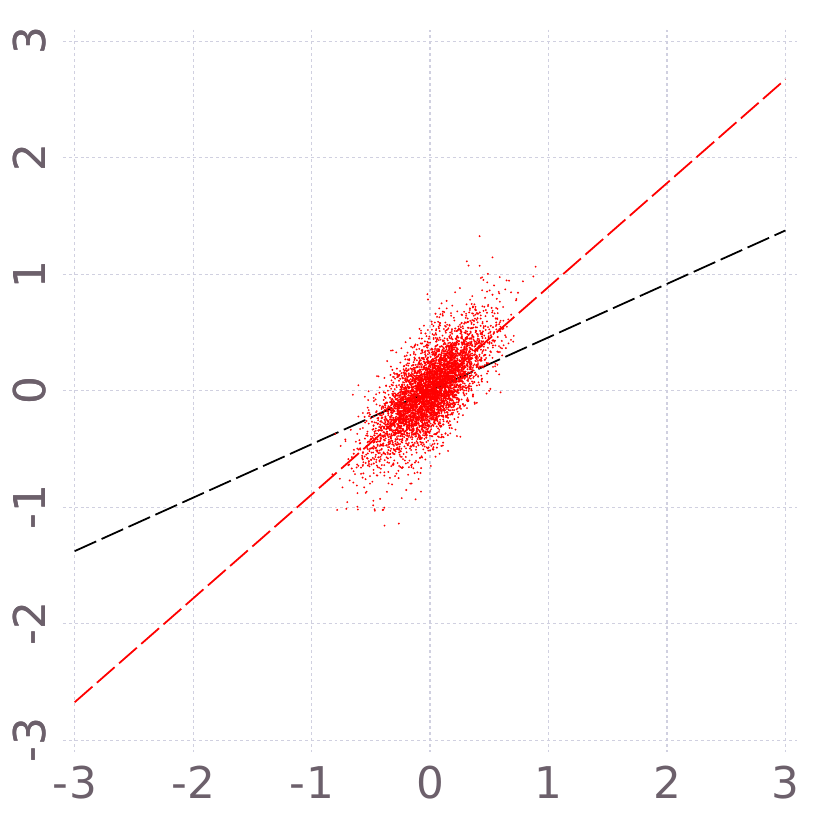} &
			\includegraphics[width=.2\textwidth]{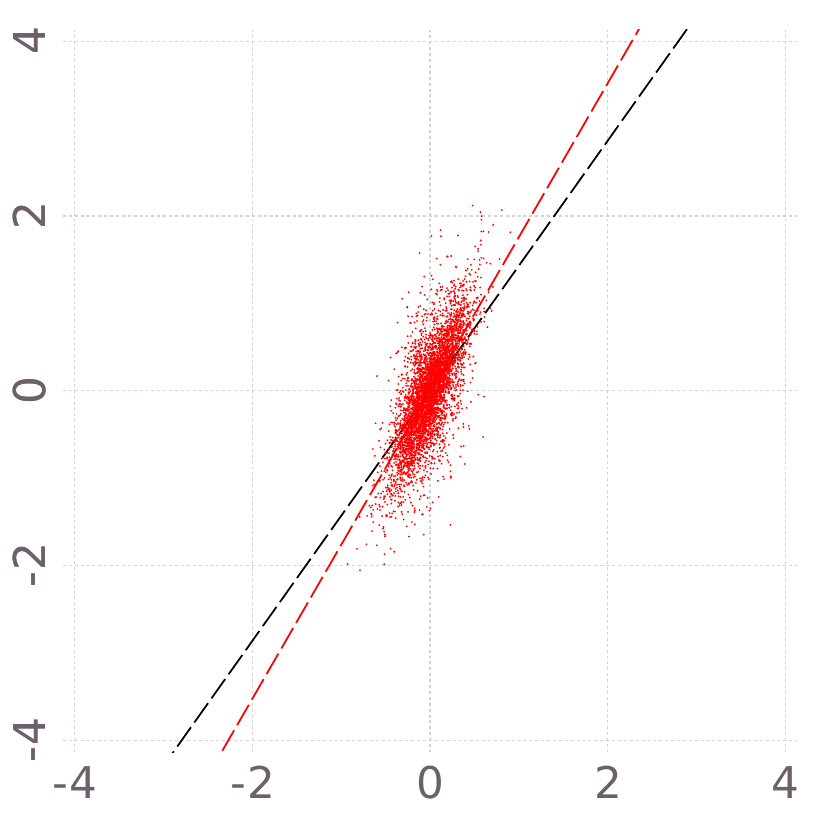} &
			\includegraphics[width=.2\textwidth]{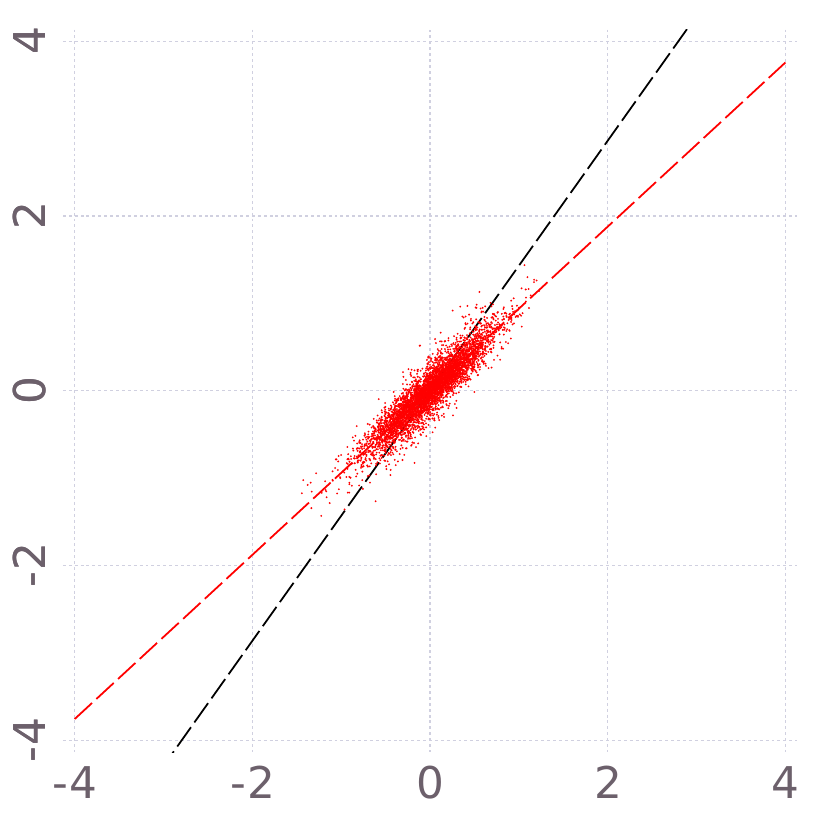} \\
			& {\footnotesize Teacher} & {\footnotesize Teacher} & {\footnotesize Teacher} & {\footnotesize Teacher}
		\end{tabular} 
	\end{center}
	{\footnotesize {\em Notes:} $x$-axis is $\beta_j$ and $y$-axis is $\mu_j$ (true values or estimates). The scatter plots are based on 5,000 random draws from all Monte Carlo repetitions. 	Black lines are the least squares regression lines from the true $(\beta_j,\mu_j)$ by pooling all simulation rounds, while red lines are least squares regression lines using estimated values (LS, EB-URE, or EB-MLE).}
\end{figure}

Table~\ref{tab:mc.mom} provides further evidence through empirical moments computed from the true effects and their  estimates. We report the variances of $\alpha_i$ and $\beta_j$, respectively, and the correlations between $\mu_j$ and $\beta_j$. For instance, under Design 1 the empirical correlation between the true effects is 0.15, whereas the correlation between the EB-URE estimates is 0.19. For Design 2 these values change to 0.46 and 0.50, respectively. The empirical variances of $\alpha_i$ and $\beta_j$ computed from the true and their estimates are also very close. The empirical moments computed from the EB-MLE estimates exhibit slightly larger discrepancies. For instance, for Designs 2 and 3 the correlations among the true effects are 0.46 and 0.14, whereas the correlation among the EB-MLE estimates are 0.56 and 0.28, respectively. We conclude from the simulations that the EB-URE estimates not just generate low RMSEs, but they also are able to reproduce key empirical moments of the underlying true effects.

\begin{table}[t!]
	\caption{Empirical Moments}
	\label{tab:mc.mom}
	\begin{center}
		\begin{tabular}{lccccccccc} 
			\toprule
			& \multicolumn{3}{c}{ True} & 
			\multicolumn{3}{c}{ EB-URE} &
			\multicolumn{3}{c}{ EB-MLE} \\
			\cmidrule(lr){2-4} \cmidrule(lr){5-7}
			\cmidrule(lr){8-10}
			& $\text{v}_\alpha$ & $\text{v}_\beta$ & $\rho_{\alpha,\beta}$ 
			& $\text{v}_\alpha$ & $\text{v}_\beta$ & $\rho_{\alpha,\beta}$   
			& $\text{v}_\alpha$ & $\text{v}_\beta$ & $\rho_{\alpha,\beta}$ 
			\\ 
			\midrule
			Design\\
			\;1&  0.6 & 0.06 & 0.15 & 0.62 & .049 & 0.19 & 0.51 & .057 & 0.29 \\ 
			\;2&  0.6 & 0.06 & 0.46 & 0.63 & .050 & 0.50 & 0.38 & 0.14 & 0.56 \\ 
			\;3&  0.6 & 0.06 & 0.14 & 0.62 & .047 & 0.19 & 0.52 & .055 & 0.28 \\ 
			\;4&  0.06 & 0.6 & 0.15 & .065 & 0.57 & 0.21 & .021 & 0.63 & 0.11 \\ 
			\bottomrule
		\end{tabular}
	\end{center}
	{\footnotesize {\em Notes:} $(\text{v}_\alpha,\text{v}_\beta,\rho_{\alpha,\beta})$ represent $\text{var}(\hat{\alpha}_i)$, $\text{var}(\hat{\beta}_{j(i,t)})$ and $\text{cor}(\hat{\alpha}_i,\hat{\beta}_{j(i,t)})$ respectively. The columns for ``True'' use the true values of $(\alpha_j,\beta_{j(i,t)})$ in the computation and are thus the transpose of the second row block of Table~\ref{tab:mc.designs}.}\setlength{\baselineskip}{4mm}
\end{table}


\section{Empirical Application}
\label{sec:appl}

We utilize a matched student-teacher dataset from the North Carolina Education Research Data Center with observations on students from grades three to five for the years 2017 and 2018 period to estimate teacher value-added. For identification purposes, we restrict the dataset to the largest connected component of the student-teacher graph. We remove students that appear only once in the data set, since they are not relevant for the identification of the teacher value-added parameters, students that repeat grades, and students with special accommodations. This leaves us with $r=41,243$ students, $c=5,332$ teachers, and $s=258$ schools. Because we have test scores for two consecutive years, this leads to 82,486 observations in total.
	
Starting point of the empirical analysis is model \eqref{eq:underlyingmodel}. We take the outcome variable $y_{it}$ to be a math test score, standardized within each (year, grade) cell in accordance with the teacher value-added literature to have a mean of zero and variance of one. We only include a single regressor $\bs{x}_{it}$, namely the student's lagged test score $y_{it-1}$.\footnote{We decided to keep the model simple. As a robustness exercise we repeat the analysis by also including cubic polynomials in age and class size; see  Section~\ref{sec:details.empirics.1way} of the Supplementary Online Appendix. The resulting teacher value-added estimates have a correlation of $0.989$ with the ones reported here.}  We apply the transformation in (\ref{eq:underlyingmodelcompact.nogamma}) by subtracting $X\tilde{\bs{\gamma}}$ from the outcome variable. Because $y_{it-1}$ is only sequentially and not strictly exogenous, we use \cite{Verdier2020}'s estimator to obtain $\tilde{\bs{\gamma}}=0.067$. The estimator intuitively extends the recursive orthogonal transformation in one-way to two-way effects models for consistent estimation; see Section~\ref{sec:details.empirics.verdier} of the  Supplementary Online Appendix for more details. The subsequent analysis also conditions on the error variance estimate $\hat{\sigma}^2 = 0.12$. In Section~\ref{subsec:appl.connect} we report measures of connectivity for our data set and discuss some features of the LS estimates. The EB-URE estimates are presented in  Section~\ref{subsec:appl.ure} and in  Section~\ref{subsec:appl.rankings} we compare teacher rankings based on value-added estimates from different estimators.

\subsection{Connectivity and LS Estimates}
\label{subsec:appl.connect}

We previously emphasized that the performance of estimators for the two-way effects model depends on the connectivity of the student-teacher graph.  Table~\ref{tab:appl.eig} provides information about the distribution of eigenvalues of the Laplacian $L_{2,\perp}$ of the projected teacher graph. The smallest eigenvalue is 0.013, the first percentile is 0.069, and the fifth percentile is 0.23. This distribution appears to be broadly in line with our Assumption~\ref{ass:graph.connect.2.b}	 Section~\ref{subsec:theory.covergence} that only a finite number of the eigenvalues can be small, but not too small. When scaled by $c^{1/2}$ as in Assumption~\ref{ass:graph.connect.1.b} the smallest non-zero eigenvalue is 0.96 and the first percentile is 5.1. We interpret these numbers as evidence of weak connectivity: many subsets of teachers share few common students, in particular teachers from different schools. 

\begin{table}
	\caption{Minimum Non-zero Eigenvalue / Empirical Quantiles}
	\label{tab:appl.eig}
	\begin{center}
		\begin{tabular}{lccccccc} 
			\toprule
			& & \multicolumn{6}{c}{Empirical Quantiles} \\
			\cmidrule(lr){3-8} 
			& min & .005 & 0.01 & 0.05
			& 0.1 & 0.2 & 0.5 \\ 
			\midrule
			\;$\rho_\ell$& .013 & .049 & .069 & 0.23 & 0.44 & 1.0 & 6.0 \\ 
			\;$\rho_\ell * c^{1/2}$& 0.96 & 3.6 & 5.1 & 17 & 32 & 76 & 440 \\
			\bottomrule
		\end{tabular}
	\end{center}
	{\footnotesize {\em Notes:} $\rho_\ell$ represents the $\ell^{th}$ smallest eigenvalue of the Laplacian of the projected teacher graph. ``$\text{min}$'' represents the smallest non-zero eigenvalue, i.e., $\rho_2$. The second row is the first row multiplied by $c^{1/2}=73$.}\setlength{\baselineskip}{4mm}
\end{table}

\begin{figure}[t!]
	\caption{Scatter Plots of LS Estimates}
	\label{fig:appl.2}	
	\begin{center}
		\begin{tabular}{cc@{\hspace*{1.5cm}}cc}
			\rotatebox{90}{\hspace*{2cm} Teacher $\widehat{\beta}_j^{\text{ls}}$ } & \includegraphics[width=.35\textwidth]{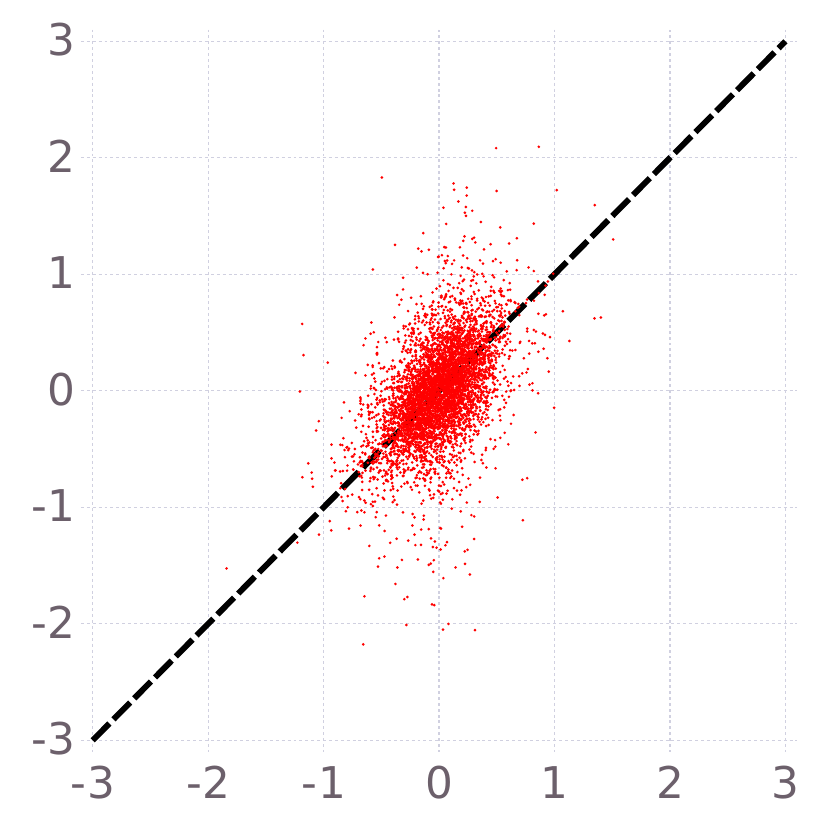} & \rotatebox{90}{\hspace*{2cm} Student $\widehat{\mu}_j^{\text{ls}}$ } &
			\includegraphics[width=.35\textwidth]{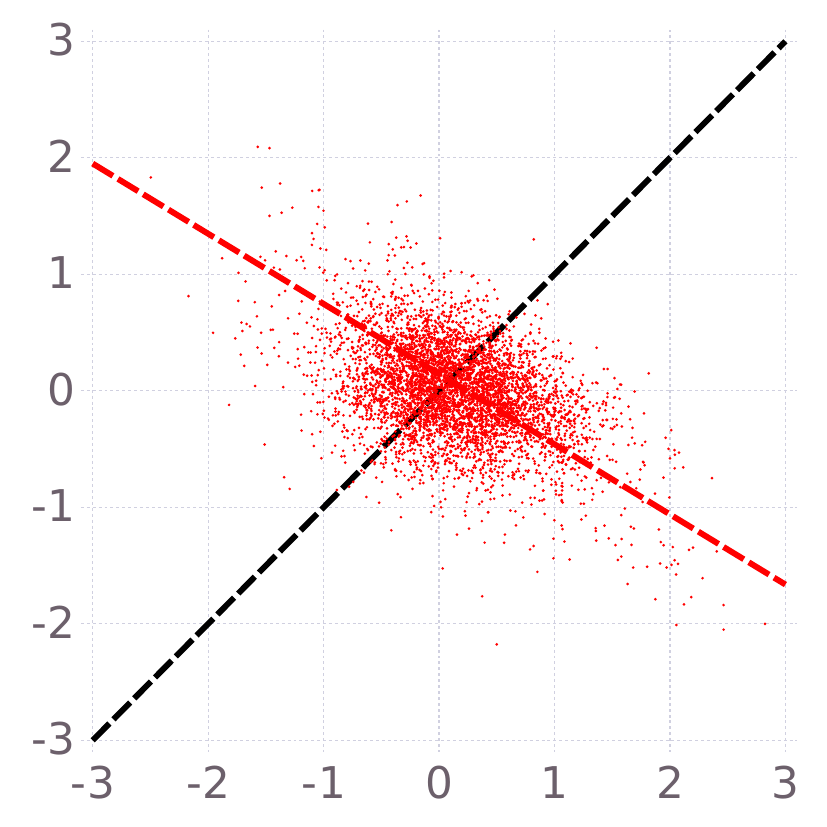} \\
			& Teacher $\widetilde{\beta}_j^{\text{ls}}$ & & Teacher $\widehat{\beta}_j^{\text{ls}}$
		\end{tabular} 
	\end{center}
	{\footnotesize {\em Notes:} 
		Red line has a regression coefficient of $-0.61$. Black lines are the 45-degree lines.}\setlength{\baselineskip}{4mm}
\end{figure}

Weak connectivity generates instability and imprecision of LS estimates. To illustrate the instability, we compare the LS estimator $\hat{\bs{\beta}}^{\text{ls}}$ to an estimator $\tilde{\bs{\beta}}^{\text{ls}}$ that is obtained by replacing $\alpha_i$ by an index function $\bs{\alpha}_x' \bs{x}_i$, where the regressors $\bs{x}_i$ are time-invariant demographic variables: gender, ethnicity, economically disadvantaged, English learner; see Section~\ref{sec:details.empirics.1way} of the  Supplementary Online Appendix for more details. The left panel of Figure~\ref{fig:appl.2} shows a scatter plot of the two estimators and documents that $\hat{\beta}_j^{\text{ls}}$ is much more variable than $\tilde{\beta}_j^{\text{ls}}$. The empirical variances of $\hat{\beta}_j^{\text{ls}}$ and of $\tilde{\beta}_j^{\text{ls}}$ are $0.15$ and $0.063$, respectively. The right panel shows a scatter plot of $\hat{\beta}_j^{\text{ls}}$ and the average ability of students taught by teacher $j$, previously denoted by  $\hat{\mu}_j^{\text{ls}}$ and defined in \eqref{eq:def.muj}. The slope of a linear regression line is -0.61. While this pattern seems to suggest negative assortative matching between students and teachers, it is likely an artifact of the graph's weak connectivity. In fact, in the Monte Carlo simulation we observed a similar pattern in the second and fourth top-row panels of Figure~\ref{fig:mc.sorting}, where it was a limited mobility bias.

\subsection{EB-URE Estimation}
\label{subsec:appl.ure}

\begin{figure}[t!]
	\caption{Scatter Plots of EB-URE Estimates}
	\label{fig:appl.3}	
	\begin{center}
		\begin{tabular}{cc@{\hspace*{1.5cm}}cc}
			\rotatebox{90}{\hspace*{2cm}Student $\widehat{\mu}_j^{\text{ure}}$} &\includegraphics[width=.35\textwidth]{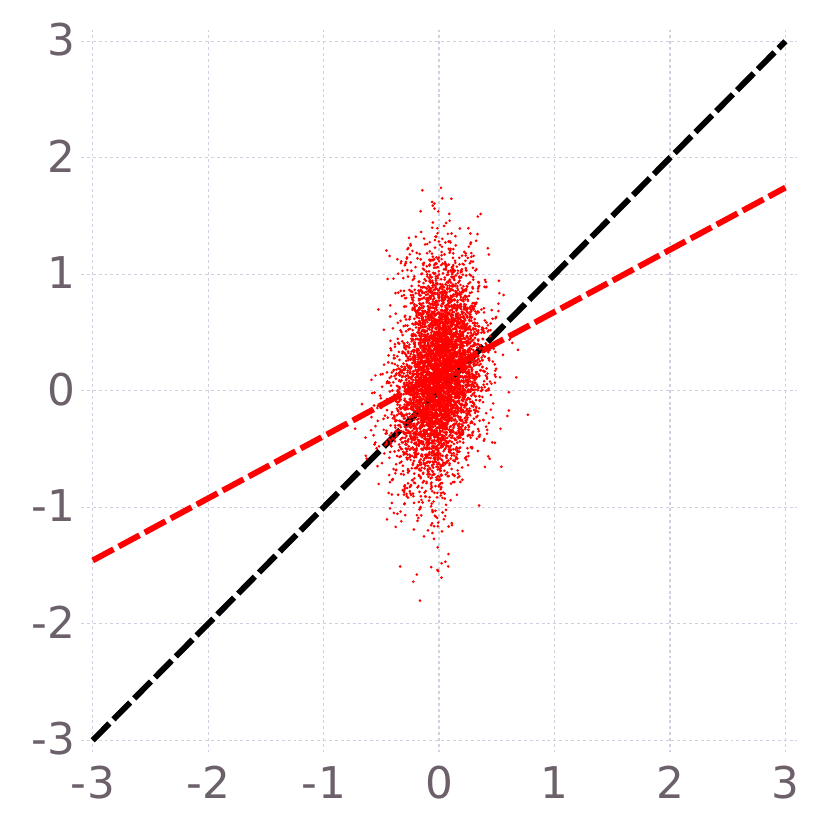} & 			\rotatebox{90}{\hspace*{0.4cm}School-average Student $\widehat{\mu}^{\text{ure}}$} &
			\includegraphics[width=.35\textwidth]{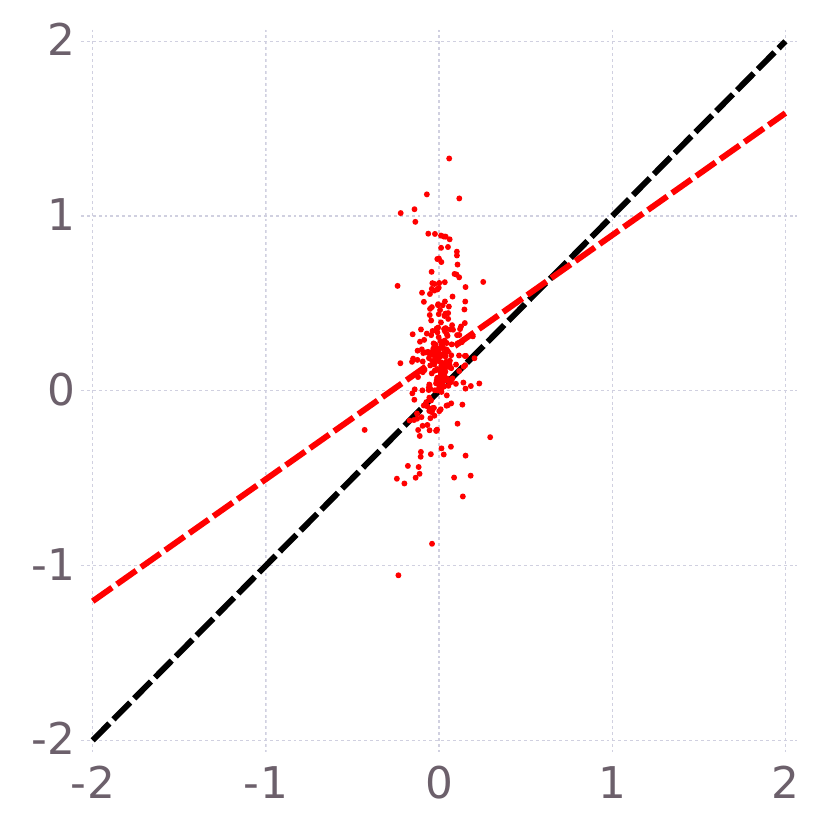} \\
			& Teacher $\widehat{\beta}_j^{\text{ure}}$ & & School-average Teacher $\widehat{\beta}^{\text{ure}}$
		\end{tabular} 
	\end{center}
	{\footnotesize {\em Notes:} 
			The red line from the left (right) panel has a regression coefficient of $0.533$ ($0.698$). Black lines are the 45-degree lines.}\setlength{\baselineskip}{4mm}
\end{figure}

To construct our proposed EB estimator, we adopt the covariate-based prior of Section~\ref{subsec:empbayes.altpriormean} where the student effects $\bs{\alpha}$ are shrunk to the location $Z_a \bs{\delta}_a$ and $Z_a$ contains the same student demographics as $\bs{x}_i$ in the definition of $\tilde{\bs{\beta}}^{\text{ls}}$ above. We shrink the teacher effects to $0$, i.e., $\bs{\delta}_b=0$ under the normalization. The parameters $(\lambda_\alpha,\lambda_\beta,\phi,\bs{\delta}_a)$ are jointly determined via URE-minimization using $W=W_b$. This produces our proposed EB-URE estimates that targets the $\bs{\beta}$ estimation loss. The estimated scale hyperparameters are $(\lambda_a^{\text{ure}},\lambda_b^{\text{ure}},\phi^{\text{ure}})=(0.075,1.67,0.7)$. The estimate of $\bs{\delta}_a$ is reported in Section~\ref{sec:details.empirics.ure} of the Supplementary Online Appendix.

The left panel of Figure~\ref{fig:appl.3} plots $\hat{\beta}_j^{\text{ure}}$ against $\hat{\mu}^{\text{ure}}_j$. The regression coefficient is now $0.533$, which implies positive student-teacher matching as opposed to the negative matching implied by the LS estimates. The magnitude of matching patterns is obscured by the difference in the scale of variances in the $\alpha$ and $\beta$ dimension. Thus, we also provide the empirical correlation of $\hat{\alpha}_i^{\text{ure}}$ and $\hat{\beta}_{j(i,t)}^{\text{ure}}$ in Table~\ref{tab:appl.mom}, which is computed to be $0.12$. In combination with the right panel of Figure~\ref{fig:appl.2} the empirical results are consistent with the Monte Carlo results in Section~\ref{sec:simul}, demonstrating that the EB-URE is able to accurately pick up on matching patterns otherwise undetected by the LS estimates due to weak connectivity. We also plot the school averages of $\hat{\beta}_j^{\text{ure}}$ against $\hat{\alpha}_j^{\text{ure}}$ in the right panel of Figure~\ref{fig:appl.3} with a regression coefficient of $0.698$ and empirical correlation of $0.188$, suggesting that part of the observed positive student-teacher matching pattern arises from assortative matching between schools.

\begin{table}
	\caption{Empirical Moments}
	\label{tab:appl.mom}
	\begin{center}
		\begin{tabular}{lccc} 
			\toprule
			& $\text{var}(\hat{\alpha}_i)$ & $\text{var}(\hat{\beta}_{j(i,t)})$ & $\text{cor}(\hat{\alpha}_i,\hat{\beta}_{j(i,t)})$ \\
			\midrule 
			\;LS & 0.76 & 0.15 & -0.28 \\ 
			\;EB-URE & 0.62 & .034 & 0.12 \\
			\bottomrule
		\end{tabular}
	\end{center}
\end{table}

\begin{figure}[t!]
	\caption{Scatter Plots of LS, EB-URE, and EB-1way Estimates}
	\label{fig:appl.4}	
	\begin{center}
		\begin{tabular}{cc@{\hspace*{1.5cm}}cc}
			\rotatebox{90}{\hspace*{2cm} Teacher $\widehat{\beta}_j^{\text{ure}}$} &
			\includegraphics[width=.35\textwidth]{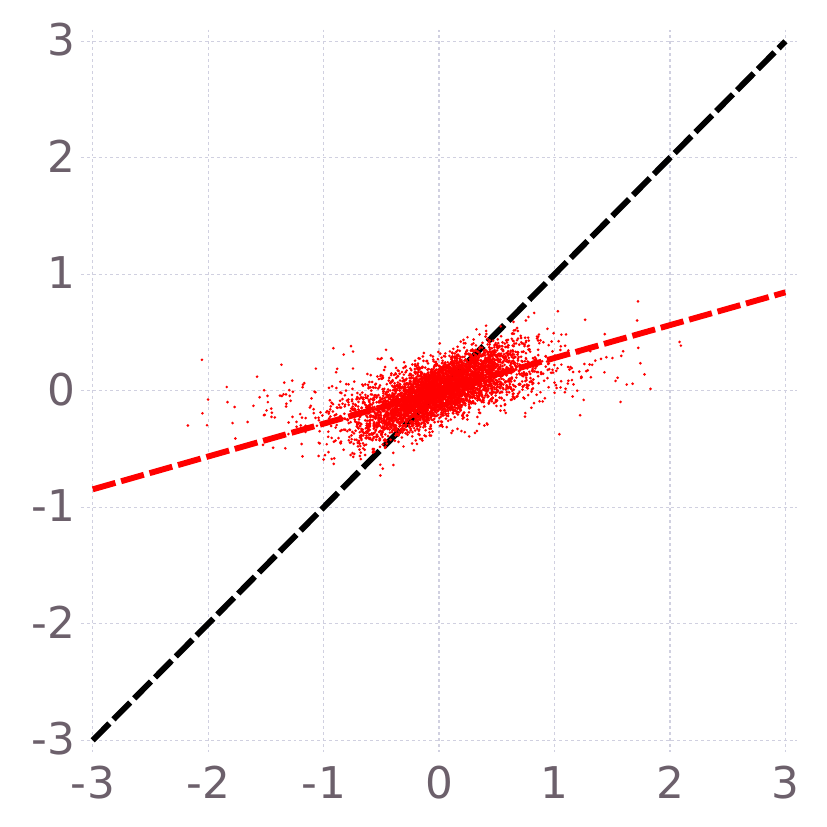} & \rotatebox{90}{\hspace*{2cm} Teacher $\widehat{\beta}_j^{\text{ure}}$} &
			\includegraphics[width=.35\textwidth]{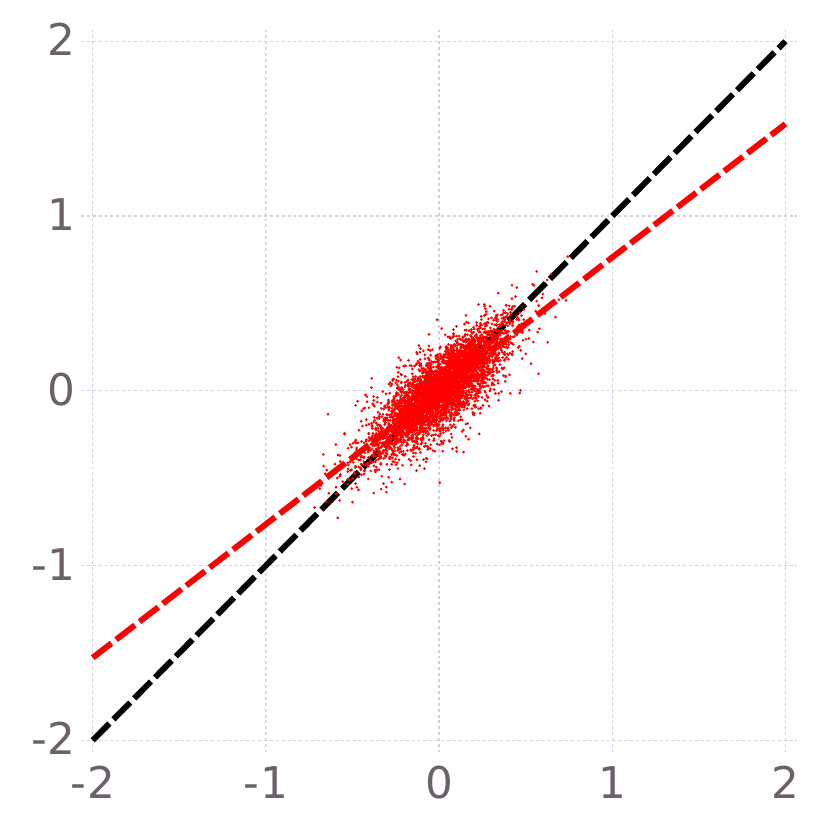} \\
			& Teacher $\widehat{\beta}_j^{\text{ls}}$ &  & Teacher $\widehat{\beta}_j^{\text{1-way}}$ 
		\end{tabular} 
	\end{center}
	{\footnotesize {\em Notes:} 
		The red line from the left (right) panel has a regression coefficient of $0.282$ ($0.764$). Black lines are the 45-degree lines.}\setlength{\baselineskip}{4mm}
\end{figure}

Figure~\ref{fig:appl.4} plots the EB-URE estimates against LS estimates and against $\hat{\bs{\beta}}^{\text{1-way}}$, the standard shrinkage estimator from a one-way model that  assumes $\alpha_i = \bs{\alpha}_x' \bs{x}_i$. The latter is commonly used in the literature and a simplified version was previously defined in (\ref{eq:betahat1way}). Section~\ref{sec:details.empirics.1way} of the Supplementary Online Appendix provides a detailed explanation of its construction for this application. We refer to it as EB-1way consistent with the naming in the simulations of Section~\ref{sec:simul}. While the EB-URE and LS estimators are positively correlated (left panel of the figure), by construction the EB-URE estimator is more stable and has a five-times smaller variance. Numerical values for the sample variances are reported in Table~\ref{tab:appl.mom}. 

The right panel of Figure~\ref{fig:appl.4} shows that the EB-1way estimates are less volatile than the LS estimates, but still more volatile than the two-way EB-URE estimates. To explore the extent to which the demographic regressors $\bs{x}_i$ explain the student heterogeneity, we project $\hat{\alpha}_i^{\text{ls}}$ onto $\bs{x}_i$, where we retrieve an $R^2$ of $0.21$. This is relatively small and indicates that most of the variation in student heterogeneity is not captured by the $\bs{x}_i$ regressors. In conjunction with the usual concern regarding assortative matching, this  provides an additional justification for adopting the two-way effects model.

\subsection{Teacher Rankings}
\label{subsec:appl.rankings}

\begin{table}[t!]
	\caption{Ranking of Teachers by Quintile}
	\label{tab:appl.bin}
	\begin{center}
		\begin{tabular}{cc}
			\begin{minipage}{0.45\textwidth}
				\centering
				\begin{tabular}{lccccc}
					\toprule
					& \multicolumn{5}{c}{EB-URE} \\
					LS & 1st & 2nd & 3rd & 4th & 5th\\ 
					\midrule
					1st & 631 & 253 & 105 & 54 & 23 \\
					2nd & 293 & 351 & 246 & 132 & 44 \\
					3rd & 89 & 262 & 331 & 270 & 115 \\
					4th & 37 & 137 & 252 & 343 & 297 \\
					5th & 16 & 63 & 133 & 267 & 588 \\
					\bottomrule
				\end{tabular}
				\vspace{1mm}
			\end{minipage}
			&
			\begin{minipage}{0.45\textwidth}
				\centering
				\begin{tabular}{lccccc}
					\toprule
					& \multicolumn{5}{c}{EB-URE} \\
					EB-1way & 1st & 2nd & 3rd & 4th & 5th\\ 
					\midrule
					1st & 693 & 273 & 76 & 21 & 3 \\
					2nd & 260 & 392 & 286 & 103 & 25 \\
					3rd & 82 & 261 & 385 & 269 & 70 \\
					4th & 27 & 114 & 233 & 429 & 263 \\
					5th & 4 & 26 & 87 & 244 & 706 \\
					\bottomrule
				\end{tabular}
				\vspace{1mm}
			\end{minipage}
		\end{tabular}
	\end{center}
\end{table}

A frequent use of value-added estimates is to rank teachers for remuneration purposes. We conclude the empirical analysis with a comparison of the rankings based on $\hat{\bs{\beta}}^{\text{ls}}$, $\hat{\bs{\beta}}^{\text{ure}}$ and $\hat{\bs{\beta}}^{\text{1-way}}$. For each set of teacher value-added estimates, teachers are split into quintiles. We then take two sets of estimates and compute the number of teachers for each of the 25 quintile pairs. The results are reported in Table~\ref{tab:appl.bin}. For instance, the number of teachers who are in the first quntile of the $\hat{\bs{\beta}}^{\text{ls}}$ distribution and also in the second quintile of $\hat{\bs{\beta}}^{\text{ure}}$ distribution is 253. Note that the number of teachers per quintile is approximately 1,066. The diagonals thus represent the number of teachers for which two estimators agree in terms of quintile rank, the principal off-diagonals represent the numbers of disagreements by one rank, and so on. 

There is substantial variation in how $\hat{\bs{\beta}}^{\text{ls}}$ and $\hat{\bs{\beta}}^{\text{ure}}$ rank teachers. In particular, there are 39 teachers (the bottom left and top right corners of the left panel) who are simultaneously ranked at opposite quintiles by $\hat{\bs{\beta}}^{\text{ure}}$ and $\hat{\bs{\beta}}^{\text{ls}}$. While the rankings within the first and fifth quintiles are relatively more similar across $\hat{\bs{\beta}}^{\text{1-way}}$ and $\hat{\bs{\beta}}^{\text{ure}}$ at the extreme quintiles, there is still substantial variation in the middle quintiles. 
In view of the theoretical optimality of $\hat{\bs{\beta}}^{\text{ure}}$ and its adaptivity that were demonstrated in the simulations calibrated to fit this application (Design 1), we believe that EB-URE is an attractive alternative to these estimators. 


\section{Conclusion}
\label{sec:concl}

We develop an empirical Bayes estimator for two-way effects models using a novel prior that incorporates assortative matching patterns based on the graph normalized Laplacian matrix and hyperparameter selection based on the minimization of an unbiased risk estimate. We prove its asymptotic optimality, allowing for weakly connected graphs due to limited mobility. We show in a Monte Carlo study that the estimator dominates, in terms of RMSE, a number of competitors and is able to capture assortative matching. An application of our estimator to the NCERDC data set suggests that there is substantial student-teacher assortative matching which is ignored by both the common one-way shrinkage estimator and the two-way LS estimator. We also find that teacher rankings based on value-added estimates are quite sensitive to the methodology employed.

\setstretch{1.1}
\bibliography{Empirical_Bayes}
\setstretch{1.3}


\begin{appendix}
	
\setstretch{1.25}

\section{Proofs of Main Results}

For a matrix $M\in\mathbb{R}^{n\times n}$,
let $\bar{\rho}(M)$ denote its \emph{spectral radius}, the largest of the magnitudes of its $n$ eigenvalues. Let $\lVert M \rVert$ denote its \emph{spectral norm}, its largest singular value. When $M$ is positive semi-definite, then $\rho_n(M)=\bar{\rho}(M) = ||M||$.

\subsection{Assumptions with General Weighting Matrix}
\label{subsubsec:pw.ass}

For the clarity of graphical interpretations of the Assumptions, theoretical results are given for the leading case $W=W_{a+b}$ and $W_{b}$ only in the paper. The proofs in this Appendix are for a general weighting matrix under Assumptions~\ref{ass:graph.connect.1} and \ref{ass:graph.connect.2} below. We first show that Assumptions~\ref{ass:graph.connect.1} and \ref{ass:graph.connect.2} are implied by Assumptions \ref{ass:graph.connect.1.full} -- \ref{ass:graph.connect.2.c} when $W=W_{a+b}$ and $W=W_{b} $. Subsequently, we prove Theorem~\ref{thm:pw} under Assumption~\ref{ass:graph.connect.1} and prove Lemma~\ref{lm:unif.convg}, Theorem~\ref{thm:main}, and Corollary \ref{cor:reg} under Assumptions~\ref{ass:graph.connect.1} and \ref{ass:graph.connect.2}, while keeping Assumptions~\ref{ass:reg.cond} and \ref{ass.degree} throughout.

\begin{assumption}
	\emph{(i).}\label{ass:graph.connect.1}
	$(r+c)\cdot \bar{\rho}^2(W^{1/2}L^-W^{1/2}) = o(1)$, \newline
	\emph{(ii).}
	$ (r+c)\cdot \bar{\rho}(W) \le M$ for some finite $M$, \newline
	\emph{(iii).} $ \epsilon <  (r+c)\cdot\bar{\rho}(W^{1/2}L^-W^{1/2})$ for some $\epsilon>0$.
\end{assumption}

\begin{assumption}
	\label{ass:graph.connect.2}
	\emph{(i).} 	$(r+c) \cdot \tr( [W^{1/2}L^- W^{1/2}]^2 )= O(1)$, \newline
	\emph{(ii).} $(r+c)^{-1}\cdot\tr([L^\dagger]^{\epsilon} )= O(1)$ for some $\epsilon>0$.
\end{assumption}

\begin{lemma} \emph{(i).} \label{Lemma: A- eig1}
	Assumption~\ref{ass:graph.connect.1} is implied by Assumption~\ref{ass:graph.connect.1.full} when $W=W_{a+b}$; and  by Assumption~\ref{ass.degree} and \ref{ass:graph.connect.1.b} when $W=W_{b}$. \newline \emph{(ii).} Assumption~\ref{ass:graph.connect.2} is implied by Assumptions~\ref{ass:graph.connect.1.full} and \ref{ass:graph.connect.2.a} when $W=W_{a+b}$; and
	by Assumptions~\ref{ass:graph.connect.1.b}, \ref{ass:graph.connect.2.a},  \ref{ass:graph.connect.2.b}, and \ref{ass:graph.connect.2.c} when $W=W_{b}$.
\end{lemma}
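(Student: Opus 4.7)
The plan is to substitute the explicit forms $W_{a+b}=(r+c)^{-1}I_{r+c}$ and $W_{b}=c^{-1}\mathrm{diag}(0_{r\times r},I_{c})$ together with the factorization $L^{-}=\mathcal{R}L^{\dagger}\mathcal{R}'$ from \eqref{eq:Lminus.calR} into the quantities appearing in Assumptions~\ref{ass:graph.connect.1} and \ref{ass:graph.connect.2}, and reduce each sub-condition to a statement about the non-zero eigenvalues of either $L$ or $L_{2,\perp}$. The two workhorses will be: (a) a uniform spectral-norm bound $\|\mathcal{R}\|\le M'$, obtained from the explicit block form of $\mathcal{R}$ together with the size ratio $r/c\in[\epsilon,M]$ guaranteed by Assumption~\ref{ass.degree}(i); and (b) the Weyl-type inequality $\rho_\ell(\mathcal{R}L^{\dagger}\mathcal{R}')\le\|\mathcal{R}\|^{2}\rho_\ell(L^{\dagger})$, which transfers both spectral-radius and Schatten-norm bounds from $L^{\dagger}$ to $L^{-}$.

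For part (i) with $W=W_{a+b}$, the three sub-conditions reduce to $(r+c)^{-1}\bar{\rho}^{2}(L^{-})=o(1)$, $\bar{\rho}(W)=1/(r+c)$ (automatic), and $\bar{\rho}(L^{-})\ge\epsilon$. The first follows from $\bar{\rho}(L^{-})\le\|\mathcal{R}\|^{2}/\rho_{2}(L)$ combined with Assumption~\ref{ass:graph.connect.1.full}, and the third from the variational lower bound $\bar{\rho}(L^{-})\ge 1/\rho_{r+c}(L)$ with $\rho_{r+c}(L)\le 2\max_{i}L_{ii}\le 2M$ (Gershgorin and Assumption~\ref{ass.degree}(ii)). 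For $W=W_{b}$, the analogous quantity is $c^{-1}\bar{\rho}([L^{-}]_{22})$, where expanding $\mathcal{R}_{b}=[0_{c\times r},M_{c}]$ gives $[L^{-}]_{22}=M_{c}[L^{\dagger}]_{22}M_{c}$. The critical technical step is to identify this block with the pseudoinverse of the projected Laplacian, $M_{c}[L^{\dagger}]_{22}M_{c}=L_{2,\perp}^{\dagger}$, via a Schur-complement argument that exploits the matching null spaces $\mathrm{null}(L_{2,\perp})=\mathrm{null}(M_{c})=\mathrm{span}(\bs{1}_{c})$ together with the Schur-complement identity $L_{2,\perp}=B_{2}'B_{2}-B_{2}'B_{1}(B_{1}'B_{1})^{-1}B_{1}'B_{2}$. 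With this identification, $\bar{\rho}([L^{-}]_{22})=1/\rho_{2}(L_{2,\perp})$, and Assumption~\ref{ass:graph.connect.1.b} delivers sub-condition (i), while a Gershgorin-type upper bound on $\rho_{c}(L_{2,\perp})$ (using Assumption~\ref{ass.degree}(ii)) supplies the lower bound needed for (iii).

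For part (ii), the Frobenius-type bound $\tr([W^{1/2}L^{-}W^{1/2}]^{2})\le C\sum_{\ell\ge 2}\rho_{\ell}^{-2}(\cdot)$ (with eigenvalues of $L$ for $W=W_{a+b}$ and of $L_{2,\perp}$ for $W=W_{b}$, using the identification from the previous paragraph) reduces sub-condition (i) to controlling $(r+c)^{-1}\sum_{\ell\ge 2}\rho_{\ell}^{-2}(\cdot)$. Splitting the sum at the index $k$ permitted by Assumption~\ref{ass:graph.connect.2.a} (resp.\ \ref{ass:graph.connect.2.b}) yields a head bounded by $(k-1)/\rho_{2}^{2}(\cdot)$, which vanishes under the $(r+c)^{-1}$ scaling by Assumption~\ref{ass:graph.connect.1.full} (resp.\ \ref{ass:graph.connect.1.b}), and a tail of size $O(r+c)$ from the uniform lower bound $\rho_{\ell}(\cdot)\ge\epsilon_{0}$. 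Sub-condition (ii) involves only $L^{\dagger}$ regardless of $W$; the same head-plus-tail split applies, with the head requiring $(r+c)\rho_{2}^{\epsilon}(L)\to\infty$ for some $\epsilon>0$. This follows from Assumption~\ref{ass:graph.connect.1.full} by choosing any $\epsilon<2$ in the $W_{a+b}$ case, and is exactly Assumption~\ref{ass:graph.connect.2.c} in the $W_{b}$ case---which explains why that extra hypothesis appears only in the second half of the lemma. The main obstacle throughout will be the Schur-complement identity $M_{c}[L^{\dagger}]_{22}M_{c}=L_{2,\perp}^{\dagger}$; once it is granted, the remainder is routine eigenvalue bookkeeping.
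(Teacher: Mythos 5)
Your proposal follows essentially the same route as the paper: substitute the explicit forms of $W_{a+b}$ and $W_b$, use $\lVert \mathcal{R}\rVert \le M$ (via Assumption~\ref{ass.degree}(i)) to reduce every sub-condition to statements about the non-zero eigenvalues of $L$ or $L_{2,\perp}$, verify Assumption~\ref{ass:graph.connect.1}(ii) trivially and (iii) from the boundedness of $\rho_2(L)$ (resp.\ $\rho_2(L_{2,\perp})$) implied by the bounded-degree condition, and handle Assumption~\ref{ass:graph.connect.2} by splitting the eigenvalue sums at the index $k$ of Assumption~\ref{ass:graph.connect.2.a}/\ref{ass:graph.connect.2.b}. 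Your reading of the role of Assumption~\ref{ass:graph.connect.2.c} --- that $\tr([L^\dagger]^\epsilon)$ involves the full graph regardless of $W$, so the $W_b$ case needs a separate hypothesis while the $W_{a+b}$ case gets it for free from Assumption~\ref{ass:graph.connect.1.full} with $\epsilon=2$ --- matches the paper exactly.

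The one genuine divergence is the identity $[\mathcal{R}L^\dagger\mathcal{R}']_2 = M_c L_{2,\perp}^\dagger M_c$, which you correctly flag as the main obstacle. You propose a direct Schur-complement computation on the pseudoinverse; the paper instead proves it (Lemma~\ref{lm:equalities}(a), Supplementary Appendix) by a statistical argument: the $\bs{\beta}$-block of the normalized LS estimator is shown, via partitioned regression and uniqueness of the solution under the normalization $\bs{1}_c'\bs{\beta}=0$, to equal $M_c L_{2,\perp}^\dagger B_{2,\perp}'\bs{Y}$, whence the two variance expressions must coincide. Your route can be made to work, but be aware that the naive block formula ``the $(2,2)$ block of $L^\dagger$ is the pseudoinverse of the Schur complement'' is false for general singular PSD matrices; you need exactly the extra structure you mention (invertibility of $B_1'B_1$, null spaces of $L$ and $L_{2,\perp}$ spanned by $(\bs{1}_r',-\bs{1}_c')'$ and $\bs{1}_c$), for instance by exhibiting the block generalized inverse built from $L_{2,\perp}^\dagger$ and invoking invariance of $\mathcal{R}L^{g}B'$ across generalized inverses $L^g$ --- which is the paper's uniqueness argument in different clothing. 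Your observation that $M_cL_{2,\perp}^\dagger M_c = L_{2,\perp}^\dagger$ (since $M_c$ projects onto the range of $L_{2,\perp}$) is correct and slightly cleaner than the paper's statement. Everything else is routine bookkeeping, as you say.
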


\begin{remark}
	Because $W$ builds in the normalization, such as $1/(r+c)$ for $W_{a+b}$ and $1/c$ for $W_{b}$, Assumption~\ref{ass:graph.connect.1}(ii) simply requires the standard condition that the selection matrix taking the parameter of interest out of $\bs{\theta}$ has bounded eigenvalues.  Similarly, because $W$ builds in the normalization, Assumption~\ref{ass:graph.connect.1}(iii) essentially requires a finite upper bound for the eigenvalues of $L$ (or the selected counterpart of $L$). If we drop Assumption~\ref{ass:graph.connect.1}(iii), the convergence rate in Theorem~\ref{thm:pw} changes from $\delta_{r,c}^2$ to $\delta_{r,c}$, however, the pointwise convergence still holds and the uniform convergence and asymptotic optimality in Lemma~\ref{lm:unif.convg} and Theorem \ref{thm:main} are unaffected.
	
\end{remark}


\begin{remark}
	The trace condition in Assumption~\ref{ass:graph.connect.2} allows the number of unbounded eigenvalues of $W^{1/2}L^{-}W^{1/2}$ to go to infinity as $r,c \rightarrow \infty$. Assumptions~\ref{ass:graph.connect.2.a} and \ref{ass:graph.connect.2.b} only allow for an bounded number of small eigenvalues of $L$ and $L_{2,\perp}$ such that this trace condition is implied by that on the largest eigenvalues in Assumption~\ref{ass:graph.connect.1}. We can maintain Assumption ~\ref{ass:graph.connect.2} by letting the number of unbounded eigenvalues grow slowly and tightening the rate condition in Assumption ~\ref{ass:graph.connect.1}.
\end{remark}

\begin{proof}[Proof of Lemma \ref{Lemma: A- eig1}]
	We first study the case 
	$W= W_{a+b}=\frac{1}{r+c}I_{r+c}$. In this case, 
	\begin{equation}\label{gencond1}
		(r+c)\cdot\bar{\rho}^2(W^{1/2}L^-W^{1/2})
		=\frac{1}{r+c}\bar{\rho}^2(\mathcal{R}L^{\dagger}\mathcal{R}')
		\le M\frac{1}{(r+c)\cdot\rho_2^2(L)}
	\end{equation} 
	for some finite $M$ following $||\mathcal{R}||^2\le M$, see \eqref{eq:I1}, and the smallest non-zero eigenvalue of $L$, denoted by $\rho_2(L)$, is the largest eigenvalue of the Moore-Penrose generalized inverse $L^{\dagger}$. This verifies Assumption~\ref{ass:graph.connect.1}(i).
	Assumption~\ref{ass:graph.connect.1}(ii) holds because
	$(r+c)\bar{\rho}(W) = 1$.
	Assumption~\ref{ass:graph.connect.1}(iii) holds because $(r+c)\bar{\rho}(W^{1/2}L^-W^{1/2}) = \bar{\rho}(L^-)$ which is of the same order as $1/\rho_2(L)$, and $\rho_2(L)$ is bounded 
	under Assumption~\ref{ass.degree}.
	
	Assumption 	\ref{ass:graph.connect.2}(i) is implied by Assumption~\ref{ass:graph.connect.1.full} and Assumption~\ref{ass:graph.connect.2.a}, i.e., the number of diverging eigenvalues of $L^{\dagger}$ is finite. 
	Assumption \ref{ass:graph.connect.2}(ii) holds with $\epsilon=2$ under Assumption~\ref{ass:graph.connect.1.full} and Assumption~\ref{ass:graph.connect.2.a}.
	
	Next, we consider $W= W_b$, i.e., we are interested in the estimation of $\bs{\beta}$. In this case,
	\begin{equation}
		W^{1/2}L^-W^{1/2}
		=\frac{1}{c}
		\begin{bmatrix}
			0_{r\times r} & 0_{r\times c} \\
			0_{c\times r} & [\mathcal{R}L^\dagger \mathcal{R}']_{2}
		\end{bmatrix}
		=\frac{1}{c}
		\begin{bmatrix}
			0_{r\times r} & 0_{r\times c} \\
			0_{c\times r} & M_c[L_{2,\perp}]^\dagger M_c 
		\end{bmatrix},
		\label{eq:partition}
	\end{equation}
	where $ [\mathcal{R}L^\dagger \mathcal{R}']_{2}=M_c[L_{2,\perp}]^\dagger M_c$ follows from Lemma~\ref{lm:equalities}. Because the demeaning matrix $M_c$ has bounded eigenvalues, we know 
	\begin{equation} \label{eq: verifyB2}
		(r+c)\cdot\bar{\rho}^2(W^{1/2}L^-W^{1/2})
		\le M \frac{1}{c\cdot\rho_2^2(L_{2,\perp})},
	\end{equation}
	for some finite $M$, which shows that Assumption~\ref{ass:graph.connect.1}(i) is implied by Assumption~\ref{ass:graph.connect.1.b}. 
	Assumption~\ref{ass:graph.connect.1}(ii) holds because
	$(r+c)\bar{\rho}(W) = (r+c)/r=O(1)$ by Assumption~\ref{ass.degree}.
	Assumption~\ref{ass:graph.connect.1}(iii) holds because by the same arguments as in \eqref{eq: verifyB2},  $(r+c)\bar{\rho}(W^{1/2}L^-W^{1/2})$ is of the same order as $1/\rho_2(L_{2,\perp})$, and $\rho_2(L_{2,\perp})$ is bounded under Assumption~\ref{ass.degree}. 
	
	Assumption 	\ref{ass:graph.connect.2}(i) is implied by Assumption~\ref{ass:graph.connect.1.b} and Assumption~\ref{ass:graph.connect.2.b}, i.e., the number of diverging eigenvalues of $L_{2,\perp}^{\dagger}$ is finite. 
	Assumption \ref{ass:graph.connect.2}(ii) follows from Assumptions~\ref{ass:graph.connect.2.a} and \ref{ass:graph.connect.2.c}, i.e., for the full graph, the number of diverging eigenvalues of $L^{\dagger}$ is finite and its largest eigenvalue satisfy the rate condition in \ref{ass:graph.connect.2.c}.
\end{proof}

\subsection{Proof of Pointwise Convergence}

\begin{proof}[Proof of Theorem \ref{thm:pw}] Theorem~\ref{thm:pw} follows immediately from Lemma~\ref{lm:decomp}, Lemma~\ref{lm:UC1}, and the Cauchy-Schwarz inequality. 
\end{proof}

Without loss of generality, we set $\sigma^2=1$ in the proof.
To analyze the difference between $l_w(\hat{\bs{\theta}}(\bs{\lambda}),\bs{\theta})$ and $\text{URE}(\bs{\lambda})$,  define 
\begin{align}
	\mathcal{D}_1 (\bs{\lambda}) &= [\hat{\bs{\theta}}^{\text{ls}}-\bs{\theta}]^\prime W [\hat{\bs{\theta}}^{\text{ls}}-\bs{\theta}] -
	\tr[WL^- ] , \ \ 
	\mathcal{D}_2 (\bs{\lambda}) = (S \bs{v})^\prime W [\hat{\bs{\theta}}^{\text{ls}}-\bs{\theta}],
	\nonumber \\
	\mathcal{D}_3 (\bs{\lambda})&= [S\bs{\theta}]^\prime W [\hat{\bs{\theta}}^{\text{ls}}-\bs{\theta}], \ \ 
	\mathcal{D}_4 (\bs{\lambda}) = [S_1(\hat{\bs{\theta}}^{\text{ls}}-\bs{\theta})]^\prime W [\hat{\bs{\theta}}^{\text{ls}}-\bs{\theta}]
	- \tr[S_1L ^- W ].
\end{align}

\begin{lemma}[Decomposition of L -- URE]
	\label{lm:decomp}
	$l_w(\hat{\bs{\theta}}(\bs{\lambda}),\bs{\theta})-\text{URE}(\bs{\lambda})$ can be written as 
	\begin{equation*}
		\begin{aligned}
			l_w(\hat{\bs{\theta}}(\bs{\lambda}),\bs{\theta})-\text{URE}(\bs{\lambda})		
			=  -\mathcal{D}_1(\bs{\lambda}) +2 \mathcal{D}_2(\bs{\lambda})-2\mathcal{D}_3(\bs{\lambda}) +2 \mathcal{D}_4(\bs{\lambda}).
		\end{aligned}
	\end{equation*}
\end{lemma}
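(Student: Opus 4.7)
\textbf{Proof proposal for Lemma \ref{lm:decomp}.}

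The statement is a purely algebraic identity, so my plan is to expand both $l_w(\hat{\bs{\theta}}(\bs{\lambda}),\bs{\theta})$ and $\text{URE}(\bs{\lambda})$ in terms of $\bs{\eta} := \hat{\bs{\theta}}^{\text{ls}} - \bs{\theta}$ and then match terms. The substrate of the argument will be two structural facts about the rotation matrix $\mathcal{R}$ that need to be established up front. First, $\mathcal{R}$ is idempotent, as can be checked directly from its block form using $\bs{1}_c^\prime M_c = 0$ and $M_c^2 = M_c$; consequently $\mathcal{R} L^- = \mathcal{R}^2 L^\dagger \mathcal{R}^\prime = L^-$ and similarly $L^- \mathcal{R}^\prime = L^-$. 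Second, $\mathcal{R}\bs{\eta} = \bs{\eta}$ and $\mathcal{R}\bs{\theta} = \bs{\theta}$, because both $\bs{\eta}$ and $\bs{\theta}$ have their $\bs{\beta}$-blocks orthogonal to $\bs{1}_c$ (from the normalization $\bs{1}_c^\prime \bs{\beta} = 0$ and $\bs{1}_c^\prime \hat{\bs{\beta}}^{\text{ls}} = 0$).

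Using $S_1 = \mathcal{R} - S$ and $\mathcal{R}\bs{\theta} = \bs{\theta}$, I would write $\hat{\bs{\theta}}(\bs{\lambda}) - \bs{\theta} = \mathcal{R}\bs{\eta} - S(\hat{\bs{\theta}}^{\text{ls}} - \bs{v})$, which upon squaring under $W$ gives
\begin{equation*}
l_w = \bs{\eta}^\prime \mathcal{R}^\prime W \mathcal{R} \bs{\eta} - 2\bs{\eta}^\prime \mathcal{R}^\prime W S(\hat{\bs{\theta}}^{\text{ls}} - \bs{v}) + (\hat{\bs{\theta}}^{\text{ls}} - \bs{v})^\prime S^\prime W S (\hat{\bs{\theta}}^{\text{ls}} - \bs{v}).
\end{equation*}
The third term here coincides with the first term of $\text{URE}(\bs{\lambda})$, so it cancels in $l_w - \text{URE}$. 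Substituting $\hat{\bs{\theta}}^{\text{ls}} - \bs{v} = \bs{\eta} + (\bs{\theta} - \bs{v})$ in the cross term and applying $\mathcal{R}\bs{\eta} = \bs{\eta}$ collapses $\bs{\eta}^\prime \mathcal{R}^\prime W \mathcal{R} \bs{\eta}$ to $\bs{\eta}^\prime W \bs{\eta}$ and $\bs{\eta}^\prime \mathcal{R}^\prime W S \bs{\eta}$ to $\bs{\eta}^\prime W S \bs{\eta}$, leaving the quadratic-in-$\bs{\eta}$ contribution as $\bs{\eta}^\prime W \bs{\eta} - 2\bs{\eta}^\prime W S \bs{\eta}$ and the linear-in-$\bs{\eta}$ contribution as $2(\bs{v} - \bs{\theta})^\prime S^\prime W \bs{\eta}$.

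For the trace terms, I would expand $S_1^\prime W S_1 = (\mathcal{R} - S)^\prime W (\mathcal{R} - S)$ and apply $\mathcal{R} L^- = L^- = L^- \mathcal{R}^\prime$ together with cyclicity and the symmetry of $W$ and $L^-$ to obtain $\tr[S_1^\prime W S_1 L^-] = \tr[W L^-] - 2\tr[S L^- W] + \tr[S^\prime W S L^-]$, so that $\tr[S^\prime W S L^-] - \tr[S_1^\prime W S_1 L^-] = -\tr[W L^-] + 2\tr[S L^- W] = \tr[W L^-] - 2\tr[S_1 L^- W]$ by the same identity applied to $S_1 = \mathcal{R} - S$. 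Combining the quadratic, linear, and trace pieces and matching against the definitions of $\mathcal{D}_1, \ldots, \mathcal{D}_4$ yields the claimed decomposition.

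The main obstacle is purely bookkeeping: carefully isolating every $\mathcal{R}$ and every $L^-$ so that the two identities $\mathcal{R}\bs{\eta} = \bs{\eta}$ and $\mathcal{R}L^- = L^-$ can each be applied exactly once at the right spot, and handling the symmetry $\bs{\eta}^\prime W S \bs{\eta} = \bs{\eta}^\prime S^\prime W \bs{\eta}$ (and its trace analogue) without double-counting. No probabilistic input is needed; the lemma is an exact equality that holds pointwise in $\bs{Y}$.
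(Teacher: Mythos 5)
Your proposal is correct and follows essentially the same route as the paper's proof: both cancel the common quadratic term $(\hat{\bs{\theta}}^{\text{ls}}-\bs{v})'S'WS(\hat{\bs{\theta}}^{\text{ls}}-\bs{v})$, use $S+S_1=\mathcal{R}$ together with $\mathcal{R}(\hat{\bs{\theta}}^{\text{ls}}-\bs{\theta})=\hat{\bs{\theta}}^{\text{ls}}-\bs{\theta}$ to split the cross term, and reduce the trace difference to $\tr[WL^-]-2\tr[S_1L^-W]$ via $\mathcal{R}L^-=L^-=L^-\mathcal{R}'$. The only difference is presentational (you make the idempotency of $\mathcal{R}$ explicit and substitute $\hat{\bs{\theta}}^{\text{ls}}-\bs{v}=\bs{\eta}+(\bs{\theta}-\bs{v})$ where the paper decomposes $S\hat{\bs{\theta}}^{\text{ls}}=S(\hat{\bs{\theta}}^{\text{ls}}-\bs{\theta})+S\bs{\theta}$), which is immaterial.
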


Define 
$
\delta_{r,c}^{-1}= (r+c)^{1/2} \cdot\bar{\rho}(W^{1/2}L^-W^{1/2}),
$ where $\delta_{r,c}^{-1}=o(1)$ under Assumption~\ref{ass:graph.connect.1}.
By \eqref{gencond1} and \eqref{eq:partition}, $\delta_{r,c}$ defined here is equivalent to those in the main paper with the multiplication of a constant bounded from above and below, because $r$ and $c$ are of the same asymptotic order under Assumption~\ref{ass.degree}. Because $\delta_{r,c}$  only appears as an asymptotic order in all results, we can ignore such a difference in the definition.

\begin{lemma}
	\label{lm:UC1}
	Suppose Assumptions~\ref{ass:reg.cond}, \ref{ass.degree}, and \ref{ass:graph.connect.1} hold. For $k=1, 2, 3, 4$, we have
	$\sup_{\bs{\lambda}\in\mathcal{J}}
	\mathbb{E}_{\bs{\theta}, B}
	\left[	\mathcal{D}_k ^2(\bs{\lambda}) \right]
	= O\left(
	\delta_{r,c}^{-2}
	\right).
	$
\end{lemma}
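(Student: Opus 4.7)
The plan is to reduce each $\mathcal{D}_k(\bs{\lambda})$ to a polynomial in the error vector $\bs{U}$ through the identities $\hat{\bs{\theta}}^{\text{ls}}-\bs{\theta}=L^-B'\bs{U}$ and $\mathrm{Var}(\hat{\bs{\theta}}^{\text{ls}})=\sigma^2L^-$ (where $L^-$ is PSD because $L^-=\mathcal{R}L^\dagger\mathcal{R}'$ with $L^\dagger$ PSD). Then $\mathcal{D}_1$ and $\mathcal{D}_4$ become \emph{centered} quadratic forms in $\bs{U}$ while $\mathcal{D}_2$ and $\mathcal{D}_3$ are linear forms. Under the finite fourth moment in Assumption~\ref{ass:reg.cond}(ii), the second moment of a centered quadratic form $\bs{U}'M\bs{U}-\tr(M)$ is bounded by $C\,\tr(M^2)$, while the second moment of a linear form $\bs{c}'\bs{U}$ equals $\sigma^2\|\bs{c}\|^2$. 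So the task becomes controlling certain traces and squared norms uniformly in $\bs{\lambda}\in\mathcal{J}$, and then converting these into the $\delta_{r,c}^{-2}$ rate through Assumption~\ref{ass:graph.connect.1}.

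For $\mathcal{D}_1$ the relevant matrix is $M_1 = BL^-WL^-B'$. Using the cyclic trace and the standard Moore--Penrose-style identity $L^-LL^-=L^-$, we get $\tr(M_1^2)=\tr\bigl[(W^{1/2}L^-W^{1/2})^2\bigr]\le(r+c)\,\bar{\rho}^2(W^{1/2}L^-W^{1/2})=O(\delta_{r,c}^{-2})$, where the final equality uses Assumption~\ref{ass:graph.connect.1}(i). This bound is manifestly free of $\bs{\lambda}$. For $\mathcal{D}_2$ and $\mathcal{D}_3$, I will show that the variance equals $\sigma^2(S\bs{x})'WL^-W(S\bs{x})\le\sigma^2\bar{\rho}(W^{1/2}L^-W^{1/2})\cdot\|W^{1/2}S\bs{x}\|^2$ for $\bs{x}\in\{\bs{v},\bs{\theta}\}$, and then reduce this bound to $O(\delta_{r,c}^{-2})$ by showing $\|W^{1/2}S\bs{x}\|^2=O(1)$ uniformly and invoking Assumption~\ref{ass:graph.connect.1}(ii)--(iii) (which together ensure $\delta_{r,c}^{-1}(r+c)^{1/2}$ is bounded below). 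For $\mathcal{D}_4$ the analysis parallels that of $\mathcal{D}_1$ but with $S_1=\mathcal{R}-S=\mathcal{R}[L+\Lambda^*]^{-1}L$ inserted, and reduces to showing $\tr\bigl[(S_1WL^-)^2\bigr]=O(\delta_{r,c}^{-2})$ uniformly in $\bs{\lambda}$.

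The main obstacle, and the step requiring real work, is the \emph{uniformity} over the unbounded parameter space $\mathcal{J}$ for $\lambda_a,\lambda_b\in(0,\infty)$. A naive pull-out of operator norms fails because $\|P\|$ and $\|S_1\|$ need not stay bounded in the extreme regimes $\lambda_a,\lambda_b\to 0$ or $\to\infty$. The structural fact I intend to exploit is the variational characterization $P\bs{x}=\argmin_{\bs{y}}\bigl\{\|\bs{y}-\bs{x}\|_{\Lambda^*}^2+\|\bs{y}\|_L^2\bigr\}$, equivalent to the PSD contraction $\Lambda^{*1/2}(L+\Lambda^*)^{-1}\Lambda^{*1/2}\le I$. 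Comparing the objective at $P\bs{x}$ with its values at $\bs{0}$ and at $\bs{x}$ controls $\|P\bs{x}\|_L^2$ and the $\Lambda^*$-residual $\|(I-P)\bs{x}\|_{\Lambda^*}^2$ by $\|\bs{x}\|_L^2$ and $\|\bs{x}\|_{\Lambda^*}^2$ respectively; combined with $\bar{\rho}(\mathcal{R}'W\mathcal{R})=O(\bar{\rho}(W))$ and the $L^2$-bounds $\|\bs{x}\|^2=O(r+c)$ and $\bs{x}'L\bs{x}=O(r+c)$ (from Assumption~\ref{ass:reg.cond}(i) and the bounded-degree Assumption~\ref{ass.degree}), this yields the desired $\|W^{1/2}S\bs{x}\|^2=O(1)$ for $\bs{x}\in\{\bs{v},\bs{\theta}\}$. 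An analogous dual contraction for $I-P=(L+\Lambda^*)^{-1}L$ delivers the uniform trace bound needed for $\mathcal{D}_4$. With these uniform controls in hand, Theorem~\ref{thm:pw} follows by plugging the four bounds into the decomposition in Lemma~\ref{lm:decomp} and applying Cauchy--Schwarz.
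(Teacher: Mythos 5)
Your overall skeleton matches the paper's: you reduce each $\mathcal{D}_k$ to linear and centered quadratic forms in $\bs{U}$ via $\hat{\bs{\theta}}^{\text{ls}}-\bs{\theta}=\mathcal{R}L^-B'\bs{U}$, invoke the fourth-moment bound $\mathbb{V}[\bs{U}'M\bs{U}]\le C\tr[M'M]$, handle $\mathcal{D}_1$ exactly as the paper does, and convert $\bar{\rho}(W^{1/2}L^-W^{1/2})$ into $O(\delta_{r,c}^{-2})$ using Assumption~\ref{ass:graph.connect.1}(ii)--(iii). The gap is in the uniformity step for $\mathcal{D}_2$, $\mathcal{D}_3$, $\mathcal{D}_4$, which you correctly identify as the hard part but do not actually close. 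Your variational characterization delivers contractions only in weighted seminorms: $\|P\bs{x}\|_L^2\le\bs{x}'L\bs{x}$ and $\|(I-P)\bs{x}\|_{\Lambda^*}^2\le\bs{x}'\Lambda^*\bs{x}$. The quantities you must control, however, are Euclidean/$W$-weighted: $\|W^{1/2}\mathcal{R}P\bs{x}\|^2$ for the linear forms and $\tr[WS_1L^-S_1'WL^-]$ for $\mathcal{D}_4$. A bound on $\|P\bs{x}\|_L^2$ gives no control of $\|P\bs{x}\|^2$ in directions where $L$ has near-zero eigenvalues --- which is precisely the weak-connectivity regime the lemma is built for --- and the $\Lambda^*$-seminorm degenerates as $\lambda_a$ or $\lambda_b\to 0$, which $\mathcal{J}$ permits. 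A crude operator-norm bound on $P=(L+\Lambda^*)^{-1}\Lambda^*$ also fails, since such arguments only give $\|P\|\le C\sqrt{\max(\lambda_a,\lambda_b)/\min(\lambda_a,\lambda_b)}$. So the passage from your weighted contractions to $\|W^{1/2}S\bs{x}\|^2=O(1)$, and likewise to the uniform trace bound for $\mathcal{D}_4$, does not go through as written.

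The paper closes this gap with a different device. It splits $S=\mathcal{R}-S_1$ (so the $S\bs{x}$ terms decompose into an $\bs{x}$-term, handled directly, and an $S_1\bs{x}$-term) and factors $S_1=\mathcal{R}B^\dagger GB$ with $G=B\Lambda^{*-1}B'[B\Lambda^{*-1}B'+I]^{-1}$. Because $B\Lambda^{*-1}B'+I\succeq I$, $G$ satisfies $\|G\|\le1$ uniformly over $\mathcal{J}$ --- a genuine Euclidean contraction, obtained by moving to the data space $\mathbb{R}^{rT}$. Since $BL^-B'$ is a projection and $\mathcal{R}B^\dagger B^{\dagger\prime}\mathcal{R}'=L^-$, this gives $S_1L^-S_1'\preceq L^-$ uniformly in $\bs{\lambda}$, $\|S_1\bs{x}\|\le\|G\|\,\|B\|\,\|B^\dagger{}'\mathcal{R}'\cdot\|$-type bounds with $\|B\|$ controlled by the bounded-degree assumption, and every term then collapses to powers of $W^{1/2}L^-W^{1/2}$, which Assumption~\ref{ass:graph.connect.1} controls (the $S_1$ pieces in fact achieve the rate $(r+c)\bar{\rho}^2(W^{1/2}L^-W^{1/2})$ directly, without needing part (iii)). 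To salvage your route you would need to derive an analogue of $S_1L^-S_1'\preceq CL^-$ from your contractions, and the $L$- and $\Lambda^*$-weighted geometries do not supply it.
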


The following auxiliary results in Lemma \ref{lm:pw.equalities} are used in the proof of Lemma \ref{lm:UC1}.
\begin{lemma}
	
	\label{lm:pw.equalities}
	

	\noindent \emph{(a)}. 
	$	\hat{\bs{\theta}}^{\emph{ls}}-\bs{\theta} 
	=
	\mathcal{R} L^-B^\prime \bs{U}.
	$
	\label{eq:E8}

	\noindent \emph{(b).} $\text{For generic matrix $A$, }\mathbb{V}[\bs{U}^\prime A \bs{U}]
	\leq
	(\mathbb{E}[u_{it}^4] + 2\sigma^2\sqrt{\mathbb{E}[u_{it}^4] - \sigma^4})\tr[A'A]$.
	
	\noindent	\emph{(c)}.	$S_1 = \mathcal{R}B^\dagger G B,
	\text{ where } G:= B\Lambda^{*-1}B'[B\Lambda^{*-1}B' + I]^{-1},$ and $|| G || 
	\leq 1.$

\end{lemma}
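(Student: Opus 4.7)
\textbf{Proof plan for Lemma \ref{lm:pw.equalities}.} The three claims are proved by distinct elementary arguments.

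\emph{Part (a).} Substitute $\bs{Y}=B\bs{\theta}+\bs{U}$ into the closed-form $\hat{\bs{\theta}}^{\text{ls}} = L^-B'\bs{Y}$ from Lemma~\ref{lm: LS defn} to obtain $\hat{\bs{\theta}}^{\text{ls}} - \bs{\theta} = (L^-L - I)\bs{\theta} + L^-B'\bs{U}$; it then suffices to show $L^-L\,\bs{\theta} = \bs{\theta}$ under the normalization $\bs{1}_c'\bs{\beta}=0$ and $\mathcal{R}L^- = L^-$. Three ingredients accomplish this. First, direct block multiplication (using $B_1\bs{1}_{r\times c}=\bs{1}_{rT\times c}=B_2\bs{1}_{c\times c}$) shows $B\mathcal{R}=B$, hence $\mathcal{R}'L=L$ and $L^-L = \mathcal{R}L^\dagger L$. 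Second, for a single connected bipartite component, $\text{null}(L)$ is spanned by $\bs{k}=(\bs{1}_r',-\bs{1}_c')'$, so $L^\dagger L = I - \bs{k}\bs{k}'/(r+c)$ is the orthogonal projector onto $\text{range}(L)$. Third, direct computation gives $\mathcal{R}\bs{k}=\bs{0}$ (the two blocks cancel) and $\mathcal{R}\bs{\theta}=\bs{\theta}$ under the normalization (since $\bs{1}_{r\times c}\bs{\beta}=\bs{0}$ and $M_c\bs{\beta}=\bs{\beta}$). Combining, $L^-L\bs{\theta} = \mathcal{R}\bs{\theta} - \tfrac{\bs{k}'\bs{\theta}}{r+c}\mathcal{R}\bs{k} = \bs{\theta}$. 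The same block multiplication verifies $\mathcal{R}\mathcal{R}=\mathcal{R}$, so $\mathcal{R}L^- = \mathcal{R}\mathcal{R}L^\dagger\mathcal{R}' = L^-$, giving $\hat{\bs{\theta}}^{\text{ls}}-\bs{\theta} = L^-B'\bs{U} = \mathcal{R}L^-B'\bs{U}$.

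\emph{Part (b).} Center and decompose the quadratic form as $\bs{U}'A\bs{U} - \mathbb{E}[\bs{U}'A\bs{U}] = X_1 + X_2$, with $X_1 = \sum_i a_{ii}(U_i^2 - \sigma^2)$ and $X_2 = \sum_{i\ne j} a_{ij} U_i U_j$. The iid zero-mean assumption forces the cross-moments $\mathbb{E}[(U_k^2 - \sigma^2) U_i U_j]$ with $i\ne j$ to vanish (whether or not $k\in\{i,j\}$, one factor carries an odd power of an independent mean-zero variable), so $\mathbb{V}[\bs{U}'A\bs{U}] = \mathbb{V}(X_1) + \mathbb{V}(X_2)$. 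Standard moment counting yields $\mathbb{V}(X_1) = (m_4 - \sigma^4)\sum_i a_{ii}^2 \le (m_4-\sigma^4)\tr[A'A]$ and $\mathbb{V}(X_2) = \sigma^4\sum_{i\ne j}(a_{ij}^2 + a_{ij}a_{ji})$, where $m_4 := \mathbb{E}[u_{it}^4]$ and $\sum_{i\ne j} a_{ij}a_{ji} \le \sum_{i\ne j} a_{ij}^2$ by AM--GM. Assembling via $\sqrt{V_1+V_2}\le\sqrt{V_1}+\sqrt{V_2}$ and squaring produces the stated coefficient $\big(\sigma^2+\sqrt{m_4-\sigma^4}\big)^2 = m_4 + 2\sigma^2\sqrt{m_4-\sigma^4}$, multiplying $\tr[A'A]$.

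\emph{Part (c).} From $S=\mathcal{R}(L+\Lambda^*)^{-1}\Lambda^*$ one finds $S_1 = \mathcal{R} - S = \mathcal{R}(L+\Lambda^*)^{-1}L = \mathcal{R}(L+\Lambda^*)^{-1}B'B$. The push-through identity
\begin{equation*}
(B'B + \Lambda^*)^{-1} B' = \Lambda^{*-1}B'\left(I + B\Lambda^{*-1}B'\right)^{-1},
\end{equation*}
verified by right-multiplying both sides by $B'B+\Lambda^*$, rewrites $S_1 = \mathcal{R}\Lambda^{*-1}B'(I+M)^{-1}B$ with $M:=B\Lambda^{*-1}B'$. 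Combining $\mathcal{R}=\mathcal{R}L^\dagger L$ from part (a), the Moore--Penrose identity $B^\dagger = L^\dagger B'$, and $L\Lambda^{*-1}B' = B'M$, yields $\mathcal{R}\Lambda^{*-1}B' = \mathcal{R}L^\dagger L\Lambda^{*-1}B' = \mathcal{R}L^\dagger B'M = \mathcal{R}B^\dagger M$, so $S_1 = \mathcal{R}B^\dagger G B$ with $G = M(I+M)^{-1}$. For the norm bound, since $|\phi|<\bar{\phi}<1$ and $\|\mathcal{A}\|\le 1$ (the normalized adjacency of a bipartite graph has spectrum in $[-1,1]$), $\Lambda^* = \Lambda^{1/2}(I-\phi\mathcal{A})\Lambda^{1/2}$ is symmetric positive definite, whence $M$ is symmetric PSD with eigenvalues $\mu_i\ge 0$, and $G$ has eigenvalues $\mu_i/(1+\mu_i)\in[0,1)$, giving $\|G\|\le 1$.

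\emph{Main obstacle.} The most delicate step is the identity chain in part (c): the combination of the push-through identity with the null-space-killing property $\mathcal{R}(I-L^\dagger L)=\bs{0}$ (equivalently $\mathcal{R}\bs{k}=\bs{0}$, inherited from part (a)) is what legitimately ``inserts'' the factor $L^\dagger L$ needed to convert the resolvent expression into the sandwich form $\mathcal{R}B^\dagger G B$. Parts (a) and (b) reduce to standard block algebra and moment computations once the right decompositions are written down.
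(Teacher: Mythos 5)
Parts (a) and (c) are correct and take essentially the paper's route. Part (c) is the same push-through identity plus Moore--Penrose manipulation the paper uses, with the same norm bound on $G$. Part (a) differs only in that you establish the key identity $L^-L\bs{\theta}=\bs{\theta}$ directly from the explicit null space of $L$ (spanned by $(\bs{1}_r',-\bs{1}_c')'$ for a connected bipartite graph) together with $\mathcal{R}\bs{k}=\bs{0}$, whereas the paper routes this through its Lemma on $\mathcal{R}=\mathcal{R}L^-L$, which is in turn proved via Searle's estimability theorem; your version is more self-contained and equally valid.

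Part (b) has a genuine gap in the final step. Your decomposition and the exact computations $\mathbb{V}[\bs{U}'A\bs{U}]=\mathbb{V}(X_1)+\mathbb{V}(X_2)$, $\mathbb{V}(X_1)=(m_4-\sigma^4)\sum_i a_{ii}^2$, and $\mathbb{V}(X_2)=\sigma^4\sum_{i\ne j}(a_{ij}^2+a_{ij}a_{ji})$ are all correct --- and in fact more careful than the paper's own proof, which silently drops the covariances $\text{Cov}(u_iu_j,\,u_ju_i)=\sigma^4$ and writes $\mathbb{V}(X_2)=\sigma^4\sum_{i\ne j}a_{ij}^2$. But your ``assembly'' step is a non sequitur: from $\mathbb{V}(X_1)\le(m_4-\sigma^4)\tr[A'A]$ and $\mathbb{V}(X_2)\le 2\sigma^4\tr[A'A]$ the sum gives $\mathbb{V}[\bs{U}'A\bs{U}]\le(m_4+\sigma^4)\tr[A'A]$, and the detour through $(\sqrt{V_1}+\sqrt{V_2})^2$ only gives the larger constant $\big(\sqrt{m_4-\sigma^4}+\sqrt{2}\sigma^2\big)^2$; neither equals the stated $m_4+2\sigma^2\sqrt{m_4-\sigma^4}$. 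Indeed the printed constant cannot be recovered from your (correct) variance formula: take $A$ to be the $2\times2$ matrix with zeros on the diagonal and ones off the diagonal and let $u_{it}=\pm\sigma$ with equal probability; then $\bs{U}'A\bs{U}=2U_1U_2$ has variance $4\sigma^4$, while the claimed bound is $(m_4+2\sigma^2\sqrt{m_4-\sigma^4})\tr[A'A]=2\sigma^4$. So the inequality as printed fails for a ``generic'' matrix, and the discrepancy traces exactly to the cross terms $a_{ij}a_{ji}$ that the paper's proof omits. The fix is to state and prove the weaker but correct bound $\mathbb{V}[\bs{U}'A\bs{U}]\le(m_4+\sigma^4)\tr[A'A]\le 2m_4\,\tr[A'A]$, which is all the paper ever uses downstream (every application of this lemma only requires $\mathbb{V}[\bs{U}'A\bs{U}]\le M\tr[A'A]$ for some finite $M$ under the bounded fourth moment in Assumption~\ref{ass:reg.cond}); asserting that your computation ``produces the stated coefficient'' is the step that would fail.
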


\begin{proof}[Proof of Lemma \ref{lm:UC1}]
	Because $\mathcal{D}_1$ does not depend on $\bs{\lambda}$, we have
	\begin{equation}
		\begin{aligned}[b]
			\mathbb{E}_{\bs{\theta}, B}
			\left[	\mathcal{D}_1 ^2(\bs{\lambda}) \right]&= \mathbb{E}_{\bs{\theta}, B}
			\left[	\left(
			[\hat{\bs{\theta}}^{\text{ls}}-\bs{\theta}]^\prime W
			[\hat{\bs{\theta}}^{\text{ls}}-\bs{\theta}]
			-\tr[W L^-]
			\right)^2 \right]\\
			&=_{(1)}
			\mathbb{V}_{\bs{\theta}, B}
			\left[
			[\hat{\bs{\theta}}^{\text{ls}}-\bs{\theta}]^\prime W [\hat{\bs{\theta}}^{\text{ls}}-\bs{\theta}]
			\right]
			=_{(2)}
			\mathbb{V}_{\bs{\theta}, B}
			\left[\bs{U}^\prime B L^- W  L^- B^\prime \bs{U} \right]\\
			&\leq_{(3)}
			M
			\tr[L^- W L^- W] 
			\leq_{(4)}
			M\cdot 
			(r+c) \cdot\bar{\rho}^2(W^{1/2}L^-W^{1/2})= O\left(
			\delta_{r,c}^{-2}
			\right),
		\end{aligned}
	\end{equation}
	where $=_{(1)}$ uses the fact that the second moment of a mean-zero variable is its variance; the equality $=_{(2)}$ uses Lemma \ref{lm:pw.equalities}(a);  the inequality $\leq_{(3)}$ uses Lemma \ref{lm:pw.equalities}(b), and $\leq_{(4)}$ uses
	for $V_1\in\mathbb{R}^{n\times n}$ and psd $V_2\in\mathbb{R}^{n\times n}$, 
	$\left|\tr[V_1V_2] \right|\leq \lVert V \rVert\tr[V_2]$, an inequality used repeatedly below.

	Since $S+S_1 = \mathcal{R}$, we have
	\begin{equation}
		\mathcal{D}_2(\bs{\lambda}) =(S \bs{v})^\prime W \left[\hat{\bs{\theta}}^{\text{ls}}-\bs{\theta}\right] 
		= \bs{v}^\prime W \left[\hat{\bs{\theta}}^{\text{ls}}-\bs{\theta}\right]- 
		(S_1 \bs{v})^\prime W \left[\hat{\bs{\theta}}^{\text{ls}}-\bs{\theta}\right],
	\end{equation}
	where the first term on the right hand side satisfies
		\begin{align}\label{eq:UC1.pf.1} 
			\mathbb{E}_{\bs{\theta}, B}
			\left[
			\left(\bs{v}^\prime W [\hat{\bs{\theta}}^{\text{ls}}-\bs{\theta}]
			\right)^2\right] 
			&=_{(1)}
			\mathbb{V}_{\bs{\theta}, B}
			\left[
			\bs{v}^\prime W [\hat{\bs{\theta}}^{\text{ls}}-\bs{\theta}]
			\right] 
			=
			\bs{v}' W L^- W \bs{v} \nonumber\\
			&\leq_{(2)}
			\bs{v}^\prime W \bs{v} 
			\cdot 
			\bar{\rho}(W^{1/2}L^-W^{1/2}) 
			 \leq_{(3)} r\bar{\mu}^2  \bar{\rho}(W) \cdot \bar{\rho}(W^{1/2}L^-W^{1/2}) \nonumber\\
			&\leq_{(4)}  M \cdot \delta_{r,c}^{-2} \frac{1}{(r+c)\bar{\rho}(W^{1/2}L^-W^{1/2})} 
			=_{(5)} O(\delta_{r,c}^{-2}),
		\end{align}		
	where $=_{(1)}$ uses the fact that the second moment of a mean-zero variable is its variance,  $\leq_{(2)}$ uses 
	$\left|\bs{x}^\prime V \bs{x} \right|
	\leq \bs{x}^\prime \bs{x} \cdot \lVert V \rVert$, an inequality used repeatedly in the proofs, $\leq_{(3)}$ uses that $||v||^2 \le r \bar{\mu}^2$, $=_{(4)}$ follows from $r\bar{\rho}(W)=O(1)$ by Assumption~\ref{ass:graph.connect.1}(ii) and $=_{(5)}$ follows from Assumption~\ref{ass:graph.connect.1}(iii).
	The second term satisfies
	\begin{equation}
		\begin{aligned}[b]
			&\hspace{1.5em}
			\mathbb{E}_{\bs{\theta}, B}
			\left[	\left(
			(S_1\bs{v})^\prime W \left[\hat{\bs{\theta}}^{\text{ls}}-\bs{\theta}\right]
			\right)^2\right] 
			=_{(1)}
			\mathbb{V}_{\bs{\theta}, B}
			\left[
			(S_1\bs{v})^\prime W \left[\hat{\bs{\theta}}^{\text{ls}}-\bs{\theta}\right]
			\right] 
			=
			(S_1 \bs{v})^\prime 
			W L^- W 
			(S_1 \bs{v}) \\
			&=_{(2)}
			\bs{v}'
			B' G B^{\dagger'} \mathcal{R}' WL^-W \mathcal{R}  B^\dagger G B 
			\bs{v} 
			\leq
			||G||^2 ||B||^2 
			||\bs{v}||^2 \bar{\rho}(B^{\dagger'} \mathcal{R}' WL^-W \mathcal{R}  B^\dagger) \\
			&\leq_{(3)}
			M\cdot
			r \cdot \bar{\rho}(B^{\dagger'} \mathcal{R}' WL^-W \mathcal{R}  B^\dagger)
			= M \cdot
			r \cdot \bar{\rho}^2(W^{1/2} L^- W^{1/2})=O\left(
			\delta_{r,c}^{-2}
			\right),
		\end{aligned} \label{A.31}
	\end{equation}
	where  $=_{(1)}$ uses the fact that the second moment of a mean-zero random variable is its variance, $=_{(2)}$ uses Lemma \ref{lm:pw.equalities}(c), $\leq_{(3)}$ holds for a finite constant $M$ using $||B||$ is bounded by Assumption~\ref{ass.degree}, $||G||\le 1$ by Lemma \ref{lm:pw.equalities}(d), $||v||^2 \le r \bar{\mu}^2$, and the final $	\delta_{r,c}^{-2}$ terms follows from $r/c$ is bounded from above and below by Assumption~\ref{ass.degree}.

	The proof for $\mathcal{D}_3(\bs{\lambda})$ follows through the same argument as that for 
	$\mathcal{D}_2$ with $v$ replaced by $\theta$, $||v||^2 \le r \bar{\mu}^2$ replaced with $||\theta||^2 \le (r+c) M$  for some constant $M$ by Assumption~\ref{ass:reg.cond}.
	
	To study $\mathcal{D}_4$, we have
	\begin{align}
		&\hspace{1.5em} 
		\mathbb{E}_{\bs{\theta}, B}
		\left[	\mathcal{D}_4 ^2 (\bs{\lambda})\right]=\mathbb{E}_{\bs{\theta}, B}
		\left[
		\left(
		[S_1(\hat{\bs{\theta}}^{\text{ls}}-\bs{\theta})]^\prime W [\hat{\bs{\theta}}^{\text{ls}}-\bs{\theta}]
		-\tr[S_1 L^- W]
		\right)\right]^2  \nonumber \\
		&=\mathbb{V}_{\bs{\theta}, B}
		\left[
		[S_1(\hat{\bs{\theta}}^{\text{ls}}-\bs{\theta})]^\prime W [\hat{\bs{\theta}}^{\text{ls}}-\bs{\theta}] \right] \nonumber 
		=_{(1)}
		\mathbb{V}_{\bs{\theta}, B}
		\left[
		[\mathcal{R}L^- B'\bs{U}]^\prime(S_1^\prime W)
		[\mathcal{R}L^- B'\bs{U}] \right] \nonumber\\
		&\leq_{(2)}
		M \cdot
		\tr\left[
		\left( BL^- S_1'WL^-B' \right)'\left( BL^- S_1'WL^-B' \right)
		\right] 
		=
		M \cdot
		\tr\left[
		WS_1L^-S_1'WL^- \right] \nonumber \\
		&=_{(3)}
		M \cdot
		\tr\left[
		W \mathcal{R} B^\dagger G B L^- B' G B^{\dagger'} \mathcal{R}' W L^-\right] 
		\leq
		M
		\cdot 
		|| G B L^- B' G ||
		\tr\left[
		B^{\dagger'}\mathcal{R}' WL^-W \mathcal{R} B^{\dagger} \right] \nonumber \\
		&\leq_{(4)}
		M
		\cdot 
		\tr\left[
		(W^{1/2} L^- W^{1/2})^2 \right] 
		\leq
		M
		\cdot 
		(r+c) \cdot \bar{\rho}^2(W^{1/2} L^- W^{1/2}) = O\left(
		\delta_{r,c}^{-2}
		\right),
	\end{align}
	where $=_{(1)}$ follows from Lemma 	\ref{lm:pw.equalities}(a), $\leq_{(2)}$ follows from Lemma \ref{lm:pw.equalities}(b),  $=_{(3)}$ follows from Lemma  \ref{lm:pw.equalities}(c), and $\leq_{(4)}$ follows from Lemma  \ref{lm:pw.equalities}(c), $BL^- B'$ being a projection matrix, and $\mathcal{R}B^{\dagger} B^{\dagger'}\mathcal{R}' =\mathcal{R}L^{\dagger}\mathcal{R}'=L^{-}$.
\end{proof}

\subsection{Proof of Uniform Convergence and Asymptotic Optimality}

\begin{proof}[Proof of Theorem \ref{thm:main}]
	We first show that asymptotic optimality follows immediately from the uniform convergence in Lemma~\ref{lm:unif.convg}. To see this, note that
	\begin{equation}
		\begin{aligned}
			& \hspace{1.5em} \mathbb{E}_{\bs{\theta},B}
			\left[
			l_w(\hat{\bs{\theta}}(\bs{\lambda}^{\text{ure}}),\bs{\theta}) 
			\geq
			l_w(\hat{\bs{\theta}}(\bs{\lambda}^{\text{ol}}),\bs{\theta}) 
			+
			\epsilon
			\right] \\
			& = 
			\mathbb{E}_{\bs{\theta},B}
			\left[
			l_w(\hat{\bs{\theta}}(\bs{\lambda}^{\text{ure}}),\bs{\theta}) 
			- \text{URE}(\bs{\lambda}^{\text{ure}})
			\geq
			l_w(\hat{\bs{\theta}}(\bs{\lambda}^{\text{ol}}),\bs{\theta}) 
			- \text{URE}(\bs{\lambda}^{\text{ure}})
			+
			\epsilon
			\right] \\
			& \leq _{(1)}
			\mathbb{E}_{\bs{\theta},B}
			\left[
			l_w(\hat{\bs{\theta}}(\bs{\lambda}^{\text{ure}}),\bs{\theta}) 
			- \text{URE}(\bs{\lambda}^{\text{ure}})
			\geq
			l_w(\hat{\bs{\theta}}(\bs{\lambda}^{\text{ol}}),\bs{\theta}) 
			- \text{URE}(\bs{\lambda}^{\text{ol}})
			+
			\epsilon
			\right] 
			\rightarrow _{(2)} 0.
		\end{aligned}
	\end{equation}
	where $\leq_{(1)}$ follows by $\text{URE}(\bs{\lambda}^{\text{ure}})\leq \text{URE}(\bs{\lambda}^{\text{ol}})$, because $\bs{\lambda}^{\text{ure}}$ minimize the URE by definition, and $\rightarrow_{(2)}$ follow from Lemma \ref{lm:unif.convg}.
\end{proof}

To prove uniform convergence, we first present a reparameterization  $(\tilde{\lambda}_a,\tilde{\lambda}_b)$ of $(\lambda_a,\lambda_b)$ such that 	$\tilde{\bs{\lambda}}:=(\mu,\tilde{\lambda}_a,\tilde{\lambda}_b,\phi)$ has a bounded parameter space. 
For constants $k,n>0$, define
\begin{equation}
	\begin{aligned}[b]
		\lambda_a^{-1} = [\tilde{\lambda}_a^{-n} -1]^k, \quad
		\lambda_b^{-1} = [\tilde{\lambda}_b^{-n} -1]^k.
	\end{aligned}
	\label{eq:tilde.lambda}
\end{equation}
The parameters $\tilde{\lambda}_a\in (0,1)$ and $\tilde{\lambda}_b \in (0,1)$ are strictly increasing bijective functions of $\lambda_a$ and $\lambda_b$, respectively. 
The constant $k,n>0$ are chosen such that  $[2(n+1)]^{-1}=\epsilon$, for some $\epsilon>0$ that satisfies Assumption~\ref{ass:graph.connect.2}, and $k=2(n+1)/n$. This implies that  $\epsilon=(kn)^{-1}$.

\begin{lemma}[Stochastic Equicontinuity]
	\label{lm:stoequi_bypiece}
	
	Suppose Assumptions~\ref{ass:reg.cond}, \ref{ass.degree}, \ref{ass:graph.connect.1} and \ref{ass:graph.connect.2} hold. Then, for $k=2,3,4$ and $x \in \{\mu,\tilde{\lambda}_a,\tilde{\lambda}_b,\phi\}$,	
	$
		\sup_{\tilde{\bs{\lambda}}\in\tilde{\mathcal{J}}}
		\left|
		\frac{\partial}{\partial x}\mathcal{D}_k(\bs{\tilde{\lambda}})
		\right|
		=O_p(1) .\quad 
	$
\end{lemma}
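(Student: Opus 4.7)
The plan is to upgrade the pointwise $L_2$ convergence of Lemma \ref{lm:UC1} to the uniform convergence stated in Lemma \ref{lm:unif.convg} via a stochastic equicontinuity argument on the bounded reparameterized set $\tilde{\mathcal{J}}=[-\bar{\mu},\bar{\mu}]\times(0,1)\times(0,1)\times[-\bar{\phi},\bar{\phi}]$. The strategy is to bound $\mathbb{E}_{\bs{\theta},B}[\sup_{\tilde{\bs{\lambda}}\in\tilde{\mathcal{J}}}|\partial\mathcal{D}_k/\partial x|^2]=O(1)$ and invoke Markov's inequality. Each $\partial\mathcal{D}_k/\partial x$ can be written in closed form: for $x\in\{\mu,\phi\}$, directly, using the matrix identity $\partial[L+\Lambda^*]^{-1}/\partial\phi = [L+\Lambda^*]^{-1}\Lambda^{1/2}\mathcal{A}\Lambda^{1/2}[L+\Lambda^*]^{-1}$ and $\partial\bs{v}/\partial\mu=(\bs{1}_r',\bs{0}_c')'$; for $x\in\{\tilde{\lambda}_a,\tilde{\lambda}_b\}$, via the chain rule. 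Since $\mu,\phi$ already range over bounded sets and $\Lambda^*\succeq(1-\bar{\phi})\Lambda$ holds by $\|\mathcal{A}\|\leq 1$, the $\mu$ and $\phi$ derivatives are controlled by the same techniques as in Lemma~\ref{lm:UC1}.

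For $x=\tilde{\lambda}_a$, implicit differentiation of \eqref{eq:tilde.lambda} gives
\[
	\frac{d\lambda_a}{d\tilde{\lambda}_a} = kn\,\tilde{\lambda}_a^{-n-1}(\tilde{\lambda}_a^{-n}-1)^{-k-1},
\]
which is of order $\lambda_a^{1+1/k}$ as $\tilde{\lambda}_a\to1^-$ (i.e.\ $\lambda_a\to\infty$) and of order $\lambda_a^{1-\epsilon}$, with $\epsilon=(kn)^{-1}$, as $\tilde{\lambda}_a\to 0^+$. The matrix derivative $\partial[L+\Lambda^*]^{-1}/\partial\lambda_a = -[L+\Lambda^*]^{-1}H_a[L+\Lambda^*]^{-1}$ with $H_a:=\partial\Lambda^*/\partial\lambda_a$ then enters $\partial\mathcal{D}_k/\partial\lambda_a$ through the matrix $\mathcal{R}[L+\Lambda^*]^{-1}H_a[L+\Lambda^*]^{-1}L$. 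Using $\|[L+\Lambda^*]^{-1}L\|\leq 1$ (since this matrix equals $I-[L+\Lambda^*]^{-1}\Lambda^*$ with spectrum in $[0,1]$), $\|\mathcal{R}\|\leq M$, and $\Lambda^*\succeq(1-\bar{\phi})\Lambda$, the rate at which $\partial\mathcal{D}_k/\partial\lambda_a$ decays at each boundary exactly compensates the growth of the Jacobian, so that the product is bounded in $L_2$ uniformly over $\tilde{\mathcal{J}}$. To carry out these bounds rigorously, I would follow the template of the proof of Lemma \ref{lm:UC1}: express the derivative as a bilinear or quadratic form in $\bs{U}$ via Lemma \ref{lm:pw.equalities}(a), apply Lemma \ref{lm:pw.equalities}(b) to replace the quadratic form by a trace, and reduce the problem to controlling spectral and trace quantities involving $L$, $L+\Lambda^*$, and $W$. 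Assumption \ref{ass:reg.cond} gives the needed bound $\|\bs{\theta}\|^2\leq (r+c)M$ for $\mathcal{D}_3$, while the trace bound is provided by Assumption \ref{ass:graph.connect.2}(ii) with the same $\epsilon=(kn)^{-1}$ that appears in the reparameterization.

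The main obstacle is the boundary behavior as $\lambda_a,\lambda_b$ approach $0$ or $\infty$: without a careful reparameterization, $\partial\mathcal{D}_k/\partial\lambda_a$ fails to be uniformly integrable on $(0,\infty)$. The precise choice of exponents $(k,n)$ with $\epsilon=[2(n+1)]^{-1}$ and $k=2(n+1)/n$ is what simultaneously (i) matches the blow-up rate of the Jacobian $d\lambda_a/d\tilde{\lambda}_a$ at each boundary to the compensating decay of $\partial\mathcal{D}_k/\partial\lambda_a$, and (ii) makes the permissible trace exponent in Assumption \ref{ass:graph.connect.2}(ii) exactly sufficient to close the $L_2$ bound uniformly over $\tilde{\mathcal{J}}$. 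Balancing these two constraints, one against the graph-connectivity trace condition and the other against the spectral degeneracy of $[L+\Lambda^*]^{-1}$, is the core technical content of the proof.
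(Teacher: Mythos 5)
Your high-level architecture matches the paper's: bound $\mathbb{E}_{\bs{\theta},B}\bigl[\sup_{\tilde{\bs{\lambda}}}|\partial\mathcal{D}_k/\partial x|\bigr]$, invoke Markov, and rely on the reparameterization \eqref{eq:tilde.lambda} so that the Jacobian's blow-up at the boundaries is compensated. Your computation of $d\lambda_a/d\tilde{\lambda}_a$ and its rates $\lambda_a^{1+1/k}$ and $\lambda_a^{1-\epsilon}$ at the two boundaries is correct, and you correctly tie the exponent $\epsilon=(kn)^{-1}$ to Assumption~\ref{ass:graph.connect.2}(ii). However, the mechanism you propose for the compensation does not work. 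You bound the derivative through the operator norm of $\mathcal{R}[L+\Lambda^*]^{-1}H_a[L+\Lambda^*]^{-1}L$ times the Jacobian. Count the powers: $[L+\Lambda^*]^{-1}L=I-[L+\Lambda^*]^{-1}\Lambda^*$ consumes one full power of $[L+\Lambda^*]^{-1}$, leaving a single power to pair with $H_a\cdot d\lambda_a/d\tilde{\lambda}_a\sim\lambda_a^{1-\epsilon}$; since $\|[L+\Lambda^*]^{-1}\|=O\bigl((\lambda_a\vee\lambda_b)^{-1}\bigr)$ when both precisions are small (Lemma~\ref{lm:deriv.bound}), the product is $O(\lambda_a^{-\epsilon})$, which is \emph{not} uniformly bounded — and the situation is worse near the small nonzero eigenvalues of $L$ permitted by the weak-connectivity assumptions. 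The paper's resolution, which your proposal lacks, is to never bound $\partial S_1/\partial x$ in operator norm at all: it factors $S_1=\mathcal{R}[L^\dagger]^{1/2}V_\lambda L^{1/2-\epsilon}$ with $V_\lambda=L^{1/2}[L+\Lambda^*]^{-1}L^{1/2+\epsilon}$ (Lemma~\ref{lm: Vdefine}), proves the \emph{deterministic} uniform bound $\sup_{\tilde{\bs{\lambda}}}\|\partial V_\lambda/\partial x\|=O(1)$ (Lemma~\ref{lm:bd.deriv}) — the sandwiching powers $L^{1/2}$ and $L^{1/2+\epsilon}$ supply exactly the extra $1+\epsilon$ powers needed to absorb the deficit above — and then attaches the residual factors $[L^\dagger]^{1/2}W_{\mathcal{R}}$ and $L^{1/2-\epsilon}$ to the $\lambda$-free random vector $\hat{\bs{\theta}}^{\text{ls}}-\bs{\theta}$, whose second moments are precisely $\tr([W_{\mathcal{R}}^{1/2}L^\dagger W_{\mathcal{R}}^{1/2}]^2)$ and $\tr([L^\dagger]^{2\epsilon})$, controlled by Assumption~\ref{ass:graph.connect.2}. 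This separation is also what lets the supremum pass inside the expectation: your plan to "apply Lemma~\ref{lm:pw.equalities}(b) to replace the quadratic form by a trace" only yields a pointwise-in-$\bs{\lambda}$ variance bound and cannot handle $\mathbb{E}[\sup_{\bs{\lambda}}(\cdot)^2]$ unless the $\bs{\lambda}$-dependence has first been isolated into a deterministic, uniformly bounded factor.

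Two secondary problems. First, $\|[L+\Lambda^*]^{-1}L\|\le1$ does not follow from the spectrum of $I-[L+\Lambda^*]^{-1}\Lambda^*$ lying in $[0,1]$: the product is not symmetric, so its operator norm can exceed its spectral radius. The paper instead bounds $\|I-\Lambda^*[L+\Lambda^*]^{-1}\|$ by $1+\|\Lambda^*[L+\Lambda^*]^{-1}\|=O(1)$ via the regime analysis of \eqref{eq:deriv.small.a.2.2}, and handles fractional powers by the repeated-squaring argument of Lemma~\ref{lm:deriv.bound.lower}. Second, the $\phi$-derivative is not controlled merely because $\phi$ ranges over a bounded set: differentiating $\Lambda^*$ in $\phi$ produces a factor $\lambda_a^{1/2}\lambda_b^{1/2}\mathcal{A}$ that is unbounded over $\mathcal{J}$ and requires the same regime-by-regime balancing against $[L+\Lambda^*]^{-1}$ as the $\tilde{\lambda}_a$ case (Remarks~\ref{rmk:deriv.phi} and~\ref{rmk:deriv.phi.1}); only the $\mu$-derivative is genuinely easy.
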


\begin{proof}[Proof of Lemma~\ref{lm:unif.convg}]
	Because $\tilde{\bs{\lambda}}$ is a bijective reparameterize of $\bs{\lambda}$, following Lemma~\ref{lm:decomp}, it is sufficient to show the uniform convergence of $\textstyle\sup_{\tilde{\bs{\lambda}}\in \tilde{\mathcal{J}}}
	\left|
	\mathcal{D}_k(\bs{\tilde{\lambda}})
	\right| 
	=
	o_p(1)$ for $k=2,3,4$, and  $\mathcal{D}_1(\bs{\tilde{\lambda}})=o_p(1).$ The latter follows from Lemma~\ref{lm:UC1} because it does not depend on $\bs{\tilde{\lambda}}$. 
	
	The uniform convergence of $\mathcal{D}_k$ for $k=2,3,4$ follows from (i) their pointwise convergence implied by Lemma~\ref{lm:UC1}, (ii) the parameter space is bounded, and (iii) stochastic equicontinuity in Lemma~\ref{lm:stoequi_bypiece}, following Theorem 1 of \cite{Andrews1992}.
\end{proof}

Below we focus on the proof of the stochastic continuity condition in Lemma~\ref{lm:stoequi_bypiece}. The following notation is needed to set up the proof. Define any power of $L$ (or $  L^\dagger$) with the eigendecomposition of $L$. Define 
\begin{eqnarray}\label{eq:s1V.check}
	V_\lambda:=L^{1/2}  [L+\Lambda^*]^{-1}  L^{1/2+\epsilon},
\end{eqnarray}
where 
$\epsilon>0$ and $\epsilon^{-1}$ is a given large positive even number. 
Eventually, a critical ingredient in the proof of stochastic equicontinuity is to show that the following holds for $V_\lambda$.

\begin{lemma}
	\label{lm:bd.deriv}
	
	Suppose Assumptions~\ref{ass:reg.cond}, \ref{ass.degree}, \ref{ass:graph.connect.1} and \ref{ass:graph.connect.2} hold. Then, for $x \in \{\tilde{\lambda}_a,\tilde{\lambda}_b,\phi\}$,
	$\textstyle\sup_{\tilde{\bs{\lambda}}\in\tilde{\mathcal{J}}} 
	\Vert \frac{\partial V_\lambda}{\partial x}\Vert = O(1)$.

\end{lemma}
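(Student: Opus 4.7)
The plan is to compute $\partial V_\lambda/\partial x$ explicitly via matrix calculus and then bound the spectral norm using resolvent-type inequalities together with the spectral calibration built into the reparameterization \eqref{eq:tilde.lambda}. Setting $M := L + \Lambda^*$, I rely on two baseline facts. First, because $\|\mathcal{A}\| \le 1$, the factorization $\Lambda^* = \Lambda^{1/2}(I-\phi\mathcal{A})\Lambda^{1/2}$ gives the PSD ordering $(1-\bar{\phi})\Lambda \preceq \Lambda^* \preceq (1+\bar{\phi})\Lambda$. Second, Assumption~\ref{ass.degree} yields $\|L\|=O(1)$ via a Gershgorin-type bound $\|L\| \le 2\max_i L_{ii}$ (each row of $B$ contains exactly two unit entries), and hence $\|L^\epsilon\|=O(1)$ as well.

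The central technical ingredient is a \emph{cross-resolvent} inequality $\|L^{1/2}M^{-1}\Lambda^{1/2}\| \le (1-\bar{\phi})^{-1/2}$, which I would prove by squaring, using $\Lambda \preceq (1-\bar{\phi})^{-1}\Lambda^*$, and then invoking the identity $M^{-1}\Lambda^* M^{-1} = M^{-1} - M^{-1}LM^{-1}$ together with $L^{1/2}M^{-1}L^{1/2} \preceq I$ (which follows from $L \preceq M$). Applied to the $\phi$ derivative, which by the resolvent identity and $\partial\Lambda^*/\partial\phi = -\Lambda^{1/2}\mathcal{A}\Lambda^{1/2}$ equals $L^{1/2}M^{-1}\Lambda^{1/2}\mathcal{A}\Lambda^{1/2}M^{-1}L^{1/2+\epsilon}$, this directly gives $\sup_{\tilde{\bs{\lambda}}}\|\partial V_\lambda/\partial\phi\| = O(1)$ after combining with $\|\mathcal{A}\| \le 1$ and $\|L^\epsilon\|=O(1)$.

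For $x \in \{\tilde\lambda_a,\tilde\lambda_b\}$, I apply the chain rule. Differentiating $\Lambda^* = \Lambda - \phi\Lambda^{1/2}\mathcal{A}\Lambda^{1/2}$ yields the useful alternate form $\partial\Lambda^*/\partial\lambda_a = (2\lambda_a)^{-1}(E_a\Lambda^*+\Lambda^*E_a)$ with $E_a := \diag(I_r,0_c)$, while the reparameterization delivers $d\lambda_a/d\tilde\lambda_a = kn\tilde\lambda_a^{-n-1}\lambda_a^{1+1/k}$. Exploiting the absorption identity $M^{-1}\Lambda^* = I - M^{-1}L$, each piece of $\partial V_\lambda/\partial\lambda_a$ can be rewritten as a sum of terms in which outer $L^{1/2}$ and $L^{1/2+\epsilon}$ factors pair with resolvents through the cross-resolvent bound above.

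The main obstacle, and the reason for the specific reparameterization, is to show that the combined scalar prefactor and matrix norm produce a uniform bound at the boundary limits $\tilde\lambda_a \to 0^+$ and $\tilde\lambda_a \to 1^-$; the calibration $\epsilon = (kn)^{-1} = [2(n+1)]^{-1}$ with $k = 2(n+1)/n$ is designed precisely for this. Near $\lambda_a \to \infty$, the Jacobian grows like $\lambda_a^{1+1/k}$ while the matrix part decays through powers of $M^{-1}$ on the $E_a$-block; the residual exponent $1/k - 1$ is non-positive because $k > 1$. Near $\lambda_a \to 0$, the degeneration $\Lambda^* E_a = \sqrt{\lambda_a}\,\Lambda^{1/2}(I-\phi\mathcal{A})E_a$ supplies a $\sqrt{\lambda_a}$-factor that, combined with the $L^\epsilon$ smoothing, counteracts the $\tilde\lambda_a^{-1}$ Jacobian blowup. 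The most delicate step is the $\lambda_a \to 0$ limit, where the non-commutativity of $\Lambda$ with $L$ precludes a block-diagonal argument and forces the cross-resolvent estimates to be applied to the composite operator rather than factored into separate blocks; the symmetric case for $\tilde\lambda_b$ is handled identically.
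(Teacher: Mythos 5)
Your scaffolding matches the paper's: differentiate $V_\lambda$ explicitly through the resolvent identity, carry the Jacobian of the reparameterization \eqref{eq:tilde.lambda}, and verify that the calibration $\epsilon=(kn)^{-1}$ absorbs the blow-up at the boundaries. Your treatment of the $\phi$-derivative is complete and in fact cleaner than the paper's: writing $\partial V_\lambda/\partial\phi = L^{1/2}M^{-1}\Lambda^{1/2}\mathcal{A}\Lambda^{1/2}M^{-1}L^{1/2+\epsilon}$ with $M:=L+\Lambda^*$, the cross-resolvent bound $\Vert L^{1/2}M^{-1}\Lambda^{1/2}\Vert\le(1-\bar\phi)^{-1/2}$ (via $\Lambda\preceq(1-\bar\phi)^{-1}\Lambda^*$ and $M^{-1}\Lambda^*M^{-1}=M^{-1}-M^{-1}LM^{-1}$) together with $\Vert\mathcal{A}\Vert\le1$ and $\Vert L^\epsilon\Vert=O(1)$ closes that case in one line, where the paper instead routes through block-inverse estimates (Remarks~\ref{rmk:deriv.phi} and \ref{rmk:deriv.phi.1}).

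The gap is in the $\tilde\lambda_a$ (and symmetrically $\tilde\lambda_b$) case, and it sits exactly where you assert that non-commutativity of $\Lambda$ with $L$ \emph{precludes} a block argument: the paper's proof is built on Schur-complement block bounds, and they are not optional. After your bookkeeping, the surviving term in the small-$\lambda$ regime is of the form $\lambda_a^{1/2-\epsilon}\Vert E_aM^{-1}L^{1/2+\epsilon}\Vert$ up to uniformly bounded factors. The only structure-free resolvent bound is $\Vert M^{-1}\Vert\le[(1-\bar\phi)\min(\lambda_a,\lambda_b)]^{-1}$ (and the refinement $E_a\preceq\lambda_a^{-1}(1-\bar\phi)^{-1}M$ still only yields $\Vert E_aM^{-1}L^{1/2+\epsilon}\Vert=O(\lambda_a^{-1/2})$), which leaves an unbounded residual of order $(\lambda_a/\lambda_b)^{1/2-\epsilon}$ or $\lambda_a^{-\epsilon}$ when $\lambda_a$ and $\lambda_b$ are of different orders. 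What actually rescues the bound is Lemma~\ref{lm:deriv.bound}, $\Vert[L+\Lambda^*]^{-1}\Vert=O(\lambda_a^{-1}\wedge\lambda_b^{-1})$ --- control by the \emph{larger} precision --- which is a genuinely block-structural fact derived from $B_1'B_1\succeq I$, $B_2'B_2\succeq I$ and the Schur complements of Lemma~\ref{lm:blk.inv.1}: any direction on which $L$ is nearly degenerate must carry comparable mass on both blocks. Similarly, in the mixed regime with one precision large and one small, the paper needs the block-wise rates $[V^{-1}]_1=O(1\wedge\lambda_a^{-1})$ and $[V^{-1}]_{12}=O(1\wedge\lambda_a^{-1})\,O(\lambda_a^{1/2}\lambda_b^{1/2})\,O(1\wedge\lambda_b^{-1})$ of Lemma~\ref{lm:deriv.bound.non.small} and the factorization of Lemma~\ref{lm:factor} to pair the $\lambda_b^{1/2}\mathcal{A}_{21}$ entry of $\partial\Lambda^*/\partial\lambda_a$ with the correct resolvent block; a composite-operator bound cannot see this pairing. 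Your exponent accounting at $\lambda_a\to\infty$ is also off: the prefactor to be absorbed is $\lambda_a^{1-\epsilon}$, not $\lambda_a^{1/k-1}$, and it is matched precisely by $[V^{-1}]_1\lambda_a^{1-\epsilon}=O(1)$ --- again a block statement. So the proposal proves the $\phi$ case but not the precision cases; you would need to reinstate the block-inverse analysis you explicitly set aside.
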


\begin{lemma} \label{lm: Vdefine}
	$S_1 = \mathcal{R}  [L^\dagger]^{1/2} \cdot   V_\lambda \cdot L^{1/2-\epsilon}$.
\end{lemma}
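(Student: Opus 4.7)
The plan is to verify the identity by direct algebraic manipulation, using the eigenstructure of $L$ and a key annihilation property of the rotation matrix $\mathcal{R}$.

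First, I would rewrite $S_1$ in a form involving $L$ rather than $\Lambda^*$. From the definitions $S=\mathcal{R}[L+\Lambda^*]^{-1}\Lambda^*$ and $S_1=\mathcal{R}-S$, factoring $\mathcal{R}[L+\Lambda^*]^{-1}$ on the left gives
\begin{equation*}
S_1 = \mathcal{R}\bigl(I-[L+\Lambda^*]^{-1}\Lambda^*\bigr) = \mathcal{R}[L+\Lambda^*]^{-1}\bigl((L+\Lambda^*)-\Lambda^*\bigr) = \mathcal{R}[L+\Lambda^*]^{-1}L.
\end{equation*}
The target expression is $\mathcal{R}[L^\dagger]^{1/2}V_\lambda L^{1/2-\epsilon}=\mathcal{R}[L^\dagger]^{1/2}L^{1/2}[L+\Lambda^*]^{-1}L^{1/2+\epsilon}L^{1/2-\epsilon}$. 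Using the spectral (eigen) decomposition of $L$ with the convention that $L^\dagger$ inverts only nonzero eigenvalues, the product $L^{1/2+\epsilon}L^{1/2-\epsilon}$ equals $L$, while $[L^\dagger]^{1/2}L^{1/2}$ equals the orthogonal projector $P_L$ onto the range of $L$. Thus the target reduces to $\mathcal{R}P_L[L+\Lambda^*]^{-1}L$, and it suffices to show $\mathcal{R}P_L=\mathcal{R}$.

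The remaining step is to verify $\mathcal{R}(I-P_L)=0$. Under the assumption of a single connected component, the nullspace of $L=B'B$ is one-dimensional and spanned by $\bs{n}:=[\bs{1}_r^\prime,-\bs{1}_c^\prime]^\prime$, since $B\bs{n}=\bs{1}_r-\bs{1}_r=\bs{0}$ (each row of $B$ selects one entry from each block). Hence $I-P_L=\bs{n}\bs{n}^\prime/(r+c)$, so the claim reduces to showing $\mathcal{R}\bs{n}=\bs{0}$. Direct block-multiplication using $\mathcal{R}=\begin{bsmallmatrix} I_r & \tfrac{1}{c}\bs{1}_{r\times c}\\ 0 & I_c-\tfrac{1}{c}\bs{1}_{c\times c}\end{bsmallmatrix}$ yields
\begin{equation*}
\mathcal{R}\bs{n} = \begin{bmatrix} \bs{1}_r-\tfrac{1}{c}\bs{1}_{r\times c}\bs{1}_c \\ -\bs{1}_c+\tfrac{1}{c}\bs{1}_{c\times c}\bs{1}_c \end{bmatrix} = \begin{bmatrix} \bs{1}_r-\bs{1}_r \\ -\bs{1}_c+\bs{1}_c \end{bmatrix} = \bs{0}_{r+c},
\end{equation*}
so $\mathcal{R}P_L=\mathcal{R}$. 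Chaining the identities then gives $S_1=\mathcal{R}[L+\Lambda^*]^{-1}L=\mathcal{R}P_L[L+\Lambda^*]^{-1}L=\mathcal{R}[L^\dagger]^{1/2}V_\lambda L^{1/2-\epsilon}$, which is the claim.

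I do not expect a serious obstacle here: the only mild subtlety is being careful that $[L^\dagger]^{1/2}L^{1/2}$ is a projector rather than the identity (handled via the eigendecomposition) and that $\mathcal{R}$ annihilates the one-dimensional nullspace of $L$ (handled by the direct block computation above, exactly the same kind of computation used earlier in the paper to justify that $\hat{\bs{\theta}}^{\text{ls}}$ respects the normalization $\bs{1}_c^\prime\hat{\bs{\beta}}^{\text{ls}}=0$).
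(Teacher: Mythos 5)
Your proof is correct. The skeleton is the same as the paper's: both arguments first reduce $S_1=\mathcal{R}-S$ to $\mathcal{R}[L+\Lambda^*]^{-1}L$, observe that $[L^\dagger]^{1/2}V_\lambda L^{1/2-\epsilon}=L^\dagger L\,[L+\Lambda^*]^{-1}L$ by collapsing the fractional powers through the eigendecomposition, and then absorb the range projector $P_L=L^\dagger L$ into $\mathcal{R}$. Where you diverge is in how that last identity $\mathcal{R}P_L=\mathcal{R}$ is justified. The paper invokes its Lemma A.6(e), $\mathcal{R}=\mathcal{R}L^-L$, which is itself derived from an estimability argument (connectedness implies the normalized parameters are estimable, and then Searle's theorem gives $\bs{q}'P=\bs{q}'$), followed by the bookkeeping $\mathcal{R}L^-L=\mathcal{R}L^\dagger L$ via $L^-=\mathcal{R}L^\dagger\mathcal{R}'$, $\mathcal{R}^2=\mathcal{R}$, and $\mathcal{R}'L=L$. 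You instead identify the one-dimensional nullspace of $L=B'B$ explicitly as the span of $\bs{n}=[\bs{1}_r',-\bs{1}_c']'$ (using connectedness) and verify $\mathcal{R}\bs{n}=\bs{0}$ by direct block multiplication, so that $\mathcal{R}(I-P_L)=\mathcal{R}\bs{n}\bs{n}'/(r+c)=0$. Your route is more elementary and self-contained — it makes transparent exactly why the normalization encoded in $\mathcal{R}$ kills the unidentified direction — whereas the paper's route leverages machinery (the generalized-inverse identities and the estimability lemma) that it reuses elsewhere. Both are valid; the only point to keep in mind with your version is that the explicit characterization of $\ker L$ relies on the single-connected-component assumption, which the paper does impose.
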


\begin{proof}[Proof of Lemma~\ref{lm:stoequi_bypiece}]
	Define a positive semi-definite matrix $W_{\mathcal{R}}:=\mathcal{R}'W\mathcal{R}$.
	Note that $\tr [(W_\mathcal{R}^{1/2} L^\dagger
	W_\mathcal{R}^{1/2})^2]=\tr [(W^{1/2} L^{-}
	W^{1/2})^2]$, which is $O(1)$ by Assumption~\ref{ass:graph.connect.2}. 
	
	We first consider $\mathcal{D}_2(\bs{\tilde{\lambda}})$. By setting $\epsilon=1/2$ in the definition of $V_\lambda$, we have
	\begin{align}
		\mathcal{D}_2(\bs{\tilde{\lambda}}) =&	(Sv)' W (\hat{\bs{\theta}}^{\text{ls}}-\bs{\theta})
		= 
		v'W (\hat{\bs{\theta}}^{\text{ls}}-\bs{\theta})
		-v'S_1' W (\hat{\bs{\theta}}^{\text{ls}}-\bs{\theta}) \nonumber\\
		=&_{(1)}
		v'W (\hat{\bs{\theta}}^{\text{ls}}-\bs{\theta})
		-v' {V}_\lambda'   [L^\dagger]^{1/2} W_{\mathcal{R}} (\hat{\bs{\theta}}^{\text{ls}}-\bs{\theta}) ,\label{eq:Sbeta_0}
	\end{align}
	where $=_{(1)}$ follows from Lemma~\ref{lm: Vdefine} and 
	$\hat{\bs{\theta}}^{\text{ls}}-\bs{\theta} = \mathcal{R}( \hat{\bs{\theta}}^{\text{ls}}-\bs{\theta} )$ since both $ \hat{\bs{\theta}}^{\text{ls}}$ and $\bs{\theta}$ satisfy the normalization implied by $\mathcal{R}$. Note that in \eqref{eq:Sbeta_0}, $V_{\lambda}$ is the only part that depends on $x \in \{\tilde{\lambda}_a,\tilde{\lambda}_b,\phi\}$. We have
	\begin{align}		
		& 	\hspace{1.5em} \mathbb{E}_{\bs{\theta},B}\left[
		\sup_{\tilde{\bs{\lambda}}\in\tilde{\mathcal{J}}}
		\left|
		\frac{\partial}{\partial x}\mathcal{D}_2(\bs{\tilde{\lambda}})
		\right|
		\right] \nonumber 
		= 
		\mathbb{E}_{\bs{\theta},B}\left[
		\sup_{\tilde{\bs{\lambda}}\in\tilde{\mathcal{J}}}
		\left|
		v'\frac{\partial {V}_\lambda'}{\partial x}    [L^\dagger]^{1/2} W_{\mathcal{R}} (\hat{\bs{\theta}}^{\text{ls}}-\bs{\theta})
		\right|
		\right]  \nonumber \\
		& \le 
		\mathbb{E}_{\bs{\theta},B}\left[
		\sup_{\tilde{\bs{\lambda}}\in\tilde{\mathcal{J}}}
		\Vert v\Vert
		\Vert
		\frac{\partial {V}_{\lambda}}{\partial x}
		\Vert
		\Vert
		[L^\dagger]^{1/2}W_{\mathcal{R}}(\hat{\bs{\theta}}^{\text{ls}}-\bs{\theta})
		\Vert\right] \nonumber\\
		& \le 
		\left(\frac{1}{r^{1/2}}  \sup_{\tilde{\lambda}\in\tilde{\mathcal{J}}}
		\Vert v \Vert\right)
		\left(\sup_{\tilde{\bs{\lambda}}\in\tilde{\mathcal{J}}}
		\Vert\frac{\partial {V}_{\lambda}}{\partial x}\Vert\right)
		\left( r \cdot \mathbb{E}_{\bs{\theta},B}\left[\Vert 
		[L^\dagger]^{1/2}W_{\mathcal{R}}(\hat{\bs{\theta}}^{\text{ls}}-\bs{\theta})\Vert^{2}\right]\right)^{1/2}, 	\label{eq:Sbeta_1}
	\end{align}
	where we apply the Cauchy-Schwarz inequality and use the fact that
	$\hat{\bs{\theta}}^{\text{ls}}$ is the only term that is stochastic and it does
	not depend on $\tilde{\lambda}$.
	The first term in \eqref{eq:Sbeta_1} is $O(1)$ because $\mu \le \overline{\mu}$. The second term in \eqref{eq:Sbeta_1} is $O(1)$ by Lemma~\ref{lm:bd.deriv}. To see the third term in \eqref{eq:Sbeta_1} is also $O(1)$, note that 
	\begin{align}
		r\cdot 
		\mathbb{E}_{\bs{\theta},B}\left[\Vert  [L^\dagger]^{1/2}W_{\mathcal{R}}(\hat{\bs{\theta}}^{\text{ls}}-\bs{\theta})\Vert^{2}\right]\nonumber 
		= & r\cdot
		\mathbb{E}_{\bs{\theta},B}\left[\tr\left((\hat{\bs{\theta}}^{\text{ls}}-\bs{\theta})'W_{\mathcal{R}}  L^\dagger W_{\mathcal{R}}(\hat{\bs{\theta}}^{\text{ls}}-\bs{\theta})\right)\right]\nonumber \\
		=&_{(1)}  r\cdot
		\tr\left(W_{\mathcal{R}} L^\dagger W_{\mathcal{R}} L^\dagger \right)
		=r\cdot \tr( [W_{\mathcal{R}}^{1/2}L^\dagger W_{\mathcal{R}}^{1/2}]^2 ), \label{eq:Sbeta_13}
	\end{align}
	where $=_{(1)}$ uses $\mathbb{V}(\hat{\bs{\theta}}^{\text{ls}})=  L^{-}=\mathcal{R}L^\dagger\mathcal{R}'$ and \eqref{eq:Sbeta_13} is $O(1)$ by Assumptions~\ref{ass.degree} and \ref{ass:graph.connect.2}.
	Among all the terms, 
	$(Sv)' W(\hat{\bs{\theta}}^{\text{ls}}-\bs{\theta})$ is the only one that depends
	on $\mu$. 
	We have
	\begin{equation}
		\sup_{\tilde{\bs{\lambda}}\in\tilde{\mathcal{J}}}
		\left|\frac{\partial}{\partial\mu}(Sv)'W(\hat{\bs{\theta}}^{\text{ls}}-\bs{\theta})
		\right|
		=
		\sup_{\tilde{\bs{\lambda}}\in\tilde{\mathcal{J}}}
		\left|(Sd_{r})' W(\hat{\bs{\theta}}^{\text{ls}}-\bs{\theta})\right|,
	\end{equation}
	where $d_{r}=(1_{r}',0_{c}')'$ and the right hand side is $o_p(1)$ by the same arguments used to show $\sup_{\tilde{\bs{\lambda}}\in \tilde{\mathcal{J}}}
	\left|
	\mathcal{D}_2(\bs{\tilde{\lambda}})
	\right| 
	=
	o_p(1)$ with $\mu$ in $\mathcal{D}_2(\bs{\tilde{\lambda}})$ replaced by $1$.
	The result for $\mathcal{D}_3(\bs{\tilde{\lambda}})$ follows the same arguments as for $\mathcal{D}_2(\bs{\tilde{\lambda}})$ under $\tfrac{1}{r+c}\bs{\theta}'\bs{\theta} = O(1)$ by Assumption~\ref{ass:reg.cond}.

	Next, we consider $\mathcal{D}_4(\bs{\tilde{\lambda}})$. We first show the stochastic equicontinuity of 
	$(S_1(\hat{\bs{\theta}}^{\text{ls}}-\bs{\theta}))' W (\hat{\bs{\theta}}^{\text{ls}}-\bs{\theta})$ in $x\in\{\tilde{\lambda}_a,\tilde{\lambda}_b,\phi\}$.
	Using the definition of $V_{\lambda}$, we have
	\begin{align}
		& \hspace{1.5em} \mathbb{E}_{\bs{\theta},B}\left[
		\sup_{\tilde{\bs{\lambda}}\in\tilde{\mathcal{J}}}
		\left|
		\frac{\partial}{\partial x}
		\left(S_1(\hat{\bs{\theta}}^{\text{ls}}-\bs{\theta})\right)' W (\hat{\bs{\theta}}^{\text{ls}}-\bs{\theta})
		\right|\right]\nonumber \\
		& = 
		\mathbb{E} _{\bs{\theta},B}\left[
		\sup_{\tilde{\bs{\lambda}}\in\tilde{\mathcal{J}}}
		\left|
		\left( \frac{\partial V_\lambda}{\partial x} L^{1/2-\epsilon} (\hat{\bs{\theta}}^{\text{ls}}-\bs{\theta})\right)' [L^\dagger]^{1/2}W_{\mathcal{R}}(\hat{\bs{\theta}}^{\text{ls}}-\bs{\theta})
		\right|\right]\nonumber \\
		& \le  \sup_{\tilde{\bs{\lambda}}\in\tilde{\mathcal{J}}} \Vert 
		\frac{\partial {V}_\lambda}{\partial x} \Vert
		\left(
		\mathbb{E} _{\bs{\theta},B}\left[ \Vert 
		L^{1/2-\epsilon}(\hat{\bs{\theta}}^{\text{ls}}-\bs{\theta})
		\Vert^2 \right]
		\right)^{1/2}
		\left(
		\mathbb{E} _{\bs{\theta},B}\left[\Vert [L^\dagger]^{1/2}W_{\mathcal{R}} (\hat{\bs{\theta}}^{\text{ls}}-\bs{\theta}) \Vert^2 \right]
		\right)^{1/2}
		\nonumber \\
		& = 
		\sup_{\tilde{\bs{\lambda}}\in\tilde{\mathcal{J}}} \Vert 
		\frac{\partial {V}_\lambda}{\partial x} \Vert \cdot  
		\left[
		\tr([L^\dagger]^{2\epsilon}) 
		\right]^{1/2}
		\left[
		\tr( W_{\mathcal{R}}L^\dagger W_{\mathcal{R}} L^\dagger )
		\right]^{1/2}
		\nonumber \\
		&	=  
		\sup_{\tilde{\bs{\lambda}}\in\tilde{\mathcal{J}}}
		\Vert 
		\frac{\partial {V}_\lambda}{\partial x} \Vert  \cdot  
		\left[
		\frac{1}{r+c} \tr( [L^\dagger]^{2\epsilon})
		\right]^{1/2} 
		\left[
		(r+c) \cdot\tr( [W_{\mathcal{R}}^{1/2}L^\dagger W_{\mathcal{R}}^{1/2}]^2 )
		\right]^{1/2},
	\end{align}
	where the three items in the last equality are all $O(1)$ by Lemma~\ref{lm:bd.deriv} and Assumption~\ref{ass:graph.connect.2}.  
	
	Finally, we consider the equicontinuity of $\tr(S_1  L^{-}W)$
	in $x\in\{\tilde{\lambda}_a,\tilde{\lambda}_b,\phi\}$.
	Note that this item does not depend on $\mu$.
	Using $B\mathcal{R}=B$, we can write $\tr(S_{1}  L^{-}W)=\tr(S_{1} L^\dagger \mathcal{R}'W),$ and
	\begin{align}
		& \hspace{2em} \sup_{\tilde{\bs{\lambda}}\in\tilde{\mathcal{J}}}
		\left|
		\frac{\partial}{\partial x}\tr(S_{1}  L^\dagger \mathcal{R}'W)
		\right| 
		 =  _{(1)} 
		\sup_{\tilde{\bs{\lambda}}\in\tilde{\mathcal{J}}}
		\left|
		\frac{\partial}{\partial x}
		\tr\left(
		\mathcal{R} [L^\dagger]^{1/2} \cdot V_\lambda \cdot [L^\dagger]^{1/2+\epsilon}   \mathcal{R}' W
		\right) 
		\right|
		\nonumber \\
		& =  _{(2)} 
		\sup_{\tilde{\bs{\lambda}}\in\tilde{\mathcal{J}}}
		\left|
		\tr\left(
		W_{\mathcal{R}}^{1/2}   
		[L^\dagger]^{1/2} \frac{\partial V_\lambda }{\partial x} \cdot [L^\dagger]^{1/2+\epsilon}  
		W_{\mathcal{R}}^{1/2}
		\right) 
		\right|
		\nonumber \\
		& \le_{(3)} 
		\sup_{\tilde{\bs{\lambda}}\in\tilde{\mathcal{J}}}
		\left(
		\tr\left[
		\frac{\partial V_\lambda '}{\partial x}
		[L^\dagger]^{2\epsilon}
		\frac{\partial V_\lambda }{\partial x}
		\right]
		\right)^{1/2} 
		\cdot 
		\left(
		\tr\left[
		(W_\mathcal{R}^{1/2} L^\dagger
		W_\mathcal{R}^{1/2})^2
		\right]
		\right)^{1/2} \nonumber \\
		& \leq 
		\sup_{\tilde{\bs{\lambda}}\in\tilde{\mathcal{J}}} ||\frac{\partial V_\lambda }{\partial x}||\cdot 
		\left( 
		\frac{1}{r+c}
		\tr
		[L^\dagger]^{2\epsilon}
		\right)^{1/2} 
		\cdot 
		\left(
		(r+c)
		\tr [(W_\mathcal{R}^{1/2} L^\dagger
		W_\mathcal{R}^{1/2})^2]
		\right)^{1/2}, \label{eq:S_11} 
	\end{align}
	where the first term in \eqref{eq:S_11} is $O(1)$ by Lemma~\ref{lm:bd.deriv} and the last two terms are $O(1)$ by  Assumption~\ref{ass:graph.connect.2}. 
\end{proof}

\begin{lemma}[Matrix Equalities]
	\label{lm:equalities}
	\hspace{1em} 
	\newline
	\emph{(a).}		$[\mathcal{R}L^\dagger \mathcal{R}']_2
	= 
	M_c[L_{2,\perp}]^\dagger M_c
	$,\
	\emph{(b).} $B \mathcal{R}
	=
	B$, \
	\label{eq:E3} 
	\emph{(c).} $L^- L  L^-
	=
	L^-
	$, \
	\label{eq:E4} 
	\emph{(d).} $L L^- L
	=
	L$,
	\label{eq:E5} \\
	\emph{(e).} $
	\mathcal{R}
	=
	\mathcal{R} L^-L $.
	\label{eq:E6}
\end{lemma}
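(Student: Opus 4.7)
The plan is to establish these five matrix identities in a dependency order: (b) by direct block multiplication, (c) and (d) as immediate consequences of (b) combined with Moore-Penrose properties, (e) via idempotency of $\mathcal{R}$ plus a check against the null space of $L$, and finally (a) by a Frisch-Waugh-Lovell argument identifying two expressions for the sampling variance of $\hat{\bs{\beta}}^{\text{ls}}$.

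For (b), I would multiply $B\mathcal{R}$ block by block. Because $B_1$ and $B_2$ are row-selection matrices, each of their rows sums to one, so $B_1\bs{1}_r = B_2\bs{1}_c = \bs{1}_{rT}$. The first block column gives $B_1\cdot I_r = B_1$ trivially, while the second gives $B_1\cdot\tfrac{1}{c}\bs{1}_{r\times c} + B_2 M_c = \tfrac{1}{c}\bs{1}_{rT\times c} + B_2 - \tfrac{1}{c}\bs{1}_{rT\times c} = B_2$. Part (b) then yields $\mathcal{R}'L = \mathcal{R}'B'B = (B\mathcal{R})'B = L$ and, by symmetry, $L\mathcal{R} = L$. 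Substituting these: $LL^-L = L\mathcal{R}L^\dagger\mathcal{R}'L = LL^\dagger L = L$ by the Moore-Penrose property, giving (d); and $L^-LL^- = \mathcal{R}L^\dagger(\mathcal{R}'L\mathcal{R})L^\dagger\mathcal{R}' = \mathcal{R}L^\dagger L L^\dagger\mathcal{R}' = \mathcal{R}L^\dagger\mathcal{R}' = L^-$, giving (c).

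For (e), first verify $\mathcal{R}^2 = \mathcal{R}$ by block computation, exploiting $M_c^2 = M_c$ and $\bs{1}_{r\times c}M_c = 0$. Then $\mathcal{R}L^-L = \mathcal{R}\mathcal{R}L^\dagger\mathcal{R}'L = \mathcal{R}L^\dagger L$ using $\mathcal{R}'L = L$. The remaining identity $\mathcal{R}L^\dagger L = \mathcal{R}$ is equivalent to every row of $\mathcal{R}$ lying in $\text{Range}(L)$, i.e., orthogonal to $\text{Null}(L)$. Under the assumed single connected component, $\text{Null}(L) = \text{Null}(B)$ is spanned by $(\bs{1}_r',-\bs{1}_c')'$ (solving $v_{1,i} + v_{2,j(i,t)} = 0$ on the connected graph forces $\bs{v}_1$ and $\bs{v}_2$ to be constant with opposite signs). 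A direct inner-product check: for $i\le r$ row $(e_i',\tfrac{1}{c}\bs{1}_c')$ gives $1 - \tfrac{1}{c}\cdot c = 0$; for $j>r$ row $(0_r', e_k' - \tfrac{1}{c}\bs{1}_c')$ gives $0 - (1-1)=0$. Hence (e).

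The main obstacle is (a), which I would establish by computing $\mathbb{V}_{\bs{\theta},B}(\hat{\bs{\beta}}^{\text{ls}})$ in two ways. From $\hat{\bs{\theta}}^{\text{ls}} = L^-B'\bs{Y}$ and (c), $\mathbb{V}_{\bs{\theta},B}(\hat{\bs{\theta}}^{\text{ls}}) = L^-LL^-\sigma^2 = L^-\sigma^2 = \mathcal{R}L^\dagger\mathcal{R}'\sigma^2$, whose lower-right block is $[\mathcal{R}L^\dagger\mathcal{R}']_2\sigma^2$. Alternatively, by Frisch-Waugh-Lovell partialling out $B_1$, the normal equations for $\bs{\beta}$ become $L_{2,\perp}\bs{\beta} = B_{2,\perp}'\bs{Y}$; since $\text{Null}(L_{2,\perp}) = \text{span}\{\bs{1}_c\}$, the Moore-Penrose solution $L_{2,\perp}^\dagger B_{2,\perp}'\bs{Y}$ lies in $\text{Range}(L_{2,\perp})\subseteq \{\bs{\beta}:\bs{1}_c'\bs{\beta}=0\}$ and is therefore the unique restricted LS solution — hence coincides with $\hat{\bs{\beta}}^{\text{ls}}$. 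Its variance is $L_{2,\perp}^\dagger L_{2,\perp}L_{2,\perp}^\dagger\sigma^2 = L_{2,\perp}^\dagger\sigma^2$. Equating, $[\mathcal{R}L^\dagger\mathcal{R}']_2 = L_{2,\perp}^\dagger$. Finally, because $\bs{1}_c\in\text{Null}(L_{2,\perp})$ and $L_{2,\perp}^\dagger$ is symmetric, $L_{2,\perp}^\dagger\bs{1}_c = 0$, so $M_c L_{2,\perp}^\dagger M_c = L_{2,\perp}^\dagger$, completing (a). The subtlety lies in rigorously justifying that the Moore-Penrose formula for $\hat{\bs{\beta}}^{\text{ls}}$ from the partialled regression equals the lower block of $L^-B'\bs{Y}$; this requires showing the restricted LS problem has a unique solution, which holds under connectivity.
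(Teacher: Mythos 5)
Your proposal is correct, and parts (a)--(d) follow essentially the paper's own route: (b) by row-sums of the selection matrices, (c) and (d) by expanding $L^-=\mathcal{R}L^\dagger\mathcal{R}'$ and invoking the Moore--Penrose identities (your explicit use of $\mathcal{R}'L=L\mathcal{R}=L$, derived from (b), is a welcome bit of detail the paper leaves implicit), and (a) by identifying the restricted LS estimator of $\bs{\beta}$ with the Moore--Penrose solution of the partialled-out normal equations and equating the two variance expressions. Your uniqueness argument for (a) — that $L_{2,\perp}^\dagger B_{2,\perp}'\bs{Y}$ lies in $\mathrm{Range}(L_{2,\perp})=\bs{1}_c^\perp$ and is therefore the unique solution satisfying the normalization — is a slight streamlining of the paper's version, which instead verifies that $M_cL_{2,\perp}^\dagger M_c$ is a generalized inverse of $L_{2,\perp}$ and parametrizes the full solution set. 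The genuine divergence is in (e): the paper deduces $\mathcal{R}=\mathcal{R}L^-L$ from Searle's estimability criterion combined with a separate lemma showing that connectivity of $\mathcal{G}$ implies estimability of the normalized parameters, whereas you give a direct linear-algebra argument — idempotency of $\mathcal{R}$ reduces the claim to $\mathcal{R}L^\dagger L=\mathcal{R}$, i.e.\ orthogonality of the rows of $\mathcal{R}$ to $\mathrm{Null}(L)=\mathrm{span}\{(\bs{1}_r',-\bs{1}_c')'\}$, which you verify by inspection. Both arguments use connectivity in the same essential place (to pin down the one-dimensional null space of $B$), but your version is more self-contained and avoids the detour through estimability theory; the paper's version has the advantage of making the statistical meaning of the identity (estimability of $\mathcal{R}\bs{\theta}$) explicit.
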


\begin{lemma}[Inequalities]
	\label{lm:inequalities}
	For some finite constant $M$, \\
	\emph{(a).}  $\lVert \mathcal{R} \rVert
	\leq 	
	M$,
	\label{eq:I1} \
	\emph{(b).}  $\lVert B \rVert
	\leq 	
	M$,
	\label{eq:I2} \
	\emph{(c).} $\lVert L  L^-\rVert 
	\leq 
	M.$
	\label{eq:I3}
\end{lemma}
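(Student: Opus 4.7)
The plan is to handle the three bounds separately, as each relies on a different structural feature. Parts (a) and (c) exploit the block form of $\mathcal{R}$ and earlier matrix identities, while part (b) leans on the bounded-degree assumption together with the bipartite structure of $B$.

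For part (a), I would compute $\mathcal{R}\mathcal{R}^\prime$ directly from the block definition of $\mathcal{R}$. The key simplification is that $\tfrac{1}{c}1_{r\times c}(I_c - \tfrac{1}{c}1_{c\times c}) = 0$ because $1_{r\times c}1_{c\times c} = c \cdot 1_{r\times c}$. This kills the off-diagonal blocks, so $\mathcal{R}\mathcal{R}^\prime$ is block diagonal with blocks $I_r + \tfrac{1}{c}1_{r\times r}$ and $(I_c - \tfrac{1}{c}1_{c\times c})^2 = I_c - \tfrac{1}{c}1_{c\times c}$ (the latter is a projection). The eigenvalues of $I_r + \tfrac{1}{c}1_{r\times r}$ are $1 + r/c$ (multiplicity 1) and $1$ (multiplicity $r-1$), and those of the second block lie in $\{0,1\}$. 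Hence $\|\mathcal{R}\|^2 = 1 + r/c$, which is bounded by $1 + M$ under Assumption~\ref{ass.degree}(i).

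For part (b), I would use $\|B\|^2 = \bar{\rho}(L)$. Since $L$ is symmetric with non-negative entries, $\bar{\rho}(L) \leq \|L\|_\infty$, the maximum row sum. Row $i$ of $L$ (for a student index) equals $L_{ii} + \sum_{j \text{ teacher}} L_{ij}$. Because each observation contributes exactly one $+1$ to $(B_1^\prime B_1)_{ii}$ and one $+1$ to some $(B_1^\prime B_2)_{ij}$, we have $\sum_{j} L_{ij} = L_{ii}$, so the row sum equals $2L_{ii}$. The same argument applies to teacher rows. Assumption~\ref{ass.degree}(ii) then yields $\|L\|_\infty \leq 2M$, so $\|B\| \leq \sqrt{2M}$.

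For part (c), I would use Lemma~\ref{lm:equalities}(b) ($B\mathcal{R} = B$), which immediately gives $L\mathcal{R} = B^\prime B \mathcal{R} = B^\prime B = L$. Combined with $L^- = \mathcal{R} L^\dagger \mathcal{R}^\prime$, this gives $L L^- = L \mathcal{R} L^\dagger \mathcal{R}^\prime = L L^\dagger \mathcal{R}^\prime$. Since $L L^\dagger$ is the orthogonal projection onto the column space of $L$, $\|L L^\dagger\| \leq 1$, and therefore $\|L L^-\| \leq \|\mathcal{R}^\prime\| \leq M$ by part (a). There is no genuine obstacle in any of the three parts; the only thing to watch is the algebra in (a) to confirm that the cross blocks of $\mathcal{R}\mathcal{R}^\prime$ indeed vanish and that the bound on $\|L\|_\infty$ in (b) correctly accounts for both student and teacher rows.
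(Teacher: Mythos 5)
Your proof is correct, and parts (a) and (c) follow essentially the same route as the paper: (a) is the block computation of $\mathcal{R}\mathcal{R}'$ plus Assumption~\ref{ass.degree}(i) (the paper states this tersely as ``computing the norms of the individual pieces''), and (c) is verbatim the paper's argument via $B\mathcal{R}=B$, $LL^\dagger$ being a projection, and part (a). The only genuine divergence is in (b): the paper writes $\bar{\rho}(L)=\bar{\rho}(D^{-1/2}LD^{-1/2}\cdot D)\le \bar{\rho}(D^{-1/2}LD^{-1/2})\,\bar{\rho}(D)\le 2\bar{d}$, appealing to the fact that the normalized matrix $D^{-1/2}LD^{-1/2}=I+\mathcal{A}$ has spectrum in $[0,2]$; you instead bound $\bar{\rho}(L)$ by the maximum absolute row sum and use the bipartite incidence structure (each observation contributes one unit to the diagonal and one unit to the off-diagonal teacher block) to show every row sum equals $2L_{ii}$. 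Both give the bound $2\max_i L_{ii}$. Your version is more elementary and self-contained — it sidesteps both the $[0,2]$ spectral range of the normalized Laplacian and the slightly delicate submultiplicativity step $\bar{\rho}(XY)\le\bar{\rho}(X)\bar{\rho}(Y)$ (which is not valid for general non-commuting matrices and is better justified, as the paper implicitly does, by passing through operator norms of the similar matrix $D^{-1/2}LD^{1/2}$) — while the paper's version connects the bound to the graph-theoretic machinery it uses elsewhere. One cosmetic note: the identity is $\lVert B\rVert^2=\bar{\rho}(L)$, as you write; the paper drops the square, which is immaterial for boundedness.
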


\end{appendix}

\setstretch{1.3}


\clearpage

\renewcommand{\thepage}{A.\arabic{page}}
\setcounter{page}{1}

\begin{appendix}
	
	\markright{Supplementary Online Appendix -- This Version: \today }
	\setcounter{equation}{0}
	
	\renewcommand*\thetable{A-\arabic{table}}
	\setcounter{table}{0}
	\renewcommand*\thefigure{A-\arabic{figure}}
	\setcounter{figure}{0}
	
	
	\setcounter{section}{1}

	\renewcommand{\thesection}{\Alph{section}}
	\renewcommand{\theequation}{\Alph{section}.\arabic{equation}}
	\renewcommand{\thetheorem}{\Alph{section}\arabic{theorem}}
	\renewcommand{\thelemma}{\Alph{section}\arabic{lemma}}
	\renewcommand{\theassumption}{\Alph{section}\arabic{assumption}}
	\renewcommand{\theproposition}{\Alph{section}\arabic{proposition}}
	\renewcommand{\thecorollary}{\Alph{section}\arabic{corollary}}
	
	\thispagestyle{empty}
	
	\begin{center}
		
		{\large {\bf Supplementary Online Appendix: Optimal Estimation of Two-Way Effects under Limited Mobility}}
		
		{\bf Xu Cheng, Sheng Chao Ho, and Frank Schorfheide}
	\end{center}

This Appendix starts with section B.  Section A is the Appendix in the paper.
\noindent It consists of the following sections:
	
	\begin{itemize}
	\item[B.] Proofs of additional theoretical results in the paper
	\item[C.] Proofs of results used in Appendix A
	\item[D.] Computation of EB Estimators for Large Datasets
	\item[E.] Additional Monte Carlo Simulations
	\item[F.] Further Details on the Empirical Analysis
	\item[G.] Definitions and Notations
\end{itemize}
\clearpage

\section{Proofs of Additional Theoretical Results in Paper}
\label{sec:proofs}



\label{subsec:proof.ls}

\begin{lemma} \label{lm: LS defn}
	The LS estimator is uniquely defined as $\hat{\bs{\theta}}^{\text{ls}} =  L^- B^\prime \bs{Y}$ under the normalization.
\end{lemma}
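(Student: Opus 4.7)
The goal is to show that, under the normalization $\bs{1}_c'\bs{\beta}=0$ and the assumption that the graph $\mathcal{G}$ is connected, the constrained least-squares problem
\[
\min_{\bs{\theta}} \, \lVert \bs{Y} - B\bs{\theta}\rVert^2 \quad\text{s.t.}\quad \bs{1}_c'\bs{\beta}=0
\]
has a unique minimizer given by $\hat{\bs{\theta}}^{\text{ls}} = L^- B'\bs{Y}$ with $L^- = \mathcal{R}L^\dagger \mathcal{R}'$. The plan has two conceptual pieces: (i) identify the full solution set of the unconstrained normal equations, and (ii) pin down the unique element of that solution set satisfying the normalization and verify it coincides with $L^-B'\bs{Y}$.

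First I would establish the structure of the null space. Because $B\bs{\theta}=0$ forces $\alpha_i + \beta_{j(i,t)} = 0$ for every observation, and because $\mathcal{G}$ is connected, a standard graph-theoretic argument (walking along edges) yields $\ker B = \mathrm{span}\,\bs{n}$ with $\bs{n}:=(\bs{1}_r',-\bs{1}_c')'$. Since $L=B'B$ is symmetric positive semidefinite, $\ker L = \ker B$, which is one-dimensional, so the solution set of the normal equations $L\bs{\theta}=B'\bs{Y}$ is an affine line along $\bs{n}$. The normalization $\bs{1}_c'\bs{\beta}=0$ is a single linear constraint non-orthogonal to $\bs{n}$ (indeed $\bs{1}_c'(-\bs{1}_c)=-c\neq 0$), so it selects a unique point on that line. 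This establishes existence and uniqueness of the constrained LS estimator; it remains to verify that the closed-form expression $L^-B'\bs{Y}$ is that point.

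Next I would check that $\hat{\bs{\theta}}^{\text{ls}}=\mathcal{R}L^\dagger \mathcal{R}'B'\bs{Y}$ satisfies (a) the normal equations and (b) the normalization. For (a), I would use $B\mathcal{R}=B$ (Lemma~\ref{lm:equalities}(b)), which immediately gives $L\mathcal{R}=B'B\mathcal{R}=B'B=L$ and $\mathcal{R}'B'=B'$. Hence
\[
L\hat{\bs{\theta}}^{\text{ls}} \;=\; L\,\mathcal{R}\,L^\dagger\,\mathcal{R}'B'\bs{Y} \;=\; L\,L^\dagger\,B'\bs{Y}.
\]
Since $LL^\dagger$ is the orthogonal projector onto $\mathrm{range}(L)=\mathrm{range}(B')$, and $B'\bs{Y}\in\mathrm{range}(B')$, we conclude $L\hat{\bs{\theta}}^{\text{ls}}=B'\bs{Y}$. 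For (b), writing $\mathcal{R}=[\mathcal{R}_a',\mathcal{R}_b']'$ with $\mathcal{R}_b=[0_{c\times r},\,I_c-\tfrac1c\bs{1}_{c\times c}]$, the key computation is $\bs{1}_c'\mathcal{R}_b=\bs{0}$, whence $\bs{1}_c'\hat{\bs{\beta}}^{\text{ls}}=\bs{1}_c'\mathcal{R}_b L^\dagger \mathcal{R}'B'\bs{Y}=0$. Combined with (i), this shows $\hat{\bs{\theta}}^{\text{ls}}$ is the unique constrained minimizer.

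The only potentially delicate step is the characterization $\ker L = \mathrm{span}\,\bs{n}$: I would carefully invoke the hypothesis that $\mathcal{G}$ is connected, arguing that any solution of $\alpha_i+\beta_{j(i,t)}=0$ on a connected bipartite graph is determined by the choice of a single $\alpha_{i_0}$, which propagates via alternating signs across edges. Everything else is direct linear algebra based on the matrix identities $B\mathcal{R}=B$, $LL^\dagger L=L$, and the block form of $\mathcal{R}$ already assembled in the paper.
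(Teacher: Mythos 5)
Your proof is correct, but your uniqueness argument takes a genuinely different route from the paper's. The paper first checks that $L^-=\mathcal{R}L^\dagger\mathcal{R}'$ is a generalized inverse of $L$ (so $L^-B'\bs{Y}$ solves the normal equations and inherits the normalization from the left factor $\mathcal{R}$), then parametrizes every LS solution as $\hat{\bs{\theta}}^{\text{ls}}+[I-L^-L]\bs{w}$ and kills the second term for any normalized solution via the identity $\mathcal{R}=\mathcal{R}L^-L$ --- an identity the paper justifies through an estimability argument (a connectivity-implies-estimability lemma combined with Searle's theorem). You instead compute $\ker B$ explicitly from connectivity: it is the one-dimensional span of $(\bs{1}_r',-\bs{1}_c')'$, so the solution set of the normal equations is an affine line, and the normalization $\bs{1}_c'\bs{\beta}=0$ is transversal to that line since $\bs{1}_c'(-\bs{1}_c)=-c\neq 0$, selecting a unique point. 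Your verification that the closed form satisfies the normal equations (via $L\mathcal{R}=L$, $\mathcal{R}'B'=B'$, and $LL^\dagger$ projecting onto $\mathrm{range}(B')$) and the normalization (via $\bs{1}_c'\mathcal{R}_b=0$) matches the paper's computations. Your geometric route is more elementary and makes the role of connectivity transparent, at the cost of an explicit null-space computation; the paper's route is shorter given that $\mathcal{R}=\mathcal{R}L^-L$ is established separately, and its solution-set parametrization does not require knowing the dimension of $\ker L$. The one point worth stating explicitly in your write-up is that because the constraint set intersects the unconstrained argmin line, the constrained and unconstrained minimum values coincide, so the unique intersection point is indeed the unique \emph{constrained} minimizer.
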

\begin{proof}[Proof of Lemma \ref{lm: LS defn}]
	We can show that $ L^-$ is a generalized inverse of $L$, i.e., $LL^{-}L=L$ using the definition of $L^{-}=\mathcal{R}L^{\dagger}\mathcal{R}'$, $B\mathcal{R}=B$, and $LL^\dagger L = L$ for the Moore-Penrose inverse.  Thus, $\hat{\bs{\theta}}^{\text{ls}}=  L^-B^\prime \bs{Y}$ is a LS estimate. It satisfies the normalization $\bs{1}^\prime \bs{\beta}=0$ too since $\mathcal{R}$ appears on the left-hand side in $\hat{\bs{\theta}}^{\text{ls}}$. 
	
	We now turn to the uniqueness. Any least-squares estimate  $\check{\bs{\theta}}^{\text{ls}}$ can be characterized as 
	$
	\check{\bs{\theta}}^{\text{ls}} = \hat{\bs{\theta}}^{\text{ls}} +  [I-  L^-L]\bs{w} 
	$
	for some $\bs{w}\in\mathbb{R}^{r+c}$. Thus if $\check{\bs{\theta}}^{\text{ls}}$ is a LS estimate also satisfying the normalization $\bs{1}^\prime \bs{\beta}=0$, then	
	$	\check{\bs{\theta}}^{\text{ls}}
	=_{(1)}
	\mathcal{R}\check{\bs{\theta}}^{\text{ls}} 
	=
	\mathcal{R}\hat{\bs{\theta}}^{\text{ls}} 
	+
	[\mathcal{R} - \mathcal{R} L^-L]\bs{w} 
	=_{(2)}
	\hat{\bs{\theta}}^{\text{ls}} 
	+
	[\mathcal{R} - \mathcal{R}]\bs{w} 
	=
	\hat{\bs{\theta}}^{\text{ls}},$
where $=_{(1)}$ follows from the fact that $\check{\bs{\theta}}^{\text{ls}}$ satisfies the normalization $\bs{1}^\prime \bs{\beta}=0$ and $=_{(2)}$ follows from $	\mathcal{R}
=
\mathcal{R} L^-L $, see \eqref{eq:E6}.
\end{proof}

\begin{lemma} \label{lm:pstm}
The posterior mean is $	\hat{\bs{\theta}}(\bs{\lambda}) :=
S_1\hat{\bs{\theta}}^{\text{ls}} + 
S \bs{v}$.
\end{lemma}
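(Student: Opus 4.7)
The plan is to treat this as a routine Gaussian conjugate posterior calculation, modified to enforce the identification constraint $\bs{1}_c'\bs{\beta}=0$ via the rotation matrix $\mathcal{R}$ in exactly the same way that was used for the LS estimator in Lemma B.1.

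First I would ignore the normalization and write down the unrestricted Gaussian posterior. With prior $\bs{\theta}\mid B,\bs{\lambda}\sim\mathcal{N}(\bs{v},\sigma^2[\Lambda^*]^{-1})$ and likelihood $\bs{Y}\mid B,\bs{\theta}\sim\mathcal{N}(B\bs{\theta},\sigma^2 I_{rT})$, completion of squares in the quadratic form $\sigma^{-2}\{(\bs{Y}-B\bs{\theta})'(\bs{Y}-B\bs{\theta}) + (\bs{\theta}-\bs{v})'\Lambda^*(\bs{\theta}-\bs{v})\}$ gives unrestricted posterior mean $[L+\Lambda^*]^{-1}(B'\bs{Y}+\Lambda^*\bs{v})$. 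This vector in general does not satisfy $\bs{1}_c'\bs{\beta}=0$, so, as in the proof of Lemma B.1, I apply $\mathcal{R}$ to project it onto the normalized subspace without altering any identified linear function. This yields
\begin{equation*}
\hat{\bs{\theta}}(\bs{\lambda}) \;=\; \mathcal{R}[L+\Lambda^*]^{-1}(B'\bs{Y}+\Lambda^*\bs{v}) \;=\; \mathcal{R}[L+\Lambda^*]^{-1}B'\bs{Y} + S\bs{v},
\end{equation*}
using the definition $S=\mathcal{R}[L+\Lambda^*]^{-1}\Lambda^*$.

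Next I would rewrite the first term in terms of $\hat{\bs{\theta}}^{\text{ls}}$. The key identity I need is $B'\bs{Y}=L\hat{\bs{\theta}}^{\text{ls}}$. Starting from $\hat{\bs{\theta}}^{\text{ls}}=L^-B'\bs{Y}$, I would verify $LL^-B'=B'$ as follows: by Lemma A.4(b), $B\mathcal{R}=B$, so $\mathcal{R}'B'=B'$ and thus $L\mathcal{R}=B'B\mathcal{R}=B'B=L$; hence $LL^-=L\mathcal{R}L^\dagger\mathcal{R}'=LL^\dagger\mathcal{R}'$, and $LL^-B'=LL^\dagger B'=B'$ because $LL^\dagger$ is the orthogonal projector onto the column space of $B'$. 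Then the algebraic identity $[L+\Lambda^*]^{-1}L=I-[L+\Lambda^*]^{-1}\Lambda^*$ yields
\begin{equation*}
\mathcal{R}[L+\Lambda^*]^{-1}B'\bs{Y} = \mathcal{R}[L+\Lambda^*]^{-1}L\,\hat{\bs{\theta}}^{\text{ls}} = \mathcal{R}\hat{\bs{\theta}}^{\text{ls}} - S\hat{\bs{\theta}}^{\text{ls}} = (\mathcal{R}-S)\hat{\bs{\theta}}^{\text{ls}} = S_1\hat{\bs{\theta}}^{\text{ls}}.
\end{equation*}
Substituting back gives $\hat{\bs{\theta}}(\bs{\lambda})=S_1\hat{\bs{\theta}}^{\text{ls}}+S\bs{v}$, as required.

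The only non-routine step is the handling of the normalization constraint, i.e., justifying that premultiplying the unrestricted Gaussian posterior mean by $\mathcal{R}$ is the correct Bayes action under the identification restriction. This is the same subtlety that appears in Lemma B.1 for the LS estimator: because the prior precision $\Lambda^*$ need not be full-rank on the unrestricted space in a way that respects the constraint, one must argue that $\mathcal{R}$ produces a vector satisfying $\bs{1}_c'\bs{\beta}=0$ while preserving all identified linear combinations of $\bs{\theta}$ (entries of $B\bs{\theta}$ and any contrast within $\bs{\beta}$). Once this step is granted, the remainder is a short matrix-algebra computation leveraging $B\mathcal{R}=B$ from Lemma A.4 and the standard splitting $[L+\Lambda^*]^{-1}L=I-[L+\Lambda^*]^{-1}\Lambda^*$.
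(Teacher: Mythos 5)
Your proposal is correct and follows essentially the same route as the paper: compute the unrestricted Gaussian posterior mean, premultiply by $\mathcal{R}$, use $LL^-B'=B'$ (equivalently $B'\bs{Y}=L\hat{\bs{\theta}}^{\text{ls}}$) to introduce the LS estimator, and split $[L+\Lambda^*]^{-1}L=I-[L+\Lambda^*]^{-1}\Lambda^*$; the only cosmetic difference is that you complete the square while the paper conditions in the joint normal and then applies a matrix-inversion identity. The ``subtlety'' you flag about justifying the $\mathcal{R}$-projection is dissolved in the paper by simply defining the estimator as $\mathbb{E}[\mathcal{R}\bs{\theta}\mid B,\bs{Y}]$, which by linearity of expectation coincides with your projected unrestricted posterior mean.
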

\begin{proof}[Proof of Lemma~\ref{lm:pstm}]
The model 
$\bs{Y}\mid B,\bs{\theta} \sim N(B\bs{\theta},I)$ 
together with the prior 
$\bs{\theta} \mid B \sim N(\bs{v},\Lambda^{*-1})$ 
can be written as 
\begin{equation}
	\begin{bmatrix}
		\bs{Y} \\ \bs{\theta}
	\end{bmatrix}
	\bigg| B\sim 
	\mathcal{N}\left(
	\begin{bmatrix}
		B\bs{v} \\ \bs{v}
	\end{bmatrix}, 
	\begin{bmatrix}
		B\Lambda^{*-1}B^\prime + I & B \Lambda^{*-1} \\
		\Lambda^{*-1}B^\prime & \Lambda^{*-1}
	\end{bmatrix}
	\right).
\end{equation}
Using the formula for the conditional distribution of joint normal distributions together with the linearity of expectation give us the expression for  $\hat{\bs{\theta}}:=\mathbb{E}\left[\mathcal{R}\bs{\theta}\mid B,\bs{Y}\right]$ as
\begin{equation}
	\begin{aligned}
		&\hspace{1.5em}\mathcal{R}\mathbb{E}[\bs{\theta}\mid B,\bs{Y}]\\
		&=
		\mathcal{R}\left\{\bs{v}
		+
		\Lambda^{*-1}B^\prime[B\Lambda^* B^\prime + I]^{-1}[\bs{Y}-B\bs{v}]\right\}\\
		&=_{(1)}
		\mathcal{R}
		\left\{\bs{v}
		+
		[L + \Lambda^*]^{-1} B^\prime[\bs{Y}-B\bs{v}]\right\}
		=
		\mathcal{R}
		\left\{\bs{v} - [L + \Lambda^*]^{-1} L\bs{v} 
		+ 
		[L + \Lambda^*]^{-1} B^\prime\bs{Y}\right\} \\
		&=_{(2)}
		S\bs{v} 
		+ 
		\mathcal{R}
		\left\{[L + \Lambda^*]^{-1} L  L^- B^\prime \bs{Y}\right\} 
		=_{(3)}
		S\bs{v} 
		+ 
		S_1 \hat{\bs{\theta}}^{\text{ls}},
	\end{aligned} \label{postmean}
\end{equation}
where $=_{(1)}$ follows from
$D^{-1}C(A-BDC)^{-1} = (D-CA^{-1}B)^{-1}CA^{-1}$
for conformal matrices,  $=_{(2)}$ follows by defining $S_1:=\mathcal{R}[L + \Lambda^*]^{-1}L$, $S_1+S = \mathcal{R}$, and $LL^{-}B'=B'$  using the property that $LL^{-}$ is a projection matrix onto the column space of $L$ and $B'$ belongs to the column space of $L$, 
and $=_{(3)}$ follows from definition of $\hat{\bs{\theta}}^{\text{ls}} =  L^- B^\prime \bs{Y}$.
\end{proof}

\begin{proof}[Proof of Lemma~\ref{lm:proposedII}]
The first part of the Lemma is proved using the form of conditional distribution for a jointly normal vector together with an application of Lemma~\ref{lm:norm.cond}. The second part of the Lemma follows from Lemma~\ref{lm:norm.precision}.
\end{proof}

%

Consider the LS estimate of the error variance in 
\eqref{eq:sigma_est}. The following Lemma shows that it is unbiased and consistent.

\begin{lemma} \label{lm:LSvar}
	\emph{(a).}	$\mathbb{E}_{\bs{\theta},B}[	 \hat{\sigma}^2] =  \sigma^2$.
	
	\noindent \emph{(b).} Suppose $rT-(r+c-1) \rightarrow \infty$
	as $r,c \rightarrow \infty$,  then $\hat{\sigma}^2 \rightarrow_p \sigma^2.$
\end{lemma}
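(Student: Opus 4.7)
The plan is to recognize $M := I_{rT} - B(B'B)^{-}B'$ as the orthogonal projection onto the orthogonal complement of the column space of $B$. This projection is independent of the choice of generalized inverse $(B'B)^{-}$, so the definition of $\hat{\sigma}^2$ is unambiguous. The key observation is that $MB = 0$, so
\begin{equation*}
  \bs{Y}' M \bs{Y} \;=\; (B\bs{\theta} + \bs{U})' M (B\bs{\theta} + \bs{U}) \;=\; \bs{U}' M \bs{U}.
\end{equation*}
In particular, $\bs{Y}'M\bs{Y}$ does not depend on $\bs{\theta}$, which is what makes the degrees-of-freedom correction work.

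For part (a), I would compute the expectation of the quadratic form directly: since $u_{it}$ are i.i.d.\ with mean zero and variance $\sigma^2$ conditional on $(\bs{\theta}, B)$,
\begin{equation*}
  \mathbb{E}_{\bs{\theta},B}[\bs{U}'M\bs{U}] \;=\; \sigma^2 \tr(M).
\end{equation*}
The rank of $M$ equals $rT - \rank(B)$, and under the assumption that the graph $\mathcal{G}$ has a single connected component together with the normalization $\bs{1}_c^\prime\bs{\beta}=0$, $\rank(B) = r+c-1$. Hence $\tr(M) = rT - (r+c-1)$, and dividing gives $\mathbb{E}_{\bs{\theta},B}[\hat{\sigma}^2] = \sigma^2$.

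For part (b), I would apply Chebyshev's inequality after bounding the variance. Using Lemma~\ref{lm:pw.equalities}(b) with $A = M$, together with the fact that $M$ is symmetric and idempotent so that $\tr(M'M) = \tr(M) = rT - (r+c-1)$ and the finite fourth moment of $u_{it}$ from Assumption~\ref{ass:reg.cond}(ii), we obtain
\begin{equation*}
  \mathbb{V}_{\bs{\theta},B}[\bs{U}'M\bs{U}] \;\leq\; C \,[rT - (r+c-1)]
\end{equation*}
for a finite constant $C$ depending only on the moments of $u_{it}$. Therefore $\mathbb{V}_{\bs{\theta},B}[\hat{\sigma}^2] \leq C / [rT - (r+c-1)] \to 0$, and consistency follows by Chebyshev. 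There is no serious obstacle here; the only subtle point is verifying $\rank(B) = r+c-1$ under the connectedness assumption, which is standard for the two-way design (the sum-zero restriction removes the single collinearity arising from the constant shift between $\bs{\alpha}$ and $\bs{\beta}$).
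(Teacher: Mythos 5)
Your proof is correct and follows essentially the same route as the paper's: use $MB=0$ to reduce $\bs{Y}'M\bs{Y}$ to a quadratic form in $\bs{U}$, compute its mean via $\tr(M)=rT-\rank(B)=rT-(r+c-1)$ under connectedness, and control its variance via the fourth-moment bound of Lemma~\ref{lm:pw.equalities}(b) plus Chebyshev. The one substantive difference is in part (b): the paper splits $\bs{U}'M\bs{U}$ into $\bs{U}'\bs{U}-\bs{U}'B(B'B)^{-}B'\bs{U}$ and bounds each variance separately, which for the first term yields a bound of order $rT/[rT-(r+c-1)]^2$; you instead exploit idempotency of $M$ directly, so that $\tr(M'M)=\tr(M)$ and the variance bound becomes $C/[rT-(r+c-1)]$. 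Your version is the cleaner of the two and vanishes under exactly the stated hypothesis $rT-(r+c-1)\rightarrow\infty$, whereas the paper's term-by-term bound as written requires $[rT-(r+c-1)]^2/(rT)\rightarrow\infty$ (harmless in the paper's asymptotic regime, but a slightly stronger condition). Both proofs implicitly use Assumption~\ref{ass:reg.cond}(ii) for the fourth moment, and both rest on the standard fact that $B(B'B)^{-}B'$ is the orthogonal projection onto the column space of $B$ irrespective of the choice of generalized inverse, with $\rank(B)=r+c-1$ for a single connected component; you make this invariance explicit, which is a nice touch.
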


\begin{proof}[Proof of Lemma \ref{lm:LSvar}]
	Define a constant
	\begin{align}
		\kappa = \frac{rT}{rT-(r+c-1)}.
	\end{align}
	For part (a), plugging $\bs{Y}=B\bs{\theta}+\bs{U}$, we have
	$
	\hat{\sigma}^2
	=\frac{\kappa}{rT}\bs{U}'\bs{U}-\frac{\kappa}{rT}\bs{U}'B(B'B)^{-}B'\bs{U},
	$
	where 
	\begin{align}
		\frac{1}{rT}\mathbb{E}_{\bs{\theta},B}[\bs{U}'\bs{U}] &= \sigma^2, \nonumber\\
		\frac{1}{rT}\mathbb{E}_{\bs{\theta},B}\left[\bs{U}'B(B'B)^{-}B'\bs{U}\right]
		&=_{(1)}\frac{ \sigma^2}{rT}\tr[B(B'B)^{-}B'] 
		=_{(2)}\frac{\sigma^2\cdot (r+c-1)}{rT},
	\end{align}
	where $=_{(1)}$ follows from $\mathbb{E}_{\bs{\theta},B}[\bs{U}\bs{U}']=\sigma^2 I_{rT}$, and $=_{(2)}$ follows from the assumption that $r+c\le rT$ and $L=B'B$ has one $0$ eigenvalue. Therefore, we have
	$\mathbb{E}_{\bs{\theta},B}[	 \hat{\sigma}^2] =  \sigma^2$. 
	
	For part (b), by applying  Lemma \ref{lm:pw.equalities}(b), we have
	\begin{align}
		\mathbb{V}_{\bs{\theta},B} \left[\frac{\kappa}{rT}\bs{U}'\bs{U}\right]&\le\frac{M}{(rT-(r+c-1)^2}\tr[I_{rT}]=o(1), \nonumber\\
		\mathbb{V}_{\bs{\theta},B}\left[\frac{\kappa}{rT}\bs{U}'B(B'B)^{-}B'\bs{U}\right]&\le\frac{M}{(rT-(r+c-1))^2}\tr[B(B'B)^{-}B']=_{(1)}o(1),
	\end{align}
	for some finite $M$,
	where $=_{(1)}$ holds because the eigenvalues of the projection matrix $B(B'B)^{-}B'$ is no larger than $1$. 
\end{proof}

The following lemma shows that the impact of the estimated coefficient is negligible uniformly over the hyperparameter. It is the key result to deliver asymptotic optimality with the estimated coefficient in Corollary~\ref{cor:reg}.
\begin{lemma}[Estimated Regressor Coefficient]
\label{lm:UC.reg}
Suppose Assumptions~\ref{ass:reg.cond}, \ref{ass.degree}, \ref{ass:graph.connect.1}, \ref{ass:graph.connect.2}, and  \ref{ass:reg.condreg} hold. Then,
\begin{align*}
	\sup_{\bs{\lambda}\in\mathcal{J}} 
	\left|
	l_w(\hat{\bs{\theta}}(\bs{\lambda}),\bs{\theta}) - 
	l_w(\tilde{\bs{\theta}}(\bs{\lambda}),\bs{\theta})
	\right|
	&\rightarrow_p 0,\label{eq:UC.reg.1}\\
	\sup_{\bs{\lambda}\in\mathcal{J}} 
	\left|
	\text{URE}(\bs{\lambda}) - 
	\widetilde{\text{URE}}(\bs{\lambda})
	\right|
	&\rightarrow_p 0.
\end{align*}
\end{lemma}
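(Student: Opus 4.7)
The plan is to set $\bs{e} := X(\bs{\gamma}-\tilde{\bs{\gamma}})$, so that $\tilde{\bs{Y}} = \bs{Y}+\bs{e}$ and every tilded quantity differs from its untilded counterpart by a term linear in $\bs{e}$. By Assumption~\ref{ass:reg.condreg} and Cauchy-Schwarz,
\begin{equation*}
\|\bs{e}\|^2 \;\le\; \Big(\sum_{i,t}\|\bs{x}_{it}\|^2\Big)\cdot\|\tilde{\bs{\gamma}}-\bs{\gamma}\|^2 \;=\; O_p(rT)\cdot O_p(r^{-1}) \;=\; O_p(1).
\end{equation*}
A direct computation gives $\tilde{\bs{\theta}}(\bs{\lambda})-\hat{\bs{\theta}}(\bs{\lambda}) = S_1 L^- B'\bs{e}$. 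Invoking $L^- B' = \mathcal{R}B^\dagger$ (from $\mathcal{R}'B'=B'$ and $L^\dagger B' = B^\dagger$), $BL^- B' = P_B$ (the column-space projector of $B$), Lemma~\ref{lm:pw.equalities}(c), and the identity $GP_B = G$ (which holds because $G$ is symmetric with range in $\text{col}(B)$), I obtain the clean form $\Delta(\bs{\lambda}) := \mathcal{R}B^\dagger G\bs{e}$. Using $(\mathcal{R}B^\dagger)(\mathcal{R}B^\dagger)' = \mathcal{R}L^\dagger\mathcal{R}' = L^-$ together with the uniform bound $\|G\|\le 1$, the master inequality
\begin{equation*}
\sup_{\bs{\lambda}\in\mathcal{J}}\|W^{1/2}\Delta(\bs{\lambda})\|^2 \;\le\; \|W^{1/2}L^- W^{1/2}\|\cdot\|\bs{e}\|^2 \;=\; o_p(1)
\end{equation*}
falls out of Assumption~\ref{ass:graph.connect.1}(i), which forces $\bar{\rho}(W^{1/2}L^- W^{1/2})=o(1)$.

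For the first claim, I expand $l_w(\tilde{\bs{\theta}}(\bs{\lambda}),\bs{\theta})-l_w(\hat{\bs{\theta}}(\bs{\lambda}),\bs{\theta}) = \|W^{1/2}\Delta(\bs{\lambda})\|^2 + 2(\hat{\bs{\theta}}(\bs{\lambda})-\bs{\theta})'W\Delta(\bs{\lambda})$ and control the cross term by Cauchy-Schwarz as $\sqrt{l_w(\hat{\bs{\theta}}(\bs{\lambda}),\bs{\theta})}\cdot\|W^{1/2}\Delta(\bs{\lambda})\|$. The quadratic term and the second factor of the cross term are $o_p(1)$ uniformly in $\bs{\lambda}$ by the master inequality. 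The first factor is $O_p(1)$ uniformly because Lemma~\ref{lm:unif.convg} gives $\sup_{\bs{\lambda}}|l_w(\hat{\bs{\theta}}(\bs{\lambda}),\bs{\theta})-\text{URE}(\bs{\lambda})| = o_p(1)$, while $\sup_{\bs{\lambda}}\text{URE}(\bs{\lambda})=O_p(1)$ can be read off from the same $\|G\|\le 1$-based trace bounds used in that proof -- for instance $\tr[S_1'WS_1 L^-]\le\tr[W^{1/2}L^- W^{1/2}]=O(1)$ via Assumption~\ref{ass:graph.connect.2}, with analogous uniform bounds for the remaining two trace components of $\text{URE}(\bs{\lambda})$.

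For the second claim, I observe that the two $\sigma^2$ trace terms in $\text{URE}(\bs{\lambda})$ are free of $\bs{Y}$ and so cancel in the difference. Writing $\bs{u}:=L^- B'\bs{e}$, elementary algebra yields
\begin{equation*}
\widetilde{\text{URE}}(\bs{\lambda})-\text{URE}(\bs{\lambda}) \;=\; 2\bs{u}'S'WS(\hat{\bs{\theta}}^{\text{ls}}-\bs{v}) + \|W^{1/2}S\bs{u}\|^2.
\end{equation*}
The identity $SL^-B' = \mathcal{R}B^\dagger(I-G) = \mathcal{R}B^\dagger(I+B\Lambda^{*-1}B')^{-1}$ together with $\|(I+B\Lambda^{*-1}B')^{-1}\|\le 1$ lets the master inequality re-apply to give $\sup_{\bs{\lambda}}\|W^{1/2}S\bs{u}\| = o_p(1)$. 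For the cross term, Cauchy-Schwarz reduces matters to uniform $O_p(1)$ control of $\|W^{1/2}S(\hat{\bs{\theta}}^{\text{ls}}-\bs{v})\|$: since $S+S_1=\mathcal{R}$ and $\mathcal{R}\hat{\bs{\theta}}^{\text{ls}}=\hat{\bs{\theta}}^{\text{ls}}$, one has $S(\hat{\bs{\theta}}^{\text{ls}}-\bs{v})=\hat{\bs{\theta}}^{\text{ls}}-\hat{\bs{\theta}}(\bs{\lambda})$, and the triangle inequality bounds its $W^{1/2}$-norm by $\sqrt{l_w(\hat{\bs{\theta}}^{\text{ls}},\bs{\theta})}+\sqrt{l_w(\hat{\bs{\theta}}(\bs{\lambda}),\bs{\theta})}$, both $O_p(1)$ uniformly (the first via $\mathbb{E}[l_w(\hat{\bs{\theta}}^{\text{ls}},\bs{\theta})]=\sigma^2\tr[WL^-]=O(1)$ and Markov, the second by the argument for the first claim). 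The hard part is not any single algebraic step but the uniform-in-$\bs{\lambda}$ stochastic boundedness of $l_w(\hat{\bs{\theta}}(\bs{\lambda}),\bs{\theta})$ and $\text{URE}(\bs{\lambda})$, since $S$ depends on $\bs{\lambda}$ nonlinearly through $\Lambda^*$ and $\|S\|$ is not uniformly bounded; this is precisely the hurdle that the $\|G\|\le 1$-based routing and Assumptions~\ref{ass:graph.connect.1}--\ref{ass:graph.connect.2} are designed to navigate, mirroring the strategy of Lemma~\ref{lm:unif.convg}.
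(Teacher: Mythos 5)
Your handling of every term that actually involves $\bs{e}:=X(\bs{\gamma}-\tilde{\bs{\gamma}})$ is correct and, for those pieces, cleaner than the paper's: the identities $S_1L^-B'=\mathcal{R}B^\dagger G$ and $SL^-B'=\mathcal{R}B^\dagger(I-G)$ are valid, and together with $\|G\|\le 1$, $\|\bs{e}\|=O_p(1)$, and $\bar{\rho}(W^{1/2}L^-W^{1/2})=o((r+c)^{-1/2})$ they do yield $\sup_{\bs{\lambda}}\|W^{1/2}\Delta(\bs{\lambda})\|=o_p(1)$ and $\sup_{\bs{\lambda}}\|W^{1/2}S\bs{u}\|=o_p(1)$ uniformly. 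The paper reaches the same conclusion via a factorization through $\tilde{V}_\lambda$ and Lemma~\ref{lm:bd.deriv}; your $G$-based route is a legitimate, arguably simpler, alternative for these new terms.

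The gap is in the cross terms, specifically in the claim that $\sup_{\bs{\lambda}}\text{URE}(\bs{\lambda})=O_p(1)$ (and hence $\sup_{\bs{\lambda}}l_w(\hat{\bs{\theta}}(\bs{\lambda}),\bs{\theta})=O_p(1)$ and $\sup_{\bs{\lambda}}\|W^{1/2}S(\hat{\bs{\theta}}^{\text{ls}}-\bs{v})\|=O_p(1)$) ``can be read off from the same $\|G\|\le 1$-based trace bounds.'' The two deterministic trace components of the URE are indeed uniformly $O(1)$ that way, but the first component $\|W^{1/2}S(\hat{\bs{\theta}}^{\text{ls}}-\bs{v})\|^2$ is a random quadratic form, and the only bound the contraction property of $G$ delivers for it is of the type $\bar{\rho}(W^{1/2}L^-W^{1/2})\cdot\|B(\hat{\bs{\theta}}^{\text{ls}}-\bs{v})\|^2=o((r+c)^{-1/2})\cdot O_p(r+c)=o_p((r+c)^{1/2})$, which diverges exactly in the weakly connected regime the lemma must cover; the same obstruction hits $\|W^{1/2}S_1(\hat{\bs{\theta}}^{\text{ls}}-\bs{\theta})\|^2$ and $\|W^{1/2}S_1(\bs{\theta}-\bs{v})\|^2$, into which everything funnels. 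Your fallback---bounding $\|W^{1/2}S(\hat{\bs{\theta}}^{\text{ls}}-\bs{v})\|$ by $\sqrt{l_w(\hat{\bs{\theta}}^{\text{ls}},\bs{\theta})}+\sqrt{l_w(\hat{\bs{\theta}}(\bs{\lambda}),\bs{\theta})}$ and invoking ``the argument for the first claim''---is circular, since that argument itself rests on $\sup_{\bs{\lambda}}\text{URE}(\bs{\lambda})=O_p(1)$. What actually closes this step in the paper is not $\|G\|\le 1$ but the stochastic-equicontinuity machinery of Lemmas~\ref{lm:bd.deriv} and~\ref{lm:stoequi_bypiece}: uniform-in-$\bs{\lambda}$ pointwise moment bounds combined with uniform derivative bounds of $V_\lambda$ over the reparameterized compact hyperparameter space, where the regime-by-regime analysis of $[L+\Lambda^*]^{-1}$ exploits the structural alignment between $\Lambda^*$ and $L$ rather than only $0\preceq G\preceq I$. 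You gesture at ``mirroring the strategy of Lemma~\ref{lm:unif.convg}'' at the end, but the concrete tool you name cannot do the job, so the uniform $O_p(1)$ statements on which both of your cross-term bounds depend remain unproved.
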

\begin{proof}[Proof of Lemma \ref{lm:UC.reg}]
We only detail the proof of the result for the loss, as the corresponding result for the URE 
can be established using similar arguments. 
\begin{equation}
	\begin{aligned}[b]
		&l_w(\hat{\bs{\theta}}(\bs{\lambda}),\bs{\theta}) - 
		l_w(\tilde{\bs{\theta}}(\bs{\lambda}),\bs{\theta}) \\
		&=
		\left[S_1\hat{\bs{\theta}}^{\text{ls}}+S\bs{v}- \bs{\theta}\right]' W \left[S_1\hat{\bs{\theta}}^{\text{ls}}+S\bs{v} - \bs{\theta}\right]
		-
		\left[S_1\tilde{\bs{\theta}}^{\text{ls}}+S\bs{v} - \bs{\theta}\right]' W \left[S_1\tilde{\bs{\theta}}^{\text{ls}}+S\bs{v} - \bs{\theta}\right] \\
		&=
		\left[S_1\left( \hat{\bs{\theta}}^{\text{ls}}-\tilde{\bs{\theta}}^{\text{ls}}\right)\right]^\prime W
		\left[S_1\left( \hat{\bs{\theta}}^{\text{ls}}+\tilde{\bs{\theta}}^{\text{ls}}\right)
		+2S\bs{v} - 2\bs{\theta}\right] \\
		&=_{(1)}
		\left[
		S_1B^-X\left( \tilde{\bs{\gamma}}-\bs{\gamma}\right)
		\right]' W \left[
		S_1B^-X\left( \bs{\gamma}-\tilde{\bs{\gamma}}\right)
		+2\left(S_1\hat{\bs{\theta}}^{\text{ls}} + S\bs{v} - \bs{\theta}\right)
		\right]
		\\
		&=
		\left[
		S_1B^-X\left( \tilde{\bs{\gamma}}-\bs{\gamma}\right)
		\right]' W_{\mathcal{R}} \left[
		S_1B^-X\left( \bs{\gamma}-\tilde{\bs{\gamma}}\right)
		+2S_1\left(\hat{\bs{\theta}}^{\text{ls}} - \bs{\theta}\right)
		+2S\left(\bs{v}-\bs{\theta}\right)
		\right]
	\end{aligned}
\end{equation}
where $B^- = RB^\dagger$ and $=_{(1)}$ is due to
\begin{equation}
	\hat{\bs{\theta}}^{\text{ls}}-\tilde{\bs{\theta}}^{\text{ls}} = B^-[\bs{Y}^*-X\bs{\gamma}] - B^-[\bs{Y}^*-X\tilde{\bs{\gamma}}] = B^-X[\tilde{\bs{\gamma}}-\bs{\gamma}].
\end{equation}
Furthermore, by defining $\tilde{V}_\lambda:= L^{1/2+\epsilon}[L+\Lambda^*]^{-1} L^{1/2} $ in a similar logic as we did for $V_\lambda$ in \eqref{eq:s1V.check}, we have
\begin{equation}
	W_{\mathcal{R}}^{1/2} S_1B^-X(\tilde{\bs{\gamma}}-\bs{\gamma})
	=
	W_{\mathcal{R}}^{1/2} [L^\dagger]^{1/2+\epsilon} \cdot 
	\tilde{V}_\lambda \cdot
	L^{1/2}B^\dagger \cdot 
	\mathcal{R} X(\tilde{\bs{\gamma}}-\bs{\gamma}),
	\label{eq:UC.reg.3}
\end{equation}
where the first term is $o(1)$ under Assumption~\ref{ass:graph.connect.1} with $\epsilon=1/2$ for the case $W=W_{a+b}$ and $W=W_{b}$ following \eqref{gencond1} and \eqref{eq:partition}, respectively, and the (derivative of) the second term is handled in a similar fashion\footnote{See in particular Remarks~\ref{rmk:deriv.v} and \ref{rmk:deriv.v.1}; they apply in an almost identical fashion to $\tilde{V}_\lambda$ as to $V_\lambda$.} as $V_\lambda$ in Section~\ref{subsec:bd.deriv}. The third term of \eqref{eq:UC.reg.3} has norm less than one because $L^{1/2}B^\dagger B^{\dagger'} L^{1/2}=L^{1/2}L^\dagger L^{1/2} $ is a projection matrix. Finally the fourth term of \eqref{eq:UC.reg.3} is $O_p(1)$ by $\lVert \mathcal{R} \rVert
\leq 	
M$ and Assumption~\ref{ass:reg.condreg}.
Using arguments analogous to those in the proof of Lemma~\ref{lm:stoequi_bypiece}, we have $S_1(\hat{\bs{\theta}}^{\text{ls}}-\bs{\theta})$ and $S(\bs{v}-\bs{\theta})$ are uniformly $O_p(1)$.
\end{proof}

\begin{proof}[Proof of Corollary \ref{cor:reg}]

First note that 
\begin{equation}
	\begin{aligned}[b]
		&l_w(\tilde{\bs{\theta}}(\bs{\lambda}^{\widetilde{\text{ure}}}),\bs{\theta}) - l_w(\hat{\bs{\theta}}(\bs{\lambda}^{\text{ol}}),\bs{\theta})
		\\
		&=
		\underbrace{l_w(\tilde{\bs{\theta}}(\bs{\lambda}^{\widetilde{\text{ure}}}),\bs{\theta}) - l_w(\hat{\bs{\theta}}(\bs{\lambda}^{\widetilde{\text{ure}}}),\bs{\theta}) }_{\text{(I)}}
		+
		\underbrace{l_w(\hat{\bs{\theta}}(\bs{\lambda}^{\widetilde{\text{ure}}}),\bs{\theta}) - l_w(\hat{\bs{\theta}}(\bs{\lambda}^{\text{ol}}),\bs{\theta})}_{\text{(II)}}
	\end{aligned}
\end{equation}
and term (I) is $o_p(1)$ by Lemma \ref{lm:UC.reg}. Furthermore, term (II) can be bound as follows
\begin{equation}
	\begin{aligned}[b]
		0 
		&\leq 
		l_w(\hat{\bs{\theta}}(\bs{\lambda}^{\widetilde{\text{ure}}}),\bs{\theta}) - l_w(\hat{\bs{\theta}}(\bs{\lambda}^{\text{ol}}),\bs{\theta}) \\
		&=
		\underbrace{ l_w(\hat{\bs{\theta}}(\bs{\lambda}^{\widetilde{\text{ure}}}),\bs{\theta}) - \text{URE}(\bs{\lambda}^{\widetilde{\text{ure}}})}_{\text{(III)}}
		+
		\underbrace{\text{URE}(\bs{\lambda}^{\widetilde{\text{ure}}}) - \text{URE}(\bs{\lambda}^{\text{ure}})}_{\text{(IV)}}
		\\
		&+
		\underbrace{\text{URE}(\bs{\lambda}^{\text{ure}}) - l_w(\hat{\bs{\theta}}(\bs{\lambda}^{\text{ure}}),\bs{\theta})}_{\text{(V)}} 
		+
		\underbrace{l_w(\hat{\bs{\theta}}(\bs{\lambda}^{\text{ure}}),\bs{\theta}) - l_w(\hat{\bs{\theta}}(\bs{\lambda}^{\text{ol}}),\bs{\theta})}_{\text{(VI)}}
	\end{aligned}
\end{equation}
Note that terms (III) and (V) are $o_p(1)$ by Lemma \ref{lm:unif.convg}, and term (VI) is $o_p(1)$ by Theorem \ref{thm:main}. 
Furthermore, term (IV) can be bound as follows
\begin{equation}
	\begin{aligned}[b]
		0 
		&\leq 
		\text{URE}(\bs{\lambda}^{\widetilde{\text{ure}}}) - \text{URE}(\bs{\lambda}^{\text{ure}}) \\
		&=
		\text{URE}(\bs{\lambda}^{\widetilde{\text{ure}}}) - \widetilde{\text{URE}}(\bs{\lambda}^{\widetilde{\text{ure}}}) 
		+
		\widetilde{\text{URE}}(\bs{\lambda}^{\widetilde{\text{ure}}}) - \text{URE}(\bs{\lambda}^{\text{ure}}) \\
		&\leq_{(1)}
		\underbrace{\text{URE}(\bs{\lambda}^{\widetilde{\text{ure}}}) - \widetilde{\text{URE}}(\bs{\lambda}^{\widetilde{\text{ure}}})}_{\text{(VII)}} 
		+
		\underbrace{\widetilde{\text{URE}}(\bs{\lambda}^{\text{ure}}) - \text{URE}(\bs{\lambda}^{\text{ure}})}_{\text{(VIII)}}
	\end{aligned}
\end{equation}
where $\leq_{(1)}$ follows because $\widetilde{\text{URE}}(\bs{\lambda}^{\widetilde{\text{ure}}})\leq\widetilde{\text{URE}}(\bs{\lambda}^{\text{ure}})$ by definition of URE minimization. Furthermore, terms (VII) and (VIII) are $o_p(1)$ by Lemma \ref{lm:UC.reg}.

Thus, we have term $0\leq\text{(II)}\leq\text{(III)}+\text{(V)}+\text{(VI)}+\text{(VII)}+\text{(VIII)} = o_p(1)$ and, therefore,
\begin{equation}
	\begin{aligned}[b]
		l_w(\tilde{\bs{\theta}}(\bs{\lambda}^{\widetilde{\text{ure}}}),\bs{\theta}) - l_w(\hat{\bs{\theta}}(\bs{\lambda}^{\text{ol}}),\bs{\theta})
		=
		\text{(I)}
		+
		\text{(II)}
		= o_p(1) + o_p(1)=o_p(1).
	\end{aligned}
\end{equation}
\end{proof}

\begin{proof}[Proof of Lemmas~\ref{lm:1way}]
We first introduce some additional definitions for this proof. Define 
\begin{equation}
	\begin{aligned}[b]
		D_{1,\lambda_a} &:= B_1^\prime B_1 + \lambda_a, \ \
		D_{2,\lambda_b} := B_2^\prime B_2 + \lambda_b, \\
		M_{1,\lambda_a} &:= I_{rT} - B_1 D_{1,\lambda_a}^{-1} B_1^\prime, \ \
		M_{2,\lambda_b} := I_{rT} - B_2 D_{2,\lambda_b}^{-1} B_2^\prime.
	\end{aligned}
\end{equation}
Note that $M_{1,0}$, which we simply write as $M_1$, is the matrix that projects onto the space orthogonal to the column space of $B_1$, whereas $M_{1,\infty}$ is an identity matrix. 

Recalling that $\phi=0$ for this Lemma so that $\Lambda^*=\Lambda$, we apply the block matrix inversion formula to $[L + \Lambda]^{-1}$, see Lemma~\ref{lm:blk.inv.2},
such that 
\begin{equation}
	\begin{aligned}
		S_0(\bs{\lambda})& := [L + \Lambda]^{-1}\Lambda
		\\
		&=
		\begin{bmatrix}
			[\lambda_a + B_1^\prime M_{2,\lambda_b} B_1]^{-1} \lambda_a &
			-D_{1,\lambda_a}^{-1} B_1^\prime B_2
			[\lambda_b + B_2^\prime M_{1,\lambda_a} B_2]^{-1} \lambda_b
			\\
			-[\lambda_b + B_2^\prime M_{1,\lambda_a} B_2]^{-1}
			B_2^\prime B_1 D_{1,\lambda_a}^{-1} \lambda_a
			&
			[\lambda_b + B_2^\prime M_{1,\lambda_a} B_2]^{-1} \lambda_b
		\end{bmatrix}.
	\end{aligned}
\end{equation}
For given $\lambda_b$, as $\lambda_a \rightarrow \infty$, we have 
\begin{equation}
	\begin{aligned}
		S_0(\bs{\lambda})
		& \rightarrow
		\begin{bmatrix}
			I_r & 
			0_{r\times c} \\
			-[\lambda_b+B_2^\prime B_2]^{-1}B_2^\prime B_1 & 
			[\lambda_b + B_2^\prime  B_2]^{-1} \lambda_b
		\end{bmatrix}
	\end{aligned}
\end{equation}
and our estimator $\hat{\bs{\theta}}(\bs{\lambda})=\mathcal{R}[\hat{\bs{\theta}}^{\text{ls}}-S_0(\bs{\lambda})\hat{\bs{\theta}}^{\text{ls}}]$ satisfies
\begin{equation}
	\begin{aligned}[b]
		\hat{\bs{\theta}}(\bs{\lambda})
		&\rightarrow 
		\mathcal{R}
		\begin{bmatrix}
			\hat{\bs{\alpha}}^{\text{ls}}
			\\
			\hat{\bs{\beta}}^{\text{ls}}
		\end{bmatrix}
		-
		\mathcal{R}
		\begin{bmatrix}
			\hat{\bs{\alpha}}^{\text{ls}}
			\\
			-[\lambda_b+B_2^\prime B_2]^{-1}B_2^\prime B_1\hat{\bs{\alpha}}^{\text{ls}}
			+ 
			[\lambda_b+B_2^\prime B_2]^{-1}\lambda_b \hat{\bs{\beta}}^{\text{ls}}
		\end{bmatrix}
		\\
		&=
		\mathcal{R}
		\begin{bmatrix}
			\bs{0}_r
			\\
			[\lambda_b+B_2^\prime B_2]^{-1}B_2^\prime 
			[B_1\hat{\bs{\alpha}}^{\text{ls}} + B_2 \hat{\bs{\beta}}^{\text{ls}}]
		\end{bmatrix} 
		= _{(1)}
		\mathcal{R}
		\begin{bmatrix}
			\bs{0}_r
			\\
			[\lambda_b+B_2^\prime B_2]^{-1}B_2^\prime \bs{Y}
		\end{bmatrix}, 
	\end{aligned}
\end{equation}
where $=_{(1)}$ follows from
$
B^\prime (B_1\hat{\bs{\alpha}}^{\text{ls}} + B_2\hat{\bs{\beta}}^{\text{ls}})
=
B^\prime B\hat{\bs{\theta}}^{\text{ls}}
=
L  L^{-}B^\prime \bs{Y}
=B^\prime \bs{Y}
$ using $LL^{-}B'=B'$.
\end{proof}

\begin{proof}[Proof of Lemma \ref{lm:1way.2}]

Continuing from the above proof, when $\lambda_a= 0$ and $\lambda_b \ne 0$, we have 
\begin{equation}
	\begin{aligned}
		S_0(\bs{\lambda})
		& =
		\begin{bmatrix}
			0_{r\times r} & 
			- (B_1'B_1)^{-1} B_1'B_2 [\lambda_b + L_{2,\perp}]^{-1}\lambda_b \\
			0_{c\times r} & 
			[\lambda_b + L_{2,\perp}]^{-1} \lambda_b
		\end{bmatrix}
	\end{aligned}
\end{equation}
where $L_{2,\perp} := B_2' M_1 B_2$ is the Laplacian matrix of the unipartite graph after projecting out the $\bs{\alpha}$-dimension of the bipartite graph. 
In this case, our estimator $\hat{\bs{\theta}}(\bs{\lambda})=\mathcal{R}[\hat{\bs{\theta}}^{\text{ls}}-S_0(\bs{\lambda})\hat{\bs{\theta}}^{\text{ls}}]$ satisfies
\begin{equation}
	\begin{aligned}[b]
		\hat{\bs{\theta}}(\bs{\lambda})
		& =
		\mathcal{R}
		\begin{bmatrix}
			\hat{\bs{\alpha}}^{\text{ls}}
			\\
			\hat{\bs{\beta}}^{\text{ls}}
		\end{bmatrix}
		-
		\mathcal{R}
		\begin{bmatrix}
			-(B_1'B_1)^{-1} B_1'B_2 [\lambda_b + L_{2,\perp}]^{-1}\lambda_b \hat{\bs{\beta}}^{\text{ls}}
			\\ 
			[\lambda_b+L_{2,\perp}]^{-1}\lambda_b \hat{\bs{\beta}}^{\text{ls}}
		\end{bmatrix}
		\\
		& =
		\mathcal{R}
		\begin{bmatrix}
			\hat{\bs{\alpha}}^{\text{ls}} +
			(B_1'B_1)^{-1} B_1'B_2 [\lambda_b + L_{2,\perp}]^{-1}\lambda_b \hat{\bs{\beta}}^{ls}
			\\
			[\lambda_b+L_{2,\perp}]^{-1}L_{2,\perp} \hat{\bs{\beta}}^{\text{ls}}
		\end{bmatrix}. 
	\end{aligned}
\end{equation}
\end{proof}

\section{Proofs of Results in Appendix of the Paper}

\begin{proof}[Proof of Lemma \ref{lm:decomp}]
	Using the relation $S+S_1=\mathcal{R}$, one can write $l_w(\hat{\bs{\theta}},\bs{\theta}) - \text{URE}(\bs{\lambda})$ as 
	\begin{equation}
		\label{eq:decomp1}
		\begin{aligned}
			& \hspace{1.3em} l_w(\hat{\bs{\theta}},\bs{\theta}) - \text{URE}(\bs{\lambda})
			= 
			\left[\hat{\bs{\theta}}^{\text{ls}}-\bs{\theta}-S\hat{\bs{\theta}}^{\text{ls}}+S\bs{v}\right]^\prime W 
			\left[\hat{\bs{\theta}}^{\text{ls}}-\bs{\theta}-S\hat{\bs{\theta}}^{\text{ls}}+S\bs{v}\right]
			\\  & - 
			\left\{
			\tr\left[2S_1 L^- W - WL^- \right]
			+ (\hat{\bs{\theta}}^{\text{ls}} -\bs{v})'S'WS(\hat{\bs{\theta}}^{\text{ls}} -\bs{v}) \right \} \\
			&=
			[\hat{\bs{\theta}}^{\text{ls}}-\bs{\theta}]^\prime W[\hat{\bs{\theta}}^{\text{ls}}-\bs{\theta}] 
			-2(S\hat{\bs{\theta}}^{\text{ls}})^\prime W (\hat{\bs{\theta}}^{\text{ls}}-\bs{\theta})+2(S \bs{v})^\prime W[\hat{\bs{\theta}}^{\text{ls}}-\bs{\theta}]\\
			& -
			2\tr[S_1L^- W ] +\tr[WL^- ], 
		\end{aligned}
	\end{equation}
	where the second term in the last equality satisfies
	\begin{equation}
		\label{eq:decomp2}
		\begin{aligned}
			&(S\hat{\bs{\theta}}^{\text{ls}})^\prime W(\hat{\bs{\theta}}^{\text{ls}}-\bs{\theta})
			=
			(S(\hat{\bs{\theta}}^{\text{ls}}-\bs{\theta}))^\prime W (\hat{\bs{\theta}}^{\text{ls}}-\bs{\theta})
			+(S\bs{\theta})^\prime W (\hat{\bs{\theta}}^{\text{ls}}-\bs{\theta}) \\
			&=
			(\hat{\bs{\theta}}^{\text{ls}}-\bs{\theta})^\prime W (\hat{\bs{\theta}}^{\text{ls}}-\bs{\theta})
			-(S_1(\hat{\bs{\theta}}^{\text{ls}}-\bs{\theta}))^\prime W (\hat{\bs{\theta}}^{\text{ls}}-\bs{\theta})
			+(S\bs{\theta})^\prime W (\hat{\bs{\theta}}^{\text{ls}}-\bs{\theta}). 
		\end{aligned}
	\end{equation}
	Substituting \eqref{eq:decomp2} into \eqref{eq:decomp1} and we have the desired expression.
\end{proof}

\begin{proof}[Proof of  Lemma \eqref{eq:E8}]
	To show part (a), we have 
	\begin{equation}
		\begin{aligned}[b]
			\hat{\bs{\theta}}^{\text{ls}} - \bs{\theta}
			=_{(1)}
			\mathcal{R}  L^- B^\prime\bs{Y} -
			\mathcal{R}  L^- L \bs{\theta}
			=
			\mathcal{R}  L^- B^\prime[\bs{Y} - B\bs{\theta}] 
			=
			\mathcal{R}  L^- B^\prime\bs{U},
		\end{aligned}
	\end{equation}
	where $=_{(1)}$ follows from definition of $\hat{\bs{\theta}}^{\text{ls}}$ and \eqref{eq:E6}.

	To show part (b), 
	with a slight abuse of notation, we replace the double subscript of $\left\{u_{it}\right\}$ with a single subscript $\left\{u_k\right\}$. 
	\begin{equation}
		\begin{aligned}[b]
			\mathbb{V}[\bs{U}^\prime A \bs{U}]
			&=
			\mathbb{V}\left[\sum_{k\neq l}A_{kl}u_ku_l + \sum_{k=1}A_{kk}u_k^2\right],
		\end{aligned}
	\end{equation}
	\begin{equation}
		\begin{aligned}[b]
			\mathbb{V}\left[\sum_{k\neq l}A_{kl}u_ku_l\right]
			&=_{(1)}
			\sum_{k\neq l}A_{kl}^2\mathbb{V}[u_k]\mathbb{V}[u_l] 
			=
			\sigma^4\sum_{k\neq l}A_{kl}^2 
			\leq\sigma^4\tr[A'A], \\
			\mathbb{V}\left[\sum_{k=1}A_{kk}u_k^2\right] 
			&=_{(2)}
			\sum_{k=1}A_{kk}^2\mathbb{V}[u_k^2] 
			=
			\sum_{k=1}A_{k}^2(\mathbb{E}[u_k^4] - \sigma^4) 
			\leq
			(\mathbb{E}[u_k^4] - \sigma^4)\tr[A'A],
		\end{aligned}
	\end{equation}
	where $=_{(1)}$ and $=_{(2)}$ follow from independence of $\left\{u_{it}\right\}$. Thus,
	\begin{equation}
		\begin{aligned}[b]
			\mathbb{V}[\bs{U}^\prime A \bs{U}] 
			&= 
			\mathbb{V}\left[\sum_{k\neq l}A_{kl}u_ku_l + \sum_{k=1}A_{kk}u_k^2\right] 
			\\
			&=	
			\mathbb{V}\left[\sum_{k\neq l}A_{kl}u_ku_l\right]
			+
			\mathbb{V}\left[\sum_{k=1}A_{kk}u_k^2\right]
			+
			2Cov\left[\sum_{k\neq l}A_{kl}u_ku_l \ , \ \sum_{k=1}A_{kk}u_k^2\right] \\
			&\leq_{(1)}
			\left(\mathbb{E}[u_k^4] + 2\sigma^2\sqrt{\mathbb{E}[u_k^4] - \sigma^4}\right)\tr[A'A],
		\end{aligned}
	\end{equation}
	where $\leq_{(1)}$ follow by the Cauchy-Schwarz inequality. 
	
	To show part (c), note that
	\begin{equation}
		\begin{aligned}[b]
			&S_1 = R [ L + \Lambda^* ]^{-1} L
			=_{(1)}
			R L^- L \cdot [L + \Lambda^*]^{-1} L 
			=_{(2)}
			R L^\dagger L \cdot \Lambda^{*-1} B' [B\Lambda^{*-1}B' + I]^{-1} B \\ 
			&=
			R L^\dagger B' \cdot B\Lambda^{*-1} B' [B\Lambda^{*-1}B' + I]^{-1} B 
			=_{(3)} R B^\dagger G B,
		\end{aligned}
	\end{equation}
	where $=_{(1)}$ follows from $\mathcal{R}
	=
	\mathcal{R} L^-L $, see Lemma \eqref{eq:E6}, $=_{(2)}$ follows from $B\mathcal{R}=B$, see Lemma \eqref{eq:E3} and 
	$D^{-1}C(A-BDC)^{-1} =(D-CA^{-1}B)^{-1}CA^{-1}$
	for conformal matrices, and $=_{(3)}$ follows from property of the Moore-Penrose inverse.
	
	Since $G + [B\Lambda^{*-1}B' + I]^{-1} = I$, we have 
	$
			||G||
			=
			|| I - [B\Lambda^{*-1}B' + I]^{-1} || 
			\leq_{(1)} 1,
$
	where $\leq_{(1)}$ holds because the eigenvalues of the positive semidefinite $B\Lambda^{*-1}B' + I$ are no smaller than $1$, and hence the eigenvalues of its inverse are no greater than $1$.
\end{proof}	

\begin{proof}[Proof of Lemma \ref{lm: Vdefine}]
	
	We have
	$S_1 = _{(1)}  \mathcal{R}[L+\Lambda^*]^{-1}L 
	=_{(2)}  \mathcal{R}  L^{-}L[L+\Lambda^*]^{-1}L
	=_{(3)}  \newline \mathcal{R}  L^\dagger L[L+\Lambda^*]^{-1}L
	=_{(4)}  \mathcal{R}  [L^\dagger]^{1/2} \cdot   V_\lambda \cdot L^{1/2-\epsilon}$,
	where $=_{(1)}$ is the definition of $S_1$, $=_{(2)}$ uses  $R=RL^{-}L$, see Lemma
	\eqref{eq:E6}, $=_{(3)}$ use 
	$L^- = \mathcal{R}  L^\dagger \mathcal{R}'$, $\mathcal{R}^2=\mathcal{R}$, and $\mathcal{R}'L=L$ because $B\mathcal{R}=B$, $=_{(4)}$ uses $[L^\dagger]^{1/2}L^{1/2} = [L^\dagger]L$ and the definition of $V_\lambda $.
\end{proof}

%

{}
\label{subsec:bd.deriv}

\begin{proof}[Proof of Lemma 	\ref{lm:bd.deriv}]
	Our objective is to bound 
	\begin{align}
		\sup_{\tilde{\bs{\lambda}}\in\tilde{\mathcal{J}} }
		\lVert \frac{\partial V_\lambda}{\partial x} \rVert
		&=
		\sup_{\tilde{\bs{\lambda}}\in\tilde{\mathcal{J}} }
		\lVert 
		L^{1/2+\epsilon}
		\frac{\partial}{\partial x}
		[L + \Lambda^*]^{-1} L^{1/2}
		\rVert
		\label{eq:deriv.bound}
	\end{align}
	for
	$x\in\{\tilde{\lambda}_a,\tilde{\lambda}_b,\phi\}$, where $\epsilon^{-1}$ is a given large positive even number. For presentation simplicity, we first define
	\begin{align}
		\check{V}_\lambda:&=L^{\epsilon}  [L+\Lambda^*]^{-1}  L, \text{ such that } \nonumber\\
		V_\lambda:&=L^{1/2}  [L+\Lambda^*]^{-1}  L^{1/2+\epsilon}=L^{1/2-\epsilon}\check{V}_\lambda [L^\dagger]^{1/2-\epsilon},
		\label{eq:deriv.bound.v}
	\end{align} 
	We first show how to bound 
	\begin{align}
		\sup_{\tilde{\bs{\lambda}}\in\tilde{\mathcal{J}} }
		\lVert \frac{\partial \check{V}_\lambda}{\partial x} \rVert
		&=
		\sup_{\tilde{\bs{\lambda}}\in\tilde{\mathcal{J}} }
		\lVert 
		L^{\epsilon}
		\frac{\partial}{\partial x}
		[L + \Lambda^*]^{-1} L
		\rVert,
		\label{eq:deriv.bound2}
	\end{align}
	and then discuss how to bound $\textstyle\sup_{\tilde{\bs{\lambda}}\in\tilde{\mathcal{J}} }
	\lVert \frac{\partial V_\lambda}{\partial x} \rVert$.
	
	We start with the derivative of $\check{V}_\lambda$ w.r.t. $\tilde{\lambda}_a$. The steps for derivative w.r.t. $\tilde{\lambda}_b$ are obviously symmetric, whereas those for the derivative w.r.t. $\phi$ are elaborated in Remarks~\ref{rmk:deriv.phi} and \ref{rmk:deriv.phi.1}. Finally, the steps for the derivatives of $V_\lambda$ w.r.t. $(\tilde{\lambda}_a,\tilde{\lambda}_b,\phi)$ are elaborated in Remarks~\ref{rmk:deriv.v} and \ref{rmk:deriv.v.1}.

	The rest of the proof proceeds as follows:
	We first writes the derivative of \eqref{eq:deriv.bound2} explicitly, which the subsequent sections utilize in bounding \eqref{eq:deriv.bound2} in different regimes. We first takes the regime where both $\lambda_a,\lambda_b\leq1$, then takes the regimes where at least one of $\lambda_a,\lambda_b$ is greater than 1.
	
	
	\textbf{Step 1. Take Derivative.} 
	Note that
	\begin{equation}
		\frac{\partial}{\partial \tilde{\lambda}_a}
		L^\epsilon
		[L + \Lambda^*]^{-1} L
		=
		\frac{\partial}{\partial \lambda_a^{1/2}}
		L^\epsilon
		[L + \Lambda^*]^{-1} L
		\cdot
		\frac{\partial\lambda_a^{1/2}}{\partial \tilde{\lambda_a}}.
		\label{eq:deriv.main}
	\end{equation}
	Following Lemma~\ref{lm:take.deriv},
	\begin{align}
		\frac{\partial\lambda_a^{1/2}}{\partial\tilde{\lambda}_a}
		=&
		\frac{\partial\lambda_a^{1/2}}{\partial\lambda_a^{-1}} \cdot
		\frac{\partial\lambda_a^{-1}}{\partial\tilde{\lambda}_a}
		=
		\frac{kn}{2}
		\left[1+\lambda_a^{\frac{1}{k}}\right]^{\frac{n+1}{n}}
		\lambda_a^{\frac{1}{2}-\frac{1}{kn}}
		=_{(1)}
		\frac{1}{2\epsilon}
		\left[1+\lambda_a^{\frac{1}{k}}\right]^{\frac{k}{2}}
		\lambda_a^{\frac{1}{2}-\epsilon}
		,
		\label{eq:deriv.lambda.tilde}
	\end{align}
	where $=_{(1)}$ follows from our choice of $(n,k)$ such that $\epsilon=(kn)^{-1}$.
	
	Let $\Lambda_a$ be a diagonal matrix with $1$ on the first $r$ diagonals and zero everywhere else. Then we have 
	\begin{equation}
		\begin{aligned}[b]
			&\frac{\partial}{\partial \lambda_a^{1/2}}
			L^\epsilon
			[L + \Lambda^*]^{-1} L
			=
			L^\epsilon
			[L + \Lambda^*]^{-1}
			\left\{
			\Lambda_a
			[-\phi\mathcal{A}+I]\Lambda^{1/2}
			+
			\Lambda^{1/2}[-\phi\mathcal{A}+I]\Lambda_a
			\right\}
			[L + \Lambda^*]^{-1}
			L
			\\
			=&
			L^\epsilon
			[L + \Lambda^*]^{-1}
			\cdot
			\begin{bmatrix}
				2\lambda_a^{1/2} & -\phi\lambda_b^{1/2}\mathcal{A}_{12} \\
				-\phi\lambda_b^{1/2}\mathcal{A}_{21} & 0
			\end{bmatrix}
			\cdot
			[L + \Lambda^*]^{-1}
			L.
		\end{aligned}
		\label{eq:deriv}
	\end{equation}
	This implies that bounding \eqref{eq:deriv.bound2} is equivalent to bounding\footnote{Any power of $[L+\Lambda^*]^{-1}$ may be defined using its eigendecomposition.} 
	\begin{align}
		L^\epsilon
		[L + \Lambda^*]^{-\epsilon}
		\cdot 
		[L + \Lambda^*]^{-(1-\epsilon)}
		\begin{bmatrix}
			\lambda_a^{1/2} & 0 \\
			-\phi\lambda_b^{1/2}\mathcal{A}_{21} & 0
		\end{bmatrix}
		\frac{\partial\lambda_a^{1/2}}{\partial\tilde{\lambda}_a} 
		\cdot
		[L + \Lambda^*]^{-1}
		L.
		\label{eq:deriv.1}
	\end{align}
	We can show that the term that replace $-\phi\lambda_b^{1/2}\mathcal{A}_{21}$ with $-\phi\lambda_b^{1/2}\mathcal{A}_{12}$ in the symmetric diagonal position can be bounded by the same arguments.
	
	\textbf{Step 2. Study the case with small $\lambda_a$ and small $\lambda_b$.}
	In this section, we assume that we are in the regime where $\lambda_a,\lambda_b\leq 1$. Recall that our goal is to uniformly bound \eqref{eq:deriv.1}, and that from \eqref{eq:deriv.lambda.tilde}
	\begin{equation}
		\frac{\partial\lambda_a^{1/2}}{\partial\tilde{\lambda}_a}
		=
		\frac{1}{2\epsilon}
		\left[1+\lambda_a^{\frac{1}{k}}\right]^{\frac{k}{2}}
		\cdot
		\lambda_a^{\frac{1}{2}-\epsilon} = O(\lambda_a^{\frac{1}{2}-\epsilon} ),
	\end{equation}
	because $\lambda_a\leq1$. As such, our task reduces to bounding
	\begin{equation}
		L^\epsilon
		[L + \Lambda^*]^{-\epsilon}
		\cdot 
		[L + \Lambda^*]^{-(1-\epsilon)}
		\begin{bmatrix}
			\lambda_a^{1-\epsilon} & 0 \\
			-\phi\lambda_a^{1/2-\epsilon}\lambda_b^{1/2}\mathcal{A}_{21} & 0
		\end{bmatrix}
		\cdot
		[L + \Lambda^*]^{-1}
		L.
		\label{eq:deriv.small.a.1}
	\end{equation}
	We will bound each of the three terms in  \eqref{eq:deriv.small.a.1}.
	
	The first term in \eqref{eq:deriv.small.a.1} is bounded following Lemma~\ref{lm:deriv.bound.lower} because $\epsilon^{-1}$ is an even number by construction.
	
	To bound the norm of the third term in \eqref{eq:deriv.small.a.1} is equivalent to bounding the norm of $[L + \Lambda^*]^{-1}\Lambda^*$. Because $||\phi\mathcal{A}_{21}||=O(1)$, this follows from
	\begin{equation}
		\begin{aligned}[b]
			||[L + \Lambda^*]^{-1}|| \cdot ||\Lambda^*||
			\leq& 
			|| [L+\Lambda^*]^{-1} || 
			\cdot 
			|| \lambda_a + \lambda_a^{1/2}\lambda_b^{1/2} ||
		\end{aligned}
		\label{eq:deriv.small.a.2.1}
	\end{equation}
	which is indeed bounded, because for instance
	\begin{equation}
		\begin{aligned}[b]
			[L+\Lambda^*]^{-1} \lambda_a 
			&=[L+\Lambda^*]^{-1} \lambda_a 
			\left[\left\{\lambda_a>\lambda_b\right\} + \left\{\lambda_b>\lambda_a\right\} \right]
			\\
			&=_{(1)} 
			O(\lambda_a^{-1}) \lambda_a \left\{\lambda_a>\lambda_b\right\}
			+
			O(\lambda_b^{-1}) \lambda_b \frac{\lambda_a}{\lambda_b} \left\{\lambda_b>\lambda_a\right\}
			= O(1) + O(1),
		\end{aligned}
		\label{eq:deriv.small.a.2.2}
	\end{equation}
	where $=_{(1)}$ follows from Lemma~\ref{lm:deriv.bound}.
	
	Now we bound the norm of the second term in \eqref{eq:deriv.small.a.1}, which is achieved by bounding
	\begin{align}
		&\left([L + \Lambda^*]^{-1}\right)^{1-\epsilon}
		(\lambda_a^{1-\epsilon}+ \lambda_a^{1/2-\epsilon}\lambda_b^{1/2})
		\nonumber\\
		=&
		\left([L + \Lambda^*]^{-1} \cdot\lambda_a\right)^{1-\epsilon}
		+
		\left([L + \Lambda^*]^{-1} \cdot\lambda_a\right)^{1/2-\epsilon}
		\left([L + \Lambda^*]^{-1} \cdot\lambda_b\right)^{1/2}
		\label{eq:deriv.small.a.3}
	\end{align}
	which is indeed bounded by steps identical to \eqref{eq:deriv.small.a.2.2}.
	
	\begin{remark}[Bounding the derivative of $\phi$]
		\label{rmk:deriv.phi}
		The steps for bounding 
		\begin{equation*}
			\sup_{\tilde{\bs{\lambda}}\in\tilde{\mathcal{J}}}
			\lVert 
			L^\epsilon
			\frac{\partial}{\partial \phi}
			[L + \Lambda^*]^{-1} L
			\rVert
		\end{equation*}
		is similar to the arguments above: our task is reduced to bounding,  similar to \eqref{eq:deriv.1} and \eqref{eq:deriv.small.a.1}, 
		\begin{equation}
			L^\epsilon
			[L + \Lambda^*]^{-\epsilon}
			\cdot 
			[L + \Lambda^*]^{-(1-\epsilon)}
			\cdot
			\begin{bmatrix}
				0 & 0 \\
				-\lambda_a^{1/2}\lambda_b^{1/2}\mathcal{A}_{21} & 0
			\end{bmatrix}
			\cdot
			[L + \Lambda^*]^{-1}
			L,
		\end{equation}
		which is achieved by bounding 
		$
		[L + \Lambda^*]^{-1}
		\lambda_a^{1/2}\lambda_b^{1/2}
		\cdot
		[L + \Lambda^*]^{-1}
		L.
		$
		This is further reduced to bounding the following items,
		$
		[L + \Lambda^*]^{-1}\lambda_a,
		\quad 
		[L + \Lambda^*]^{-1}\lambda_b,
		\quad 
		[L + \Lambda^*]^{-1} L,
		$
		all of which are exactly the terms studied previously. 
		\qed
	\end{remark}

	\begin{remark}[Bounding of $V_\lambda$]
		\label{rmk:deriv.v}
		Recall that from \eqref{eq:deriv.bound.v}
		\begin{equation}
			\frac{\partial}{\partial \tilde{\lambda}_a} V_\lambda
			=
			L^{1/2-\epsilon} \cdot
			\frac{\partial}{\partial \tilde{\lambda}_a} \check{V}_\lambda \cdot 
			[L^\dagger]^{1/2-\epsilon},
			\label{eq:deriv.v.0}
		\end{equation} 
		so that bounding $\tfrac{\partial}{\partial \tilde{\lambda}_a} V_\lambda$ in this regime only requires a different factorization of the terms from the workings above. To elaborate: recall that $\tfrac{\partial}{\partial \tilde{\lambda}_a} \check{V}_\lambda$ is bounded by \eqref{eq:deriv.small.a.1}, and thus substituting out $\tfrac{\partial}{\partial \tilde{\lambda}_a} \check{V}_\lambda$ from \eqref{eq:deriv.v.0} using \eqref{eq:deriv.small.a.1} reveals that bounding $\tfrac{\partial}{\partial \tilde{\lambda}_a} V_\lambda$ is equivalent to bounding
		\begin{equation}
			\begin{aligned}[b]
				L^{\frac{1}{2}}
				[L + \Lambda^*]^{-\frac{1}{2}}
				\cdot 
				[L + \Lambda^*]^{-\frac{1}{2}}
				\begin{bmatrix}
					\lambda_a^{1-\epsilon} & 0 \\
					-\phi\lambda_a^{1/2-\epsilon} \lambda_b^{1/2}\mathcal{A}_{21} & 0
				\end{bmatrix}
				[L + \Lambda^*]^{-(\frac{1}{2}-\epsilon)} 
				\cdot
				[L + \Lambda^*]^{-(\frac{1}{2}+\epsilon)}
				L^{\frac{1}{2}+\epsilon},
			\end{aligned}
			\label{eq:deriv.v.1}
		\end{equation}
		where the first and third terms are bounded by Lemma~\ref{lm:deriv.bound.lower} -- this step for the first term is straightforward whereas the third term deserves more elaboration:  
		\begin{equation}
			\begin{aligned}[b]
				&|| L^{\frac{1}{2}+\epsilon}
				[L + \Lambda^*]^{-(\frac{1}{2}+\epsilon)} ||^2
				=
				\bar{\rho}( L^{\frac{1}{2}+\epsilon}[L + \Lambda^*]^{-(1+2\epsilon)}L^{\frac{1}{2}+\epsilon}) 
				=
				\bar{\rho}( L[L + \Lambda^*]^{-1} \cdot [L + \Lambda^*]^{-2\epsilon}L^{2\epsilon}) \\
				\leq&
				|| L[L + \Lambda^*]^{-1} \cdot [L + \Lambda^*]^{-2\epsilon}L^{2\epsilon} || 
				\leq
				|| L[L + \Lambda^*]^{-1} || \cdot || [L + \Lambda^*]^{-2\epsilon}L^{2\epsilon} || \\
				=&_{(1)}
				|| I - \Lambda^*[L + \Lambda^*]^{-1} || \cdot || [L + \Lambda^*]^{-2\epsilon}L^{2\epsilon} || ,
			\end{aligned}
		\end{equation}
		where the first term of $=_{(1)}$ is bounded as in \eqref{eq:deriv.small.a.2.2}, and the second term of $=_{(1)}$ is bounded by Lemma~\ref{lm:deriv.bound.lower} because $\tfrac{1}{2}\epsilon^{-1}$ is an even number by construction. 
		
		Finally the middle term of \eqref{eq:deriv.v.1} is bound by 
		$
		[L + \Lambda^*]^{-\frac{1}{2}}
		\cdot
		(\lambda_a^{1-\epsilon}+ \lambda_a^{1/2-\epsilon}\lambda_b^{1/2})
		\cdot
		[L + \Lambda^*]^{-(\frac{1}{2}-\epsilon)},
		$
		which is bounded once we bound the following terms
		$
		[L + \Lambda^*]^{-(\frac{1}{2}-\epsilon)}\lambda_a^{\frac{1}{2}-\epsilon}
		,\quad
		[L + \Lambda^*]^{-\frac{1}{2}}\lambda_a^{\frac{1}{2}}
		,\quad
		[L + \Lambda^*]^{-\frac{1}{2}}\lambda_b^{\frac{1}{2}}
		$.
		These steps are identical to \eqref{eq:deriv.small.a.3}.
		\qed
	\end{remark}
	
	
	\textbf{Step 3. Study the case where at least one of $\lambda_a,\lambda_b$ is large.}
	We will now define several matrices for notational compactness. Let $V:=L + \Lambda^*$. Also let $\bar{\Lambda}:= \Lambda^{1/2} ( -|\phi|I + I ) \Lambda^{1/2}$ which is a diagonal matrix, and lastly $\bar{V}:= L + \bar{\Lambda}$. Note that
	\begin{equation}
		\begin{aligned}[b]
			V_1 =& B_1^\prime B_1 + \lambda_a, \qquad
			& \bar{V}_1 =& B_1^\prime B_1 + \lambda_a(1-|\phi|),
			\\
			V_{12} =& B_1^\prime B_2 - \phi\lambda_a^{1/2}\lambda_b^{1/2}\mathcal{A}_{12}, \qquad
			& \bar{V}_{12} =& B_1^\prime B_2,
			\\
			V_2 =& B_2^\prime B_2 + \lambda_b, \qquad
			& \bar{V}_2 =& B_2^\prime B_2 + \lambda_b(1-|\phi|).
		\end{aligned}
	\end{equation}
	
	For a generic matrix $M$, we use $[M^{-1}]_{1}$, $[M^{-1}]_{12}$, and $[M^{-1}]_{2}$ to denote the top left, top right, and bottom right blocks of the matrix inverse $M^{-1}$ respectively. On the other hand, we use $M_{1}^{-1}$, $M_{2}^{-1}$ to denote the inverses of the top left and bottom right blocks of the matrix $M$ respectively.
	
	We now work in the regime where at least one of $\lambda_a,\lambda_b$ is greater than 1.
	Recall our goal is to uniformly bound \eqref{eq:deriv.bound2}, which reduces to bounding \eqref{eq:deriv.1}.
	Because the norm of $L$ is bounded, and recalling \eqref{eq:deriv.lambda.tilde}, our goal thus further reduces to bounding the norm of
	\begin{align}
		&[L + \Lambda^*]^{-1}
		\cdot
		\begin{bmatrix}
			\lambda_a^{1/2} & 0 \\
			-\phi\lambda_b^{1/2}\mathcal{A}_{21} & 0
		\end{bmatrix}
		\cdot 
		[1+\lambda_a^{\frac{1}{k}}]^{\frac{k}{2}}
		\lambda_a^{\frac{1}{2}-\epsilon}
		\cdot
		[L + \Lambda^*]^{-1}
		\nonumber
		\\
		=&
		\begin{bmatrix}
			[V^{-1}]_{1} & [V^{-1}]_{12} \\
			[V^{-1}]_{21} & [V^{-1}]_{2}
		\end{bmatrix}
		\cdot
		\begin{bmatrix}
			\lambda_a^{1-\epsilon} & 0 \\
			-\phi\lambda_a^{1/2-\epsilon}\lambda_b^{1/2}\mathcal{A}_{21} & 0
		\end{bmatrix}
		\cdot 
		\begin{bmatrix}
			[1+\lambda_a^{\frac{1}{k}}]^{\frac{k}{2}}& 0 \\
			0 & 0
		\end{bmatrix}
		\cdot
		\begin{bmatrix}
			[V^{-1}]_{1} & [V^{-1}]_{12} \\
			[V^{-1}]_{21} & [V^{-1}]_{2}
		\end{bmatrix} 
		\nonumber
		\\
		=&
		\begin{bmatrix}
			[V^{-1}]_{1}\lambda_a^{1-\epsilon} - [V^{-1}]_{12}\mathcal{A}_{21}\phi\lambda_a^{1/2-\epsilon}\lambda_b^{1/2}  
			& 
			0
			\\
			[V^{-1}]_{21}\lambda_a^{1-\epsilon} - [V^{-1}]_{2}\mathcal{A}_{21}\phi\lambda_a^{1/2-\epsilon}\lambda_b^{1/2}  
			&
			0
		\end{bmatrix}
		\cdot 
		\begin{bmatrix}
			O(1+\lambda_a^{1/2}) & 0 \\
			0 & 0
		\end{bmatrix}
		\begin{bmatrix}
			[V^{-1}]_{1} & [V^{-1}]_{12} \\
			[V^{-1}]_{21} & [V^{-1}]_{2}
		\end{bmatrix}.
		\label{eq:remainingcases.1} 
	\end{align}
	First, we take the top-left block of the left matrix in \eqref{eq:remainingcases.1}. First note that from Lemma~\ref{lm:deriv.bound.non.small}, $[V^{-1}]_{1}\lambda_a^{1-\epsilon} = O(1)$.
	Second note that
	\begin{align}
		[V^{-1}]_{12} \lambda_a^{1/2-\epsilon} \lambda_b^{1/2}
		=_{(1)}& 
		O(1\wedge\lambda_a^{-1}) \lambda_a^{1-\epsilon}
		\cdot
		O(\lambda_a^{1/2}\lambda_b^{1/2}) \frac{1}{\lambda_a^{1/2}\lambda_b^{1/2}}
		\cdot
		O(1\wedge\lambda_b^{-1}) \lambda_b 
		=
		O(1),
	\end{align}
	where $=_{(1)}$ follows from the third equation of Lemma~\ref{lm:deriv.bound.non.small}.
	
	Next, we take the bottom-left block of the left matrix in \eqref{eq:remainingcases.1}. Note that 
	\begin{equation}
		\begin{aligned}
			&[V^{-1}]_{21} \lambda_a^{1-\epsilon} - [V^{-1}]_{2} \mathcal{A}_{21} \phi\lambda_a^{1/2-\epsilon}\lambda_b^{1/2}
			=
			\left([V^{-1}]_{21} \lambda_a^{1/2} - [V^{-1}]_{2} \mathcal{A}_{21} \phi\lambda_b^{1/2}\right)\cdot \lambda_a^{1/2-\epsilon}
			\\
			=&_{(1)}
			-[V^{-1}]_{2}
			\left[B_2'B_1 \lambda_a^{1/2} + \phi\lambda_b^{1/2}\mathcal{A}_{21}B_1'B_1\right] V_1^{-1} 
			\cdot \lambda_a^{1/2-\epsilon} \\
			=&
			-[V^{-1}]_{2} 
			\left[B_2'B_1 \cdot V_1^{-1}\lambda_a^{1-\epsilon} + \lambda_b^{1/2} \cdot \phi\mathcal{A}_{21}B_1'B_1 \cdot V_1^{-1}\lambda_a^{1/2-\epsilon} \right]
			\\
			=&_{(2)}
			O(1\wedge\lambda_b^{-1})\left[ O(1) + \lambda_b^{1/2}\cdot O(1)\right]
			=
			O(1),
		\end{aligned}
	\end{equation}
	where $=_{(1)}$ follows from the second equation of Lemma~\ref{lm:factor}, and $=_{(2)}$ follows from Lemma~\ref{lm:deriv.bound.non.small}.
	
	Finally, we take the right matrix (product) of \eqref{eq:remainingcases.1}. Note that $[V^{-1}]_{1} O(1+\lambda_a^{1/2})=O(1)$ by Lemma~\ref{lm:deriv.bound.non.small}, and
	\begin{align}
		[V^{-1}]_{12} O(1+\lambda_a^{1/2})
		=_{(1)} 
		O(1\wedge \lambda_a^{-1}) O(\lambda_a^{1/2}+\lambda_a) O(\lambda_b^{1/2}) O(1\wedge \lambda_b^{-1})
		=O(1),
	\end{align} 
	where again $=_{(1)}$ follows from Lemma~\ref{lm:deriv.bound.non.small}.

	\begin{remark}[Bounding derivative wrt $\phi$]
		\label{rmk:deriv.phi.1}Replace the derivative with respect to (wrt) $\tilde{\lambda}_a$ with the derivative wrt $\phi$,
		our goal is now bounding 
		\begin{align*}
			&[L + \Lambda^*]^{-1}
			\cdot
			\begin{bmatrix}
				0 & 0 \\
				\lambda_a^{1/2}\lambda_b^{1/2} & 0
			\end{bmatrix}
			\cdot
			[L + \Lambda^*]^{-1}
			=
			[L + \Lambda^*]^{-1}
			\begin{bmatrix}
				0 & 0 \\
				\lambda_b^{1/2} & 0
			\end{bmatrix}
			\cdot
			\begin{bmatrix}
				\lambda_a^{1/2} & 0 \\
				0 & 0
			\end{bmatrix}
			[L + \Lambda^*]^{-1},
		\end{align*}
		which is achieved by bounding 
		$
		[V^{-1}]_{12}\lambda_b^{1/2},
		\quad
		[V^{-1}]_{2}\lambda_b^{1/2},
		\quad
		[V^{-1}]_{1}\lambda_a^{1/2},
		\quad
		[V^{-1}]_{12}\lambda_a^{1/2}
		$
		using Lemma~\ref{lm:deriv.bound.non.small}.
		\qed
	\end{remark}

	\begin{remark}[Bounding of $V_\lambda$]
		\label{rmk:deriv.v.1}
		Since any positive power of $L$ is bounded in norm, there is no change in the workings for bounding $V_\lambda$ in this regime.
		\qed
	\end{remark}
	This completes the proof of Lemma~\ref{lm:bd.deriv}.
\end{proof}



Below we present some results used in the proof of Lemma~\ref{lm:bd.deriv}.

\begin{lemma}[Bounding in small $\lambda$ regime]
	\label{lm:deriv.bound}
	Suppose that $\lambda_a,\lambda_b\leq1$. \newline Then $[L+\Lambda^*]^{-1}=O(\lambda_a^{-1} \wedge \lambda_b^{-1})$.
\end{lemma}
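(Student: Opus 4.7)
The plan is to prove the equivalent estimate $\rho_{\min}(L+\Lambda^*)\geq c\max(\lambda_a,\lambda_b)$ for a constant $c>0$ that depends only on $\bar\phi$ and the degree upper bound $M$ of Assumption~\ref{ass.degree}; the desired operator-norm bound $\|[L+\Lambda^*]^{-1}\|=O(\lambda_a^{-1}\wedge\lambda_b^{-1})=O(\max(\lambda_a,\lambda_b)^{-1})$ then follows immediately because $L+\Lambda^*$ is symmetric positive definite. The first reduction is to strip off the $\phi$-correction: since $\mathcal{A}$ has spectral radius at most $1$ and $|\phi|\leq\bar\phi<1$,
\begin{align*}
\Lambda^* \;=\; \Lambda^{1/2}(I-\phi\mathcal{A})\Lambda^{1/2} \;\succeq\; (1-\bar\phi)\,\Lambda,
\end{align*}
so it suffices to show $\rho_{\min}(L+\Lambda)\geq c'\max(\lambda_a,\lambda_b)$. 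Assume without loss of generality that $\lambda_a\leq\lambda_b$; the case $\lambda_b\leq\lambda_a$ is completely symmetric.

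For an arbitrary unit vector $v=(v^{(a)},v^{(b)})$ set $p:=\|v^{(b)}\|^2\in[0,1]$, so $\|v^{(a)}\|^2=1-p$, and observe
\begin{align*}
v^T(L+\Lambda)v \;=\; \|B_1 v^{(a)}+B_2 v^{(b)}\|^2 \;+\; \lambda_a(1-p) \;+\; \lambda_b p.
\end{align*}
I will case-split on $p$. If $p\geq p^*$ for a small threshold $p^*>0$ chosen below, the diagonal term $\lambda_b p\geq \lambda_b p^*$ already delivers the required bound. If $p<p^*$, then $v^{(a)}$ carries almost all the mass, and I apply the elementary inequality $\|x+y\|^2\geq\tfrac12\|x\|^2-\|y\|^2$ with $x=B_1 v^{(a)}$ and $y=B_2 v^{(b)}$, together with two facts: (i) $\|B_1 v^{(a)}\|^2\geq \|v^{(a)}\|^2$, because every diagonal entry of $B_1'B_1$ is at least $1$ in a connected graph (no node can be isolated), and (ii) $\|B_2 v^{(b)}\|^2\leq M\|v^{(b)}\|^2=Mp$ by Assumption~\ref{ass.degree}(ii). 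Combining these yields $\|Bv\|^2\geq \tfrac12(1-p)-Mp$, and choosing for instance $p^*=1/(2+4M)$ makes this at least $\tfrac14$. Since $\lambda_b\leq 1$ in this small-$\lambda$ regime, a positive absolute constant like $\tfrac14$ already dominates a constant multiple of $\lambda_b$.

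The main point requiring care is keeping track of constants so that the threshold $p^*$ and the resulting lower bounds in the two sub-cases knit together into a uniform estimate of the form $c\lambda_b$ with $c$ independent of $(\lambda_a,\lambda_b,\phi)$ inside the stated range; the only non-trivial input there is the lower bound on node degrees, which is a direct consequence of the connected-graph assumption invoked throughout the paper. Once the two sub-cases are combined, the symmetric treatment of $\lambda_b\leq\lambda_a$ gives $\rho_{\min}(L+\Lambda^*)\geq c\lambda_a$ in that regime, and together they deliver $\rho_{\min}(L+\Lambda^*)\geq c\max(\lambda_a,\lambda_b)$, completing the proof.
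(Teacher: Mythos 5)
Your proof is correct, and it takes a genuinely different route from the paper's. The paper proves the bound block by block: it writes $[L+\Lambda^*]^{-1}$ in its $2\times 2$ block partition, compares $V=L+\Lambda^*$ with the auxiliary matrix $\bar V = L+\Lambda^{1/2}(I-|\phi|I)\Lambda^{1/2}$ (Lemma~\ref{lm:submat}), and then invokes the block-inversion identities of Lemmas~\ref{lm:blk.inv.1} and \ref{lm:blk.inv.2} to show separately that $[V^{-1}]_1$, $[V^{-1}]_2$ and $[V^{-1}]_{12}$ are each $O(\lambda_a^{-1}\wedge\lambda_b^{-1})$. You instead bound the smallest eigenvalue of the whole matrix directly by a Rayleigh-quotient case split on how the unit vector's mass is distributed between the $\bs{\alpha}$- and $\bs{\beta}$-blocks, after first replacing $\Lambda^*$ by $(1-\bar\phi)\Lambda$ via $\bar\rho(\mathcal{A})\le 1$ (the same fact the paper uses in Lemma~\ref{lm:pd.var}). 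Your argument is more elementary and self-contained — it needs only the degree bounds of Assumption~\ref{ass.degree} and connectedness (no isolated nodes, so $B_1'B_1\succeq I_r$), avoids the block-inversion machinery entirely, and correctly exploits $\lambda_b\le 1$ in the low-mass case so that the absolute constant $\tfrac14$ dominates $c\lambda_b$; all the individual steps ($\|x+y\|^2\ge\tfrac12\|x\|^2-\|y\|^2$, the choice $p^*=1/(2+4M)$) check out. What the paper's blockwise route buys, and your global eigenvalue bound does not, is the finer block-specific rates (e.g.\ $[V^{-1}]_{12}=O(1\wedge\lambda_a^{-1})O(\lambda_a^{1/2}\lambda_b^{1/2})O(1\wedge\lambda_b^{-1})$ in Lemma~\ref{lm:deriv.bound.non.small}) that are reused later when bounding the derivatives of $V_\lambda$; but for the statement of Lemma~\ref{lm:deriv.bound} itself, which only asserts a bound on the full inverse, your proof is complete and arguably cleaner.
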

\begin{proof}[Proof of Lemma \ref{lm:deriv.bound}]
	We first write 
	\begin{equation}
		\begin{aligned}[b]
			[L + \Lambda^*]^{-1} 
			=
			&\begin{bmatrix}
				[V^{-1}]_1
				&
				[V^{-1}]_{12}
				\\
				[V^{-1}]_{21}
				&
				[V^{-1}]_2
			\end{bmatrix}.
		\end{aligned}
	\end{equation}
	We now show that $[V^{-1}]_1=O(\lambda_a^{-1}\wedge\lambda_b^{-1})$. First 
	$|| [V^{-1}]_1 || \leq || [\bar{V}^{-1}]_1 || $ 
	by Lemma~\ref{lm:submat}, and $[\bar{V}^{-1}]_1 = O(\lambda_a^{-1})$ 
	holds because $[\bar{V}^{-1}]_1\lambda_a \preceq I$; see Lemma~\ref{lm:blk.inv.1}. Similarly, $[V^{-1}]_1 = O(\lambda_b^{-1})$ because $||[V^{-1}]_1|| \leq || [\bar{V}^{-1}]_1 || $, and
	$
	[\bar{V}^{-1}]_1 
	=_{(1)}
	\bar{V}_1^{-1}
	+
	\bar{V}_1^{-1}
	\bar{V}_{12}
	[\bar{V}^{-1}]_2 \bar{V}_1^{-1}
	$
	where $=_{(1)}$ follows from \eqref{eq:blk.inv.1.2} of Lemma~\ref{lm:blk.inv.1}, and note that $[\bar{V}^{-1}]_2\lambda_b \preceq I$, whereas $\bar{V}_1^{-1}$ and $\bar{V}_{12}$ are both $O(1)$.
	An analogous argument holds for $[V^{-1}]_2$. It also holds for $[V^{-1}]_{12}$ because $[V^{-1}]_{12} = -V_1^{-1} V_{12} [V^{-1}]_2$, by applying  Lemma~\ref{lm:blk.inv.2} and using both $V_1^{-1}$ and $V_{12}$ are $O(1)$. 
\end{proof}

\begin{lemma}[Bounding in non-small $\lambda$ regime]
	\label{lm:deriv.bound.non.small}
	Suppose that at least one of $\lambda_a,\lambda_b$ is greater than one. Then we have
	\begin{align*}
		[V^{-1}]_1 = O(1\wedge \lambda_a^{-1}), 
		[V^{-1}]_2 = O(1\wedge \lambda_b^{-1}), \ \ 
		[V^{-1}]_{12} = O(1\wedge\lambda_a^{-1}) \ \ O(\lambda_a^{1/2}\lambda_b^{1/2}) O(1\wedge\lambda_b^{-1}).
	\end{align*}
\end{lemma}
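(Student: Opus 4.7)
The plan is to bootstrap from the PSD ordering $V \succeq \bar V$, which holds because $-\phi\mathcal{A} + I \succeq (1-|\phi|) I$ (since $\|\mathcal{A}\|\le 1$ and $|\phi|<\bar\phi<1$), hence $\Lambda^* \succeq (1-|\phi|)\Lambda = \bar\Lambda$. Since $V, \bar V$ are PD, this gives $V^{-1}\preceq \bar V^{-1}$ in the Loewner order, and therefore $\|[V^{-1}]_i\| \le \|[\bar V^{-1}]_i\|$ for $i=1,2$ (because both $[V^{-1}]_i$ and $[\bar V^{-1}]_i$ are PSD, being diagonal blocks of PSD matrices). The virtue of passing to $\bar V$ is that $\bar\Lambda$ is block-diagonal, so $\bar V$ has the clean structure $\bar V_1=B_1'B_1+\lambda_a(1-|\phi|)I$, $\bar V_{12}=B_1'B_2$, $\bar V_2=B_2'B_2+\lambda_b(1-|\phi|)I$.

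For the diagonal blocks, I would compute via Schur complement $[\bar V^{-1}]_1=(\bar V_1-\bar V_{12}\bar V_2^{-1}\bar V_{21})^{-1}$. Since $\bar V_2^{-1}\preceq (B_2'B_2)^{-1}$ and $B_1'B_2(B_2'B_2)^{-1}B_2'B_1 = B_1' P_{B_2} B_1 \preceq B_1'B_1$ (projector domination), the Schur complement satisfies $\bar V_1-\bar V_{12}\bar V_2^{-1}\bar V_{21}\succeq \lambda_a(1-|\phi|)I$, yielding $\|[\bar V^{-1}]_1\|\le C/\lambda_a$ whenever $\lambda_a>1$. For the complementary sub-case $\lambda_a\le 1$ (so $\lambda_b>1$ by hypothesis), I would bound the Schur term as $\bar V_{12}\bar V_2^{-1}\bar V_{21}\preceq M^2/[\lambda_b(1-|\phi|)]\cdot I$ using $\|\bar V_{12}\|\le M$ (Assumption~\ref{ass.degree}), which together with $\bar V_1\succeq I$ (since $B_1'B_1\succeq I$ from $d_{a,i}\ge 1$) gives a positive constant lower bound on the Schur complement, hence $\|[\bar V^{-1}]_1\|=O(1)$. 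In both sub-cases the combined bound is $O(1\wedge \lambda_a^{-1})$; the claim for $[V^{-1}]_2$ is symmetric. A helpful by-product is $\|V_j^{-1}\|\le [\max(1,\lambda_j)]^{-1}=O(1\wedge\lambda_j^{-1})$ directly, without needing the Schur step.

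For the off-diagonal block I would apply the identity $[V^{-1}]_{12}=-[V^{-1}]_1 V_{12} V_2^{-1}$ (alternatively $-V_1^{-1}V_{12}[V^{-1}]_2$; both forms from Lemma~\ref{lm:blk.inv.2}). Multiplying norms: $\|[V^{-1}]_1\|=O(1\wedge\lambda_a^{-1})$ from above; $\|V_2^{-1}\|=O(1\wedge\lambda_b^{-1})$; and $\|V_{12}\|\le \|B_1'B_2\|+|\phi|\lambda_a^{1/2}\lambda_b^{1/2}\|\mathcal{A}_{12}\|$, which in the non-small regime (where at least one of $\lambda_a,\lambda_b$ exceeds $1$) yields the claimed factor $O(\lambda_a^{1/2}\lambda_b^{1/2})$. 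The main obstacle here is that when only one of the $\lambda$'s is large while the other is small and the product $\lambda_a\lambda_b$ is below a constant, the naive bound $\|V_{12}\|\le M+|\phi|\lambda_a^{1/2}\lambda_b^{1/2}$ is dominated by the first term rather than the second. To obtain the tighter factorized form claimed I would rescale via $V=\Lambda^{1/2}\tilde M \Lambda^{1/2}$ with $\tilde M=\Lambda^{-1/2}L\Lambda^{-1/2}+I-\phi\mathcal{A}$, so that $[V^{-1}]_{12}=\lambda_a^{-1/2}[\tilde M^{-1}]_{12}\lambda_b^{-1/2}$; the explicit factor $\lambda_a^{-1/2}\lambda_b^{-1/2}$ outside reconciles with the claimed $\lambda_a^{1/2}\lambda_b^{1/2}$ once one bounds $[\tilde M^{-1}]_{12}$ by $O((\lambda_a\wedge 1)(\lambda_b\wedge 1))$ using the submatrix bounds established in Step 1 applied to $\tilde M$, whose diagonal blocks $I+\Delta_j$ have $\|(I+\Delta_j)^{-1}\|\le \lambda_j/(\lambda_j+1)=O(1\wedge\lambda_j)$.

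The overall plan is therefore: (i) reduce to $\bar V$ by PSD monotonicity; (ii) Schur complement + degree bounds $1\le d_i\le M$ for the diagonal blocks, splitting into the $\lambda_a>1$ and the $\lambda_a\le 1$ (hence $\lambda_b>1$) sub-cases; (iii) the block-inverse identity combined with a rescaled factorization for the off-diagonal block. The main technical obstacle is ensuring the triple-product bound holds uniformly across the mixed sub-regime; the rescaling to $\tilde M$ is the device that tracks the correct $\lambda_a^{1/2}\lambda_b^{1/2}$ cross-scaling regardless of which of $\lambda_a,\lambda_b$ is the large one.
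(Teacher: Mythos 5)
Your route is the paper's: reduce to $\bar V$ via Loewner monotonicity of the diagonal blocks (Lemma~\ref{lm:submat}), split on whether $\lambda_a$ exceeds one, and obtain the off-diagonal block from $[V^{-1}]_{12}=-[V^{-1}]_1V_{12}V_2^{-1}$ (Lemma~\ref{lm:blk.inv.2}). Your Schur-complement argument for the sub-case $\lambda_a>1$ is correct and in fact supplies the justification the paper leaves implicit for its assertion $[\bar V^{-1}]_1\lambda_a\preceq I$.

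There is, however, a step that fails in the complementary sub-case $\lambda_a\le 1<\lambda_b$. Your lower bound $\bar V_1-\bar V_{12}\bar V_2^{-1}\bar V_{21}\succeq (1-M^2/[\lambda_b(1-|\phi|)])I$ is a positive multiple of $I$ only when $\lambda_b>M^2/(1-|\phi|)$; since $M$ is a degree bound with no reason to be below one, there is a nonempty band $1<\lambda_b\le M^2/(1-|\phi|)$ on which this bound is vacuous and your argument yields nothing. The paper covers the whole sub-case with the expansion $[\bar V^{-1}]_1=\bar V_1^{-1}+\bar V_1^{-1}\bar V_{12}[\bar V^{-1}]_2\bar V_{21}\bar V_1^{-1}$ from \eqref{eq:blk.inv.1.2}: each factor is $O(1)$ uniformly, because $\bar V_1^{-1}\preceq (B_1'B_1)^{-1}\preceq I$, $\bar V_{12}=B_1'B_2$ is bounded, and $[\bar V^{-1}]_2\preceq [\lambda_b(1-|\phi|)]^{-1}I$ with $\lambda_b\ge 1$; you should replace your Schur lower bound with this identity, which you already invoke elsewhere. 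On the off-diagonal block, you rightly notice that $\|V_{12}\|\le\|B_1'B_2\|+|\phi|\lambda_a^{1/2}\lambda_b^{1/2}$ is not $O(\lambda_a^{1/2}\lambda_b^{1/2})$ when $\lambda_a\lambda_b$ is small -- a point the paper's one-line proof glosses over -- but your rescaling does not repair it: $\tilde M_{12}$ still contains $\lambda_a^{-1/2}\lambda_b^{-1/2}B_1'B_2$, so $[\tilde M^{-1}]_{12}=O((\lambda_a\wedge 1)(\lambda_b\wedge 1))$ does not follow from multiplying block norms any more than the original claim did. The honest bound is $\|V_{12}\|=O(1+\lambda_a^{1/2}\lambda_b^{1/2})$, and that suffices for every downstream use in Step~3 of the proof of Lemma~\ref{lm:bd.deriv}, because there $[V^{-1}]_1$ and $V_2^{-1}$ are always paired with powers of $\lambda_a$ and $\lambda_b$ that render the constant part of $V_{12}$ harmless.
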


\begin{proof}[Proof of Lemma \ref{lm:deriv.bound.non.small}]
	To show the first equation, note that $|| [V^{-1}]_1 || \leq ||[\bar{V}^{-1}]_1||$ by Lemma~\ref{lm:submat}. When $\lambda_a$ is greater than 1,  $[\bar{V}^{-1}]_1\lambda_a\preceq I$. When $\lambda_a$ is small, we have that $[\bar{V}^{-1}]_1$ is itself $O(1)$ because
	$
	[\bar{V}^{-1}]_1 
	=_{(1)}
	\bar{V}_1^{-1}
	+
	\bar{V}_1^{-1}
	\bar{V}_{12}
	[\bar{V}^{-1}]_2 \bar{V}_{21} \bar{V}_1^{-1},
	$
	where $=_{(1)}$ follows from \eqref{eq:blk.inv.1.2} of Lemma~\ref{lm:blk.inv.1}, and $[\bar{V}^{-1}]_2=O(1)$ because, in this regime, $\lambda_a\leq1$ implies $\lambda_b\geq1$. Note that $\bar{V}_1^{-1}$, $\bar{V}_{12}=B_1^\prime B_2$, and $\bar{V}_{21}$ are all $O(1)$ too.
	The second equation follows by the same argument. 
	For the third equation of the lemma, we have
	\begin{align}
		[V^{-1}]_{12} 
		=_{(1)}& -[V^{-1}]_1 V_{12} V_2^{-1}  
		=_{(2)}
		O(1\wedge\lambda_a^{-1}) O(\lambda_a^{1/2}\lambda_b^{1/2}) O(1\wedge\lambda_b^{-1}),
	\end{align} 
	where $=_{(1)}$ follows from Lemma~\ref{lm:blk.inv.2}, and $=_{(2)}$ follows from the first equation of this lemma.
\end{proof}

\begin{lemma}[Bounding of lower powers]
	\label{lm:deriv.bound.lower}
	Let $n$ be a positive even number.\newline  Then
$
		|| L^{1/n} [L+\Lambda^*]^{-1/n} ||\leq 1.
$
\end{lemma}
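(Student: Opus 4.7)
The statement is an operator-monotonicity assertion about the matrices $L$ and $L+\Lambda^*$, and the natural tool is the Löwner--Heinz inequality. My plan proceeds as follows.

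First I would verify the positivity structure. The matrix $L = B'B$ is symmetric positive semidefinite by construction. For $\Lambda^* = \Lambda^{1/2}[-\phi\mathcal{A}+I]\Lambda^{1/2}$, the normalized adjacency $\mathcal{A}$ of the bipartite graph has spectral radius bounded by $1$, so with $|\phi|<\bar{\phi}<1$ the factor $-\phi\mathcal{A}+I$ is strictly positive definite. Since $\lambda_a,\lambda_b>0$ on the interior of $\mathcal{J}$, $\Lambda^{1/2}$ is invertible and $\Lambda^*$ is positive definite. Hence $L+\Lambda^*$ is positive definite, fractional powers $(L+\Lambda^*)^{\pm 1/n}$ are unambiguously defined via eigendecomposition, and we have the Löwner ordering
\begin{equation*}
    0 \preceq L \preceq L + \Lambda^*.
\end{equation*}

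Next I would invoke the Löwner--Heinz inequality: the map $X\mapsto X^p$ is operator monotone on the PSD cone for every $p\in[0,1]$. Setting $p = 2/n$, which lies in $(0,1]$ whenever $n\geq 2$ (in particular for any positive even $n$), monotonicity applied to $L\preceq L+\Lambda^*$ yields
\begin{equation*}
    L^{2/n} \;\preceq\; (L+\Lambda^*)^{2/n}.
\end{equation*}
Conjugating both sides by the symmetric positive definite matrix $(L+\Lambda^*)^{-1/n}$ preserves the ordering and gives
\begin{equation*}
    (L+\Lambda^*)^{-1/n}\,L^{2/n}\,(L+\Lambda^*)^{-1/n} \;\preceq\; I.
\end{equation*}

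The conclusion then follows from a standard spectral-norm identity. Writing $M := L^{1/n}(L+\Lambda^*)^{-1/n}$, we have
\begin{equation*}
    \|M\|^2 = \|M^*M\| = \bigl\|(L+\Lambda^*)^{-1/n} L^{2/n} (L+\Lambda^*)^{-1/n}\bigr\| \;\leq\; 1,
\end{equation*}
so $\|L^{1/n}(L+\Lambda^*)^{-1/n}\|\leq 1$, as claimed. The only non-routine ingredient is the Löwner--Heinz inequality itself, which I would cite as a known result (e.g.\ Bhatia, \emph{Matrix Analysis}, Theorem V.1.9); everything else is bookkeeping about positivity and the spectral norm. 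The evenness of $n$ in the statement is not essential for the argument—only $n\geq 1$ so that $2/n\leq 2$ (with the genuinely nontrivial case $n\geq 2$ giving $2/n\leq 1$); the evenness is presumably convenient for how the lemma is used upstream (it is applied with $1/n=\epsilon$ and $1/n=1/2+\epsilon$ in Remark~\ref{rmk:deriv.v}).
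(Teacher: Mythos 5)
Your proof is correct, and every ingredient checks out: $\Lambda^*$ is positive definite (the paper's Lemma~\ref{lm:pd.var}), so $L+\Lambda^*$ is invertible and the fractional powers are well defined; $L\preceq L+\Lambda^*$; L\"owner--Heinz with $p=2/n\in(0,1]$ gives $L^{2/n}\preceq (L+\Lambda^*)^{2/n}$; conjugation by $(L+\Lambda^*)^{-1/n}$ and the identity $\|M\|^2=\|M'M\|$ finish the job. However, the paper takes a genuinely different, more elementary route that deliberately avoids operator monotonicity of fractional powers. It writes $\|L^{1/n}[L+\Lambda^*]^{-1/n}\|^2=\bar{\rho}(L^{1/n}[L+\Lambda^*]^{-2/n}L^{1/n})=\bar{\rho}(L^{2/n}[L+\Lambda^*]^{-2/n})\le\|L^{2/n}[L+\Lambda^*]^{-2/n}\|$ (using cyclicity of the spectral radius and its Lemma~\ref{lm:norm.eig}), i.e.\ an exponent-doubling inequality $f(1/n)^2\le f(2/n)$, and iterates until the exponent reaches $1$, at which point only the trivial equivalence $L\preceq L+\Lambda^*\iff[L+\Lambda^*]^{-1/2}L[L+\Lambda^*]^{-1/2}\preceq I$ is needed. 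The trade-off: your argument imports a nontrivial external theorem but is shorter and strictly more general --- it works for any real $n\ge 2$, and your observation that evenness is inessential is correct for your proof. The paper's argument is self-contained and uses only facts already established in its appendix, but the doubling iteration really only terminates cleanly when $n$ is a power of $2$ (it must hit exponent $1$ exactly), so its stated hypothesis ``$n$ even'' is slightly looser than what the iteration literally delivers; this is harmless in context since the lemma is only ever invoked with $\epsilon^{-1}$ chosen by the authors, but it means your version is actually the tighter statement.
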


\begin{proof}[Proof of Lemma \ref{lm:deriv.bound.lower}] We have
	\begin{align}
		|| L^{1/n} [L+\Lambda^*]^{-1/n}  ||^2
		=&
		\bar{\rho}(L^{1/n} [L+\Lambda^*]^{-2/n} L^{1/n}) 
		\nonumber\\
		=&
		\bar{\rho}(L^{2/n} [L+\Lambda^*]^{-2/n}) 
		\leq_{(1)}
		|| L^{2/n} [L+\Lambda^*]^{-2/n} ||, 
	\end{align}
	where $\leq_{(1)}$ follows from Lemma~\ref{lm:norm.eig}.
	By iterating the argument above enough times, we get that $|| L^{1/n} [L+\Lambda^*]^{-1/n}  ||$ is bounded by some positive finite power of 
	\begin{align}
		\bar{\rho}(L^{1/2} [L+\Lambda^*]^{-1} L^{1/2}) 
		=&
		\bar{\rho}( [L+\Lambda^*]^{-1/2}L [L+\Lambda^*]^{-1/2})
		\leq_{(1)}1,
	\end{align}
	where $\leq_{(1)}$ holds because clearly $L \preceq L+\Lambda^*$ (see Lemma~\ref{lm:pd.var}), and
	\begin{align}
		L \preceq L+\Lambda^* 
		\iff 
		[L+\Lambda^* ]^{-1/2}L[L+\Lambda^* ]^{-1/2} \preceq I.
	\end{align}
	Note that $L+\Lambda^*$ is a positive definite matrix, because $\Lambda^*$ is positive definite -- see Lemma~\ref{lm:pd.var}.
\end{proof}

\begin{lemma}[Factorization]
	\label{lm:factor}
	\begin{equation*}
		[V^{-1}]_{12} \lambda_a^{1/2}
		-
		\mathcal{A}_{12} [V^{-1}]_2  \phi \lambda_b^{1/2}
		=\,
		-V_1^{-1} 
		\left[
		B_1^\prime B_2 \lambda_a^{1/2} 
		+B_1^\prime B_1\phi\lambda_b^{1/2}\mathcal{A}_{12}
		\right] [V^{-1}]_2.
	\end{equation*}
\end{lemma}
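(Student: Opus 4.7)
The plan is to deduce the identity directly from the standard block matrix inversion formula together with the definition $V_1 = B_1'B_1 + \lambda_a I$. First I would invoke Lemma~\ref{lm:blk.inv.2} to write
\[
[V^{-1}]_{12} \;=\; -\,V_1^{-1} V_{12}\,[V^{-1}]_2,
\]
and then substitute the explicit form $V_{12} = B_1'B_2 - \phi\lambda_a^{1/2}\lambda_b^{1/2}\mathcal{A}_{12}$ coming from the definition of $\Lambda^*$. Multiplying through by $\lambda_a^{1/2}$ on the right gives
\[
[V^{-1}]_{12}\lambda_a^{1/2} \;=\; -V_1^{-1} B_1'B_2\,\lambda_a^{1/2}\,[V^{-1}]_2 \;+\; V_1^{-1}\,\phi\lambda_a\lambda_b^{1/2}\,\mathcal{A}_{12}\,[V^{-1}]_2.
\]

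Next I would subtract $\mathcal{A}_{12}[V^{-1}]_2\phi\lambda_b^{1/2}$ from both sides and focus on the two $\phi\lambda_b^{1/2}$-terms. The key algebraic observation is that these can be combined using the trivial identity $V_1^{-1}(B_1'B_1 + \lambda_a I) = V_1^{-1}V_1 = I$, which lets me rewrite $-I$ as $-V_1^{-1}B_1'B_1 - V_1^{-1}\lambda_a$. Factoring out the common factor $\phi\lambda_b^{1/2}\mathcal{A}_{12}[V^{-1}]_2$ on the right, I obtain
\[
V_1^{-1}\lambda_a\,\mathcal{A}_{12}[V^{-1}]_2 - \mathcal{A}_{12}[V^{-1}]_2 \;=\; -V_1^{-1}B_1'B_1\,\mathcal{A}_{12}[V^{-1}]_2,
\]
so the two terms collapse to $-V_1^{-1}B_1'B_1\phi\lambda_b^{1/2}\mathcal{A}_{12}[V^{-1}]_2$. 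Combining this with the surviving $-V_1^{-1}B_1'B_2\lambda_a^{1/2}[V^{-1}]_2$ term and factoring out $-V_1^{-1}$ on the left and $[V^{-1}]_2$ on the right yields the claimed identity.

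This proof is essentially a one-step computation; there is no genuine obstacle. The only care required is keeping track of which factors commute (scalars like $\lambda_a^{1/2}$, $\lambda_b^{1/2}$, $\phi$ move freely) and which do not (matrices like $V_1^{-1}$, $\mathcal{A}_{12}$, $[V^{-1}]_2$ must be kept in order), so that the rearrangement and the application of $V_1^{-1}V_1 = I$ are performed on the correct side. No additional assumptions are needed beyond the existence of $V_1^{-1}$ and $[V^{-1}]_2$, both of which are guaranteed since $\Lambda^*$ is positive definite (cf.\ Lemma~\ref{lm:pd.var}).
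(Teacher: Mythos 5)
Your proposal is correct and follows essentially the same route as the paper's proof: both apply the block-inversion identity $[V^{-1}]_{12}=-V_1^{-1}V_{12}[V^{-1}]_2$, substitute the explicit form of $V_{12}$, and collapse the two $\phi\lambda_b^{1/2}$-terms via $V_1^{-1}\lambda_a + V_1^{-1}B_1'B_1 = I$. No gaps.
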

\begin{proof}[Proof of Lemma \ref{lm:factor}]
	We have
	\begin{equation}
		\begin{aligned}[b]
			&
			[V^{-1}]_{12} \lambda_a^{1/2}
			-
			\mathcal{A}_{12} [V^{-1}]_2  \phi \lambda_b^{1/2}
			=_{(1)}
			\left[
			-V_1^{-1} V_{12} \lambda_a^{1/2} - \mathcal{A}_{12}\phi\lambda_b^{1/2}
			\right] [V^{-1}]_2
			\\
			=&_{(2)}
			\left[
			-V_1^{-1}  B_1^\prime B_2 \lambda_a^{1/2} 
			+ (\lambda_aV_1^{-1}-I)\phi\lambda_b^{1/2}\mathcal{A}_{12}
			\right] [V^{-1}]_2
			\\
			=&_{(3)}
			\left[
			-V_1^{-1} B_1^\prime B_2 \lambda_a^{1/2} 
			- V_1^{-1}B_1^\prime B_1\phi\lambda_b^{1/2}\mathcal{A}_{12}
			\right] [V^{-1}]_2
			\\
			=&
			-V_1^{-1} 
			\left[
			B_1^\prime B_2 \lambda_a^{1/2} 
			+ B_1^\prime B_1\phi\lambda_b^{1/2}\mathcal{A}_{12}
			\right] [V^{-1}]_2,
		\end{aligned}
	\end{equation}
	where $=_{(1)}$ follows from $[V^{-1}]_{12}=[V^{-1}]_{21}^\prime = -V_1^{-1}V_{12}[V^{-1}]_2$ -- see Lemma~\ref{lm:blk.inv.2}, and note that $V^{-1}$  is symmetric, $=_{(2)}$ follows from definition of $V_{12}$, and $=_{(3)}$ follows from
	$
	V_1^{-1}\lambda_a + V_1^{-1}B_1^\prime B_1= I.
	$
\end{proof}

\begin{lemma}[Submatrix norms]
	\label{lm:submat}
	We have $|| [V^{-1}]_1 || \leq || [\bar{V}^{-1}]_1 ||$ and $|| [V^{-1}]_2 || \leq || [\bar{V}^{-1}]_2 ||$. 
\end{lemma}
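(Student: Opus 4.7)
The plan is to derive the two norm inequalities from a single Loewner comparison between $V$ and $\bar{V}$, which reduces everything to a standard fact about normalized bipartite adjacency matrices. First I would verify that $\Lambda^* \succeq \bar{\Lambda}$. By construction $\Lambda^*-\bar{\Lambda} = \Lambda^{1/2}(|\phi|I - \phi\mathcal{A})\Lambda^{1/2}$, so it suffices to show $|\phi|I - \phi\mathcal{A} \succeq 0$. Because $\mathcal{A}$ is the normalized adjacency matrix of a bipartite graph, its spectrum lies in $[-1,1]$ (equivalently, the normalized Laplacian $I-\mathcal{A}$ has spectrum in $[0,2]$), which immediately yields $\bar{\rho}(\phi\mathcal{A}) \leq |\phi|$ and hence $|\phi|I - \phi\mathcal{A} \succeq 0$. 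Adding $L$ on both sides then gives $V \succeq \bar{V}$, and since both are positive definite (by Lemma~\ref{lm:pd.var} applied to the eigenvalue bound above, noting $|\phi|<1$), the standard operator-antitone property of matrix inversion delivers $V^{-1} \preceq \bar{V}^{-1}$.

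Next I would extract the diagonal blocks. Letting $E_1 = [I_r;\, 0_{c \times r}]$ and $E_2 = [0_{r \times c};\, I_c]$, congruence by $E_k$ preserves the Loewner order, so
\begin{equation*}
[V^{-1}]_k = E_k^\prime V^{-1} E_k \preceq E_k^\prime \bar{V}^{-1} E_k = [\bar{V}^{-1}]_k, \qquad k=1,2.
\end{equation*}
Since both sides are positive semi-definite, the spectral norm coincides with the largest eigenvalue, and the variational characterization $\bar{\rho}(M) = \sup_{\|v\|=1} v^\prime M v$ applied to $[V^{-1}]_k \preceq [\bar{V}^{-1}]_k$ gives $\|[V^{-1}]_k\| \leq \|[\bar{V}^{-1}]_k\|$, which is the claim.

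The only step that requires genuine thought is the bound $\bar{\rho}(\mathcal{A}) \leq 1$, and this is where I expect a reader to want more detail. It is a standard fact about normalized bipartite adjacency matrices, but the argument deserves an explicit sentence: writing $\mathcal{A} = D^{-1/2} A D^{-1/2}$ with $A = L - D$, one has $\mathcal{A} = I - D^{-1/2} L D^{-1/2}$, and the normalized Laplacian $D^{-1/2} L D^{-1/2}$ is positive semi-definite with spectrum in $[0,2]$ (the upper bound following from bipartiteness of $\mathcal{G}$), whence $\mathcal{A}$ has spectrum in $[-1,1]$. Everything else is bookkeeping: Loewner monotonicity of inversion, invariance of the order under congruence, and the identification of $\|\cdot\|$ with the largest eigenvalue on the PSD cone. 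No block-inversion or Schur-complement manipulation is needed, which keeps the proof short.
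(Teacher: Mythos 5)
Your proof is correct and follows essentially the same route as the paper's: establish $\bar V \preceq V$ from positive semi-definiteness of $\Lambda^{1/2}(|\phi|I-\phi\mathcal{A})\Lambda^{1/2}$ (using $\bar{\rho}(\mathcal{A})\le 1$), invert to get $V^{-1}\preceq \bar V^{-1}$, and restrict the quadratic form to the relevant coordinate block, which is exactly the paper's ``vectors with zeros in their lower halves'' step written with selection matrices. The only quibble is a sign slip in your aside --- with the paper's convention $A=L-D$ one has $\mathcal{A}=D^{-1/2}LD^{-1/2}-I$ rather than $I-D^{-1/2}LD^{-1/2}$ --- but either way the spectrum of $\mathcal{A}$ lies in $[-1,1]$, so nothing is affected.
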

\begin{proof}[Proof of Lemma \ref{lm:submat}]
	Note that $V = \bar{V} + \Lambda^{1/2}[|\phi|I-\phi\mathcal{A}]\Lambda^{1/2}$, where $\Lambda^{1/2}[|\phi|I-\phi\mathcal{A}]\Lambda^{1/2}$ is a positive semidefinite matrix, because $|| \mathcal{A} ||\leq1$. This implies that $\bar{V} \preceq V$ and thus $V^{-1}\preceq \bar{V}^{-1}$. Therefore, for every vector $x$, we have $x^\prime V^{-1} x \leq x^\prime \bar{V}^{-1} x$, including the vectors $x$ that have zeros in their lower halves. That is to say, we also have $x_1^\prime [V^{-1}]_1 x_1 \leq x_1^\prime [\bar{V}^{-1}]_1 x_1$ for every vector $x_1$. This implies that $|| [V^{-1}]_1 || \leq || [\bar{V}^{-1}]_1 ||$. The proof for $|| [V^{-1}]_2 || \leq || [\bar{V}^{-1}]_2 ||$ is analogous.
\end{proof}
\begin{lemma}[Block matrix inversion I]
	\label{lm:blk.inv.1}
	The inverse of $\bar{V}=L + \bar{\Lambda}$ takes the form 
	\begin{equation*}
		\begin{aligned}[b]
			&\begin{bmatrix}
				[\bar{V}^{-1}]_{1} & [\bar{V}^{-1}]_{12} \\
				[\bar{V}^{-1}]_{21} & [\bar{V}^{-1}]_{2}
			\end{bmatrix}
			\\
			=&
			\begin{bmatrix}
				\left[\lambda_a[1-|\phi|] + B_1^\prime \bar{M}_2 B_1\right]^{-1}  
				& 
				-\left[\lambda_a[1-|\phi|] + B_1^\prime \bar{M}_2 B_1\right]^{-1}  
				B_1B_2^\prime
				\bar{V}_2^{-1}
				\\
				-\left[\lambda_b[1-|\phi|] + B_2^\prime \bar{M}_1 B_2\right]^{-1}   
				B_2^\prime B_1 \bar{V}_1^{-1}
				&
				\left[\lambda_b[1-|\phi|] + B_2^\prime \bar{M}_1 B_2\right]^{-1}  
			\end{bmatrix}
			\label{eq:blk.inv.1.1}
		\end{aligned}
	\end{equation*}
	where
	\begin{align*}
		\bar{M}_1
		&:=
		I-B_1 (B_1^\prime B_1 + \lambda_a[1-|\phi|])^{-1} B_1^\prime, \\
		\bar{M}_2
		&:= I-B_2 (B_2^\prime B_2 + \lambda_b[1-|\phi|])^{-1} B_2^\prime
	\end{align*}
	are positive semidefinite matrices. 
	Furthermore, 
	\begin{align}
		[\bar{V}^{-1}]_{1}
		&=
		\bar{V}_1^{-1} + 
		\bar{V}_1^{-1} 
		B_1^\prime B_2 [\bar{V}^{-1}]_2 B_2^\prime B_1  \bar{V}_1^{-1},  \nonumber
	\\
		[\bar{V}^{-1}]_{2}
	&	=
		\bar{V}_2^{-1} + 
		\bar{V}_2^{-1} 
		B_2^\prime B_1 [\bar{V}^{-1}]_1 B_1^\prime B_2 \bar{V}_2^{-1}. 
		\label{eq:blk.inv.1.2}
	\end{align}
\end{lemma}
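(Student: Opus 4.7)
\textbf{Proof proposal for Lemma~\ref{lm:blk.inv.1}.} The statement is a direct computation using the block matrix inversion identity, specialized to the matrix $\bar{V}=L+\bar{\Lambda}$. The key observation is that $\bar{\Lambda}=\Lambda^{1/2}[(1-|\phi|)I]\Lambda^{1/2}$ is \emph{diagonal}, so the off-diagonal block of $\bar{V}$ is simply $\bar{V}_{12}=B_1^\prime B_2$, while the on-diagonal blocks are $\bar{V}_1=B_1^\prime B_1+\lambda_a(1-|\phi|)I_r$ and $\bar{V}_2=B_2^\prime B_2+\lambda_b(1-|\phi|)I_c$. Both $\bar{V}_1$ and $\bar{V}_2$ are strictly positive definite whenever $|\phi|<1$ (which is enforced by our parameter space), so Schur complements are well-defined.

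The first step is to apply the standard Schur-complement inversion formula (a restatement of Lemma~\ref{lm:blk.inv.2}) using $\bar{V}_1$ as pivot. This yields $[\bar{V}^{-1}]_2=(\bar{V}_2-B_2^\prime B_1\bar{V}_1^{-1}B_1^\prime B_2)^{-1}$, and I will rewrite the Schur complement as
\begin{equation*}
\bar{V}_2-B_2^\prime B_1\bar{V}_1^{-1}B_1^\prime B_2
=\lambda_b(1-|\phi|)I_c+B_2^\prime\bigl[I-B_1(B_1^\prime B_1+\lambda_a(1-|\phi|)I_r)^{-1}B_1^\prime\bigr]B_2
=\lambda_b(1-|\phi|)I_c+B_2^\prime\bar{M}_1 B_2,
\end{equation*}
which gives the claimed formula for $[\bar{V}^{-1}]_2$. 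The formula for $[\bar{V}^{-1}]_{21}=-[\bar{V}^{-1}]_2 B_2^\prime B_1\bar{V}_1^{-1}$ follows from the same identity, and the formula for $[\bar{V}^{-1}]_1$ is obtained symmetrically by pivoting on $\bar{V}_2$ instead. Positive semidefiniteness of $\bar{M}_1$ (and analogously $\bar{M}_2$) follows because the eigenvalues of $B_1(B_1^\prime B_1+cI)^{-1}B_1^\prime$ are of the form $\sigma^2/(\sigma^2+c)\in[0,1)$ for $c>0$, where $\sigma^2$ ranges over the nonzero eigenvalues of $B_1^\prime B_1$; alternatively one can note that $\bar{M}_1=\lambda_a(1-|\phi|)\bigl(B_1 B_1^\prime+\lambda_a(1-|\phi|)I\bigr)^{-1}$ by the Woodbury identity, which is manifestly PSD.

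Finally, equation~\eqref{eq:blk.inv.1.2} is the alternate expression for the diagonal blocks that comes from the same block-inversion formula, namely $[\bar{V}^{-1}]_1=\bar{V}_1^{-1}+\bar{V}_1^{-1}\bar{V}_{12}[\bar{V}^{-1}]_2\bar{V}_{21}\bar{V}_1^{-1}$, which I will verify by multiplying this candidate by $\bar{V}$ block-by-block and using the identity $\bar{V}_2[\bar{V}^{-1}]_2-\bar{V}_{21}\bar{V}_1^{-1}\bar{V}_{12}[\bar{V}^{-1}]_2=I_c$. The expression for $[\bar{V}^{-1}]_2$ in~\eqref{eq:blk.inv.1.2} is obtained by the symmetric pivot. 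There is no real obstacle in this proof — it is entirely mechanical algebra with block matrices — so the main task is simply to present the Schur-complement computation cleanly and to record the PSD property of $\bar{M}_1,\bar{M}_2$ for use in later bounding arguments (e.g., Lemma~\ref{lm:submat} and the proofs of Lemmas~\ref{lm:deriv.bound} and~\ref{lm:deriv.bound.non.small}).
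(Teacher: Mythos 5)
Your proposal is correct and follows essentially the same route as the paper: apply the generic $2\times 2$ block (Schur-complement) inversion formula of Lemma~\ref{lm:blk.inv.2}, then rewrite the Schur complement by factoring $B_1^\prime[\cdot]B_1$ (resp.\ $B_2^\prime[\cdot]B_2$) to identify $\bar{M}_2$ (resp.\ $\bar{M}_1$), with \eqref{eq:blk.inv.1.2} coming from the alternative form of the diagonal blocks in \eqref{eq:blk.inv.2.2}. Your explicit verification that $\bar{M}_1,\bar{M}_2$ are positive semidefinite (via the eigenvalues $\sigma^2/(\sigma^2+c)\in[0,1)$, or the Woodbury identity $\bar{M}_1=\lambda_a(1-|\phi|)\left(B_1B_1^\prime+\lambda_a(1-|\phi|)I\right)^{-1}$) is a correct addition that the paper's own proof leaves implicit.
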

\begin{proof}[Proof of Lemma~\ref{lm:blk.inv.1}]
	First, we show the equivalence in \eqref{eq:blk.inv.1.1} for the top left block, and the remaining blocks of \eqref{eq:blk.inv.1.1}  proceed by the same argument. By applying \eqref{eq:blk.inv.2.1}, we see that 
	\begin{align}
		[\bar{V}^{-1}]_{1}=&
		\left(
		B_1^\prime B_1 + \lambda_a[1-|\phi|] 
		-  
		B_1^\prime B_2
		\left(B_2^\prime B_2 + \lambda_b[1-|\phi|]\right)^{-1} 
		B_2^\prime B_1
		\right)^{-1} 
		\nonumber
		\\
		=&
		\left(
		\lambda_a[1-|\phi|]  
		+  
		B_1^\prime 
		\left[
		I -  
		B_2
		\left(B_2^\prime B_2 + \lambda_b[1-|\phi|]\right)^{-1} 
		B_2^\prime
		\right]
		B_1
		\right)^{-1}
		\nonumber
		\\
		=&_{(1)}
		\left(
		\lambda_a[1-|\phi|]  
		+  
		B_1^\prime 
		\bar{M}_2
		B_1
		\right)^{-1},
		\label{eq:blk.inv.1.3}
	\end{align}
	where $=_{(1)}$ of \eqref{eq:blk.inv.1.3} follow by the definition of $\bar{M}_2$. 
	
	Second, the equivalence for \eqref{eq:blk.inv.1.2} follows by the first two rows of \eqref{eq:blk.inv.2.2}.
\end{proof}

\begin{lemma}[Norm and eigenvalues]
	\label{lm:norm.eig}
	For any matrix $A$, $||A|| \geq |\rho|$ where $\rho$ is an eigenvalue of $A$.
\end{lemma}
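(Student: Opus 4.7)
The plan is to invoke the standard characterization of the spectral norm as the operator norm, namely $\|A\| = \sup_{\|x\|=1}\|Ax\|$, where the supremum is taken over the unit sphere (allowing complex vectors so that complex eigenvalues are captured). First I would pick an eigenvalue $\rho$ of $A$ together with a unit eigenvector $v$ so that $Av = \rho v$ and $\|v\|=1$. Then $\|Av\| = \|\rho v\| = |\rho|\,\|v\| = |\rho|$, and so $\|A\| \geq \|Av\| = |\rho|$, which is the claim.

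The only subtlety is that the matrices appearing in our usage of this Lemma (e.g., $L^{2/n}[L+\Lambda^*]^{-2/n}$) may have complex spectra even though they are real, so the supremum defining $\|A\|$ must be taken over $\mathbb{C}^n$ rather than $\mathbb{R}^n$. This does not change the value of $\|A\|$ for a real matrix (since $\|A\|$ equals the largest singular value which is a real quantity), so no obstacle arises. There are no hard steps; the proof is essentially one line once the operator-norm characterization is written down.
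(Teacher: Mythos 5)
Your proof is correct and is essentially identical to the paper's own argument: both take a unit eigenvector $v$ for $\rho$ and bound $\|A\| \geq \|Av\| = |\rho|$ using the operator-norm characterization of the spectral norm. Your added remark about taking the supremum over complex vectors is a reasonable point of care but does not change the approach.
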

\begin{proof}[Proof of Lemma \ref{lm:norm.eig}]
	This follows because for any matrix $A$ with eigenvalue $\rho$ (and associated eigenvector $v$),
	$||A|| \geq ||Av|| = || \rho v|| = |\rho| ||v|| = |\rho|$, where the inequality follows by the definition of the spectral norm as the supremum of $||Ax||$ over all unit-length vectors $x$.
\end{proof}

\begin{lemma}[Taking derivatives]
	\label{lm:take.deriv}
	\begin{equation}
		\frac{\partial \lambda_a^{1/2}}{\partial\tilde{\lambda}_a}
		=
		\frac{1}{2\epsilon}
		\left[1+\lambda_a^{\frac{1}{k}}\right]^{\frac{k}{2}}
		\lambda_a^{\frac{1}{2}-\epsilon}.
	\end{equation}
\end{lemma}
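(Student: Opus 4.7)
The plan is to prove the identity by a straightforward application of the chain rule through the intermediate variable $\lambda_a^{-1}$, followed by re-expressing the result in the parameterization advertised in the lemma. Explicitly, I will write
\begin{equation*}
\frac{\partial \lambda_a^{1/2}}{\partial \tilde{\lambda}_a}
= \frac{\partial \lambda_a^{1/2}}{\partial \lambda_a^{-1}} \cdot \frac{\partial \lambda_a^{-1}}{\partial \tilde{\lambda}_a},
\end{equation*}
and compute the two factors separately. For the first factor, setting $u = \lambda_a^{-1}$ and differentiating $u^{-1/2}$ gives $\partial \lambda_a^{1/2}/\partial \lambda_a^{-1} = -\tfrac{1}{2}\lambda_a^{3/2}$. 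For the second factor I will differentiate the defining relation $\lambda_a^{-1} = [\tilde{\lambda}_a^{-n} - 1]^k$ directly, yielding $\partial \lambda_a^{-1}/\partial \tilde{\lambda}_a = -kn\, \tilde{\lambda}_a^{-n-1}[\tilde{\lambda}_a^{-n} - 1]^{k-1}$.

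The next step is to re-express this in terms of $\lambda_a$ alone. From the reparameterization I have $\tilde{\lambda}_a^{-n} - 1 = \lambda_a^{-1/k}$, hence $\tilde{\lambda}_a^{-n} = 1 + \lambda_a^{-1/k}$ and $\tilde{\lambda}_a^{-1} = (1+\lambda_a^{-1/k})^{1/n}$. Substituting these gives, after multiplying the two factors,
\begin{equation*}
\frac{\partial \lambda_a^{1/2}}{\partial \tilde{\lambda}_a}
= \frac{kn}{2}\, \lambda_a^{1/2 + 1/k}\,(1 + \lambda_a^{-1/k})^{(n+1)/n}.
\end{equation*}

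The main (and really only) obstacle is then the algebraic matching of exponents to the stated form. I will use the identity $1 + \lambda_a^{-1/k} = \lambda_a^{-1/k}(1 + \lambda_a^{1/k})$, so that
\begin{equation*}
(1 + \lambda_a^{-1/k})^{(n+1)/n} = \lambda_a^{-(n+1)/(kn)}(1 + \lambda_a^{1/k})^{(n+1)/n},
\end{equation*}
combine the powers of $\lambda_a$ to obtain $\lambda_a^{1/k - (n+1)/(kn)} = \lambda_a^{-1/(kn)}$, and finally invoke the relations fixed by the construction, namely $k = 2(n+1)/n$ and $\epsilon = (kn)^{-1}$, to rewrite $(n+1)/n = k/2$, $kn/2 = 1/(2\epsilon)$, and $-1/(kn) = -\epsilon$. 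Plugging these in yields exactly the displayed formula, completing the proof. This is essentially the same manipulation already executed implicitly at equality $=_{(1)}$ of display~\eqref{eq:deriv.lambda.tilde}, so the lemma just records the underlying calculation for reference.
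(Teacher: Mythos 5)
Your proof is correct and follows the same route as the paper: the chain rule through the intermediate variable $\lambda_a^{-1}$, followed by substitution of the reparameterization and matching of exponents via $k=2(n+1)/n$ and $\epsilon=(kn)^{-1}$. In fact your exponent bookkeeping is more careful than the paper's own display, which replaces $\tilde{\lambda}_a^{-n-1}=(1+\lambda_a^{-1/k})^{(n+1)/n}$ by $[1+\lambda_a^{1/k}]^{(n+1)/n}$ without extracting the factor $\lambda_a^{-(n+1)/(kn)}$ and consequently ends with $\lambda_a^{3/2-\epsilon}$ rather than the stated $\lambda_a^{1/2-\epsilon}$; your version of the algebra is the one that actually reproduces the lemma as stated.
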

\begin{proof}[Proof of Lemma \ref{lm:take.deriv}]
	First,
	\begin{equation}
		\frac{\partial\lambda_a^{1/2}}{\partial\lambda_a^{-1}}
		=-\frac{1}{2}\lambda_a^{3/2}.
	\end{equation}
	From \eqref{eq:tilde.lambda},
	\begin{align}
		\frac{\partial \lambda_a^{-1}}{\partial \tilde{\lambda}_a}
		=&
		k[\tilde{\lambda}_a^{-n}-1]^{k-1}[-n]\tilde{\lambda}_a^{-n-1}
		=
		-kn\tilde{\lambda}_a^{-n-1}[\tilde{\lambda}_a^{-n}-1]^{k-1}
		=
		-kn\tilde{\lambda}_a^{-n-1}[\lambda_a^{-1}]^{\tfrac{k-1}{k}}
	\end{align}
	so that
	\begin{align}
		\frac{\partial \lambda_a^{1/2}}{\partial\tilde{\lambda}_a}
		=&
		\frac{kn}{2}\cdot \tilde{\lambda}_a^{-n-1} \cdot
		\lambda_a^{\tfrac{3}{2}-\tfrac{k-1}{k}}
		=
		\frac{kn}{2}\cdot 
		[1+\lambda_a^{\tfrac{1}{k}}]^{\tfrac{n+1}{n}} \cdot
		\lambda_a^{\tfrac{3}{2}-\tfrac{k-1}{k}}
		=
		\frac{kn}{2}\cdot 
		[1+\lambda_a^{\tfrac{1}{k}}]^{\tfrac{n+1}{n}} \cdot
		\lambda_a^{\tfrac{3}{2}-\tfrac{1}{kn}}
		\nonumber\\
		=&_{(1)}
		\frac{kn}{2}\cdot 
		[1+\lambda_a^{\tfrac{1}{k}}]^{\tfrac{k}{2}} \cdot
		\lambda_a^{\tfrac{3}{2}-\epsilon}
	\end{align}
	where $=_{(1)}$ follows from our choice of $(k,n)$ such that $[2(n+1)]^{-1}=\epsilon$, for some $\epsilon>0$ that satisfies Assumption~\ref{ass:graph.connect.2},  $k=2(n+1)/n$, and $\epsilon=(kn)^{-1}$. 
\end{proof}

Below we present some auxiliary Lemmas used in the proofs.

\begin{lemma}[Positive Definiteness of $\Lambda^*$]
	\label{lm:pd.var}
	$\Lambda^*=\Lambda^{1/2}[-\phi\mathcal{A}+I]\Lambda^{1/2}$ is positive definite. 
\end{lemma}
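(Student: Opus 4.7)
The plan is to establish positive definiteness in two stages: first, show that the ``core'' matrix $I-\phi\mathcal{A}$ is positive definite for $|\phi|<\bar\phi<1$, and then conjugate by $\Lambda^{1/2}$, which is invertible because on the parameter space $\mathcal{J}$ we have $\lambda_a,\lambda_b>0$.

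For the first stage, the key fact I would invoke is that $\mathcal{A}=D^{-1/2}AD^{-1/2}$ is precisely the (negative of the off-diagonal part of the) normalized Laplacian of $\mathcal{G}$: namely $\mathcal{L}:=D^{-1/2}LD^{-1/2}=I-\mathcal{A}$. A classical result in spectral graph theory states that every eigenvalue of a normalized Laplacian lies in $[0,2]$, from which it follows immediately that every eigenvalue of $\mathcal{A}$ lies in $[-1,1]$, i.e., $\bar\rho(\mathcal{A})\le 1$. Since $\mathcal{A}$ is symmetric, this gives $\|\mathcal{A}\|\le 1$. Therefore, for any $\phi$ with $|\phi|\le\bar\phi<1$, the eigenvalues of $\phi\mathcal{A}$ lie in $[-|\phi|,|\phi|]\subset(-1,1)$, so the eigenvalues of $I-\phi\mathcal{A}$ lie in $[1-|\phi|,1+|\phi|]\subset(0,2)$, implying $I-\phi\mathcal{A}\succ 0$.

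For the second stage, observe that $\Lambda=\mathrm{diag}(\lambda_a I_r,\lambda_b I_c)$ with $\lambda_a,\lambda_b>0$ on the relevant parameter space, so $\Lambda^{1/2}$ is a well-defined invertible (in fact strictly positive diagonal) matrix. Then for any nonzero $x\in\mathbb{R}^{r+c}$, letting $y=\Lambda^{1/2}x\ne 0$, we have
\begin{equation*}
x'\Lambda^* x=x'\Lambda^{1/2}[I-\phi\mathcal{A}]\Lambda^{1/2}x=y'[I-\phi\mathcal{A}]y>0,
\end{equation*}
which establishes $\Lambda^*\succ 0$.

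The only step that requires any care is the bound $\|\mathcal{A}\|\le 1$. Since this Lemma is used repeatedly in the technical arguments of Appendix A (e.g., in Lemma~\ref{lm:submat} where one needs $\Lambda^{1/2}[|\phi|I-\phi\mathcal{A}]\Lambda^{1/2}\succeq 0$), I would either cite a standard reference on normalized Laplacians or, if a self-contained argument is preferred, give a short direct proof: for any $x=(x_1',x_2')'$ with $\|x\|=1$,
\begin{equation*}
|x'\mathcal{A}x|=2|x_1'\mathcal{A}_{12}x_2|\le 2\|\mathcal{A}_{12}x_2\|\|x_1\|\le \|\mathcal{A}_{12}\|(\|x_1\|^2+\|x_2\|^2)=\|\mathcal{A}_{12}\|,
\end{equation*}
and then verify $\|\mathcal{A}_{12}\|\le 1$ by noting that $\mathcal{A}_{12}\mathcal{A}_{12}'=D_1^{-1/2}B_1'B_2 D_2^{-1}B_2'B_1 D_1^{-1/2}$ has the same nonzero spectrum as $D_1^{-1}B_1'\Pi_{B_2}B_1$ with $\Pi_{B_2}=B_2 D_2^{-1}B_2'$ an orthogonal projector; since $\Pi_{B_2}\preceq I$, this yields $\mathcal{A}_{12}\mathcal{A}_{12}'\preceq D_1^{-1/2}B_1'B_1 D_1^{-1/2}=I$, so $\|\mathcal{A}_{12}\|\le 1$.
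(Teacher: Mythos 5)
Your proof is correct and follows essentially the same route as the paper's: the paper's argument likewise reduces the claim to $|\phi|<1$ together with the fact that $\mathcal{A}$, being the normalized adjacency matrix of a graph, has spectrum in $[-1,1]$. The only difference is that you supply a self-contained verification of $\|\mathcal{A}_{12}\|\le 1$ via the projector $\Pi_{B_2}$ and spell out the conjugation by the invertible $\Lambda^{1/2}$, both of which the paper leaves implicit.
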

\begin{proof}[Proof of Lemma \ref{lm:pd.var}]
	Keep in mind that $|\phi| <1$. Then to verify that the variance is positive definite, we only need to verify that $\mathcal{A}$ has eigenvalues between -1 and 1. This is true by noting that $\mathcal{A}$ is the normalized adjacency matrix of a graph.
\end{proof}

\begin{lemma}[Interpreting the Precision Matrix]
	\label{lm:norm.precision}
	Suppose $M\in\mathbb{R}^{n\times n}$ is the positive definite precision matrix of a random vector $\bs{X}:=(\bs{X}_{1},\ldots,\bs{X}_n)'\in\mathbb{R}^n$. Then, the partial correlation
	\begin{equation*}
		\text{corr}(\bs{X}_i,\bs{X}_j \mid \bs{X}_{-(i,j)}) = 
		-\frac{M_{ij}}{(M_{ii}M_{jj})^{1/2}}.
	\end{equation*}
	where $\bs{X}_{-(i,j)}$ represents all elements of $\bs{X}$ except for the $i^{th}$ and $j^{th}$ elements. 
\end{lemma}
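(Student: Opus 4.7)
The plan is to exploit the standard characterization of conditional distributions of a jointly Gaussian vector in terms of its precision matrix. First I would partition the index set into $a = \{i,j\}$ and $b = \{1,\ldots,n\}\setminus\{i,j\}$, write $\bs{X}_a = (\bs{X}_i, \bs{X}_j)'$ and $\bs{X}_b = \bs{X}_{-(i,j)}$, and partition $M$ conformably as
\begin{equation*}
	M = \begin{pmatrix} M_{aa} & M_{ab} \\ M_{ba} & M_{bb} \end{pmatrix}, \qquad M_{aa} = \begin{pmatrix} M_{ii} & M_{ij} \\ M_{ji} & M_{jj} \end{pmatrix}.
\end{equation*}

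The key step is the classical identity that for a Gaussian vector $\bs{X}$ with precision matrix $M$, the conditional distribution of $\bs{X}_a$ given $\bs{X}_b$ is Gaussian with conditional precision equal to the top-left block $M_{aa}$, so its conditional covariance matrix equals $M_{aa}^{-1}$. This follows by completing the square in $\bs{X}_a$ within the joint Gaussian log-density; more generally, for any square-integrable (not necessarily Gaussian) $\bs{X}$, the same identity characterizes the covariance of the best linear projection residuals $\bs{X}_a - P\bs{X}_b$ via the block matrix inversion formula already invoked in Lemma~\ref{lm:blk.inv.2}, which is the natural interpretation of ``partial correlation'' outside the Gaussian case.

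Finally, a direct inversion of the $2\times 2$ matrix $M_{aa}$ gives
\begin{equation*}
	M_{aa}^{-1} = \frac{1}{M_{ii}M_{jj}-M_{ij}^2}\begin{pmatrix} M_{jj} & -M_{ij} \\ -M_{ji} & M_{ii} \end{pmatrix},
\end{equation*}
and reading off the conditional correlation as the off-diagonal entry normalized by the geometric mean of the diagonals yields $-M_{ij}/(M_{ii}M_{jj})^{1/2}$, as claimed. I do not anticipate any substantive obstacle; the argument is essentially a two-line computation once the characterization of $M_{aa}^{-1}$ as the conditional covariance is in hand. The only delicate point is making precise what ``partial correlation'' means when $\bs{X}$ is non-Gaussian, which is resolved by identifying it with the correlation of linear projection residuals, an identification that leaves the final formula unchanged.
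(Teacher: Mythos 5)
Your proof is correct. The paper itself states this lemma without proof, treating it as a standard fact about precision matrices, so there is no in-paper argument to compare against; your derivation is the canonical one, and the key identity you invoke — that the conditional (or, in the non-Gaussian case, partial) covariance of $(\bs{X}_i,\bs{X}_j)$ given the rest equals the inverse of the corresponding $2\times 2$ block of $M$ — is exactly the block-inversion identity the paper does establish in Lemmas~\ref{lm:norm.cond} and~\ref{lm:blk.inv.2}. Your care in distinguishing the Gaussian completing-the-square argument from the linear-projection interpretation needed for a general square-integrable vector is a genuine (if minor) improvement in precision over the paper's bare statement, and it is harmless here since the lemma is only applied to the Gaussian prior in Lemma~\ref{lm:proposedII}.
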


\begin{lemma}[Results for Computing Conditional Distribution of Normal Vectors]
	\label{lm:norm.cond}
	For a matrix 
	\begin{equation*}
		\Sigma 
		:=
		\begin{bmatrix}
			\Sigma_{1} & \Sigma_{12}\\ 
			\Sigma_{21} & \Sigma_{2}
		\end{bmatrix} = 
		\begin{bmatrix}
			M_{1} & M_{12} \\ 
			M_{21} & M_{2}
		\end{bmatrix}^{-1},
	\end{equation*}
	we have 
	$
	\Sigma_{1} - \Sigma_{12}\Sigma_{2}^{-1}\Sigma_{21}
	= 
	M_{1}^{-1} \text{and} \ \ 
	\Sigma_{12}\Sigma_{2}^{-1}
	=
	-M_{1}^{-1}M_{12}.
	$
\end{lemma}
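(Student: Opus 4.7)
The plan is to prove this by directly invoking the block matrix inversion identity (Schur complement), applied to $\Sigma$ rather than $M$. Since the two claimed identities are pure matrix algebra facts that do not involve any probabilistic content, no distributional arguments are needed. The first identity says that the Schur complement of $\Sigma_2$ in $\Sigma$ equals $M_1^{-1}$, and the second is a consequence of the cross-block formula.

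The key steps, in order, are as follows. First, I would write out the identity $\Sigma M = I$ in block form, producing the four equations
\begin{align*}
\Sigma_1 M_1 + \Sigma_{12} M_{21} &= I, \quad
\Sigma_1 M_{12} + \Sigma_{12} M_2 = 0,\\
\Sigma_{21} M_1 + \Sigma_2 M_{21} &= 0, \quad
\Sigma_{21} M_{12} + \Sigma_2 M_2 = I.
\end{align*}
Second, from the third block equation I would solve $M_{21} = -\Sigma_2^{-1}\Sigma_{21} M_1$, noting that $\Sigma_2$ is invertible since $\Sigma$ is positive definite (being the inverse of the positive definite $M$). Third, substituting this expression for $M_{21}$ into the first block equation yields
\begin{equation*}
\bigl(\Sigma_1 - \Sigma_{12}\Sigma_2^{-1}\Sigma_{21}\bigr) M_1 = I,
\end{equation*}
which proves the first claim after right-multiplying by $M_1^{-1}$ (noting $M_1$ is invertible because the Schur complement of a positive definite block matrix is positive definite). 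Fourth, from the second block equation I would solve $\Sigma_{12} M_2 = -\Sigma_1 M_{12}$, which after right-multiplying by $M_2^{-1}$ and using the first identity gives the second claim upon a short rearrangement; alternatively, one can left-multiply the second block equation by $M_1$ and use $M_1\Sigma_1 = I + M_{12}(-\Sigma_{21} M_1) \Sigma_2^{-1}\cdot\text{(stuff)}$---but the cleaner route is to solve for $\Sigma_{12}\Sigma_2^{-1}$ directly from the second block equation as $\Sigma_{12}\Sigma_2^{-1} = -\Sigma_1 M_{12} M_2^{-1}\cdot M_2 \Sigma_2^{-1}$ and then use the block identity on $M$ itself, which is the symmetric counterpart to what we just derived for $\Sigma$.

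There is no real obstacle here; the entire argument is a routine exercise in block matrix manipulation, and the only care required is to ensure that the invertibility of the relevant blocks ($\Sigma_2$, $M_1$, $M_2$) is justified by the assumed positive definiteness of $M$. Because the lemma is stated as a matrix-algebra fact (the word ``random vector'' is unnecessary for its conclusion), no probabilistic machinery enters the proof. The writeup should therefore take at most a few lines and can be inlined wherever needed.
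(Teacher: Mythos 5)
Your approach is correct in substance but differs from the paper's: the paper simply cites its block-inversion lemma (Lemma~\ref{lm:blk.inv.2}) twice --- once, in effect, applied to $\Sigma$ to read off $M_1=(\Sigma_1-\Sigma_{12}\Sigma_2^{-1}\Sigma_{21})^{-1}$, and once applied to $M$ combined with the push-through identity $D^{-1}C(A-BDC)^{-1}=(D-CA^{-1}B)^{-1}CA^{-1}$ to get the cross-block claim --- whereas you rederive everything from the block equations of $\Sigma M=I$. Your route is more elementary and self-contained (no push-through identity needed), and your treatment of the first identity is clean and complete: substituting $M_{21}=-\Sigma_2^{-1}\Sigma_{21}M_1$ into the top-left equation gives $(\Sigma_1-\Sigma_{12}\Sigma_2^{-1}\Sigma_{21})M_1=I$ immediately. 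Your fourth step, however, is muddled as written: from $\Sigma_{12}M_2=-\Sigma_1 M_{12}$ you obtain $\Sigma_{12}=-\Sigma_1 M_{12}M_2^{-1}$, which expresses $\Sigma_{12}$ in terms of $\Sigma_1$ rather than yielding $\Sigma_{12}\Sigma_2^{-1}=-M_1^{-1}M_{12}$, and the proposed manipulation $\Sigma_{12}\Sigma_2^{-1}=-\Sigma_1 M_{12}M_2^{-1}\cdot M_2\Sigma_2^{-1}$ goes in circles. The clean fix is the one your "symmetric counterpart" remark gestures at: use the top-right block of $M\Sigma=I$ instead, namely $M_1\Sigma_{12}+M_{12}\Sigma_2=0$, which gives $\Sigma_{12}\Sigma_2^{-1}=-M_1^{-1}M_{12}$ in one line. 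With that substitution your proof is complete and, if anything, shorter than the paper's.
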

\begin{proof}[Proof of Lemma \ref{lm:norm.cond}]
	For the first equality, apply Lemma \ref{lm:blk.inv.2}. For the second equality, apply Lemma \ref{lm:blk.inv.2} and 	$D^{-1}C(A-BDC)^{-1}
	=
	(D-CA^{-1}B)^{-1}CA^{-1}$ for conformal matrices.
\end{proof}

\begin{lemma}[Block Matrix Inversion II]
	\label{lm:blk.inv.2}
	Let $M$ denote a generic matrix in this lemma. Then
	for non-singular $M_{1},M_{2}$,
	\begin{align}
		M^{-1} =&\begin{bmatrix}
			[M^{-1}]_{1} & [M^{-1}]_{12}  \\ 
			[M^{-1}]_{21}  & [M^{-1}]_{2}
		\end{bmatrix}
		\nonumber
		\\=
		&\begin{bmatrix}
			(M_{1} - M_{12} M_{2}^{-1} M_{21})^{-1}
			&
			-(M_{1} - M_{12} M_{2}^{-1} M_{21} )^{-1} M_{12} M_{2}^{-1}
			\\
			-(M_{2} - M_{21} M_{1}^{-1} M_{12} )^{-1} M_{21}M_{1}^{-1}
			&
			(M_{2} - M_{21} M_{1}^{-1} M_{12} )^{-1} 
		\end{bmatrix}. 
		\label{eq:blk.inv.2.1}
	\end{align}
	Furthermore, 
	\begin{align}
		(M_{1}-M_{12}M_{2}^{-1}M_{21})^{-1} 
		=& 
		M_{1}^{-1}  
		+
		M_{1}^{-1}  M_{12} (M_{2} - M_{21} M_{1}^{-1} M_{12} )^{-1}  M_{21} M_{1}^{-1}
		\nonumber\\
		(M_{2}-M_{21} M_{1}^{-1}M_{12})^{-1} 
		=& 
		M_{2}^{-1}  
		+
		M_{2}^{-1}  M_{21} (M_{1}-M_{12}M_{2}^{-1}M_{121})^{-1}  M_{12} M_{2}^{-1}
		\nonumber\\
		(M_{1}-M_{12}M_{2}^{-1}M_{21})^{-1}  M_{12} M_{2}^{-1}
		=&
		M_{1}^{-1} M_{12} (M_{2}-M_{21} M_{1}^{-1}M_{12})^{-1} .
		\label{eq:blk.inv.2.2}
	\end{align}
\end{lemma}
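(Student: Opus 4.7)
\textbf{Proof proposal for Lemma \ref{lm:blk.inv.2}.}

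The plan is to verify the block-inverse formula \eqref{eq:blk.inv.2.1} directly via block LDU factorization, and then derive the Woodbury-type identities \eqref{eq:blk.inv.2.2} as consequences by comparing two expressions of $M^{-1}$ obtained by symmetric reductions.

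First, I would factor $M$ by block Gaussian elimination pivoting on $M_1$:
\begin{equation*}
M = \begin{bmatrix} I & 0 \\ M_{21} M_1^{-1} & I \end{bmatrix}
\begin{bmatrix} M_1 & 0 \\ 0 & S_1 \end{bmatrix}
\begin{bmatrix} I & M_1^{-1} M_{12} \\ 0 & I \end{bmatrix},
\quad S_1 := M_2 - M_{21} M_1^{-1} M_{12}.
\end{equation*}
Each factor is trivially invertible (the outer triangular factors have unit diagonal, and $S_1$ is invertible because $M$ is invertible and $M_1$ is invertible by hypothesis). Inverting the three factors and multiplying them in reverse order yields
\begin{equation*}
M^{-1} = \begin{bmatrix} M_1^{-1} + M_1^{-1}M_{12} S_1^{-1} M_{21} M_1^{-1} & -M_1^{-1} M_{12} S_1^{-1} \\ -S_1^{-1} M_{21} M_1^{-1} & S_1^{-1} \end{bmatrix},
\end{equation*}
which provides the $[M^{-1}]_2$ block and (after a symmetric pivot on $M_2$, giving Schur complement $S_2 := M_1 - M_{12} M_2^{-1} M_{21}$) the $[M^{-1}]_1$ block. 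This verifies \eqref{eq:blk.inv.2.1}.

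Next, the two factorizations above must agree entrywise, so I would equate the $(1,1)$ blocks to obtain
\begin{equation*}
S_2^{-1} = (M_1 - M_{12}M_2^{-1}M_{21})^{-1} = M_1^{-1} + M_1^{-1} M_{12} S_1^{-1} M_{21} M_1^{-1},
\end{equation*}
which is the first line of \eqref{eq:blk.inv.2.2}; the second line follows by swapping the roles of the $1$ and $2$ subscripts. For the third line of \eqref{eq:blk.inv.2.2}, I would equate the $(1,2)$ blocks, which gives
\begin{equation*}
-S_2^{-1} M_{12} M_2^{-1} = -M_1^{-1} M_{12} S_1^{-1},
\end{equation*}
i.e., the desired identity $S_2^{-1} M_{12} M_2^{-1} = M_1^{-1} M_{12} S_1^{-1}$.

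There is no substantive obstacle here; the result is a standard identity, and the only care needed is to track the block arithmetic and verify invertibility of the Schur complements (which is immediate since $M$ and $M_1$, $M_2$ are invertible). A minor presentational point is that the typo ``$M_{121}$'' in the second line of \eqref{eq:blk.inv.2.2} should read ``$M_{21}$'', and this would be noted in passing when reading off the symmetric version of the first identity.
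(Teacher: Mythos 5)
Your proposal is correct: the block LDU factorization, the inversion of the three factors, the symmetric pivot on $M_2$, and the block-by-block comparison that yields the three identities in \eqref{eq:blk.inv.2.2} are all carried out properly, and the invertibility of the Schur complements does follow from the invertibility of $M$, $M_1$, and $M_2$ as you claim. The paper itself states this lemma without proof, treating it as the standard partitioned-inverse/Woodbury identity, so there is no authorial argument to compare against; your derivation is the textbook one and fills that gap correctly. Your observation that ``$M_{121}$'' in the second displayed identity is a typo for ``$M_{21}$'' is also right.
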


\begin{proof}[Proof of Lemma \ref{lm:equalities}]
	Part (a). The intuition for this result follows from standard partitioned regression theory: 
	$[\mathcal{R}L^\dagger \mathcal{R}']_2$ is the variance of the $\bs{\beta}$-subvector of the normalized $\hat{\bs{\theta}}^{\text{ls}}=L^- B'\bs{Y}$, which may also be written as $\hat{\bs{\beta}}^{\text{ls}}= M_c L_{2,\perp}^\dagger B_{2,\perp}'\bs{Y}$, whose variance is $M_c[L_{2,\perp}]^\dagger M_c$
	
	More formally, first note  that 
	$\hat{\bs{\theta}}^{\text{ls}}=(\hat{\bs{\alpha}}^{\text{ls}},\hat{\bs{\beta}}^{\text{ls}})$ 
	is the unique minimizer 
	solving 
	\begin{align}
		&\min_{(\bs{\alpha},\bs{\beta}):\bs{1}'\bs{\beta}=0}
		\lVert Y - B_1\bs{\alpha} - B_2\bs{\beta} \rVert^2 \nonumber 
		=
		\min_{\bs{\beta}:\bs{1}'\bs{\beta}=0} \min_{\bs{\alpha}}
		\lVert Y - B_1\bs{\alpha} - B_2\bs{\beta} \rVert^2 \nonumber \\
		=&_{(1)}
		\min_{\bs{\beta}:\bs{1}'\bs{\beta}=0} \min_{\bs{\alpha}}
		\lVert Y - B_1D_1^{-1}B_1'[\bs{Y}-B_2\bs{\beta}] - B_2\bs{\beta} \rVert^2 \nonumber 
		=
		\min_{\bs{\beta}:\bs{1}'\bs{\beta}=0}
		\lVert M_1 Y - M_1 B_2\bs{\beta} \rVert^2 \nonumber \\
		=&
		\min_{\bs{\beta}:\bs{1}'\bs{\beta}=0}
		\lVert Y_\perp - B_{2,\perp}\bs{\beta} \rVert^2,
		\label{eq:E1.1}
	\end{align}
	where $=_{(1)}$ follows because the solution to the inner minimization is $\tilde{\bs{\alpha}}= D_1^{-1}B_1'[\bs{Y}-B_2\bs{\beta}]$, and $D_1:[B_1'B_1]^{-1}$ and $M_1:=I - B_1D_1^{-1}B_1'$.
	Once we have shown that $M_c L_{2,\perp}^\dagger B_{2,\perp}'\bs{Y}$ is the unique solution to the minimization problem in the final equality of \eqref{eq:E1.1}, it follows that $M_c L_{2,\perp}^\dagger B_{2,\perp}'\bs{Y}$ is equivalent to $\hat{\bs{\beta}}^{\text{ls}}$, and thus they must have the same variance. This implies that $M_c L_{2,\perp}^\dagger M_c = [\mathcal{R}L^\dagger \mathcal{R}']_2$, as desired. 
	
	That $M_c L_{2,\perp}^\dagger B_{2,\perp}'\bs{Y}$ is the unique solution to \eqref{eq:E1.1} follows from a similar argument as regarding $\hat{\theta}^{\text{ls}}$ being the unique LS solution satisfying our chosen normalization $\mathcal{R}$. Firstly, $L_{2,\perp}^- :=M_c L_{2,\perp}^\dagger M_c$ is a generalized inverse\footnote{Verifiable by directly checking $L_{2,\perp}L_{2,\perp}^-L_{2,\perp}=L_{2,\perp}$ and $L_{2,\perp}^-L_{2,\perp}L_{2,\perp}^-=L_{2,\perp}^-$.} of $L_{2,\perp}$ because $M_c L_{2,\perp}=L_{2,\perp}$. Intuitively, $L_{2,\perp}$ is the Laplacian of the one-mode projected graph, and hence the off-diagonal entries of each column/row must sum to the diagonal entry of that column/row, or in other words each column/row of $L_{2,\perp}$ is already mean-zero. Because of this, $M_c L_{2,\perp}^\dagger B_{2,\perp}'\bs{Y}=L_{2,\perp}^- \cdot B_{2,\perp}'\bs{Y}$ is a solution to \eqref{eq:E1.1}. Regarding its uniqueness,  
	any least-squares estimate  $\check{\bs{\beta}}^{\text{ls}}$ can be characterized as 
	\begin{equation}
		\check{\bs{\beta}}^{\text{ls}} = L_{2,\perp}^-\cdot B_{2,\perp}'\bs{Y} +  [I-  L_{2,\perp}^-L_{2,\perp}]\bs{w} 
	\end{equation}
	for some $\bs{w}\in\mathbb{R}^{c}$. Thus if $\check{\bs{\beta}}^{\text{ls}}$ is a LS estimate also satisfying the normalization $\bs{1}^\prime \check{\bs{\beta}}^{\text{ls}}=0$, then
	\begin{equation}
		\begin{aligned}[b]
			\check{\bs{\beta}}^{\text{ls}}
			&=_{(1)}
			M_c\check{\bs{\beta}}^{\text{ls}} 
			=
			L_{2,\perp}^- \cdot B_{2,\perp}'\bs{Y}
			+
			[M_c - M_c L_{2,\perp}^-L_{2,\perp}]\bs{w} \\
			&=_{(2)}
			L_{2,\perp}^- \cdot B_{2,\perp}'\bs{Y}
			+
			[M_c - M_c]\bs{w} 
			=
			L_{2,\perp}^- \cdot B_{2,\perp}'\bs{Y},
		\end{aligned}
	\end{equation}
	where $=_{(1)}$ follows because by definition $\check{\bs{\beta}}^{\text{ls}}$ satisfies $\bs{1}^\prime \check{\bs{\beta}}^{\text{ls}}=0$,  and $=_{(2)}$ follows from a similar argument as \eqref{eq:E6}, requiring only that $M_c\bs{\beta}$ is identified. This in turn is guaranteed by the fact that the pairwise differences of $\{\beta_i\}_{i=1}^c$ are identified.  

	Parts (c) and (d).
	These may be directly verified by expanding $ L^-$ and using the defining properties of the Moore-Penrose inverse:
	$
	XX^\dagger X = X,\ X^\dagger X X^\dagger = X^\dagger.
	$
	
	Part (b) holds because $B_1$ and $B_2$ both have each rows sum to $1$ by construction.
	
	Part (e).
	Lemma~\ref{lm:ctd.est} implies that $\bs{\theta}$ is estimable using the equation
	$
	\bs{Y} = B \bs{\theta} + \bs{U},
	$
	and the equality
	$
	\mathcal{R}  = \mathcal{R}  L^- L
	$
	then follows from Lemma \ref{lm:searle}.
\end{proof}

\begin{proof}[Proof of Lemma~\ref{lm:inequalities}]
	
	Part (a).
	Follows from computing the norms of the individual pieces of $\mathcal{R}$, and using $r$ and$ c$ are of the same asymptotic order by Assumption~\ref{ass:reg.cond}.
	
	Part (b). Let $\bar{d}$ denote the largest diagonal value of $L=B'B$. We have
	\begin{equation}
		\begin{aligned}[b]
			||B|| = \bar{\rho} (L) 
			=
			\bar{\rho}(D^{-1/2}LD^{-1/2}\cdot D)
			\leq
			\bar{\rho}(D^{-1/2}LD^{-1/2})
			\bar{\rho}(D) 
			\leq_{(1)}
			2\bar{d}
		\end{aligned}
	\end{equation}
	where $\leq_{(1)}$ follows because $D^{-1/2}LD^{-1/2}$ is the normalized Laplacian of the bipartite graph, and the eigenvalues of a normalized Laplacian lie in the interval $[0,2]$; see e.g. \cite{Jochmans2019}.
	
	To show part (c), we have
	$
	\lVert L  L^-\rVert  
	=
	\lVert L \mathcal{R}  L^\dagger \mathcal{R}^\prime \rVert 
	=_{(1)} 
	\lVert L  L^\dagger \mathcal{R}^\prime \rVert 
	\leq_{(2)} 
	M \cdot \lVert L  L^\dagger \rVert 
	\leq_{(3)} 
	M 
	$,
	where $=_{(1)}$ follows from $B\mathcal{R}=B$, $=_{(2)}$ follows from $||\mathcal{R}||\le M$, and $=_{(3)}$ holds because $L  L^\dagger$ is a projection matrix.
\end{proof}

\begin{lemma}[\cite{Searle1966} Theorem 5]
	\label{lm:searle}
	Consider an estimating equation
$
		\bs{z}=X\bs{b} + \bs{e}, \ \mathbb{V}_{\bs{b}}[\bs{e}]=\sigma^2I.
$
	Let $\bs{q}^\prime\bs{b}$ be a linear combination of $\bs{b}$, $R$ be a generalized inverse of $X^\prime X$, and $P=RX^\prime X$. Then, $\bs{q}^\prime\bs{b}$ is estimable\footnote{$\bs{q}^\prime\bs{b}$ is said to be estimable if there exists a linear function of $\bs{z}$ whose expectation is $\bs{q}^\prime\bs{b}$.} if and only if 
	$\bs{q}^\prime P = \bs{q}^\prime$. 
\end{lemma}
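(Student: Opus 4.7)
The plan is to chain together two equivalences: first, that estimability of $\bs{q}^\prime\bs{b}$ is equivalent to $\bs{q}$ lying in the row space of $X$; and second, that membership in that row space is equivalent to the algebraic condition $\bs{q}^\prime P = \bs{q}^\prime$. This is a classical result, and I would likely just cite \cite{Searle1966}; the sketch below is for completeness.

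First I would unpack the definition of estimability. By definition, $\bs{q}^\prime\bs{b}$ is estimable iff there exists a vector $\bs{a}$ (not depending on $\bs{b}$) such that $\mathbb{E}_{\bs{b}}[\bs{a}^\prime\bs{z}] = \bs{q}^\prime\bs{b}$ for every $\bs{b}$. Under $\bs{z} = X\bs{b} + \bs{e}$ with mean-zero $\bs{e}$, this collapses to requiring $\bs{a}^\prime X\bs{b} = \bs{q}^\prime\bs{b}$ for all $\bs{b}$, equivalently $\bs{q}^\prime = \bs{a}^\prime X$ for some $\bs{a}$, i.e.\ $\bs{q}$ lies in the row space of $X$.

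Next I would reduce the row space of $X$ to the row space of $X'X$ via the standard identity $\text{null}(X) = \text{null}(X'X)$: one direction is immediate, and the other follows because $X'X\bs{u}=0$ implies $\|X\bs{u}\|^2 = \bs{u}^\prime X'X\bs{u} = 0$, hence $X\bs{u}=0$. Taking orthogonal complements, the row space of $X$ coincides with the row space of $X'X$. So estimability is equivalent to the existence of some $\bs{v}$ with $\bs{q}^\prime = \bs{v}^\prime X'X$.

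Finally I would prove the equivalence $\bs{q}^\prime P = \bs{q}^\prime \iff \bs{q}^\prime \in \text{row}(X'X)$ using only the defining property of the generalized inverse, $X'X R X'X = X'X$. For the ``if'' direction, substituting $\bs{q}^\prime = \bs{v}^\prime X'X$ gives $\bs{q}^\prime P = \bs{v}^\prime X'X \cdot RX'X = \bs{v}^\prime X'X = \bs{q}^\prime$. For the ``only if'' direction, $\bs{q}^\prime P = \bs{q}^\prime$ directly exhibits $\bs{q}^\prime = (\bs{q}^\prime R)\,X'X$ as an element of the row space of $X'X$. Chaining the three equivalences delivers the lemma. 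The only mildly subtle step is the reduction $\text{row}(X) = \text{row}(X'X)$; everything else is pure symbol-pushing against the generalized-inverse identity.
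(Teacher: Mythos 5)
Your proof is correct. The paper itself does not prove this lemma --- its ``proof'' consists of the single line ``See Theorem 5 of \cite{Searle1966}'' --- so there is no argument in the paper to compare against; you have supplied the classical argument that the citation points to. Your three-step chain is sound: (i) estimability of $\bs{q}^\prime\bs{b}$ is equivalent to $\bs{q}^\prime=\bs{a}^\prime X$ for some $\bs{a}$, since the mean-zero error makes $\mathbb{E}_{\bs{b}}[\bs{a}^\prime\bs{z}]=\bs{a}^\prime X\bs{b}$ and the identity must hold for all $\bs{b}$; (ii) $\mathrm{row}(X)=\mathrm{row}(X^\prime X)$ via the null-space identity; and (iii) the equivalence of $\bs{q}^\prime P=\bs{q}^\prime$ with membership in $\mathrm{row}(X^\prime X)$ follows in both directions from the defining relation $X^\prime X R X^\prime X = X^\prime X$ of the generalized inverse (the ``only if'' direction is particularly clean, as $\bs{q}^\prime P=\bs{q}^\prime$ literally exhibits $\bs{q}^\prime=(\bs{q}^\prime R)X^\prime X$). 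What your write-up buys over the paper's bare citation is self-containedness: the reader sees that the result depends only on the generalized-inverse identity and elementary linear algebra, and in particular that the variance structure $\mathbb{V}_{\bs{b}}[\bs{e}]=\sigma^2 I$ stated in the lemma plays no role --- only $\mathbb{E}_{\bs{b}}[\bs{e}]=\bs{0}$ is used.
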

\begin{proof}[Proof of Lemma \ref{lm:searle}]	
	See Theorem 5 of \cite{Searle1966}.
\end{proof}

\begin{lemma}[Estimability and Connectivity]
	\label{lm:ctd.est}
	Connectedness of $\mathcal{G}$ implies estimability of normalized parameters $\bs{\theta}$.
\end{lemma}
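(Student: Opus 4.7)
The plan is to apply Lemma~\ref{lm:searle} with $\bs{z} = \bs{Y}$, $X = B$, $\bs{b} = \bs{\theta}$, so that $X'X = L$. Taking the generalized inverse $R = L^-= \mathcal{R} L^\dagger \mathcal{R}'$ makes the projector $P = L^- L$. Since estimability of every normalized parameter is what is wanted, I would cast the task as showing the matrix identity $\mathcal{R} L^- L = \mathcal{R}$, which is exactly the condition $\bs{q}'P = \bs{q}'$ applied row by row to $\mathcal{R}$.

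I would then simplify the left-hand side in two reductions that do not yet use connectedness. First, the identity $B\mathcal{R} = B$ from Lemma~\ref{lm:equalities}(b), transposed and multiplied by $B$, yields $\mathcal{R}'L = L$, so $L^- L = \mathcal{R} L^\dagger \mathcal{R}' L = \mathcal{R} L^\dagger L$. Second, a direct block computation shows $\mathcal{R}^2 = \mathcal{R}$, using that $M_c := I_c - \tfrac{1}{c} \bs{1}_{c\times c}$ is idempotent and that $\bs{1}_{r\times c} M_c = 0_{r\times c}$. Combining these two steps reduces the goal to $\mathcal{R} L^\dagger L = \mathcal{R}$, equivalently $\mathcal{R}(I - L^\dagger L) = 0$, i.e.\ $\mathcal{R}$ must vanish on $\ker L$.

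The role of connectedness is to pin down $\ker L = \ker B$ and to verify the vanishing directly. Every row of $B$ contains exactly one $1$ in the $B_1$ block and one in the $B_2$ block, so $B\bs{\theta} = \bs{0}$ forces $\alpha_i + \beta_{j(i,t)} = 0$ at each observed match. Chasing these equalities along any path of $\mathcal{G}$ between two arbitrary nodes shows that all $\alpha_i$ are equal to some common constant $c$ and all $\beta_j$ equal $-c$; connectedness is what guarantees a path exists between every pair, reducing $\ker L$ to the single line $\mathrm{span}(v_0)$ with $v_0 := (\bs{1}_r', -\bs{1}_c')'$. Hence $I - L^\dagger L = v_0 v_0'/(r+c)$, and the proof concludes with the block check $\mathcal{R} v_0 = 0$: the top block gives $\bs{1}_r - \tfrac{1}{c}\bs{1}_{r\times c} \bs{1}_c = \bs{0}_r$ and the bottom block gives $-M_c \bs{1}_c = \bs{0}_c$. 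The main (minor) obstacle is being careful with the null-space characterization, since $L = B'B$ is a signless incidence Gram matrix rather than the usual combinatorial Laplacian $D-A$; the clean path-following argument above is the right way to handle it and makes explicit use of connectedness in exactly one place.
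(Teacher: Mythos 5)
Your proof is correct, but it takes a genuinely different route from the paper's. The paper proves this lemma constructively and combinatorially: it exhibits explicit unbiased estimators by chasing pairwise differences $y_{it}-y_{i't'}$ along paths of the connected graph to estimate $\beta_{\tilde j}-\beta_j$ and $\alpha_{\tilde \imath}-\alpha_i$, and then averages these to recover $\beta_j-\bar\beta$ and $\alpha_i+\bar\beta$; Searle's criterion is invoked only afterwards (in the proof of Lemma~\ref{lm:equalities}(e)) to convert estimability into the identity $\mathcal{R}=\mathcal{R}L^-L$. You run the logic in the opposite direction: you prove $\mathcal{R}L^-L=\mathcal{R}$ directly by pure linear algebra --- using $B\mathcal{R}=B$, idempotence of $\mathcal{R}$, and the characterization $\ker L=\ker B=\mathrm{span}\bigl((\bs{1}_r',-\bs{1}_c')'\bigr)$ under connectedness --- and then apply Lemma~\ref{lm:searle} in the ``if'' direction to conclude estimability. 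This is not circular, since you never use Lemma~\ref{lm:equalities}(e) (which in the paper depends on the present lemma); indeed, your argument would render the paper's proof of \ref{lm:equalities}(e) redundant by establishing the identity en route. The trade-off: the paper's construction makes transparent \emph{which} linear combinations of the data identify each contrast, which has some expository value for readers thinking about movers; your kernel argument is tighter, isolates the exact role of connectedness (it is precisely the statement that $\ker B$ is one-dimensional, spanned by $(\bs{1}_r',-\bs{1}_c')'$), and correctly handles the subtlety that $L=B'B$ is not the combinatorial Laplacian $D-A$, so the null vector carries opposite signs on the two blocks. All the individual computations you cite ($1_{r\times c}M_c=0$, $\mathcal{R}^2=\mathcal{R}$, $\mathcal{R}'L=L$, $\mathcal{R}v_0=0$, and $L^-$ being a bona fide generalized inverse so that Searle's criterion applies) check out.
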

\begin{proof}
	Recall that a student $i$ and teacher $j$ are said to share an edge if $\left\{ t:j(i,t)=j\right\} $ is non-empty, where $j(\cdot,\cdot)$ is our matching function. Recall also the definition of connectedness: a pair of vertices $i$ and $j$ are said to be connected if there exists a path between them -- that is, a sequence of unique edges that begins with one of $(i,j)$ and ends with the other. A graph is said to be connected if there exists a path for every pair of distinct vertices. 
	
	Recall that our normalized parameters, defined using $\mathcal{R}$, are $\beta_{j}-\bar{\beta}$ and $\alpha_{i}+\bar{\beta}$ where $\bar{\beta}:=\frac{1}{c}\textstyle\sum_{j=1}^{c}\beta_{j}$. Furthermore, estimability of these parameters is defined as possessing unbiased estimators of them. 
	Now suppose that the bipartite graph is connected. This implies for any two teachers $j\in \mathcal{T}$ and $\tilde{j}\in \mathcal{T}$, we can find a sequence of vertices $j,i_{1},j_{2},i_{2},\dots,j_{K-1},i_{K},\tilde{j}$ for some $K$ such that every adjacent vertices in this sequence share an edge. By taking the pairwise difference of the corresponding observations $y_{it}$, we have an unbiased estimator for $\beta_{\tilde{j}}-\beta_{j}$. A similar logic leads to the conclusion that we have an unbiased estimator of each and every pairwise difference $\alpha_{\tilde{i}}-\alpha_{i}$. 
	We therefore possess an unbiased estimator of $\beta_{j}-\bar{\beta}$ since
	$
	\beta_{j}-\bar{\beta}=\tfrac{1}{c}\Sigma_{l=1}^{c}[\beta_{j}-\beta_{l}].
	$
	The normalized $\alpha_{i}+\bar{\beta}$ is also estimable since
	$
	\alpha_{i}+\bar{\beta}=\tfrac{1}{c}\Sigma_{j=1}^{c}[\alpha_{i}+\beta_{j}]=\tfrac{1}{c}\Sigma_{j=1}^{c}[(\alpha_{i}+\beta_{j(i,1)})+(\beta_{j}-\beta_{j(i,1)})].
	$
	The observation $y_{it}$ for $t=1$ is an unbiased estimator of the
	first term in the summation, and the second term is simply a pairwise
	difference. 
\end{proof}

\section{Computation of EB Estimators for Large Datasets}
\label{sec:compute}
\subsection{URE}
\label{subsec:compute.ure}
This section explains a computational difficulty of implementing the URE-based shrinkage method for large datasets and proposes a solution. We abstract from the covariates, or alternatively assume that the covariates' coefficients are known. 

Using the relation $S+S_1=\mathcal{R}$, one can show that the URE of \eqref{eq:ure} is 
\begin{equation}
\text{URE}(\bs{\lambda})
\propto
2\sigma^2\tr\left[W^{1/2}\mathcal{R}[L + \Lambda^*]^{-1} \mathcal{R}'W^{1/2} \right] + \left[\bs{v}-\hat{\bs{\theta}}^{\text{ls}}\right]^\prime S^\prime W S \left[\bs{v}-\hat{\bs{\theta}}^{\text{ls}}\right].
\end{equation}
The URE-based approach seeks to minimize $\text{URE}(\bs{\lambda})$ w.r.t. the hyperparameters $\bs{\lambda}$, commonly done through grid search ($\mu$ can be concentrated out using its first-order condition,as we show below). A difficulty that is immediate from the form of the URE is that the evaluation of the trace for each $\bs{\lambda}$ necessitates the inversion of the $(r+c)\times (r+c)$ matrix $[L + \Lambda^*]^{-1}$; this quickly becomes intractable as the dimensionality of the problem gets large. 

To avoid the inversion, we use Hutchinson's method for computing 
$
\tr\left[M'[L + \Lambda^*]^{-1} M \right]
$
for $M:=\mathcal{R}'W^{1/2}$ which proceeds as follows. Let $\bs{z}\sim\mathcal{N}[\bs{0},I]$, and note that 
\begin{align}
\tr[ M' [L + \Lambda^*]^{-1} M ]
=&
\mathbb{E}^{\bs{z}} (M\bs{z})' [L + \Lambda^*]^{-1} (M\bs{z})
\approx
\frac{1}{J} \sum_{j=1}^J (M\bs{z}^{(j)})' [L + \Lambda^*]^{-1} (M\bs{z}^{(j)})  
\end{align}
where we draw $\{ \bs{z}^{(j)} \}_{j=1}^J$ i.i.d. from $\mathcal{N}[\bs{0},I]$. 
In order to concentrate $\mu$ out of $\text{URE}(\bs{\lambda})$ to reduce the dimensionality of the grid, note that the optimal $\mu$ implied by its first-order condition is 
\begin{equation}
\mu^{\text{ure}} = \frac{\tilde{\bs{v}}^\prime S^\prime W S \hat{\bs{\theta}}^{\text{ls}}}{\tilde{\bs{v}}^\prime S^\prime W S \tilde{\bs{v}}}
\label{eq:mu.concen}
\end{equation}
where $\tilde{\bs{v}}:=[\bs{1}_r^\prime\quad\bs{0}_c^\prime]^\prime$.
Concentrating out $\mu$ from the oracle loss 
$[\hat{\bs{\theta}}-\bs{\theta}]'W[\hat{\bs{\theta}}-\bs{\theta}]$ 
can be approached similarly:
\begin{equation}
\mu^{\text{ol}} = \frac{\tilde{\bs{v}}^\prime S^\prime W[\hat{\bs{\theta}}^{\text{ls}}-\bs{\theta}-S\hat{\bs{\theta}}^{\text{ls}}]}{\tilde{\bs{v}}^\prime S^\prime W S \tilde{\bs{v}}}.
\end{equation}

%

\subsection{MLE}
\label{subsec:compute.mle}
The MLE selects $\bs{\lambda}$ to maximize $f(\bs{Y})$, the marginal likelihood of $\bs{Y}$, and $\bs{Y} \mid \bs{\theta} \sim \mathcal{N}[B\bs{\theta},\sigma^2I_n]$ and $\bs{\theta} \sim \mathcal{N}[\bs{v},\Sigma]$ where $\Sigma:=\Gamma^{-1}\sigma^2$ and $\Gamma$ is defined as the variance matrix in \eqref{eq:sorting.prior}. Now 
\begin{align}
&f(\bs{Y})
=\int f(\bs{Y} \mid \bs{\theta}) dP(\bs{\theta}) 
= \mathbb{E}^{\bs{\theta}} [f(\bs{Y} \mid \bs{\theta})]  \nonumber\\
=& \det(\sigma^2I_n)^{-1/2} \mathbb{E}^{\bs{\theta}} 
\exp\left( -\tfrac{1}{2\sigma^2} [\bs{Y}-B\bs{\theta}]'[\bs{Y}-B\bs{\theta}]  \right) \\
=&
\det(\sigma^2I_n)^{-1/2} \mathbb{E}^{\bs{\theta}} 
\exp\left( -\tfrac{1}{2\sigma^2} \left\{
[L^\dagger B'\bs{Y}-\bs{\theta}]' L [L^\dagger B'\bs{Y}-\bs{\theta}]
- (B'\bs{Y})'L^\dagger B'\bs{Y} + \bs{Y}'\bs{Y}
\right\} \right) 
\nonumber \\ 
=&
\det(\sigma^2I_n)^{-1/2} 
\exp\left( -\tfrac{1}{2\sigma^2} \left\{
\bs{Y}'\bs{Y} - (B'\bs{Y})'L^\dagger B'\bs{Y}
\right\} \right) 
\mathbb{E}^{\bs{\theta}} 
\exp\left( -\tfrac{1}{2\sigma^2}
[L^\dagger B'\bs{Y}-\bs{\theta}]' L [L^\dagger B'\bs{Y}-\bs{\theta}]
\right) , \nonumber
\end{align}
where the second-to-last equality uses the identity $LL^\dagger B' = B'$ to verify that 
\begin{equation}
[\bs{Y}-B\bs{\theta}]'[\bs{Y}-B\bs{\theta}] -\bs{Y}'\bs{Y}
=
[L^\dagger B'\bs{Y}-\bs{\theta}]' L [L^\dagger B'\bs{Y}-\bs{\theta}]
- [B'\bs{Y}]'L^\dagger B'\bs{Y}.
\end{equation}
Furthermore, defining $\bs{x}:=L^\dagger B'\bs{Y}-\bs{v}$, and using the prior $\bs{\theta}\sim\mathcal{N}[\bs{v},\Sigma]$ together with the moment generating function of a Gaussian quadratic form, we have
\begin{align}
&\mathbb{E}_{\bs{\theta}} 
\exp\left( 
\left[L^\dagger B'\bs{Y}-\bs{\theta}\right]' 
\left[-\tfrac{1}{2\sigma^2} L\right] 
\left[L^\dagger B'\bs{Y}-\bs{\theta}\right]
\right) \nonumber\\
&	=
\det\left(I-2\left[-\tfrac{1}{2\sigma^2}L\right]\Sigma\right)^{-1/2} 
\cdot
\exp\left(
-\tfrac{1}{2}\bs{x}'
\left[I-(I-2\left[-\tfrac{1}{2\sigma^2}L\right]\Sigma)^{-1}\right]
\Sigma^{-1} \bs{x}\right) \nonumber
\\
&	=
\det\left(I+L\tfrac{1}{\sigma^2}\Sigma\right)^{-1/2} 
\cdot
\exp\left(
-\tfrac{1}{2\sigma^2}\bs{x}'
\left[I-(I+L\tfrac{1}{\sigma^2}\Sigma)^{-1}\right]
\Sigma^{-1} \bs{x}\right), \ \text{where}
\end{align}
\begin{align}
\det\left(I+L\tfrac{1}{\sigma^2}\Sigma\right)^{-1/2} 
=&
\det\left(\Sigma^{-1}\sigma^2+L\right)^{-1/2} 
\det\left(\Sigma\tfrac{1}{\sigma^2}\right)^{-1/2}  \nonumber\\
=& 
\det\left(\Sigma^{-1}\sigma^2+L\right)^{-1/2} 
\det\left(\Sigma^{-1}\sigma^2\right)^{1/2} \\
=&
\det\left(\Gamma+L\right)^{-1/2} 
\det\left(\Gamma\right)^{1/2}, \nonumber\\
\exp\left(
-\tfrac{1}{2}\bs{x}'
\left[
I - (I+L\tfrac{1}{\sigma^2}\Sigma)^{-1}
\right]
\Sigma^{-1} \bs{x}\right)
=&
\exp\left(
-\tfrac{1}{2}\bs{x}'
\left[
I - \sigma^2\Sigma^{-1}(\sigma^2\Sigma^{-1}+L)^{-1}
\right]
\Sigma^{-1} \bs{x}\right) \nonumber\\
=&
\exp\left(
-\tfrac{1}{2\sigma^2}\bs{x}'
\left[
I - \Gamma(\Gamma+L)^{-1}
\right]
\Gamma\bs{x}\right)  \nonumber \\
=&
\exp\left(
-\tfrac{1}{2\sigma^2}
\left[(\Gamma+L)^{-1}L\bs{x}\right]'
\Gamma\bs{x}\right).
\end{align}
In sum, we have 
\begin{align}
&-2\sigma^2 \left[\log f(\bs{Y}) - \log\det(\sigma^2I_n)^{-1/2}\right]
\nonumber \\
=&
\sigma^2\left[\log\det(\Gamma+L) - \log\det(\Gamma) \right]
+ \bs{Y}'\bs{Y} - (B'\bs{Y})'L^\dagger B'\bs{Y}
+ \left[(\Gamma+L)^{-1}L\bs{x}\right]'\Gamma\bs{x}
\nonumber \\
\propto&
\sigma^2\left[\log\det(\Gamma+L) - \log\det(\Gamma) \right]
+ \left[(\Gamma+L)^{-1}L\bs{x}\right]'\Gamma\bs{x},
\end{align}
where we want to minimize the final expression with respect to $\bs{\lambda}$.
The minimizer $\mu^{\text{mle}}$ can be concentrated out by using its first-order condition:
\begin{align}
\mu^{\text{mle}}
=&
\frac{\tilde{\bs{v}}' \left[ L(\Gamma+L)^{-1}\Gamma + \Gamma(\Gamma+L)^{-1}L \right] L^\dagger B'\bs{Y}}{\tilde{\bs{v}}' \left[ L(\Gamma+L)^{-1}\Gamma + \Gamma(\Gamma+L)^{-1}L \right] \tilde{\bs{v}}}
\nonumber \\
=&
\frac{
	\tilde{\bs{v}}' L(\Gamma+L)^{-1}\Gamma L^\dagger B'\bs{Y} + 
	\tilde{\bs{v}}' \Gamma(\Gamma+L)^{-1} B'\bs{Y}
}{
	2\tilde{\bs{v}}' L(\Gamma+L)^{-1}\Gamma \tilde{\bs{v}}
}.
\end{align}

\section{Further Details on the Empirical Analysis}
\label{sec:details.empirics}

\subsection{EB-URE Estimation Results}
\label{sec:details.empirics.ure}

The URE-coefficients on 
\begin{equation}
	constant,\ econ\ disad,\ eng\ learner,\ female,\ black,\ hispanic,\ white, others
\end{equation}
for the two-way fixed effects model in Section~\ref{subsec:appl.ure} are
\begin{equation}
(2.2,-1.28,-0.39,-0.033,-1.77,-1.00,-1.42,-0.83).
\end{equation}
	
\subsection{Computing \cite{Verdier2020}'s estimator}
\label{sec:details.empirics.verdier}
We need an estimate of 
$\mathbb{E}[y_{it} | \left\{y_{is}\right\}_i]$ for $s\le t$, 
when implementing \cite{Verdier2020}'s estimator. Following \cite{Verdier2020}, we estimate this conditional expectation using the predicted values from the following regression:
$
	y_{it} = \phi_1y_{it-1} + \phi_2y_{it-2} + e_{it}.
$

\subsection{LS and Shrinkage in a 1-way Model}
\label{sec:details.empirics.1way}

The standard construction of the 1-way shrinkage estimator $\hat{\beta}_j^{\text{1-way}}$ begins by assuming that $\alpha_i$ of $y_{it} = \alpha_i + \beta_{j(i,t)} + \rho y_{it-1} + u_{it}$ is determined as $\alpha_i=\bs{x}_i^\prime \bs{\gamma}$. 
Because the coefficients $\bs{\gamma}$ and $\rho$ are consistently estimable, we follow the literature in using the within-teacher fixed effects estimator of $\rho$ and $\bs{\gamma}$ as plug-ins\footnote{See \cite{Chetty2014} and \cite{Gilraine2020}, etc.} and construct the one-way fixed effects estimates of $\beta_j$. More precisely, we first construct the LS estimates of the teacher fixed effects as
\begin{equation}
	\tilde{\beta}_j^{\text{ls}} := 
	\frac{1}{d_{b,j}} \sum_{(i,t):j(i,t)=j} 
	( y_{it} - \tilde{\rho} y_{it-1} - \bs{x}_i^\prime \tilde{\bs{\gamma}}^{\text{ls}} )
\end{equation}
where $d_{b,j}:=| \left\{(i,t):j(i,t)=j\right\} |$ is the number of matches for teacher $j$. The one-way shrinkage estimator of $\bs{\beta}$ is then constructed as
\begin{equation}
	\hat{\beta}_j^{\text{1-way}}
	:=
	\frac{\tilde{\sigma}_{\beta}^2}{\tilde{\sigma}_{\beta}^2 + \tilde{\sigma}^2/d_{b,j}}
	\tilde{\beta}_j^{\text{ls}},
\end{equation}
where 
$\tilde{\sigma}_\beta^2 := \text{var}(\tilde{\beta}_j^{\text{ls}})-\tilde{\sigma}^2/\bar{d_b}$ 
and $\bar{d_b}$ is the empirical mean of $\{d_{b,j}\}_{j=1}^c$.
The estimate of $\sigma^2$ from the one-way estimation is $\tilde{\sigma}^2 = 0.216$, and the implied variance of $\beta_j$ is $\tilde{\sigma}_{\beta}^2=0.049$.

\section{Additional Monte Carlo Simulations}
\label{sec:details.MonteCarlo}

This section considers four additional simulation designs. 
The first -- labeled as 1 ($W_{a+b}$) -- 
uses the same DGP as Design 1 of Section~\ref{sec:simul}, but using $W=W_{a+b}$ that includes the  $\bs{\alpha}$ parameters. We exclude the EB-1way from consideration under this design since it is not for the estimation of $\bs{\alpha}$. The next three designs then revert back to the original RMSE \eqref{eq:simul.rmse} that evaluates only estimation of $\bs{\beta}$, but considers alternative distributions beyond Gaussianity.

\begin{table}[h]
	\caption{Simulation Designs}
	\label{tab:app.mc.designs}
	\begin{center}
		\begin{tabular}{lcccc} 
			\toprule
			& \multicolumn{4}{c}{ Simulation Designs } \\
			\cmidrule(lr){2-5}
			& 1 ($W_{a+b}$) & $\beta\sim \text{T}$  & $\beta\sim\text{Exp}$ & $\alpha\sim\text{Exp}$  \\ 
			\midrule
			DGP Parameters \\
			$\pi_{\text{sort}}$ 
			&0.4 & 0.4 & 0.4 & 0.4 \\
			$\pi_{\text{mob}}$ 
			& 0.043 & 0.043 & 0.043 & 0.043 \\
			$\alpha\sim$ 
			& $\mathcal{N}[0,0.6]$ & $\mathcal{N}[0,0.6]$ & $\mathcal{N}[0,0.6]$ & $\text{Exp}(\sqrt{0.6}) $ \\
			$\beta\sim$ 
			& $\mathcal{N}[0,0.06]$ & $\sqrt{0.06\tfrac{1}{3}}\cdot\text{T}(3)$ & $\text{Exp}(\sqrt{0.06}) $ & $\mathcal{N}[0,0.06]$ \\
			\cmidrule(lr){1-5}
			Empirical moments  \\
			$\text{var}(\alpha_i)$  
			& 0.6 & 0.6 & 0.6  & 0.6  \\
			$\text{var}(\beta_{j(i,t)})$ 
			& 0.06 & 0.06  & 0.06  & 0.06 \\
			$\text{cor}(\alpha_i,\beta_{j(i,t)})$  
			& 0.16 & 0.14  &  0.13 & 0.13 \\
			\cmidrule(lr){1-5}
			Empirical moments  \\
			$\text{var}(\hat{\alpha}_i^{\text{ls}})$  
			& 0.77 & 0.77 & 0.77  & 0.78 \\
			$\text{var}(\hat{\beta}_{j(i,t)}^{\text{ls}})$ 
			& 0.18 & 0.18  & 0.18  &  0.18 \\
			$\text{cor}(\hat{\alpha}_i^{\text{ls}},\hat{\beta}_{j(i,t)}^{\text{ls}})$  
			& -0.23 & -0.24  & -0.22  &   -0.23 \\
			\cmidrule(lr){1-5}
			Connectivity \\ 
			$1E5 \cdot \rho_{\text{min}}$ 
			& 50 & 50 &  52 & 49\\
			\bottomrule
		\end{tabular}
	\end{center}
	{\footnotesize {\em Notes:} 
		The symbol $\text{Exp}(s)$ denotes an exponential variable with scale $s$, and $\text{T}(v)$ denotes a student-$t$ variable with degrees of freedom $v$. 
		The moments and connectivity values displayed for the designs are median values across the simulation rounds. 
		The sampling variance $\sigma^2$ is set as $0.12$ for all four designs, consistent with the value estimated from the empirical application. The variable $\rho_{\text{min}}$ denotes the smallest non-zero eigenvalue of the \emph{normalized} Laplacian projected teacher graph; we choose to match these instead of the unnormalized ones because the former always lie within $[0,2]$ and as a result provide a more natural scale to match.  
	}\setlength{\baselineskip}{4mm}
\end{table}

\begin{figure}[h]
	\caption{RMSE of Estimators}
	\label{fig:app.mc.rmse}
	\begin{center}
		\begin{tabular}{cc}
			Design 1 ($W_{a+b}$) & Design $\beta\sim\text{T}$ \\ 
			\includegraphics[width=.35\textwidth]{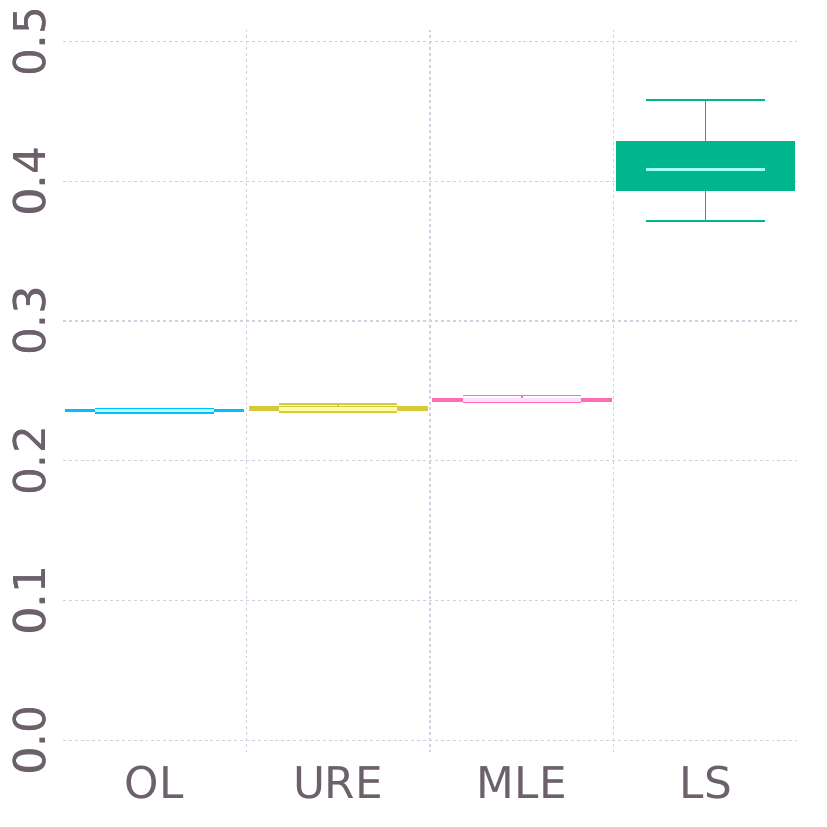} &
			\includegraphics[width=.35\textwidth]{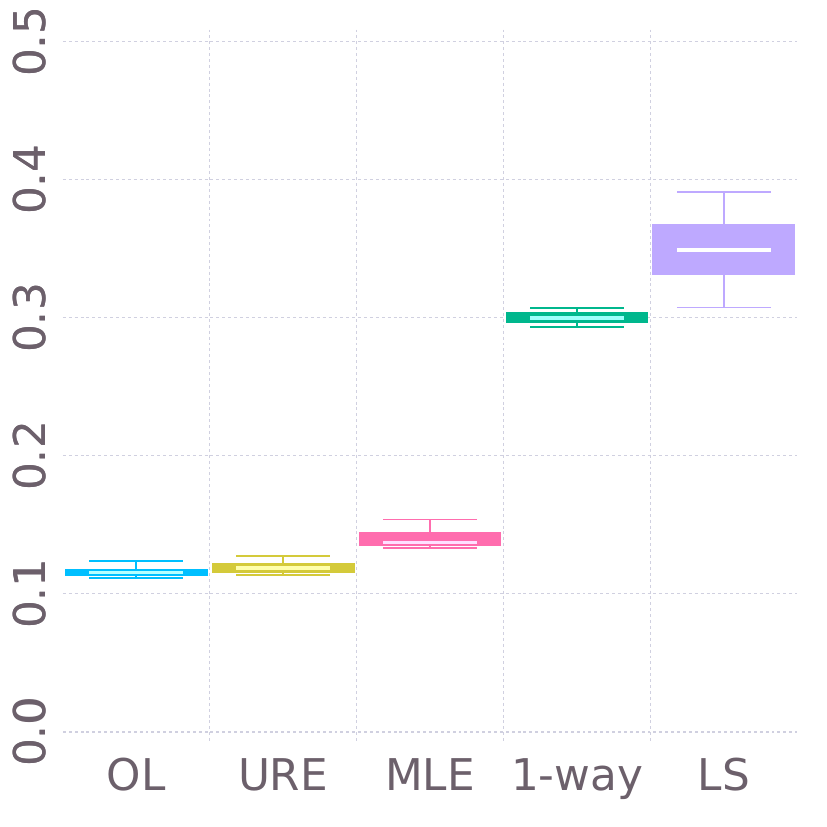} \\
			Design $\beta\sim\text{Exp}$ & Design $\alpha\sim\text{Exp}$ \\ 
			\includegraphics[width=.35\textwidth]{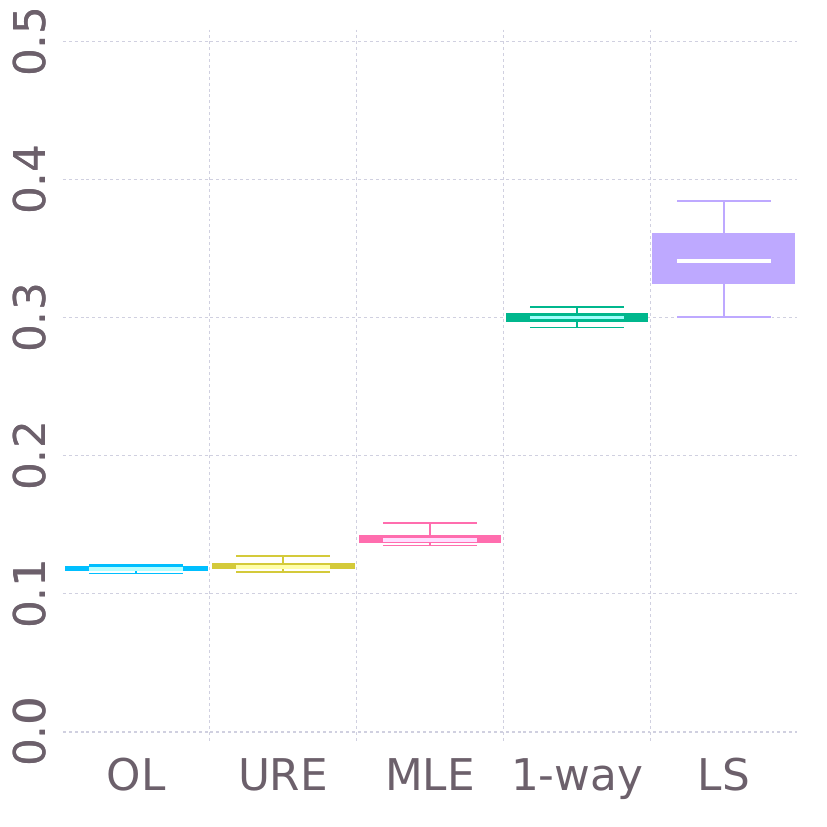} &
			\includegraphics[width=.35\textwidth]{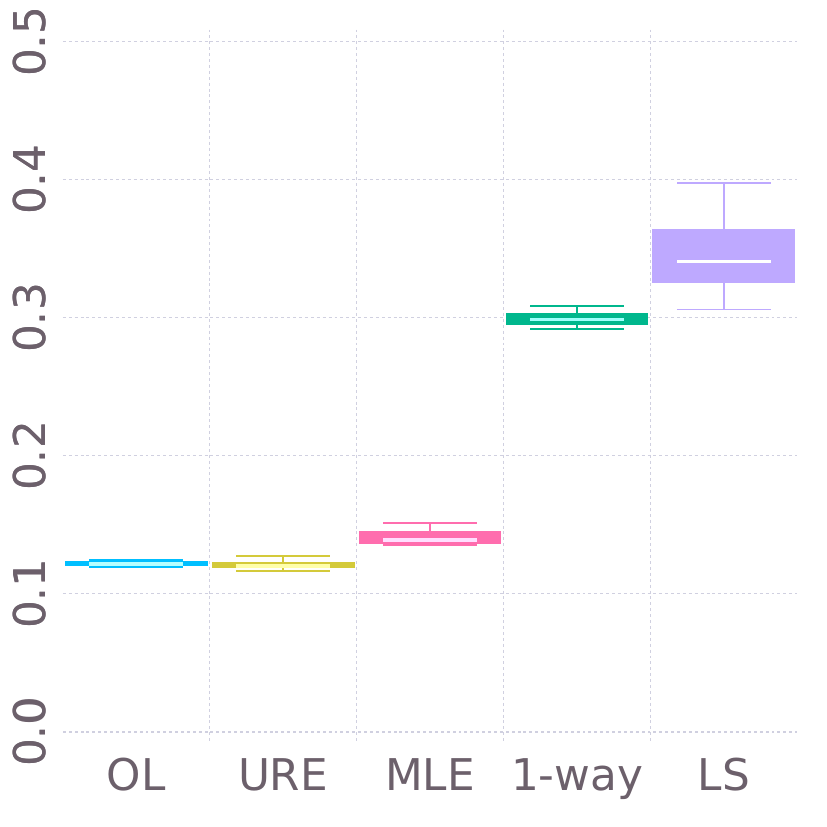} \\
		\end{tabular}   
	\end{center}
	{\footnotesize {\em Notes:} The box plots characterize the distribution of RMSEs across Monte Carlo repetitions.}\setlength{\baselineskip}{4mm}
\end{figure}

\begin{table}[h]
	\caption{Hyperparameter Selection}
	\label{tab:app.mc.hyperpara}
	\begin{center}
		\begin{tabular}{lcccccccccc} 
			\toprule
			& \multicolumn{3}{c}{ EB-OL} & \multicolumn{3}{c}{ EB-URE} &
			\multicolumn{3}{c}{ EB-MLE}
			& \multicolumn{1}{c}{ 1-Way} \\
			\cmidrule(lr){2-4} \cmidrule(lr){5-7} \cmidrule(lr){8-10}
			\cmidrule(lr){11-11} 
			& $\lambda_a$ & $\lambda_b$ & $\phi$ 
			& $\lambda_a$ & $\lambda_b$ & $\phi$ 
			& $\lambda_a$ & $\lambda_b$ & $\phi$ 
			& $\lambda_b$ \\ 
			\midrule
			Design\\
			\;1 ($W_{a+b}$)
			&  0.2 & 5.0 & 0.1 & 0.16 & 4.2 & 0.1 & 0.2 & 3.3 & 0.89 & -  \\ 
			\;$\beta\sim \text{T}$
			&  0.056 & 2.3 & 0.43 & 0.058 & 2.18 & 0.40 & 0.20 & 2.95 & 0.84 & 2.9 \\ 
			\;$\beta\sim \text{Exp}$
			&  0.050 & 2.25 & 0.51 & 0.064 & 2.18 & 0.40 & 0.20 & 3.1 & 0.85 & 2.85 \\ 
			\;$\alpha\sim \text{Exp}$
			& 0.047 & 2.4 & 0.67 & 0.059 & 2.19 & 0.42 & 0.22 & 3.17 & 0.87 & 2.9 \\ 
			\bottomrule
		\end{tabular}
	\end{center}
	{\footnotesize {\em Notes:} The values displayed are the medians across simulation rounds.}\setlength{\baselineskip}{4mm}
\end{table}

\begin{table}[t!]
	\caption{Empirical Moments}
	\label{tab:appmc.mom}
	\begin{center}
		\begin{tabular}{lccccccccc} 
			\toprule
			& \multicolumn{3}{c}{ True} & 
			\multicolumn{3}{c}{ EB-URE} &
			\multicolumn{3}{c}{ EB-MLE} \\
			\cmidrule(lr){2-4} \cmidrule(lr){5-7}
			\cmidrule(lr){8-10}
			& $\text{v}_\alpha$ & $\text{v}_\beta$ & $\rho_{\alpha,\beta}$ 
			& $\text{v}_\alpha$ & $\text{v}_\beta$ & $\rho_{\alpha,\beta}$   
			& $\text{v}_\alpha$ & $\text{v}_\beta$ & $\rho_{\alpha,\beta}$ 
			\\ 
			\midrule
			Design\\
			\; \;1 ($W_{a+b}$)& 0.6 & 0.06 & 0.16 & 0.54 & 0.036 & 0.24 & 0.52 & 0.057 & 0.29 \\ 
			\; $\beta\sim\text{T}$ & 0.6 & 0.058 & 0.14 & 0.62 & 0.047 & 0.18 & 0.51 & 0.056 & 0.28 \\ 
			\; $\beta\sim\text{Exp}$ & 0.6 & 0.06 & 0.13 & 0.62 & 0.047 & 0.18 & 0.51 & 0.056 & 0.28 \\ 
			\; $\alpha\sim\text{Exp}$ & 0.6 & 0.06 & 0.13 & 0.62 & 0.047 & 0.18 & 0.52 & 0.057 & 0.28 \\ 
			\bottomrule
		\end{tabular}
	\end{center}
	{\footnotesize {\em Notes:} $(\text{v}_\alpha,\text{v}_\beta,\rho_{\alpha,\beta})$ represent $\text{var}(\hat{\alpha}_i)$, $\text{var}(\hat{\beta}_{j(i,t)})$ and $\text{cor}(\hat{\alpha}_i,\hat{\beta}_{j(i,t)})$ respectively. The columns for ``True'' use the true values of $(\alpha_j,\beta_{j(i,t)})$ in the computation and are thus the transpose of the second row block of Table~\ref{tab:mc.designs}.}\setlength{\baselineskip}{4mm}
\end{table}

\clearpage

\section{Definitions and Notations}
\noindent Graph-theoretic Objects:
\begin{align}
	L &:= B'B 
	&
	L_{2\perp} &:= B_{2,\perp}'B_{2,\perp} 
	&
	B_{2,\perp} &:= [I-B_1(B_1^\prime B_1)^{-1}B_1^\prime] B_2
	\nonumber\\
	A &:= L-D
	&
	D & := \text{diag}(L) 
	& 
	\mathcal{A} &:= D^{-1/2} A D^{-1/2} 
	\nonumber\\
	\bar{d} &:= \max(D) & L^- &:= \mathcal{R}L^\dagger \mathcal{R}'
	&
	\mathcal{R} &:= \begin{bmatrix}
		I_r & \frac{1}{c}1_{r\times c} \\
		0_{c\times r} & I_c-\frac{1}{c}1_{c\times c}
	\end{bmatrix}
\end{align}
\noindent Hyperparameter Objects:
\begin{align}
	\bs{\lambda}
	&:=
	(\mu,\lambda_a,\lambda_b,\phi)
	& 
	\Lambda &:= \begin{bmatrix}
		\lambda_aI_r & 0_{r\times c} \\ 
		0_{c\times r} & \lambda_bI_c 
	\end{bmatrix}
	&
	\Lambda^* &:= \Lambda^{1/2}[-\phi\mathcal{A}+I]\Lambda^{1/2}
	\nonumber\\
	\bs{v} &:= \begin{bmatrix}
		\mu\bs{1}_r' & \bs{0}_c'
	\end{bmatrix}'
	&
	\tilde{\bs{v}} &:= \begin{bmatrix}
		\bs{1}_r' & \bs{0}_c'
	\end{bmatrix}'
	&
	\mathcal{J} &:= (-\bar{\mu},\bar{\mu})\times \mathbb{R}_{++}^2 \times (-\bar{\phi},\bar{\phi})
	\nonumber\\
	S&:= [L + \Lambda^*]^{-1} \Lambda^* 
	&
	S_1 &:= [L + \Lambda^*]^{-1} L
\end{align}

\noindent Others:
\begin{align}
	W_{a+b} &:= \tfrac{1}{r+c}I_{r+c}
	&
	W_a &:= \begin{bmatrix}
		\tfrac{1}{r}I_r & 0_{r\times c} \\
		0_{c\times r} & 0_{c\times c}
	\end{bmatrix}
	&
	W_b &:= \begin{bmatrix}
		0_{c\times c} & 0_{r\times c} \\
		0_{c\times r} & \tfrac{1}{c}I_c
	\end{bmatrix} 
	\nonumber\\
	W_{\mathcal{R}} &:= \mathcal{R}'W\mathcal{R}
\end{align}
Finally, let $\rho_1(M)\leq\rho_2(M)\leq\dots\leq\rho_n(M)$ denote the ordered eigenvalues of a matrix $M\in\mathbb{R}^{n\times n}$. Let $\bar{\rho}(M)$ denote its \emph{spectral radius}, the largest of the magnitudes of its $n$ eigenvalues. Let $\lVert M \rVert$ denote its \emph{spectral norm}, its largest singular value. When $M$ is positive semi-definite, then $\rho_n(M)=\bar{\rho}(M) = ||M||$.
We use $a\wedge b$ to denote $\min(a,b)$.

\end{appendix}

\end{document}